\theoremstyle{plain}
\newtheorem{theorem}{Theorem}
\newtheorem{lemma}[theorem]{Lemma}
\newtheorem{corollary}[theorem]{Corollary}
\theoremstyle{definition}
\newtheorem{definition}[theorem]{Definition}
\theoremstyle{remark}
\newtheorem*{remark}{Remark}
\newcommand{\proofitem}[1]{\paragraph*{\mdseries\textit{#1}}}
\newcommand{\Beginproof}{\proofitem{Proof.}}
\newcommand{\Endproof}{
  \ifmmode 
  \else \leavevmode\unskip\penalty9999 \hbox{}\nobreak\hfill
  \fi
  \quad\hbox{$\Box$}
  \par\medskip}
\newcommand\Eqref[1]{(\ref{#1})}
\newcommand\Eg{\textsl{e.g.}~}
\renewcommand{\phi}{\varphi}
\renewcommand\epsilon{\varepsilon}
\newcommand{\Implies}{\Rightarrow}
\newcommand{\St}{\mid}
\renewcommand{\Bot}{{\mathord{\perp}}}
\newcommand{\Top}{\top}
\newcommand\cF{\mathcal{F}}
\newcommand\cL{\mathcal{L}}
\newcommand\cP{\mathcal{P}}
\newcommand\cR{\mathcal{R}}
\newcommand\cS{\mathcal{S}}
\newcommand\cV{\mathcal{V}}
\newcommand\cW{\mathcal{W}}
\newcommand\Fini{{\mathrm{fin}}}
\newcommand\set[1]{\{#1\}}
\newcommand\Union{\bigcup}
\newcommand{\Linarrow}{\multimap}
\newcommand\Myleft{}
\newcommand\Myright{}
\newcommand\Web[1]{\Myleft|{#1}\Myright|}
\newcommand\Supp[1]{\operatorname{\mathsf{supp}}({#1})}
\newcommand\Mset[1]{[{#1}]}
\newcommand\ITens{\otimes}
\newcommand\Tens[2]{{#1}\ITens{#2}}
\newcommand\Tensp[2]{\left({#1}\ITens{#2}\right)}
\newcommand\IWith{\mathrel{\&}}
\newcommand\With[2]{{#1}\IWith{#2}}
\newcommand\IPlus{\oplus}
\newcommand\Plus[2]{{#1}\IPlus{#2}}
\newcommand\Orth[2][]{#2^{\Bot_{#1}}}
\newcommand\Orthp[2][]{(#2)^{\Bot_{#1}}}
\newcommand\Bwith{\mathop{\&}}
\newcommand\Bplus{\mathop\oplus}
\newcommand\Biorth[1]{#1^{\Bot\Bot}}
\newcommand\Triorth[1]{{#1}^{\Bot\Bot\Bot}}
\newcommand\One{1}
\newcommand\IExcl{{\mathord{!}}}
\newcommand\Card[1]{\#{#1}}
\newcommand\Locun[1]{1^J}
\newcommand\Isom\simeq
\newcommand\Comp{\mathrel\circ}
\newcommand\Limpl[2]{{#1}\Linarrow{#2}}
\newcommand\Injtr[1]{{\mathsf{in}}^{#1}}
\newcommand\Derel[1]{\mathsf{der}_{#1}}
\newcommand\Nat{{\mathbb{N}}}
\newcommand\Natnz{{\Nat^+}}
\newcommand\SNat{\mathsf N}
\newcommand\Snat{\mathsf N}
\newcommand\Snum[1]{\overline{#1}}
\newcommand\Sfix{{\mathrm{fix}}}
\newcommand\BOOL{o}
\newcommand\True{\mathbf t}
\newcommand\False{\mathbf f}
\newcommand\Zero{0}
\newcommand\App[2]{\left({#1}\right){#2}}
\newcommand\Abs[2]{\lambda{#1}\,{#2}}
\newcommand\Abst[3]{\lambda{#1}^{#2}\,{#3}}
\newcommand\List[3]{#1_{#2},\dots,#1_{#3}}
\newcommand\Listbis[3]{#1_{#2}\dots #1_{#3}}
\newcommand\Kronecker[2]{\delta_{{#1},{#2}}}
\newcommand\Subst[3]{{#1}\left[{#2}/{#3}\right]}
\newcommand\Substbis[2]{{#1}\left[{#2}\right]}
\newcommand\Factor[1]{{#1}!}
\newcommand\Multinom[2]{\mathsf{mn}(#2)}
\newcommand\Real{\mathbb{R}}
\newcommand\Realp{\mathbb{R}^+}
\newcommand\Realpto[1]{(\Realp)^{#1}}
\newcommand\Realpc{\overline{\mathbb{R}^+}}
\newcommand\Realpcto[1]{\Realpc^{#1}}
\newcommand\Izu{[0,1]}
\newcommand\Rational{\mathbb Q}
\newcommand\Bcanon[1]{e_{#1}}
\newcommand\Mfin[1]{\mathcal M_\Fini({#1})}
\newcommand\Mfinc[2]{\mathcal M_{#1}({#2})}
\newcommand\Evlin{\operatorname{\mathrm{ev}}}
\newcommand\Norm[1]{\|{#1}\|}
\newcommand\Rel[1]{\mathrel{#1}}
\newcommand\Wred{\mathord\to_{\mathsf w}}
\newcommand\Wredtr{\Wred^*}
\newcommand\Redst[1]{\mathop{\mathsf{Red}}}
\newcommand\Msetofsubst[1]{\bar F}
\newcommand\PcohEM[1]{\EM{\mathsf P}{#1}}
\newcommand\Pcoh[1]{\mathsf P{#1}}
\newcommand\Pcohp[1]{\mathsf P(#1)}
\newcommand\Base[1]{e_{#1}}
\newcommand\Matapp[2]{{#1}\Compl{#2}}
\newcommand\PCOH{\mathbf{Pcoh}}
\newcommand\Leftu{\lambda}
\newcommand\Rightu{\rho}
\newcommand\Assoc{\alpha}
\newcommand\Sym{\sigma}
\newcommand\Msetempty{\Mset{\,}}
\newcommand\Retri\zeta
\newcommand\Retrp\rho
\newcommand\Impl[2]{{#1}\Rightarrow{#2}}
\newcommand\Tsem[1]{[{#1}]}
\newcommand\Tsemca[1]{[{#1}]^\oc}
\newcommand\Psem[1]{\left[{#1}\right]}
\newcommand\Tnat\iota
\newcommand\Num[1]{\underline{#1}}
\newcommand\Loop\Omega
\newcommand\Loopt[1]{\Omega^{#1}}
\newcommand\Ran[1]{\mathsf{ran}(#1)}
\newcommand\Tseq[3]{{#1}\vdash{#2}:{#3}}
\newcommand\Timpl\Impl
\newcommand\Simpl\Impl
\newcommand\PCF{\mathsf{PCF}}
\newcommand\Redone[1]{\stackrel{#1}\rightarrow}
\newcommand\Redonetr[1]{\stackrel{#1}{\rightarrow^*}}
\newcommand\Weak[1]{\operatorname{\mathrm{w}}_{#1}}
\newcommand\Contr[1]{\operatorname{\mathrm{c}}_{#1}}
\newcommand\Der[1]{\operatorname{{\mathrm{der}}}_{#1}}
\newcommand\Digg[1]{\operatorname{{\mathrm{dig}}}_{#1}}
\newcommand\Fun[1]{\widehat{#1}}
\newcommand\Id{\operatorname{\mathrm{Id}}}
\newcommand\Proj[1]{{\mathrm{pr}}_{#1}}
\newcommand\Inj[1]{{\mathrm{in}}_{#1}}
\newcommand\Excl[1]{\oc{#1}}
\newcommand\Exclp[1]{\oc({#1})}
\newcommand\Int[1]{\wn{#1}}
\newcommand\Prom[1]{#1^!}
\newcommand\Promp[1]{\left(#1\right)^!}
\newcommand\Relincl\eta
\newcommand\Relrestr\rho
\newcommand\Seely{\mathrm m}
\newcommand\Seelyz{\Seely^0}
\newcommand\Seelyt{\Seely^2}
\newcommand\Compl{\,}
\newcommand\Curlin{\operatorname{\mathrm{cur}}}
\newcommand\Op[1]{{#1}^{\mathsf{op}}}
\newcommand\Kl[1]{{#1}_\oc}
\newcommand\NUM[1]{\underline{#1}}
\newcommand\FIXT[3]{\mathsf{fix}\,#1^{\oc#2}\,#3}
\newcommand\FIXTP[3]{\mathsf{fix}\,#1^{\oc(#2)}\,#3}
\newcommand\LAPP[2]{\langle#1\rangle#2}
\newcommand{\EAPP}[2]{\LAPP{#1}{\STOP{#2}}}
\newcommand\ABST[3]{\lambda #1^{#2}\, #3}
\newcommand\Abstpref[2]{\lambda #1^{#2}\,}
\newcommand\IMPL[2]{#1\Rightarrow #2}
\newcommand\Transcl[1]{{#1}^*}
\newcommand\Eval[2]{\langle#1,#2\rangle}
\newcommand\Redmats{\mathsf{Red}}
\newcommand\Mexpset[2]{\mathsf L(#1,#2)}
\newcommand\Expmonisoz{\Seely^0}
\newcommand\Expmonisob[2]{\Seely^2_{#1,#2}}
\newcommand\Injms[2]{#1\cdot#2}
\newcommand\Obseq{\sim}
\newcommand\Probe{\mathsf{eq}}
\newcommand\Pprod{\mathsf{prod}}
\newcommand\Pchoose{\mathsf{choose}}
\newcommand\Pext[2]{\mathsf{ext}\left(#1,\,#2\right)}
\newcommand\Pwin[2]{\mathsf{win}_{#1}({#2})}
\newcommand\Rseg[2]{[#1,#2]}
\newcommand\Bnfeq{\mathrel{\mathord:\mathord=}}
\newcommand\Bnfor{\,\,\mathord|\,\,}
\newcommand\NAT{\iota}
\newcommand\STOP[1]{#1^{\mathord\oc}}
\newcommand\GO[1]{\mathsf{der}\,#1}
\newcommand\LOOP[1]{\Omega^{#1}}
\newcommand\CALLCC{\mathsf{call/cc}}
\newcommand\Coalgc[1]{\underline{#1}}
\newcommand\Coalgm[1]{\mathrm h_{#1}}
\newcommand\EM[1]{#1^{\mathord\oc}}
\newcommand\EMR[1]{\EM{#1}_{\mathsf{den}}}
\newcommand\Obj[1]{\mathsf{Obj}(#1)}
\newcommand\PAIR[2]{\left(#1,#2\right)}
\newcommand\TUPLE[1]{(#1)}
\newcommand\PR[2]{\mathsf{pr}_{#1}#2}
\newcommand\IN[2]{\mathsf{in}_{#1}#2}
\newcommand\CASE[5]{\mathsf{case}(#1,#2\cdot #3,#4\cdot #5)}
\newcommand\CBPV{\HPLAM}
\newcommand\pCBPV{\HPLAM^{\mathsf{p}}}
\newcommand\ONE{\mathbf 1}
\newcommand\ZERO{\mathbf 0}
\newcommand\EXCL[1]{\oc#1}
\newcommand\TENS[2]{\Tens{#1}{#2}}
\newcommand\PLUS[2]{\Plus{#1}{#2}}
\newcommand\TREC[2]{\textsf{Rec}\,#1\cdot#2}
\newcommand\FORG[1]{#1}
\newcommand\LIMPL[2]{\Limpl{#1}{#2}}
\newcommand\TSUCC[1]{\mathsf{suc}(#1)}
\newcommand\IFB[3]{\mathsf{if}(#1,#2,#3)}
\newcommand\IFV[4]{\mathsf{ifz}(#1,#2,#3\cdot#4)}
\newcommand\TSEQ[3]{#1\vdash#2:#3}
\newcommand\ONELEM{()}
\newcommand\STREAM[1]{\mathsf S_{#1}}
\newcommand\Forgca{\mathsf U}
\newcommand\Embr[1]{#1_{\mathord\subseteq}}
\newcommand\Emb[1]{#1^+}
\newcommand\Ret[1]{#1^-}
\newcommand\Embf{\mathsf E}
\newcommand\Vect[1]{\vv{#1}}
\newcommand\DICE[3]{\mathsf{dice}_{#1}(#2,#3)}
\newcommand\Funofemb{\mathsf{F}}
\newcommand\HPLAM{\Lambda_{\mathsf{HP}}}
\newcommand\Testv[1]{#1^0}
\newcommand\Testt[1]{#1^-}
\newcommand\Testa[1]{#1^+}
\newcommand\Lenv[1]{|#1|^0}
\newcommand\Lent[1]{|#1|^-}
\newcommand\Lena[1]{|#1|^+}
\newcommand\mon[2]{\mathbb c^\mathbb{1}_{#1}\left(#2\right)}
\newcommand\coeff[2]{\mathbb m^{#1}({#2})}
\newcommand\coeffv[1]{\coeff 0{#1}}
\newcommand\coefft[1]{\coeff -{#1}}
\newcommand\coeffa[1]{\coeff +{#1}}
\newcommand\Appsep{\,}
\newcommand\AND{\wedge}
\newcommand\Projt[1]{\Proj{#1}^{\mathord\ITens}}
\newcommand\Rels[1]{\mathsf{Rel}(#1)}
\newcommand\Relsv[1]{\mathsf{Rel}^{\mathsf v}(#1)}
\newcommand\Nrel[1]{{#1}^{\mathord -}}
\newcommand\Prel[1]{{#1}^{\mathord +}}
\newcommand\Srel[2]{#1^{#2}}
\newcommand\POS{\mathord +}
\newcommand\NEG{\mathord -}
\newcommand\Subrel{\sqsubseteq}
\newcommand\Infrel{\bigsqcap}
\newcommand\Crel[1]{\overline{#1}}
\newcommand\Andc{\mathrel\wedge}
\newcommand\Trel[2]{\cR(#1)_{#2}}
\newcommand\Trelv[2]{\cV(#1)_{#2}}
\newcommand\FOLD[1]{\mathsf{fold}(#1)}
\newcommand\UNFOLD[1]{\mathsf{unfold}(#1)}
\newcommand\Vsymb{\mathsf v}
\newcommand\Gsymb{\mathsf g}
\newcommand\Restr[2]{{\mathsf p}^{\Gsymb}(#1,#2)}
\newcommand\Restrv[2]{{\mathsf p}^{\Vsymb}(#1,#2)}
\newcommand\Srestr[2]{{\mathsf I}^\Gsymb(#1,#2)}
\newcommand\Srestrv[2]{{\mathsf I}^\Vsymb(#1,#2)}
\newcommand\Srestrp[3]{{\mathsf I}^{#3}(#1,#2)}
\newcommand\Height[1]{\mathsf{h}(#1)}
\newcommand\Onelem{*}
\newcommand\Case{\mathrm{case}}
\newcommand\COIN[1]{\mathsf{coin}(#1)}
\newcommand\Cnat{\mathsf{nat}}
\newcommand\Cpair[2]{(#1,#2)}
\newcommand\Cproj[2]{\mathsf{pr}_{#1}\,#2}
\newcommand\Cifz[3]{\mathsf{ifz}(#1,#2,#3)}
\newcommand\Csuc[1]{\mathsf{suc}\,#1}
\newcommand\Cpred[1]{\mathsf{pred}\,#1}
\newcommand\Cfix[3]{\mathsf{fix}\,#1^{#2}\cdot#3}
\newcommand\Cprod[2]{#1\times#2}
\newcommand\Ctype[1]{{#1}^*} 
\newcommand\Cterm[1]{{#1}^*} 
\newcommand\Clen[1]{\mathsf{l}(#1)}
\newcommand\Ccat[2]{#1\cdot#2}
\begin{document}


\title[Probabilistic CBPV]{Probabilistic Call By Push Value}

\author[T. Ehrhard and C. Tasson]{Thomas Ehrhard} 
\address{{CNRS, IRIF, UMR 8243,
    Univ Paris Diderot\\
    Sorbonne Paris Cit\'e, F-75205 Paris, France}} 
\email{\{thomas.ehrhard,christine.tasson\}@irif.fr} 

\author[]{Christine Tasson} 


\keywords{MANDATORY list of keywords} \subjclass{MANDATORY list of acm
  classifications}


\begin{abstract}
  We introduce a probabilistic extension of Levy's
  Call-By-Push-Value. This extension consists simply in adding a
  ``flipping coin'' boolean closed atomic expression. This language
  can be understood as a major generalization of Scott's PCF
  encompassing both call-by-name and call-by-value and featuring
  recursive (possibly lazy) data types.  We interpret the language in
  the previously introduced denotational model of probabilistic
  coherence spaces, a categorical model of full classical Linear
  Logic, interpreting data types as coalgebras for the resource
  comonad. We prove adequacy and full abstraction, generalizing
  earlier results to a much more realistic and powerful programming
  language.
\end{abstract}

\maketitle

\section{Introduction}

\emph{Call-by-Push-Value}~\cite{LevyP04} is a class of functional languages
generalizing the lambda-calculus in several directions. From the point of view
of Linear Logic we understand it as a half-polarized system bearing some
similarities with \Eg{}classical Parigot's lambda-mu-calculus, this is why we
call it $\CBPV{}$. The main idea of Laurent and Regnier interpretation of
call-by-name lambda-mu in Linear Logic~\cite{LaurentRegnier03} (following
actually~\cite{Girard91a}) is that all types of the minimal fragment of the
propositional calculus (with $\Rightarrow$ as unique connective) are
interpreted as \emph{negative} types of Linear Logic which are therefore
naturally equipped with structural morphisms: technically speaking, these types
are algebras of the $\wn$-monad of Linear Logic. This additional structure of
negative types allows to perform logical structural rules on the \emph{right
  side} of typing judgments even if these formulas are not necessarily of shape
$\wn\sigma$, and this is the key towards giving a computational content to
classical logical rules, generalizing the fundamental discovery of Griffin on
typing $\CALLCC{}$ with Peirce Law~\cite{Griffin90}.

From our point of view, the basic idea of $\CBPV{}$ is quite similar, though
somehow dual and used in a less systematic way: data types are interpreted as
\emph{positive} types of Linear Logic equipped therefore with structural
morphisms (as linear duals of negative formulas, they are coalgebras of the
$\oc$-comonad) and admit therefore structural rules on the \emph{left side} of
typing judgment even if they are not of shape $\EXCL\sigma$. This means that a
function defined on a data type can have a \emph{linear function type} even if
it uses its argument in a non-linear way: this non-linearity is automatically
implemented by means of the structural morphisms the positive data type is
equipped with.

The basic positive type in Linear Logic is $\EXCL\sigma$ (where $\sigma$ is any
type): it is the very idea of Girard's call-by-name translation of the
lambda-calculus into Linear Logic to represent the implication type
$\sigma\Rightarrow\tau$ by means of the decomposition
$\LIMPL{\EXCL\sigma}\tau$. The new idea of $\CBPV{}$ is to generalize this use
of the linear implication to any type construction of shape
$\LIMPL{\phi}{\tau}$ where $\phi$ is a positive type, without imposing any
linearity restriction on the usage of the argument of type $\phi$ used by a
function of type $\LIMPL{\phi}{\tau}$. This non-symmetrical restriction in the
use of the linear implication motivates our description of $\CBPV$ as a
``half-polarized'' system: in a fully polarized system like Laurent's
\emph{Polarized Linear Logic} LLP~\cite{LaurentRegnier03}, one would also
require the type $\sigma$ to be negative in $\LIMPL\phi\sigma$ (the last system
presented in~\cite{Ehrhard16a} implements this idea)
and the resulting formalism would host classical computational primitives such
as $\CALLCC$ as well. The price to pay, as illustrated
in~\cite{AminiEhrhard15}, is a less direct access to data types: it is
impossible to give a function from integers to integers the expected type
$\LIMPL\Tnat\Tnat$ (where $\Tnat$ is the type of flat natural numbers
satisfying $\Tnat=\PLUS\ONE\Tnat$), the simplest type one can give to such a
term is $\LIMPL\Tnat{\wn\Tnat}$ which complicates its denotational
interpretation\footnote{One can also consider $\wn$ as the computational monad
  of linear continuations and use a translation from direct style into monadic
  style (which, for this monad, is just a version of the familiar CPS
  translation). This is just a matter of presentation and of syntactic sugar
  and does not change the denotational interpretation in the kind of concrete
  models of Linear Logic we have in mind such as the relational model, the
  coherence space model etc.}.

Not being polarized on the right side of implications, $\CBPV$ remains
``intuitionistic'' just as standard functional programming languages whose
paradigmatic example is $\PCF$. So what is the benefit of this
special status given to positive formulas considered as ``data types''? There
are several answers to this question.
\begin{itemize}
\item First, and most importantly, it gives a \emph{call-by-value
    access} to data types: when translating $\PCF$ into Linear Logic,
  the simplest type for a function from integers to integers is
  $\LIMPL{\EXCL\Tnat}{\Tnat}$. This means that arguments of type
  $\Tnat$ are used in a call-by-name way: such arguments are evaluated
  again each time they are used. This can of course be quite
  inefficient. It is also simply \emph{wrong} if we extend our
  language with a random integer generator since in that case each
  evaluation of such an argument can lead to a different value: in
  $\PCF$ there is no way to keep memory of the value obtained for one
  evaluation of such a parameter and probabilistic programming is
  therefore impossible. In $\CBPV$ data types such as $\Tnat$ can be
  accessed in call-by-value, meaning that they are evaluated once and
  that the resulting value is kept for further computation: this is
  typically the behavior of a function of type
  $\LIMPL\Tnat\Tnat$. This is not compulsory however and an explicit
  $\oc$ type constructor still allows to define functions of type
  $\LIMPL{\EXCL\Tnat}{\Tnat}$ in $\CBPV$, with the usual $\PCF$
  behavior.  
\item Positive types being closed under positive Linear Logic
  connectives (direct sums and tensor product) and under ``least
  fixpoint'' constructions, it is natural to allow corresponding
  constructions of positive types in $\CBPV$ as well, leading to a
  language with rich data type constructions (various kinds of trees,
  streams etc are freely available) and can be accessed in
  call-by-value as explained above for integers. From this data types
  point of view, the $\oc$ Linear Logic connective corresponds to the
  type of \emph{suspensions} or \emph{thunks} which are boxes (in the
  usual Linear Logic sense) containing unevaluated pieces of program.
\item   As already mentioned,
  since the Linear Logic type constructors $\LIMPL{}{}$ and $\EXCL{}$
  are available in $\CBPV$ (with the restriction explained above on
  the use of $\LIMPL{}{}$ that the left side type must be positive),
  one can represent in $\CBPV$ both Girard's translations from
  lambda-calculus into Linear Logic introduced in~\cite{Girard87}: the
  usual one which is call-by-name and the ``boring one'' which is
  call-by-value.
  So our language $\CBPV$ is not intrinsically call-by-value and hosts
  very simply Girard's representation of call-by-name in Linear Logic
  as further explained in Section~\ref{sec:PCF-products}.  
\end{itemize}

Concretely, in $\CBPV$, a term of positive type can be a \emph{value}, and then
it is discardable and duplicable and, accordingly, its denotational
interpretation is a morphism of coalgebras: values are particular terms whose
interpretation is easily checked to be such a morphism, which doesn't preclude
other terms of positive type to have the same property of course, in particular
terms reducing to values! Being a value is a property which can be decided in
time at most the size of the term and values are freely duplicable and
discardable. The ``$\beta$-rules'' of the calculus (the standard
$\beta$-reduction as well as the similar reduction rules associated with tensor
product and direct sum) are subject to restrictions on certain subterms of
redexes to be values (because they must be duplicable and discardable) and
these restrictions make sense thanks to this strong decidability property of
being a value.

\subsection*{Probabilities in $\CBPV$.} 
Because of the possibility offered by $\CBPV$ of handling values in a
call-by-value manner, this language is particularly suitable for
probabilistic functional programming. Contrarily to the common monadic
viewpoint on effects, we consider an extension of the language where
probabilistic choice is a primitive $\COIN p$ of type $\PLUS\ONE\ONE$
(the type of booleans)\footnote{And not of type $T(\PLUS\ONE\ONE)$
  where $T$ would be a computational monad of probabilistic
  computations.}  parameterized by $p\in[0,1]\cap\Rational$ which is
the probability of getting $\True$ (and $1-p$ is the probability of
getting $\False$). So our probabilistic extension $\pCBPV$ of $\CBPV$
is in direct style, but, more importantly, the denotational semantics
we consider is itself in ``direct style'' and does not rely on any
auxiliary computational monad of probability distributions~\cite{Saheb-Djahromi80,JonesPlotkin89} (see~\cite{JungTix98} for the
difficulties related with the monadic approach to probabilistic
computations), random variables~\cite{phdBarker,goubaultvarraca11,Mislove16}, or measures~\cite{StatonYWHK16,HeunenKSY17}.

On the contrary, we interpret our language in the model of \emph{probabilistic
  coherence spaces}~\cite{DanosEhrhard08} that we already used for providing a
fully abstract semantics for probabilistic
$\PCF$~\cite{EhrhardPaganiTasson14}. A probabilistic coherence space $X$ is
given by an at most countable set $\Web X$ (the web of $X$) and a set $\Pcoh X$
of $\Web X$-indexed families of non-negative real numbers, to be considered as
some kind of ``generalized probability distributions''. This set of families of
real numbers is subject to a closure property implementing a simple intuition
of probabilistic observations.  Probabilistic coherence spaces are a model of
classical Linear Logic and can be seen as $\omega$-continuous domains equipped
with an operation of convex linear combination, and the linear morphisms of
this model are exactly the Scott continuous functions commuting with these
convex linear combinations.

Besides, probabilistic coherence spaces can be seen as particular
\emph{d-cones}~\cite{TixKP09a} and even
\emph{Kegelspitzen}~\cite{KeimelP16}, that is, complete partial orders
equipped with a Scott continuous ``convex structure'' allowing to
compute probabilistic linear combinations of their elements.
Kegelspitzen have been used recently by Rennela to define a
denotational model of a probabilistic extension of
FPC~\cite{Rennela16}. The main difference with respect to our approach
seems to be the fact that non-linear morphisms (corresponding to
morphisms of type $\Limpl{\Excl\sigma}{\tau}$ in our setting) are
general Scott continuous functions in Rennela's model\footnote{More
  precisely, in his interpretation of FPC, Rennela uses strict Scott
  continuous functions, but, along the same lines, it seems clear that
  Kegelspitzen give rise to a model of PCF where morphisms are general
  Scott continuous functions.}, whereas they are analytic
functions\footnote{Meaning that it is definable by a power series.} in
ours, which can be seen as very specific Scott continuous
functions. See also~\cite{EhrhardPaganiTasson18} where these functions
are seen to be \emph{stable} in a generalized sense. 

As shown in~\cite{DanosEhrhard08} probabilistic coherence spaces have
all the required completeness properties for interpreting recursive
type definitions (that we already used in~\cite{EhrhardPaganiTasson11}
for interpreting the pure lambda-calculus) and so we are able to
associate a probabilistic coherence space with all types of $\pCBPV$.

In this model the type $\PLUS\ONE\ONE$ is interpreted as the set of
sub-probability distributions on $\{\True,\False\}$ so that we have a
straightforward interpretation of $\COIN p$. Similarly the type of
flat integers $\Tnat$ is interpreted as a probabilistic coherence space $\Snat$ 
such that $\Web\Snat=\Nat$ and $\Pcoh\Snat$ is the set of all probability
distributions on the natural numbers. Given probabilistic spaces $X$ and $Y$,
the space $\Limpl XY$ has $\Web X\times\Web Y$ as web and $\Pcoh{(\Limpl XY)}$
is the set of all $\Web X\times\Web Y$ matrices which, when applied to an
element of $\Pcoh X$ gives an element of $\Pcoh Y$. The web of the space $\Excl
X$ is the set of all finite multisets of elements of $\Web X$ so that an
element of $\Limpl{\Excl X}{Y}$ can be considered as a power series on as many
variables as there are elements in $\Web X$ (the composition law associated
with the Kleisli category of the $\oc$-comonad is compatible with this
interpretation of morphisms as power series).

From a syntactic point of view, the only values of $\PLUS\ONE\ONE$ are $\True$
and $\False$, so $\COIN p$ is not a value. Therefore we cannot reduce
$\LAPP{\ABST x{\PLUS\ONE\ONE}{M}}{\COIN p}$ to $\Subst M{\COIN p}{x}$ and this
is a good thing since then we would face the problem that the boolean values of
the various occurrences of $\COIN p$ might be different. We have first to
reduce $\COIN p$ to a value, and the reduction rules of our probabilistic
$\CBPV$ stipulate that $\COIN p$ reduces to $\True$ with probability $p$ and to
$\False$ with probability $1-p$ (in accordance with the kind of operational
semantics that we considered in our earlier work on this topic, starting
with~\cite{DanosEhrhard08}). So $\LAPP{\ABST x{\PLUS\ONE\ONE}{M}}{\COIN p}$
reduces to $\Subst M\True x$ with probability $p$ and to $\Subst M\False x$
with probability $1-p$, which is perfectly compatible with the intuition that
in $\CBPV$ application is a linear operation (and that implication is linear:
the type of $\ABST x{\PLUS\ONE\ONE}{M}$ is $\LIMPL{(\PLUS\ONE\ONE)}{\sigma}$
for some type $\sigma$): in this operational semantics as well as in the
denotational semantics outlined above, linearity corresponds to commutation
with (probabilistic) convex linear combinations.

\subsection*{Contents.}
The results presented in this paper illustrate the tight connection between the
syntactic and the denotational intuitions underpinning our understanding of
this calculus. 

We first introduce in Section~\ref{sec:syntax} the syntax and operational
semantics of $\pCBPV$, an abstract programming language very close to Paul
Levy's Call by Push Value (CBPV)~\cite{LevyP04}. It however differs from Levy's
language mainly by the fact that CBPV computation types products and recursive
type definitions have no counterparts in our language. This choice is mainly
motivated by the wish of keeping the presentation reasonably short. It is
argued in Sections~\ref{subsec:exsyn} and~\ref{sec:PCF-products} that $\pCBPV$
is expressive enough for containing well behaved lazy data types such as the
type of streams, and for encoding call-by-name languages with products.

In Section~\ref{sec:PCS}, we present the Linear Logic model of probabilistic
coherence spaces, introducing mainly the underlying linear category $\PCOH$,
where $\pCBPV$ general types are interpreted, and the Eilenberg-Moore category
$\EM\PCOH$, where the positive types are interpreted. In order to simplify the
Adequacy and Full Abstraction proofs, we restrict actually our attention to a
well-pointed subcategory of $\EM\PCOH$ whose objects we call ``dense
coalgebras'': this will allow to consider all morphisms as functions. As
suggested by one of the referees, there are probably smaller well-pointed
subcategories of $\EM\PCOH$ where we might interpret our positive types, and in
particular the category of families introduced in~\cite{Abramsky1998}
describing call-by-value games. This option will be explored in further
work. We prefer here to work with the most general setting as it is also
compatible with a probabilistic extension of the last system of~\cite{Ehrhard16a}, which features classical $\CALLCC{}$-like capabilities.

We prove then in Section~\ref{sec:adequacy} an Adequacy Theorem whose
statement is extremely simple: given a closed term $M$ of type $\ONE$
(which has exactly one value $\ONELEM$), the denotational semantics of
$M$, which is an element of $[0,1]$, coincides with its probability to
reduce to $\ONELEM$ (such a term can only diverge or reduce to
$\ONELEM$). In spite of its simple statement the proof of this result
requires some efforts mainly because of the presence of
unrestricted\footnote{Meaning that recursive type definitions are not
  restricted to covariant types.} recursive types in $\pCBPV$. The
method used in the proof relies on an idea of Pitts~\cite{Pitts93} and
is described in the introduction of Section~\ref{sec:adequacy}.

Last we prove Full Abstraction in Section~\ref{sec:FA} adapting the
technique used in~\cite{EhrhardPaganiTasson11} to the present $\pCBPV$
setting. The basic idea consists in associating, with any element $a$
of the web of the probabilistic coherence space $\Tsem\sigma$
interpreting the type $\sigma$, a term $\Testt a$ of type\footnote{For
  technical reasons and readability of the proof, the type we give to
  $\Testt a$ in Section~\ref{sec:FA} slightly differs from this one.}
$\LIMPL{\EXCL\sigma}{\LIMPL{\EXCL\Tnat}{\ONE}}$ such that, given two
elements $w$ and $w'$ of $\Pcoh{\Tsem\sigma}$ such that
$w_a\not=w'_a$, the elements $\Psem{\Testt a}{}\Prom w$ and
$\Psem{\Testt a}{}\Prom{(w')}$ of $\Pcoh{(\LIMPL{\EXCL\Tnat}{\ONE})}$
are different power series depending on a finite number $n$ of
parameters (this number $n$ depends actually only on $a$) so that we
can find a rational number valued sub-probability distribution for
these parameters where these power series take different values in
$[0,1]$. Applying this to the case where $w$ and $w'$ are the
interpretations of two closed terms $M$ and $M'$ of type $\sigma$, we
obtain, by combining $\Testt a$ with the rational sub-probability
distribution which can be represented in the syntax using $\COIN p$
for various values of $p$, a $\CBPV$ closed term $C$ of type
$\LIMPL{\EXCL\sigma}{\ONE}$ such that the probability of convergence
of $\LAPP C{\STOP M}$ and $\LAPP C{\STOP{(M')}}$ are different (by
adequacy). This proves that if two (closed) terms are operationally
equivalent then they have the same semantics in probabilistic
coherence spaces, that is, equational full abstraction.

\subsection*{Further developments.}
These results are similar to the ones reported in~\cite{EhrhardPaganiTasson15}
but are actually different, and there is no clear logical connection between
them, because the languages are quite different, and therefore, the observation
contexts also. And this even in spite of the fact that $\PCF$ can be faithfully
encoded in $\CBPV$. This seems to show that the semantical framework for
probabilistic functional programming offered by probabilistic coherence spaces
is very robust and deserves further investigations. One major outcome of the
present work is a natural extension of probabilistic computation to rich
data-types, including types of potentially infinite values (streams etc).

Our full abstraction result cannot be extended to inequational full abstraction
with respect to the natural order relation on the elements of probabilistic
coherence spaces: a natural research direction will be to investigate other
(pre)order relations and their possible interactive definitions. Also, it is
quite tempting to replace the equality of probabilities in the definition of
contextual equivalence by a distance; this clearly requires further
developments.

\section{Probabilistic Call By Push Value}\label{sec:syntax}
We introduce the syntax of $\pCBPV$ of CBPV (where HP stands for
``half polarized'').

Types are given by the following BNF syntax. We define by mutual
induction two kinds of types: \emph{positive types} and \emph{general
  types}, given type variables $\zeta$, $\xi$\dots:
\begin{align}
  \text{positive}\quad \phi,\psi,\dots{} &\Bnfeq \ONE \Bnfor \EXCL
  \sigma \Bnfor \TENS\phi\psi \Bnfor \PLUS\phi\psi \Bnfor \zeta \Bnfor
  \TREC \zeta\phi \label{eq:cbpv-pos-types}
  \\
  \text{general}\quad \sigma,\tau\dots{} &\Bnfeq \FORG\phi \Bnfor
  \LIMPL\phi\sigma
  \label{eq:cbpv-gen-types}
\end{align}
  The type $\ONE$ is the
neutral element of $\ITens$ and it might seem natural to have also a
type $\ZERO$ as the neutral element of $\IPlus$. We didn't do so
because there is no associated constructor in the syntax of terms, and
the only closed terms of type $\ZERO$ that one can write are
ever-looping terms.

Observe also that there are no restrictions on the variance of types in
the recursive type construction: for instance, in $\TREC\zeta\phi$ is
a well-formed positive type if $\phi=\EXCL{(\LIMPL\zeta\zeta)}$, where
$\zeta$ has a negative and a positive occurrence. Do well notice that
our ``positive types'' are positive in the sense of logical
polarities, and \emph{not} of the variance of type variables!

Terms are given by the following BNF syntax, given variables
$x,y,\dots$:
\begin{align*}
  M,N\dots{} \Bnfeq x & \Bnfor \ONELEM \Bnfor \STOP M
  \Bnfor \PAIR MN \Bnfor \IN \ell M \Bnfor \IN r M \\
  & \Bnfor \ABST x\phi M \Bnfor \LAPP MN \Bnfor \CASE M{x_\ell}{N_\ell}{x_r}{N_r}\\
  & \Bnfor \PR \ell M \Bnfor \PR r M \Bnfor \GO M \Bnfor \FIXT x\sigma M \\
  & \Bnfor \FOLD M \Bnfor \UNFOLD M \\
    & \Bnfor \COIN p, \; p\in [0,1]\cap\mathbb Q
\end{align*}
\begin{remark}
  This functional language $\pCBPV$, or rather the sublanguage $\CBPV$
  which is $\pCBPV$ stripped from the $\COIN p$ construct, is formally
  very close to Levy's CBPV. As explained in the Introduction, our
  intuition however is more related to Linear Logic than to CBPV and
  its general adjunction-based models. This explains why our syntax
  slightly departs from Levy's original syntax as described
  \Eg~in~\cite{LevyP02} and is much closer to the SFPL language
  of~\cite{MarzRohrStreicher99}: Levy's type constructor $F$ is kept
  implicit and $U$ is ``$\oc$''. We use LL inspired notations:
  $\STOP M$ corresponds to $\mathsf{thunk}(M)$ and $\GO M$ to
  $\mathsf{force}(M)$. Our syntax is also slightly simpler than that
  of SFPL in that our general types do not feature products and
  recursive types definitions, we will explain in
  Section~\ref{sec:PCF-products} that this is not a serious limitation
  in terms of expressiveness.
\end{remark}

Figure~\ref{fig:typing_system} provides the typing rules for these
terms. A typing context is an expression
$\cP=(x_1:\phi_1,\dots,x_k:\phi_k)$ where all types are positive
and the $x_i$s are pairwise distinct variables.

\begin{figure*}
  \centering
  \begin{center}
    \AxiomC{} \UnaryInfC{$\TSEQ{\cP,x:\phi}{x}{\phi}$}
    \DisplayProof \quad \AxiomC{$\TSEQ{\cP,x:\phi}{M}{\sigma}$}
    \UnaryInfC{$\TSEQ{\cP}{\ABST x\phi M}{\LIMPL\phi\sigma}$}
    \DisplayProof \quad \AxiomC{$\TSEQ{\cP}{M}{\LIMPL\phi\sigma}$}
    \AxiomC{$\TSEQ{\cP}{N}{\phi}$} \BinaryInfC{$\TSEQ{\cP}{\LAPP
        MN}{\sigma}$} \DisplayProof
  \end{center}
\vspace{0.3em}

\begin{center}
  \AxiomC{$\TSEQ\cP M\sigma$} \UnaryInfC{$\TSEQ\cP{\STOP
      M}{\EXCL\sigma}$} \DisplayProof \quad \AxiomC{}
  \UnaryInfC{$\TSEQ{\cP}{\ONELEM}{\ONE}$} \DisplayProof \quad
  \AxiomC{$\TSEQ\cP{M_\ell}{\phi_\ell}$}
  \AxiomC{$\TSEQ\cP{M_r}{\phi_r}$}
  \BinaryInfC{$\TSEQ\cP{\PAIR{M_\ell}{M_r}}{\TENS{\phi_\ell}{\phi_r}}$}
  \DisplayProof \quad \AxiomC{$\TSEQ\cP M{\phi_i}\quad \small i\in\{\ell,r\}$}
  \UnaryInfC{$\TSEQ\cP{\IN iM}{\PLUS{\phi_\ell}{\phi_r}}$}
  \DisplayProof
\end{center}
\vspace{0.3em}

\begin{center}
  \AxiomC{$\TSEQ\cP{M}{\EXCL \sigma}$} 
  \UnaryInfC{$\TSEQ\cP{\GO M}{\sigma}$} 
  \DisplayProof 
  \quad 
  \AxiomC{$\TSEQ\cP
    M{\TENS{\phi_\ell}{\phi_r}\quad \small i\in\{\ell,r\}}$} \UnaryInfC{$\TSEQ\cP{\PR
      iM}{\phi_i}$} \DisplayProof \quad \AxiomC{$\TSEQ{\cP,x:\EXCL
      \sigma}{M}{\sigma}$} \UnaryInfC{$\TSEQ\cP{\FIXT x\sigma
      M}\sigma$} \DisplayProof
\end{center}

\vspace{0.3em}

\begin{center}
  \AxiomC{$\TSEQ\cP M{\PLUS{\phi_\ell}{\phi_r}}$}
  \AxiomC{$\TSEQ{\cP,x_\ell:\phi_\ell}{M_\ell}\sigma$}
  \AxiomC{$\TSEQ{\cP,x_r:\phi_r}{M_r}\sigma$}
  \TrinaryInfC{$\TSEQ\cP{\CASE M{x_\ell}{M_\ell}{x_r}{M_r}}\sigma$}
  \DisplayProof
\end{center}
\vspace{0.8em}
\begin{center}
  \AxiomC{} \UnaryInfC{$\TSEQ{\cP}{\COIN p}{\PLUS\ONE\ONE}$}
  \DisplayProof
\end{center}
\vspace{0.3em}
\begin{center}
  \AxiomC{$\TSEQ{\cP}{M}{\Subst{\psi}{\TREC\zeta\psi}{\zeta}}$}
  \UnaryInfC{$\TSEQ{\cP}{\FOLD M}{\TREC\zeta\psi}$} \DisplayProof
  \quad \AxiomC{$\TSEQ{\cP}{M}{\TREC\zeta\psi}$}
  \UnaryInfC{$\TSEQ{\cP}{\UNFOLD
      M}{\Subst{\psi}{\TREC\zeta\psi}{\zeta}}$} \DisplayProof
\end{center}

\caption{Typing system for $\pCBPV$}
\label{fig:typing_system}
\end{figure*}

\subsection{Reduction rules}

\emph{Values} are particular $\pCBPV$ terms (they are not a new
syntactic category) defined by the following BNF syntax:
\begin{align*}
  V,W\dots{} \Bnfeq x \Bnfor \ONELEM \Bnfor \STOP M \Bnfor \PAIR{V}{W}
  \Bnfor \IN \ell {V} \Bnfor \IN r {V} \Bnfor \FOLD V\,.
\end{align*}

Figure~\ref{fig:reduction-rules} defines a deterministic \emph{weak} reduction
relation $\Wred$ and a probabilistic reduction $\Rel{\Redone p}$ relation.
This reduction is weak in the sense that we never reduce within a ``box'' $\STOP
M$ or under a $\lambda$. 

The distinguishing feature of this reduction system is the role played by
values in the definition of $\Wred$. Consider for instance the case of the term
$\PR \ell {\PAIR{M_\ell}{M_r}}$; one might expect this term to reduce directly to
$M_\ell$ but this is not the case. One needs first to reduce $M_\ell$ \emph{and}
$M_r$ to values before extracting the first component of the pair (the terms
$\PR \ell {\PAIR{M_\ell}{M_r}}$ and $M_\ell$ have not the same denotational
interpretation in general). Of course replacing $M_i$ with $\STOP{M_i}$ allows
a lazy behavior. Similarly, in the $\Wred$ rule for $\mathsf{case}$, the term
on which the test is made must be reduced to a value (necessarily of shape $\IN
\ell V$ or $\IN r V$ if the expression is well typed) before the reduction is performed. As
explained in the Introduction this allows to ``memoize'' the value $V$ for
further usage: the value is passed to the relevant branch of the
$\mathsf{case}$ through the variable $x_i$.

Given two terms $M$, $M'$ and a real number $p\in[0,1]$, $M\Rel{\Redone p} M'$
means that $M$ reduces in one step to $M'$ with probability $p$.

We say that $M$ is \emph{weak normal} if there is no reduction
$M\Rel{\Redone p}M'$. It is clear that any value is weak normal. When $M$ is
closed, $M$ is weak normal iff it is a value or an abstraction.

In order to simplify the presentation we \emph{choose} in
Figure~\ref{fig:reduction-rules} a reduction strategy. For instance we decide
that, for reducing $\PAIR{M_\ell}{M_r}$ to a value, one needs first to reduce
$M_\ell$ to a value, and then $M_r$; this choice is of course completely
arbitrary. A similar choice is made for reducing terms of shape $\LAPP{M}{N}$,
where we require the argument to be reduced first. This choice is less
arbitrary as it will simplify a little bit the proof of adequacy in
Section~\ref{sec:adequacy} (see for instance the proof of
Lemma~\ref{lemma:rel-app-closeness}).

We could perfectly define a more general weak reduction relation as
in~\cite{Ehrhard16a} for which we could prove a ``diamond'' confluence
property but we would then need to deal with a reduction transition
system where, at each node (term), several probability distributions
of reduced terms are available and so we would not be able to describe
reduction as a simple (infinite dimensional) stochastic matrix. We
could certainly also define more general reduction rules allowing to
reduce redexes anywhere in terms (apart for $\COIN p$ which can be
reduced only when in linear position) but this would require the
introduction of additional $\sigma$-rules as
in~\cite{EhrhardGuerrieri16}. As in that paper, confluence can
probably be proven, using ideas coming
from~\cite{EhrhardRegnier02,Vaux08} for dealing with reduction in
an algebraic lambda-calculus setting.
\begin{figure*}
  \centering
  \begin{center}
    \AxiomC{} \UnaryInfC{$\GO{\STOP M}\Rel\Wred M$} \DisplayProof
    \quad \AxiomC{} \UnaryInfC{$\LAPP{\ABST x\phi M}{V}\Rel\Wred\Subst
      MVx$} \DisplayProof \quad \AxiomC{}\RightLabel{$i\in\{\ell,r\}$} \UnaryInfC{$\PR
      i{\PAIR{V_\ell}{V_r}}\Rel\Wred V_i$} \DisplayProof
  \end{center}
  \vspace{0.8em}
  \begin{center}
    \AxiomC{} \UnaryInfC{$\FIXT x\sigma M\Rel\Wred\Subst
      M{\STOP{(\FIXT x\sigma M)}}x$} \DisplayProof \quad \AxiomC{}\RightLabel{$i\in\{\ell, r\}$}
    \UnaryInfC{$\CASE{\IN
        iV}{x_\ell}{M_\ell}{x_r}{M_r}\Rel\Wred\Subst{M_i}{V}{x_i}$}
    \DisplayProof
  \end{center}
  \vspace{0.8em}
  \begin{center}
    \AxiomC{} \UnaryInfC{$\UNFOLD{\FOLD{V}}\Rel{\Wred}V$}
    \DisplayProof
  \end{center}

  \vspace{0.3em}
  \centering
  \begin{center}
    \AxiomC{$M\Rel{\Wred}M'$} \UnaryInfC{$M\Rel{\Redone 1}M'$}
    \DisplayProof \quad \AxiomC{} \UnaryInfC{$\COIN p\Rel{\Redone p}
      \IN \ell \ONELEM$} \DisplayProof \quad \AxiomC{} \UnaryInfC{$\COIN
      p\Rel{\Redone{1-p}} \IN r \ONELEM$} \DisplayProof
  \end{center}

  \vspace{0.3em}

\begin{center}
  \AxiomC{$M\Rel{\Redone p} M'$} \UnaryInfC{$\GO M\Rel{\Redone
      p}\GO{M'}$} \DisplayProof \quad \AxiomC{$M\Rel{\Redone p} M'$}
  \UnaryInfC{$\LAPP MV\Rel{\Redone p}\LAPP{M'}V$} \DisplayProof \quad
  \AxiomC{$N\Rel{\Redone p} N'$} \UnaryInfC{$\LAPP MN\Rel{\Redone
      p}\LAPP{M}{N'}$} \DisplayProof \quad \AxiomC{$M\Rel{\Redone p}
    M'$}\RightLabel{$i\in\{\ell, r\}$} \UnaryInfC{$\PR iM\Rel{\Redone p}\PR i{M'}$} \DisplayProof
\end{center}
\vspace{0.3em}

\begin{center}
  \AxiomC{$M_\ell\Rel{\Redone p} M'_\ell$}
  \UnaryInfC{$\PAIR{M_\ell}{M_r}\Rel{\Redone p}\PAIR{M'_\ell}{M_r}$}
  \DisplayProof \quad \AxiomC{$M_r\Rel{\Redone p} M'_r$}
  \UnaryInfC{$\PAIR{V}{M_r}\Rel{\Redone p}\PAIR{V}{M'_r}$}
  \DisplayProof \quad 
  \AxiomC{$M\Rel{\Redone p} M'$} \RightLabel{$i\in\{\ell, r\}$}\UnaryInfC{$\IN
    iM\Rel{\Redone p}\IN i{M'}$} \DisplayProof
\end{center}

\vspace{0.3em}
\begin{center}
  \AxiomC{$M\Rel{\Redone p} M'$}
  \UnaryInfC{$\CASE{M}{x_\ell}{M_\ell}{x_r}{M_r} \Rel{\Redone
      p}\CASE{M'}{x_\ell}{M_\ell}{x_r}{M_r}$} \DisplayProof
\end{center}

\vspace{0.3em}

\begin{center}
  \AxiomC{$M\Rel{\Redone p} M'$}
  \UnaryInfC{$\FOLD M\Rel{\Redone p}\FOLD{M'}$}
  \DisplayProof
  \quad
  \AxiomC{$M\Rel{\Redone p} M'$}
  \UnaryInfC{$\UNFOLD M\Rel{\Redone p}\UNFOLD{M'}$}
  \DisplayProof
\end{center}

\caption{Weak and Probabilistic reduction axioms and rules for
  $\pCBPV$}
\label{fig:reduction-rules}
\end{figure*}

\subsection{Observational equivalence}\label{sec:obs-eq}
In order to define observational equivalence, we need to represent the
probability of convergence of a term to a normal form. As
in~\cite{DanosEhrhard08}, we consider the reduction as a discrete time
Markov chain whose states are terms and stationary states are weak
normal terms. We then define a stochastic matrix
$\Redmats\in[0,1]^{\pCBPV\times\pCBPV}$
(indexed by terms) as
\begin{equation*}
  \Redmats_{M,M'}=
  \begin{cases}
    p & \text{if }M\Rel{\Redone p}M'\\
    1 & \text{if $M$ is weak-normal and $M'=M$}\\
    0 & \text{otherwise.}
  \end{cases}
\end{equation*}
Saying that $\Redmats$ is stochastic means that the coefficients of
$\Redmats$ belong to $\Rseg 01$ and that, for any given term $M$, one
has $\sum_{M'}\Redmats_{M,M'}=1$ (actually there are at most two terms
$M'$ such that $\Redmats_{M,M'}\not=0$).

For all $M,M'\in\pCBPV$, if $M'$ is weak-normal then the sequence
$(\Redmats^n_{M,M'})_{n=1}^\infty$ is monotone and included in
$[0,1]$, and therefore has a lub that we denote as
$\Redmats^\infty_{M,M'}$ which defines a sub-stochastic matrix (taking
$\Redmats^\infty_{M,M'}=0$ when $M'$ is not weak-normal).

When $M'$ is weak-normal, the number $p=\Redmats^\infty_{M,M'}$ is the
probability that $M$ reduces to $M'$ after a finite number of steps.

Let us say when two closed terms $M_1$, $M_2$ of type $\sigma$ are
\emph{observationally equivalent}:
\begin{center}
  $M_1\Rel\Obseq M_2$, if for all closed term $C$ of type
  $\LIMPL{\EXCL\sigma}{\ONE}$, $ \Redmats^\infty_{\LAPP
    C{\STOP{M_1}},\ONELEM} =\Redmats^\infty_{\LAPP
    C{\STOP{M_1}},\ONELEM}$.
\end{center}
For simplicity we consider only closed terms $M_1$ and $M_2$. We could also
define an observational equivalence on non closed terms, replacing the term $C$
with a context $C[\ ]$ which could bind free variables of the $M_i$'s, this
would not change the results of the paper.

\subsection{Examples}\label{subsec:exsyn}

For the sake of readability, we drop the fold/unfold constructs
associated with recursive types definitions; they can easily be
inserted at the right places.
This also means that, in these examples, we consider the types
$\TREC\zeta\phi$ and $\Subst\phi{\TREC\zeta\phi}\zeta$ as identical.

\subsubsection*{Ever-looping program.}
Given any type $\sigma$, we define $\LOOP\sigma=\FIXT x\sigma{\GO x}$
which satisfies $\TSEQ{}{\LOOP\sigma}{\sigma}$. It is clear that
$\LOOP\sigma\Rel\Wred\GO{\STOP{(\LOOP\sigma)}}\Rel\Wred\LOOP\sigma$ so
that we can consider $\LOOP\sigma$ as the ever-looping program of type
$\sigma$.

\subsubsection*{Booleans.}
We define the type $\BOOL=\PLUS\ONE\ONE$, so that $\TSEQ\cP{\COIN p}\BOOL$. We define the ``true'' constant as
$\True =\IN \ell \ONELEM$ and the ``false'' constant as $\False =\IN r \ONELEM$.
The corresponding eliminator is defined as follows.
Given terms $M$, $N_\ell$ and $N_r$
we set $\IFB M{N_\ell}{N_r}=\CASE
M{x_\ell}{N_\ell}{x_r}{N_r}$ where $x_i$ is not free in $N_i$ for $i\in\{\ell,r\}$, so that
\[
  \AxiomC{$\TSEQ\cP M\BOOL$} \AxiomC{$\TSEQ\cP{N_\ell}\sigma$}
  \AxiomC{$\TSEQ{\cP}{N_r}\sigma$}
  \TrinaryInfC{$\TSEQ{\cP}{\IFB M{N_\ell}{N_r}}\sigma$}
  \DisplayProof
\]
We have the following weak and probabilistic reduction rules, derived
from Figure~\ref{fig:reduction-rules}:
\[
 \AxiomC{}
    \UnaryInfC{$\IFB \True{N_\ell}{N_r}\Rel\Wred{N_\ell}$}
    \DisplayProof\quad \AxiomC{}
    \UnaryInfC{$\IFB \False{N_\ell}{N_r}\Rel\Wred{N_r}$}
    \DisplayProof
\quad
      \AxiomC{$M\Rel{\Redone p} M'$}
  \UnaryInfC{$\IFB M{N_\ell}{N_r} \Rel{\Redone
      p}\IFB {M'}{N_\ell}{N_r}$} \DisplayProof
  \]
\subsubsection*{Natural numbers.}
We define the type $\NAT$ of unary natural numbers by
$\NAT=\PLUS\ONE\NAT$ (by this we mean that
$\NAT=\TREC\zeta{(\PLUS\ONE\zeta)}$). We define $\NUM 0=\IN \ell \ONELEM$
and $\NUM{n+1}=\IN r{\NUM n}$ so that we have $\TSEQ\cP{\NUM
  n}{\NAT}$ for each $n\in\Nat$.

Then, given a term $M$, we define the term $\TSUCC M=\IN r{M}$, so that
we have
\[
  \AxiomC{$\TSEQ\cP M\NAT$} \UnaryInfC{$\TSEQ\cP{\TSUCC M}\NAT$}
  \DisplayProof
\]
  Last, given terms $M$, $N_\ell$ and $N_r$ and a
variable $x$, we define an ``\textsf{ifz}'' conditional by
$\IFV M{N_\ell}{x}{N_r}=\CASE Mz{N_\ell}x{N_r}$ where $z$ is not free
in $N_\ell$, so that
\[
  \AxiomC{$\TSEQ\cP M\NAT$} \AxiomC{$\TSEQ\cP{N_\ell}\sigma$}
  \AxiomC{$\TSEQ{\cP,x:\NAT}{N_r}\sigma$}
  \TrinaryInfC{$\TSEQ{\cP}{\IFV M{N_\ell}{x}{N_r}}\sigma$}
  \DisplayProof
\]
We have the following weak and probabilistic reduction rules, derived
from Figure~\ref{fig:reduction-rules}:
\[
 \AxiomC{}
 \RightLabel{$i\in\{\ell, r\}$}
    \UnaryInfC{$\IFV{\IN
        iV}{M_\ell}{x}{M_r}\Rel\Wred\Subst{M_i}{V}{x}$}
    \DisplayProof\quad
      \AxiomC{$M\Rel{\Redone p} M'$}
  \UnaryInfC{$\IFV M{N_\ell}x{N_r} \Rel{\Redone
      p}\IFV {M'}{N_\ell}x{N_r}$} \DisplayProof
\]
These conditionals will be used in the examples below.

\subsubsection*{Streams.}
Let $\phi$ be a positive type and $\STREAM\phi$ be the positive type
defined by $\STREAM\phi=\EXCL{(\TENS\phi{\STREAM\phi})}$, that is
$\STREAM\phi=\TREC\zeta{\EXCL{(\TENS\phi\zeta)}}$. We can define a
term $M$ such that $\TSEQ{}{M}{\LIMPL{\STREAM\phi}{\LIMPL\NAT \phi}}$
which computes the $n$th element of a stream:
\begin{align*}
  M=\FIXTP f{\LIMPL{\STREAM\phi}{\LIMPL\NAT \phi}}{\ABST
    x{\STREAM\phi}{\ABST y\NAT{}}{}}\IFV y{\PR \ell {(\GO x)}}{z}{\LAPP{\GO
      f}{\PR r {(\GO x)}}{\,z}}
\end{align*}
Let $O=\STOP{(\LOOP{\TENS\phi{\STREAM\phi}})}$, a term which represents the
``undefined stream'' (more precisely, it is a stream which is a value, but
contains nothing, not to be confused with $\LOOP{\STREAM\phi}$ which has the
same type but is not a value). We have $\Tseq{}{O}{\STREAM\phi}$, and observe
that the reduction of $\LAPP MO$ converges (to an abstraction) and that
$\LAPP M{O\Appsep\NUM 0}$ diverges.

Conversely, we can define a term $N$ such that $\TSEQ
{}N{\LIMPL{\EXCL{(\LIMPL\NAT\phi)}}{\STREAM\phi}}$ which turns a
function into the stream of its successive applications to an integer.
\begin{align*}
  N=\FIXTP F{\LIMPL{\EXCL{(\LIMPL\NAT\phi)}}{\STREAM\phi}} {\ABST
    f{\EXCL{(\LIMPL\NAT\phi)}}{}} \STOP{\PAIR{\LAPP{\GO f}{\NUM
      0}}{\LAPP{\GO F} { \STOP{(\ABST x\NAT{\LAPP{\GO f}{\TSUCC
              x}})} } }}
\end{align*}
Observe that the recursive call of $F$ is encapsulated into a box,
which makes the construction lazy.
As a last example, consider the following term $P$ such that
$\Tseq{}{P}{\LIMPL{(\TENS{\STREAM\phi}{\STREAM\phi})}{\LIMPL{(\PLUS\NAT\NAT)}{\phi}}}$
given by
\begin{align*}
  P=\ABST{y}{\TENS{\STREAM\phi}{\STREAM\phi}}
  {\ABST{c}{\PLUS\NAT\NAT}{\CASE{c}{x}{\LAPP M{x\Appsep\PR \ell y}}{x}{\LAPP M{x\Appsep\PR r y}}}}
\end{align*}
Take $\phi=\ONE$ and consider the term $Q=\PAIR{\STOP{\PAIR{\ONELEM}{O}}}O$,
then we have $\Tseq{}Q{\TENS{\STREAM\One}{\STREAM\One}}$, and observe that
$\LAPP P{Q\Appsep{\IN \ell {\NUM 0}}}$ converges to $\ONELEM$ whereas $\LAPP
P{Q\Appsep{\IN r {\NUM 0}}}$ diverges.

These examples suggest that $\STREAM\phi$ behaves as should behave a type of
streams of elements of type $\phi$.

\subsubsection*{Lists.} There are various possibilities for defining a
type of lists of elements of a positive type $\phi$. The simplest
definition is $\lambda_0=\PLUS\ONE{(\TENS\phi{\lambda_0})}$. This
corresponds to the ordinary ML type of lists. But we can also define
$\lambda_1=\EXCL{(\PLUS\ONE{{(\TENS\phi{\lambda_1})}})}$ and then we have
a type of lazy lists (or terminable streams) where the tail of the
list is computed only when required. Here is an example of a term $L$
such that $\Tseq{}{L}{\lambda_1}$, with $\phi=\BOOL=\PLUS\ONE\ONE$
which is a list of random length containing random booleans:
\begin{align*}
  L=\FIXT x{\lambda_1}{\STOP{(\IFB{\COIN{1/4}}
  {\IN \ell \ONELEM}{\IN r {{\PAIR{\COIN{1/2}}{\GO x}}}})}}
\end{align*}
Then $\GO L$ will reduce with probability $\frac 14$ to the empty list
$\IN \ell \ONELEM$, and with probability $\frac 38$ to 
%
%
each of the values
$\IN r{\PAIR\True L}$ and $\IN r{\PAIR\False L}$.

We can iterate this process, defining a term $R$ of type
$\LIMPL{\lambda_1}{\lambda_0}$ which evaluates completely a terminable
stream to a list:
\begin{align*}
  R=\FIXT{f}{(\LIMPL{\lambda_1}{\lambda_0})}
  {\ABST x{\lambda_1}{\CASE{\GO x}{z}{\IN \ell \ONELEM}{z}{{\PAIR{\PR\ell z}{\LAPP{\GO f}{\PR r z}}}}}}\,.
\end{align*}
Then $\LAPP RL$, which is a closed term of type $\lambda_0$,
terminates with probability $1$. The expectation of the length of this
``random list'' is $\sum_{n=0}^\infty n(\frac 34)^n=12$.


\subsubsection*{Probabilistic tests.}

If $\TSEQ{\cP}{M_i}{\sigma}$ for $i=1,2$, we set
$\DICE{p}{M_1}{M_2}=\IFB{\COIN p}{M_1}{M_2}$  and this term satisfies $\Tseq{\cP}{\DICE{p}{M_1}{M_2}}{\sigma}$. If
$M_i$ reduces to a value $V_i$ with probability $q_i$, then
$\DICE{p}{M_1}{M_2}$ reduces to $V_1$ with probability $p\,q_1$ and to $V_2$ with probability $(1-p)q_2$.

Let $n\in\Nat$ and let $\Vect p=(\List p0n)$ be such that
$p_i\in[0,1]\cap\Rational$ and $p_0+\cdots+p_n\leq 1$. Then one defines a
closed term $\Ran{\Vect p}$, such that $\Tseq{}{\Ran{\Vect p}}{\Tnat}$, which reduces to $\Num i$ with probability $p_i$ for
each $i\in\{0,\dots,n\}$. The definition is by induction on $n$.
\begin{equation*}
  \Ran{\Vect p}=
  \begin{cases}
    \Num 0 & \hspace{-10em}\text{if }p_0=1\text{ whatever be the value of }n\\
    \IFB{\COIN{p_0}}{\Num 0}{\LOOP\Tnat}& \hspace{-10em}\text{if }n=0\\
    \IFB{\COIN{p_0}}
    {\Num 0}
    {\TSUCC{\Ran{\frac{p_1}{1-p_0},\dots,\frac{p_n}{1-p_0}}}}
    &\text{otherwise}
  \end{cases}
\end{equation*}

\medskip
As an example of use of the test to zero conditional,  we define, by induction on $k$, a family of terms $\Probe_k$ such that
$\Tseq{}{\Probe_k}{\LIMPL\Tnat\ONE}$ and that tests the equality to $k$:
\begin{align*}
  \Probe_0 = \Abst x\Tnat{\IFV{x}{\ONELEM}{z}{\LOOP\ONE}}\qquad
  \Probe_{k+1} = \Abst x\Tnat{\IFV{x}{\LOOP\ONE}{z}{\LAPP{\Probe_k}z}}
\end{align*}
For $M$ such that $\Tseq{}{M}{\Tnat}$, the term $\LAPP{\Probe_k}M$ reduces to
$\ONELEM$ with a probability which is equal to the probability of $M$ to reduce
to $\Num k$.

\subsubsection{Notation.}\label{subsec:not-terms}

Now, we introduce terms that will be used in the definition of testing terms
in the proof of Full Abstraction in Section~\ref{sec:FA}.

First, we  define $\Pprod_k$ such that
$\Tseq{}{\Pprod_k}{\LIMPL{\ONE}{\cdots\LIMPL{}{\LIMPL\ONE{\LIMPL{\phi}\phi}}}}$
(with $k$ occurrences of $\ONE$):
\begin{align*}
  \Pprod_0 = \ABST y{\phi} y\qquad
  \Pprod_{k+1}
  = \ABST x\ONE{\Pprod_k}\,.
\end{align*}

Given for each $i\in\{0,\dots,k\}$, $M_i$  such that $\Tseq{\cP}{M_i}\ONE$ and $\Tseq{\cP}{N}{\phi}$, the term
$\LAPP{\LAPP{\LAPP{\Pprod_{k+1}}{M_0}\cdots}{M_k}}N$ reduces to a value $V$ with probability
$p_0\,\cdots\,p_k\,q$ where $p_i$ is the probability of $M_i$ to reduce to $\ONELEM$ and $q$ is the probability of $N$ to reduce to $V$. 
We use the notations:
\begin{equation*}
M_0\cdot N=\LAPP{\LAPP{\Pprod_1}{M_0}}N\qquad  M_0\AND\cdots\AND M_{k-1} =\left\{ 
\begin{array}{l}
\ONELEM\text{ if $k=0$}\\
\LAPP{\LAPP{\Pprod_{k}}{M_0}{\cdots} }M_{k-1}\text{ otherwise}, 
\end{array}\right.
\end{equation*}
so that 
 $\TSEQ{\cP}{M_0\cdot N}{\phi}$ and the probability that $M_0\cdot N$ reduces to $V$ is $p_0\, q$
and
$\TSEQ{\cP}{M_0\AND\cdots\AND M_{k-1}}{\ONE}$ and $M_0\AND\cdots\AND M_{k-1}$ reduces to $\ONELEM$ with probability $p_0\,\cdots\,p_{k-1}$.

\medskip

Given a general type $\sigma$ and terms $\List M0{k-1}$ such that, for
any $i\in\{0,\dots,k-1\}$, $\Tseq{}{M_i}\sigma$, we define close terms
$\Pchoose^\sigma_i(\List M 0{k-1})$ for $i\in\{0,\dots,n-1\}$ such
that $\Tseq{}{\Pchoose^\sigma_i(\List
  M0{k-1})}{\LIMPL\Tnat{{\sigma}}}$
\begin{align*}
  \Pchoose^\sigma_0(\List M 0{k-1}) &= \Abst z\Tnat{\LOOP\sigma}\\
  \Pchoose^\sigma_{i+1}(\List M 0{k-1}) &=
  \Abstpref z\Tnat
  \IFV z{M_0}{y}{\LAPP{\Pchoose^\sigma_{i}(\List M 1{k-1})}{y}} \text{ if }i\le k-1
\end{align*}
Given a term $P$ such that $\Tseq{\cP}{P}{\Tnat}$ and $p_i$ the
probability of $P$ to reduce   to $\Num i$ for any $i$, the
first steps of the reduction are probabilistic:
\begin{align*}
\forall i\in\{0,\dots, k\},\ {\LAPP{\Pchoose^\sigma_{k+1}(\List M0k)}{P}}\Rel{\Redonetr {p_i}}  {\LAPP{\Pchoose^\phi_{k+1}(\List M0k)}{\Num i}}
\end{align*}
the next steps of the reduction are deterministic:
\begin{align*}
 \LAPP{\Pchoose^\phi_{k+1}(\List M0k)}{\Num i}  \Rel{\Wredtr} M_i
\end{align*}

\medskip As we will see more precisely in
Paragraph~\ref{subsec:sem-type-term}, a term of type $\Tnat$ can be
seen as a sub-probability distribution over $\Nat$. Given integers
$0\le l\le r$, we define by induction the term $\Pext lr$ of type
$\Limpl\Tnat\Tnat$:
\begin{eqnarray*}
  \Pext 00&=&\ABST z\Tnat{\IFV z{\Num 0}x{\LOOP\Tnat}}\\
  \forall r>0,\ \Pext 0r&=&
      \Pchoose^\Tnat_{r+1}{\TUPLE{\Num 0,\dots,\Num r}}\\
\Pext {l+1}{r+1}&=&\ABST{z}{\Tnat}{\IFV z{\LOOP\Tnat}{x}{\TSUCC{\LAPP{\Pext lr}x}}}
\end{eqnarray*}
such that if $\Tseq{}{P}{\Tnat}$, then $\LAPP{\Pext lr}P$ extracts the
sub-probability distribution with support
$\subseteq\{l,\dots,r\}$. Indeed, for any $i\in\{l,\dots,r\}$ $\LAPP{\Pext lr}P$ reduces to
$\Num i$ with probability $p_i$ where $p_i$ is the probability of $P$ to reduce to $\Num i$.

We also introduce, for $\Vect n=(\List n0k)$ a sequence of $k+1$ natural
numbers, a term $\Pwin{i}{\Vect n}$ of type
$\Limpl\Tnat\Tnat$ which extracts the sub-probability distribution
whose support is in the $i^{th}$ window of length $n_i$ for
$0\le i< k$:
\begin{eqnarray*}
  \Pwin 0{\Vect n}&=& \Pext 0{n_0-1}\\
  \Pwin {i+1}{\Vect n}&=&\Pext{n_0+\cdots+n_{i}}{n_0+\cdots+n_i+ n_{i+1}-1}.
\end{eqnarray*}

\subsection{On products and recursive definitions of general types}\label{sec:PCF-products}
This section has nothing to do with probabilities, so we consider the
deterministic language $\CBPV$, which is just $\pCBPV$ without the
$\COIN p$ construct.  Our $\CBPV$ general types, which are similar to
Levy's CBPV computation types~\cite{LevyP04} or to the SFPL general
types~\cite{MarzRohrStreicher99}, have $\LIMPL{}{}$ as only type
constructor. This may seem weird when one
compares our language with CBPV where products and recursive
definitions are possible on computation types and are used for
encoding CBN functional languages such as $\PCF$ extended with
products.

For keeping our presentation reasonably short, we will not consider
the corresponding extensions of $\CBPV$ in this paper. Instead, we
will shortly argue that such a $\PCF$ language with products can be
easily encoded in our $\CBPV$.


Concerning recursive type definitions, it is true that adding them as
well as the cartesian product $\IWith$ at the level of general types
would allow to define interesting types such as
$\TREC\zeta{\With\ONE{(\LIMPL{\EXCL\zeta}{\zeta})}}$, yielding a
straightforward encoding of the pure lambda-calculus in our
language. This goal can nevertheless be reached (admittedly in a
slightly less natural way) by using the positive recursive type
definition $\phi=\TREC\zeta{\EXCL{(\LIMPL{\zeta}{\zeta})}}$. A pure
term $t$ will be translated into a $\CBPV$ term $\Cterm t$ such that
$\Tseq{x_1:{\phi_1},\dots,x_n:\phi}{\Cterm t}{\LIMPL\phi\phi}$, where
the list $\List x1n$ contains all the free variables of $t$. This
translation is defined inductively as follows: $\Cterm x=\GO x$,
$\Cterm{(\App st)}=\GO{(\LAPP{\Cterm s}{\STOP{(\Cterm t)}})}$ and
$\Cterm{(\Abs xs)}=\Abs x{\STOP{(\Cterm s)}}$. A simple computation
shows that $\beta$-reduction is validated by this interpretation, but
observe that it is not the case for $\eta$.  The examples we provide
in Section~\ref{subsec:exsyn} also show that our recursive definitions
restricted to positive types allow to introduce a lot of useful data
types (integers, lists, trees, streams etc). So we do not see any real
motivations for adding recursive general type definitions (and their
addition would make the proof of adequacy in
Section~\ref{sec:adequacy} even more complicated).

Coming back to the encoding of products, consider the
following grammar of types
\begin{align*}
  A,B,\dots\Bnfeq \Cnat\Bnfor \Impl AB\Bnfor\Cprod AB
\end{align*}
and the following language of terms
\begin{align*}
  s,t,u,\dots\Bnfeq x\Bnfor\Num n\Bnfor\Csuc s\Bnfor\Cpred s
  \Bnfor\Cifz stu \Bnfor
  \Abst xAs\Bnfor\App st\Bnfor\Cpair st
  \Bnfor \Cproj \ell s\Bnfor \Cproj r s\Bnfor\Cfix xAs
\end{align*}
We call this language $\PCF$ as it is a straightforward extension of the
original $\PCF$ of~\cite{Plotkin77}.

The typing rules are described in Figure~\ref{fig:typing-PCF}.  A typing
context is a sequence $\Gamma=(x_1:A_1,\dots,x_n:A_n)$ where the
variables are pairwise distinct.

We explain now how we interpret this simple programming language in
$\CBPV$.

\subsubsection*{Translating $\PCF$ types.}
With any type $A$, we associate a \emph{finite sequence} of general
types $\Ctype A=(\Ctype A_1,\dots,\Ctype A_n)$ whose length
$n=\Clen A$ is given by: $\Clen\Cnat=1$, $\Clen{\Impl AB}=\Clen B$ and
$\Clen{\Cprod AB}=\Clen A+\Clen B$.

Given a sequence $\Vect\sigma=(\List\sigma 1n)$ of general types we
define $\EXCL{\Vect\sigma}$ by induction: $\EXCL{()}=\ONE$ and
$\EXCL{(\sigma,\Vect\sigma)}=\TENS{\EXCL\sigma}{\EXCL{\Vect\sigma}}$. Given
a positive type $\phi$ and a sequence $\Vect\sigma=(\List\sigma 1n)$
of general types, we define
$\LIMPL\phi{\Vect\sigma}=(\LIMPL\phi{\sigma_1},\dots,\LIMPL\phi{\sigma_n})$.

Using these notations, we can now define $\Ctype A$ as follows:
\begin{itemize}
\item $\Ctype\Cnat=(\NAT)$ (a one element sequence) where $\NAT$ is
  the type of integers introduced in Section~\ref{subsec:exsyn},
  $\NAT=\TREC\zeta{(\PLUS\ONE\zeta)}$,
\item $\Ctype{(\Impl AB)}=\LIMPL{\EXCL{\Ctype A}}{\Ctype B}$,
\item $\Ctype{(\Cprod AB)}=\Ccat{\Ctype A}{\Ctype B}$ (list concatenation).
\end{itemize}

\subsubsection*{Translating $\PCF$ terms.}
Let now $s$ be a $\PCF$ term with typing judgment\footnote{So our
  translation depends on the typing judgment and not only on the term;
  this is fairly standard and can be avoided by considering typed free
  variables.} $\Tseq\Gamma sA$. Let $n=\Clen A$, we define a sequence
$\Cterm s$ of length $n$ such that
$\Tseq{\EXCL\Gamma}{\Cterm s_i}{\Ctype A_i}$ (for $i=1,\dots,n$) as
follows (if $\Gamma=(x_1:C_1,\dots,x_k:C_k)$, then
$\EXCL\Gamma=(x_1:\EXCL{\Ctype{C_1}},\dots,x_k:\EXCL{\Ctype{C_k}})$).

If $s=x_j$, so that $A=C_j$ (for some $j\in\{1,\dots,k\}$), let
$(\List\sigma 1n)=\Ctype A$. Since
$\EXCL{\Ctype
  A}=\TENS{\EXCL{\sigma_1}}{(\TENS{\EXCL{\sigma_2}}
  {\cdots(\TENS{\EXCL{\sigma_{n-1}}}{(\TENS{\EXCL{\sigma_n}}\ONE)})\cdots})}$
we can set $\Cterm x=(\Cterm x_1,\dots,\Cterm x_n)$ where
$\Cterm x_i=\GO{\PR\ell {\PR r {\PR r {\cdots\PR r {x}}}}}$ (with $i-1$ occurrences of
$\PR r {}$).

We set $\Cterm{\Num n}=\IN r {\cdots\IN r{\IN \ell \ONELEM}}$ ($n$ occurrences of
$\IN r{}$), $\Cterm{(\Csuc s)}=\IN r{\Cterm s}$,
$\Cterm{(\Cpred s)}=\CASE{\Cterm s}{x}{\IN \ell \ONELEM}{x}{x}$. Assume that
$s=\Cifz tuv$ with $\Tseq\Gamma t\NAT$, $\Tseq\Gamma uA$ and $\Tseq\Gamma vA$
for some $\PCF$ type $A$. Let $l=\Clen A$. By inductive hypothesis we have
$\Tseq{\EXCL{\Ctype\Gamma}}{\Cterm t}{\NAT}$,
$\Tseq{\EXCL{\Ctype\Gamma}}{\Cterm u_i}{\Ctype A_i}$ and
$\Tseq{\EXCL{\Ctype\Gamma}}{\Cterm v_i}{\Ctype A_i}$ for $i=1,\dots,l$. So we
set $\Cterm s=(\CASE{\Cterm t}{z}{\Cterm u_i}{z}{\Cterm v_i})_{i=1}^l$ where
$z$ is a variable which does not occur free in $u$ or $v$.

Assume now that $\Tseq{\Gamma,x:A}{s}{B}$, we set
$\Cterm{(\Abst xAs)}=(\ABST x{\EXCL{\Ctype A}}{\Cterm s_i})_{i=1}^{\Clen B}$.
Assume that $\Tseq\Gamma s{\Impl AB}$ and $\Tseq\Gamma tA$. Then, setting
$n=\Clen B$, we have
$\Tseq{\EXCL{\Ctype\Gamma}}{\Cterm s_i}{\LIMPL{\EXCL{\Ctype A}}{\Ctype B_i}}$
for $i=1,\dots,n$ and $\Tseq{\EXCL{\Ctype\Gamma}}{\Cterm t_j}{\Ctype A_j}$ for
$j=1,\dots,m$ where $m=\Clen A$. Then, setting
\[
N=\PAIR{\STOP{(\Cterm{t_1})}}{\PAIR{\cdots}{\PAIR{\STOP{(\Cterm{t_{m-1}})}}
    {\PAIR{\STOP{(\Cterm{t_{m}})}}\ONELEM}\cdots}}
\]
we have $\Tseq{\EXCL{\Ctype\Gamma}}{N}{\EXCL{\Ctype A}}$ and we set
$\Cterm{(\App st)}=(\LAPP{\Cterm s_i}N)_{i=1}^n$. Assume that
$s=\Cpair{s_1}{s_2}$ with $\Tseq\Gamma{s_i}{A_i}$ for $i=1,2$ and
$\Tseq\Gamma s{\Cprod{A_1}{A_2}}$. Then we set
$\Cterm s=\Ccat{\Cterm{s_1}}{\Cterm{s_2}}$ (list concatenation). Assume that
$\Tseq\Gamma s{\Cprod{A_1}{A_2}}$, with $n_i=\Clen{A_i}$ for $i=1,2$. Then we
set $\Cterm{(\Cproj \ell s)}=(\Cterm s_1,\dots,\Cterm s_{n_1})$ and
$\Cterm{(\Cproj r s)}=(\Cterm s_{n_1+1},\dots,\Cterm s_{n_1+n_2})$.

Last assume that $s=\Cfix xAt$ with $\Tseq{\Gamma,x:A}{t}A$ so that, setting
$n=\Clen A$, we have
$\Tseq{\EXCL{\Ctype\Gamma},x:\EXCL{\Ctype A}}{\Cterm t_i}{\Ctype A_i}$ for
$i=1,\dots,n$. Let $\List x1n$ be pairwise distinct fresh variables, and set
\[
M_i=\Subst{\Cterm
  t_i}{\PAIR{{x_1}}{\cdots\PAIR{{x_{n-1}}}{{\PAIR{x_{n}}{\ONELEM}}}\cdots}}{x}
\]
for $i=1,\dots,n$.  We have
$\Tseq{\EXCL{\Ctype\Gamma},x_1:\EXCL{\Ctype
    A_1},\dots,x_n:\EXCL{\Ctype A_n}}{M_i}{\Ctype A_i}$.
We are in position of applying the usual trick for encoding mutual
recursive definitions using fixpoints operators.
For the sake of readability, assume that $n=2$, so we have
$\Tseq{\EXCL{\Ctype\Gamma},x_1:\EXCL{\Ctype A_1},x_2:\EXCL{\Ctype
    A_2}}{M_i}{\Ctype A_i}$ for $i=1,2$.
Let $N_1=\FIXT{x_1}{\Ctype{A}_1}{M_1}$ so that
$\Tseq{\EXCL{\Ctype\Gamma},x_2:\EXCL{\Ctype A_2}}{N_1}{\Ctype A_1}$.
Then we have
$\Tseq{\EXCL{\Ctype\Gamma,x_2:\EXCL{\Ctype
      A_2}}}{\Subst{M_2}{\STOP{N_1}}{x_1}}{\Ctype A_2}$.
Therefore we can set
$\Cterm s_2=\FIXT{x_2}{\Ctype A_2}{\Subst{M_2}{\STOP{N_1}}{x_1}}$ and
we have $\Tseq{\EXCL{\Ctype\Gamma}}{\Cterm s_2}{\Ctype A_2}$. Finally
we set $\Cterm s_1=\Subst{N_1}{\STOP{(\Cterm s_2)}}{x_2}$ with
$\Tseq{\EXCL{\Ctype\Gamma}}{\Cterm s_1}{\Ctype A_1}$.

We should check now that this translation is compatible with the
operational semantics of our extended $\PCF$ language. A simple way to do
so would be to choose a simple model of linear logic (for instance,
the relational model) and to prove that the semantics of a $\PCF$ term is
equal to the semantics of its translation in $\pCBPV$ stripped from
its probabilistic construct $\COIN p$ (interpreting tuples of types
using the ``additive'' cartesian product $\IWith$). This is a long and
boring exercise.

\begin{figure*}
  \centering
  \begin{center}
  \AxiomC{}
  \UnaryInfC{$\Tseq{\Gamma,x:A}xA$}
  \DisplayProof
  \quad
  \AxiomC{}
  \UnaryInfC{$\Tseq\Gamma{\Num n}\Cnat$}
  \DisplayProof
  \quad
  \AxiomC{$\Tseq\Gamma s\Cnat$}
  \UnaryInfC{$\Tseq\Gamma{\Csuc s}\Cnat$}
  \DisplayProof
  \quad
  \AxiomC{$\Tseq\Gamma s\Cnat$}
  \UnaryInfC{$\Tseq\Gamma{\Cpred s}\Cnat$}
  \DisplayProof    
  \end{center}
  \begin{center}
    \AxiomC{$\Tseq\Gamma s\Cnat$}
    \AxiomC{$\Tseq\Gamma tA$}
    \AxiomC{$\Tseq\Gamma uA$}
    \TrinaryInfC{$\Tseq\Gamma{\Cifz stu}A$}
    \DisplayProof
    \quad
    \AxiomC{$\Tseq{\Gamma,x:A}sB$}
    \UnaryInfC{$\Tseq\Gamma{\Abst xAs}{\Impl AB}$}
    \DisplayProof
  \end{center}
  \begin{center}
    \AxiomC{$\Tseq\Gamma s{\Impl AB}$}
    \AxiomC{$\Tseq\Gamma tA$}
    \BinaryInfC{$\Tseq\Gamma{\App st}{B}$}
    \DisplayProof
    \quad
    \AxiomC{$\Tseq\Gamma{s_\ell}{A_\ell}$}
    \AxiomC{$\Tseq\Gamma{s_r}{A_r}$}
    \BinaryInfC{$\Tseq\Gamma{\Cpair{s_\ell}{s_r}}{\Cprod{A_\ell}{A_r}}$}
    \DisplayProof
    \quad
    \AxiomC{$\Tseq\Gamma s{\Cprod{A_\ell}{A_r}}$}
    \RightLabel{$i\in\{\ell,r\}$}
    \UnaryInfC{$\Tseq\Gamma{\Cproj is}{A_i}$}
    \DisplayProof
  \end{center}
  \begin{center}
    \AxiomC{$\Tseq{\Gamma,x:A}{M}{A}$}
    \UnaryInfC{$\Tseq\Gamma{\Cfix xAM}{A}$}
    \DisplayProof
  \end{center}
  \caption{Typing rules for a simple call-by-name language with products, PCF}
  \label{fig:typing-PCF}
\end{figure*}

\section{Probabilistic Coherent Spaces}\label{sec:PCS}

\subsection{Semantics of LL, in a nutshell}
\label{sec:LL-semantics-short}

\label{sec:LL-based-models}
The kind of denotational models we are interested in, in this paper, are those
induced by a model of LL, as explained in~\cite{Ehrhard16a}. We remind the basic
definitions and notations, referring to that paper for more details.

\subsubsection{Models of Linear Logic.}\label{sec:LL-models}
A model of LL consists of the following data.

A symmetric monoidal closed category
$(\cL,\ITens,\One,\Leftu,\Rightu,\Assoc,\Sym)$ where we use simple
juxtaposition $g\Compl f$ to denote composition of morphisms $f\in\cL(X,Y)$ and
$g\in\cL(Y,Z)$. We use $\Limpl XY$ for the object of linear morphisms from $X$
to $Y$, $\Evlin\in\cL(\Tens{(\Limpl XY)}{X},Y)$ for the evaluation morphism and
$\Curlin\in \cL(\Tens ZX,Y)\to\cL(Z,\Limpl XY)$ for the linear curryfication
map. For convenience, and because it is the case in the concrete models we
consider (such as Scott Semantics~\cite{Ehrhard16a} or Probabilistic Coherent
Spaces here), we assume this SMCC to be a $*$-autonomous category with
dualizing object $\Bot$. We use $\Orth X$ for the object $\Limpl X\Bot$ of
$\cL$ (the dual, or linear negation, of $X$).

The category $\cL$ is cartesian with terminal object $\Top$, product
$\IWith$, projections $\Proj i$. By $*$-autonomy $\cL$ is co-cartesian
with initial object $\Zero$, coproduct $\IPlus$ and injections
$\Inj i$. By monoidal closeness of $\cL$, the tensor product $\ITens$
distributes over the coproduct $\IPlus$.

We are given a comonad $\Excl\_:\cL\to\cL$ with co-unit $\Der X\in\cL(\Excl
X,X)$ (\emph{dereliction}) and co-multiplication $\Digg X\in\cL(\Excl
X,\Excl{\Excl X})$ (\emph{digging}) together with a strong symmetric monoidal
structure (Seely isos $\Seelyz$ and $\Seelyt$) for the functor $\Excl\_$, from
the symmetric monoidal category $(\cL,\IWith)$ to the symmetric monoidal
category $(\cL,\ITens)$ satisfying an additional coherence condition
wrt.~$\Digg{}$.

We use $\Int\_$ for the ``De Morgan dual'' of $\Excl\_$: $\Int
X=\Orthp{\Exclp{\Orth X}}$ and similarly for morphisms. It is a monad on $\cL$.

\subsubsection{The Eilenberg-Moore category.}\label{sec:EM-general} 
It is then standard to define the category $\EM\cL$ of $\IExcl$-coalgebras. An
object of this category is a pair $P=(\Coalgc P,\Coalgm P)$ where $\Coalgc
P\in\Obj\cL$ and $\Coalgm P\in\cL(\Coalgc P,\Excl{\Coalgc P})$ is such that
$\Der{\Coalgc P}\Compl\Coalgm P=\Id$ and $\Digg{\Coalgc P}\Compl\Coalgm
P=\Excl{\Coalgm P}\Compl\Coalgm P$.  Then $f\in\EM\cL(P,Q)$ iff
$f\in\cL(\Coalgc P,\Coalgc Q)$ such that $\Coalgm Q\Compl f=\Excl
f\Compl\Coalgm P$.  The functor $\Excl\_$ can be seen as a functor from $\cL$
to $\EM\cL$ mapping $X$ to $(\Excl X,\Digg X)$ and $f\in\cL(X,Y)$ to $\Excl
f$. It is right adjoint to the forgetful functor
$\Forgca:\EM\cL\to\cL$. Given $f\in\cL(\Coalgc P,X)$, we use $\Prom
f\in\EM\cL(P,\Excl X)$ for the morphism associated with $f$ by this adjunction,
one has $\Prom f=\Excl f\Compl\Coalgm P$. If $g\in\EM\cL(Q,P)$, we have
$\Prom f\Compl g=\Promp{f\Compl g}$.

Then $\EM\cL$ is cartesian (with product of shape $\Tens PQ=(\Tens{\Coalgc
  P}{\Coalgc Q},\Coalgm{\Tens PQ})$ and terminal object $(\One,\Coalgm\One)$,
still denoted as $\One$). This category is also co-cartesian with coproduct of
shape $\Plus PQ=(\Plus{\Coalgc P}{\Coalgc Q},\Coalgm{\Plus PQ})$ and initial
object $(\Zero,\Coalgm\Zero)$ still denoted as $\Zero$. The complete
definitions can be found in~\cite{Ehrhard16a}. We use $\Contr P\in\EM\cL(P,\Tens
PP)$ (\emph{contraction}) for the diagonal and $\Weak P\in\EM\cL(P,\One)$
(\emph{weakening}) for the unique morphism to the terminal object.

We also consider occasionally the \emph{Kleisli category}\footnote{It is the
  full subcategory of $\EM\cL$ of free coalgebras, see any introductory text on
  monads and co-monads.} $\Kl\cL$ of the
comonad $\oc$: its objects are those of $\cL$ and $\Kl\cL(X,Y)=\cL(\Excl
X,Y)$. The identity at $X$ in this category is $\Der X$ and composition of
$f\in\Kl\cL(X,Y)$ and $g\in\Kl\cL(Y,Z)$ is defined as
\begin{align*}
  g\Comp f=g\Compl\Excl f\Compl\Digg X\,.
\end{align*}
This category is cartesian closed but this fact will not play an essential role
in this work.

\subsubsection{Fixpoints.}\label{parag:sem-fix-points}
For any object $X$, we assume to be given $\Sfix_X\in\cL(\Exclp{\Limpl{\Excl
    X}X},X)$, a morphism such that\footnote{It might seem natural to require
  the stronger uniformity conditions of \emph{Conway
    operator}~\cite{PlotkinSimpson00}. This does not seem to be necessary as
  far as soundness of our semantics is concerned even if the fixpoint
  operators arising in concrete models satisfy these further properties.}
$\Evlin\Compl(\Tens{\Der{\Limpl{\Excl X}{X}}}{\Prom{\Sfix_X}})
\Comp\Contr{\Excl{(\Limpl{\Excl X}{X})}}=\Sfix_X$ which will allow to interpret
term fixpoints.

In order to interpret fixpoints of types, we assume that the category $\cL$ is
equipped with a notion of embedding-retraction pairs, following a standard
approach. We use $\Embr\cL$ for the corresponding category. It is
equipped with a functor $\Funofemb:\Embr\cL\to\Op\cL\times\cL$ such that
$\Funofemb(X)=(X,X)$ and for which we use the notation
$(\Ret\phi,\Emb\phi)=\Funofemb(\phi)$ and assume that
$\Ret\phi\Compl\Emb\phi=\Id_X$.  We assume furthermore that $\Embr\cL$ has all
countable directed colimits and that the functor
$\Embf=\Proj2\Compl\Funofemb:\Embr\cL\to\cL$ is continuous. We also assume that
all the basic operations on objects ($\ITens$, $\IPlus$, $\Orth{(\_)}$ and
$\Excl\_$) are continuous functors from $\Embr\cL$ to itself\footnote{This is a
  rough statement; one has to say for instance that if
  $\phi_i\in\Embr\cL(X_i,Y_i)$ for $i=1,2$ then
  $\Ret{(\Tens{\phi_1}{\phi_2})}=\Tens{\Ret{\phi_1}}{\Ret{\phi_2}}$ etc. The
  details can be found in~\cite{Ehrhard16a}.}.

Then it is easy to carry this notion of embedding-retraction pairs to $\EM\cL$,
defining a category $\Embr{\EM\cL}$, to show that this category has all
countable directed colimits and that the functors $\ITens$ and $\IPlus$ are
continuous on this category: $\Embr{\EM\cL}(P,Q)$ is the set of all
$\phi\in\Embr\cL(\Coalgc P,\Coalgc Q)$ such that $\Emb\phi\in\EM\cL(P,Q)$.  One
checks also that $\Excl{}$ defines a continuous functor from $\Embr\cL$ to
$\Embr{\EM\cL}$. This allows to interpret recursive types, more details can be
found in~\cite{Ehrhard16a}.

\subsubsection{Interpreting types.}
\label{sec:cbpv-interpretation}

Using straightforwardly the object $\One$ and the operations $\ITens$,
$\IPlus$, $\Excl{}$ and $\Limpl{}{}$ of the model $\cL$ as well as the
completeness and continuity properties explained in
Section~\ref{parag:sem-fix-points}, we associate with any positive type $\phi$
and any repetition-free list $\Vect\zeta=(\List\zeta 1n)$ of type variables
containing all free variables of $\phi$ a continuous functor
$\Tsemca\phi_{\Vect\zeta}:(\Embr{\EM\cL})^n\to\Embr{\EM\cL}$ and with any
general type $\sigma$ and any list $\Vect\zeta=(\List\zeta 1n)$ of pairwise
distinct type variables containing all free variables of $\sigma$ we associate
a continuous functor $\Tsem\sigma_{\Vect\zeta}:(\Embr{\EM\cL})^n\to\Embr{\cL}$.

When we write $\Tsem\sigma$ or $\Tsemca\phi$ (without subscript), we assume
implicitly that the types $\sigma$ and $\phi$ have no free type variables. Then
$\Tsem\sigma$ is an object of $\cL$ and $\Tsemca\phi$ is an object of
$\EM\cL$. We have
$\Tsem\phi=\Coalgc{\Tsemca\phi}$ that is, considered as a generalized type, the
semantics of a positive type $\phi$ is the carrier of the coalgebra
$\Tsemca\phi$.

Given a typing context $\cP=(x_1:\phi_1,\dots,x_k:\phi_k)$, we define
$\Tsem\cP=\Tsemca{\phi_1}\ITens\cdots\ITens\Tsemca{\phi_k}\in\EM\cL$. 

In the model or probabilistic coherence spaces considered in this paper, we
define $\Embr\cL$ in such a way that the only isos are the identity maps. This
implies that the types $\TREC\zeta\phi$ and $\Subst\phi{(\TREC\zeta\phi)}\zeta$
are interpreted as \emph{the same object} (or functor). Such definitions of
$\Embr\cL$ are possible in many other models (relational, coherence spaces,
hypercoherences etc).

We postpone the description of term interpretation because this will 
require constructions specific to our probabilistic semantics, in addition to
the generic categorical ingredients introduced so far.

\subsection{The model of probabilistic coherence spaces}\label{subsec:model-pcoh}

Given a countable set $I$ and $u,u'\in\Realpto I$, we set
$\Eval{u}{u'}=\sum_{i\in I}u_iu'_i$. Given $\cF\subseteq\Realpto I$, we set
$\Orth\cF=\{u'\in\Realpto I\St\forall u\in\cF\ \Eval{u}{u'}\leq 1\}$.

A \emph{probabilistic coherence space} (PCS) is a pair $X=(\Web X,\Pcoh X)$
where $\Web X$ is a countable set and $\Pcoh X\subseteq\Realpto{\Web X}$
satisfies
\begin{itemize}
\item $\Biorth{\Pcoh X}=\Pcoh X$ (equivalently, $\Biorth{\Pcoh X}\subseteq\Pcoh
  X$),
\item for each $a\in\Web X$ there exists $u\in\Pcoh X$ such that $u_a>0$,
\item for each $a\in\Web X$ there exists $A>0$ such that $\forall u\in\Pcoh X\
  u_a\leq A$.
\end{itemize}
If only the first of these conditions holds, we say that $X$ is a
\emph{pre-probabilistic coherence space} (pre-PCS).

The purpose of the second and third conditions is to prevent infinite
coefficients to appear in the semantics. This property in turn will be
essential for guaranteeing the morphisms interpreting proofs to be analytic
functions, which will be the key property to prove full abstraction. So these
conditions, though aesthetic at first sight, are important for our ultimate
goal.

\begin{lemma}\label{lemma:pcoh-charact}
  Let $X$ be a pre-PCS. The following conditions are
  equivalent:
  \begin{itemize}
  \item $X$ is a PCS,
  \item $\forall a\in\Web X\,\exists u\in\Pcoh X\,\exists
    u'\in\Orth{\Pcoh X}\ u_a>0\text{ and }u'_a>0$,
  \item $\forall a\in\Web X\,\exists A>0\,\forall u\in\Pcoh X\,\forall
    u'\in\Orth{\Pcoh X}\ u_a\leq A\text{ and }u'_a\leq A$.
  \end{itemize}
\end{lemma}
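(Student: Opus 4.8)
The plan is to reduce the whole statement to a single duality observation relating boundedness of a coordinate in $\Pcoh X$ to positivity of the same coordinate in $\Orth{\Pcoh X}$. Writing $P=\Pcoh X$ and $P'=\Orth{\Pcoh X}$, the pre-PCS hypothesis $\Biorth P=P$ gives $\Orth{P'}=\Biorth P=P$, so the pair $(P,P')$ is symmetric: every statement about $P$ phrased in terms of $P'$ has a mirror image obtained by swapping the two (note that $P'$ is itself the point set of a pre-PCS, since $\Biorth{P'}=\Orth{P}=P'$). It is convenient to name the two PCS axioms locally: for a fixed $a\in\Web X$, say $X$ is \emph{positive at $a$} if there is $u\in P$ with $u_a>0$, and \emph{bounded at $a$} if there is $A>0$ with $u_a\le A$ for all $u\in P$; the analogous notions for $P'$ will be used via the symmetry. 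With this vocabulary, $X$ is a PCS exactly when it is positive and bounded at every $a\in\Web X$.

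The heart of the argument is the following equivalence, for a fixed $a\in\Web X$: $X$ is bounded at $a$ if and only if there is $u'\in P'$ with $u'_a>0$. For the forward direction I would take $A>0$ with $u_a\le A$ for all $u\in P$ and check that $\frac1A\Base a\in\Orth P=P'$, where $\Base a$ is the vector with $a$-coordinate $1$ and all others $0$: indeed for every $u\in P$ one has $\Eval{u}{\frac1A\Base a}=\frac1A u_a\le 1$, and this witness has positive $a$-coordinate. For the converse, if $u'\in P'$ satisfies $u'_a=c>0$, then for every $u\in P$ nonnegativity of all the terms gives $c\,u_a\le\Eval u{u'}\le 1$, whence $u_a\le 1/c$, so $X$ is bounded at $a$ with $A=1/c$. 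Applying this equivalence to the pair $(P',P)$ — legitimate because $\Orth{P'}=P$ — yields the mirror statement: $P'$ is bounded at $a$ iff $X$ is positive at $a$.

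It remains to unwind the three listed conditions. The first (that $X$ is a PCS) says $X$ is positive and bounded at every $a$; by the key equivalence, being bounded at every $a$ is the same as $P'$ being positive at every $a$, so the first condition is equivalent to: both $P$ and $P'$ are positive at every $a$. The second condition is exactly this statement, since for a fixed $a$ the two witnesses $u\in P$ and $u'\in P'$ are chosen independently; hence the first and second conditions coincide. For the third condition, fixing $a$ and taking the maximum of two separate bounds shows that a single $A$ bounding both $u_a$ over $P$ and $u'_a$ over $P'$ exists for every $a$ if and only if $X$ is bounded at every $a$ in $P$ and $P'$ is bounded at every $a$; by the key equivalence and its mirror, these two boundedness statements are respectively ``$P'$ positive everywhere'' and ``$P$ positive everywhere'', so the third condition again matches the first. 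No step is genuinely hard; the only point requiring care is the local duality lemma of the second paragraph, where the scaled basis vector $\frac1A\Base a$ must be produced on the nose, together with the bookkeeping that unbundles the per-$a$ quantifiers in the second and third conditions into independent conditions on $P$ and $P'$.
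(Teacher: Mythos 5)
Your proof is correct. The paper omits the argument entirely (it just declares it straightforward), and your single duality observation --- that $\Pcoh X$ is bounded at $a$ iff $\Orth{\Pcoh X}$ contains an element with positive $a$-coordinate, witnessed by the scaled basis vector $\frac1A\Base a$, combined with the symmetry $\Orth{(\Orth{\Pcoh X})}=\Pcoh X$ from the pre-PCS hypothesis --- is exactly the intended elementary argument, with the quantifier bookkeeping handled correctly.
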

The proof is straightforward.

We equip $\Pcoh X$ with the most obvious partial order relation: $u\leq v$ if
$\forall a\in\Web X\ u_a\leq v_a$ (using the usual order relation on $\Real$).

\begin{theorem}\label{th:Pcoh-prop}
  $\Pcoh X$ is an $\omega$-continuous domain. Given $u,v\in\Pcoh X$ and
  $\alpha,\beta\in\Realp$ such that $\alpha+\beta\leq 1$, one has $\alpha
  u+\beta v\in\Pcoh X$.
\end{theorem}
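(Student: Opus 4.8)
The plan is to establish the two assertions separately, the convexity statement being essentially immediate and the domain-theoretic statement requiring the bulk of the work. For convexity, I would argue directly from the closure axiom $\Biorth{\Pcoh X}=\Pcoh X$. To show $\alpha u+\beta v\in\Pcoh X$ it suffices to check membership in $\Biorth{\Pcoh X}$, i.e.\ that $\Eval{\alpha u+\beta v}{u'}\le 1$ for every $u'\in\Orth{\Pcoh X}$; by $\Realp$-linearity of $\Eval{\cdot}{\cdot}$ in its first argument and the defining bound $\Eval{u}{u'}\le 1$, one gets $\Eval{\alpha u+\beta v}{u'}=\alpha\Eval{u}{u'}+\beta\Eval{v}{u'}\le\alpha+\beta\le 1$. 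The same device yields two facts I will reuse: taking $w$ with $0\le w\le u$ pointwise gives $\Eval{w}{u'}\le\Eval{u}{u'}\le 1$, so $\Pcoh X$ is \emph{downward closed}; and the zero family satisfies $\Eval{0}{u'}=0\le 1$, so $0\in\Pcoh X$ is the least element.

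Next I would show $\Pcoh X$ is a pointed dcpo by proving that least upper bounds of directed sets are computed coordinatewise. Let $D\subseteq\Pcoh X$ be directed. For each $a\in\Web X$ the third PCS axiom gives a bound $A>0$ with $u_a\le A$ for all $u\in\Pcoh X$, so $w_a:=\sup_{u\in D}u_a$ is a finite real and $w\in\Realpto{\Web X}$. To see $w\in\Pcoh X$ I again use $\Biorth{\Pcoh X}=\Pcoh X$: for fixed $u'\in\Orth{\Pcoh X}$ the key identity is
\[
\Eval{w}{u'}=\sum_{a\in\Web X}\bigl(\sup_{u\in D}u_a\bigr)u'_a=\sup_{u\in D}\sum_{a\in\Web X}u_au'_a=\sup_{u\in D}\Eval{u}{u'}\le 1,
\]
where the middle equality holds \emph{because $D$ is directed}: given a finite $F\subseteq\Web X$ and $\epsilon>0$, directedness lets one pick a single $u\in D$ whose coordinates simultaneously approximate $\sup_{u\in D}u_a$ for all $a\in F$, which bounds the partial sum over $F$; letting $F$ exhaust $\Web X$ gives the claim. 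Thus $w\in\Pcoh X$, and it is visibly the least pointwise upper bound, so it is the lub in $\Pcoh X$.

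Finally I would prove $\omega$-continuity. First I record a way-below criterion: if $w\in\Pcoh X$ has finite support and $w_a<u_a$ for every $a\in\Supp{w}$, then $w\Leqway u$. Indeed, for directed $D$ with $u\le\sup D$, each $a\in\Supp w$ yields some $d^{(a)}\in D$ with $d^{(a)}_a>w_a$, and directedness over the finite set $\Supp w$ produces one $d\in D$ above all the $d^{(a)}$, whence $w\le d$. Conversely, testing against the directed family $\{tu:t\in[0,1)\}$ (which lies in $\Pcoh X$ by convexity and has sup $u$) shows that $w\Leqway u$ forces $w_a<u_a$ whenever $u_a>0$, and $w_a=0$ when $u_a=0$. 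I then take as basis the countable set $B$ of finitely supported $w\in\Pcoh X$ with rational coordinates. For fixed $u$, the criterion makes every such $w$ with $w_a<u_a$ on its support way below $u$; choosing rationals approaching each $u_a$ from below shows their pointwise sup is $u$; and the coordinatewise maximum of two elements $b_1,b_2\in B$ with $b_i\Leqway u$ still lies below $u$ (hence in $\Pcoh X$ by downward closure), still has finite support and rational coordinates, and still satisfies strict inequality on its support (using the converse criterion), so it is again in $B$ and way below $u$. Hence $\{b\in B:b\Leqway u\}$ is directed with sup $u$, exhibiting $B$ as a countable basis.

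The main obstacle is the directed interchange of supremum and infinite sum in the second step: this is exactly where directedness of $D$, rather than mere coordinatewise boundedness, is indispensable, since without it the pointwise sup need not be orthogonally bounded and so need not lie in $\Pcoh X$. The only other subtle point is keeping the coordinatewise maximum inside the basis, which hinges on combining downward closure with the observation that $w\Leqway u$ enforces \emph{strict} coordinatewise inequality; everything else is routine verification.
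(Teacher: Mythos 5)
Your proof is correct, and it follows exactly the route the paper gestures at: every closure property ($\alpha u+\beta v$, downward closure, coordinatewise directed lubs) is derived from the hypothesis $\Biorth{\Pcoh X}=\Pcoh X$, with the directed sup/sum interchange and the finite-support way-below criterion correctly handled. The paper itself only states that this is ``an easy consequence'' of biorthogonal closure and defers the details (in particular the $\omega$-continuity part, which it never uses) to the cited reference, so your argument is a sound filling-in of that sketch rather than a divergent approach.
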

This is an easy consequence of the hypothesis
$\Biorth{\Pcoh X}\subseteq\Pcoh X$. See~\cite{DanosEhrhard08} for
details; from this result, we will only use the closure properties:
$\Pcoh X$ is closed under sub-probabilistic linear combinations and
under lubs of monotonic sequences. Though the $\omega$-continuity
property (and the associated way-below relation) does not play any
technical role, it is an intuitively satisfactory
fact\footnote{The $\omega$-continuity is similar to separability for topological vector spaces.}
which means that the ``size'' of our domains remains bounded.

\subsubsection{Morphisms of PCSs}\label{subsec:morphisms}

Let $X$ and $Y$ be PCSs. Let $t\in(\Realp)^{\Web X\times\Web Y}$ (to be
understood as a matrix). Given $u\in\Pcoh X$, we define $\Matapp
tu\in\Realpc^{\Web Y}$ by $(\Matapp tu)_b=\sum_{a\in\Web X}t_{a,b}u_a$
(application of the matrix $t$ to the vector $u$)\footnote{This is an unordered
  sum, which is infinite in general. It makes sense because all its terms are
  $\geq 0$.}.  We say that $t$ is a \emph{(linear) morphism} from $X$ to $Y$ if
$\forall u\in\Pcoh X\ \Matapp tu\in\Pcoh Y$, that is
\begin{align*}
  \forall u\in\Pcoh X\,\forall{v'}\in\Orth{\Pcoh Y}\quad\sum_{(a,b)\in\Web
    X\times\Web Y}t_{a,b}u_av'_b\leq 1\,.
\end{align*}
The diagonal matrix $\Id\in(\Realp)^{\Web X\times\Web X}$ given by
$\Id_{a,b}=1$ if $a=b$ and $\Id_{a,b}=0$ otherwise is a morphism. In that way
we have defined a category $\PCOH$ whose objects are the PCSs and whose
morphisms have just been defined. Composition of morphisms is defined as matrix
multiplication: let $s\in\PCOH(X,Y)$ and $t\in\PCOH(Y,Z)$, we define $\Matapp
ts\in(\Realp)^{\Web X\times\Web Z}$ by
\begin{align*}
  (\Matapp ts)_{a,c}=\sum_{b\in\Web Y}s_{a,b}t_{b,c}
\end{align*}
and a simple computation shows that $\Matapp ts\in\PCOH(X,Z)$. More precisely,
we use the fact that, given $u\in\Pcoh X$, one has $\Matapp{(\Matapp
  ts)}{u}=\Matapp t{(\Matapp su)}$. Associativity of composition holds because
matrix multiplication is associative. $\Id_X$ is the identity morphism at $X$.

Given $u\in\Pcoh X$, we define $\Norm u_X=\sup\{\Eval u{u'}\St u'\in\Pcoh{\Orth
X}\}$. By definition, we have $\Norm u_X\in[0,1]$.

\subsubsection{Multiplicative constructs}\label{sec:multiplicatives}
One sets $\Orth X=(\Web X,\Orth{\Pcoh X})$. It results straightforwardly from
the definition of PCSs that $\Orth X$ is a PCS. Given $t\in\PCOH(X,Y)$, one has
$\Orth t\in\PCOH(\Orth Y,\Orth X)$ if $\Orth t$ is the transpose of $t$, that
is $(\Orth t)_{b,a}=t_{a,b}$.

One defines $\Tens{X}{Y}$ by
$\Web{\Tens{X}{Y}}=\Web{X}\times\Web{Y}$ and
\begin{equation*}
  \Pcohp{\Tens XY}=\Biorth{\{\Tens uv\St u\in\Pcoh X\text{ and }v\in\Pcoh Y\}}
\end{equation*}
where $\Tensp uv_{(a,b)}=u_av_b$.  Then $\Tens XY$ is a pre-PCS.

We have 
\[
\Pcoh{\Orth{(\Tens{X}{\Orth Y})}}=\Orth{\{\Tens{u}{v'}\St u\in\Pcoh X\text{ and
  }v'\in\Pcoh{\Orth Y}\}}=\PCOH(X,Y)\,.
\]
It follows that $\Limpl XY=\Orth{(\Tens{X}{\Orth Y})}$ is a pre-PCS. Let
$(a,b)\in\Web X\times\Web Y$. Since $X$ and $\Orth Y$ are PCSs, there is $A>0$
such that $u_av'_b<A$ for all $u\in\Pcoh X$ and $v'\in\Pcoh{\Orth Y}$. Let
$t\in(\Realp)^{\Web{\Limpl XY}}$ be such that $t_{(a',b')}=0$ for
$(a',b')\not=(a,b)$ and $t_{(a,b)}=1/A$, we have $t\in\Pcoh{(\Limpl XY)}$. This
shows that $\exists t\in\Pcoh{(\Limpl XY)}$ such that $t_{(a,b)}>0$. Similarly
we can find $u\in\Pcoh X$ and $v'\in\Pcoh{\Orth Y}$ such that
$\epsilon=u_av'_b>0$. It follows that $\forall t\in\Pcoh{(\Limpl XY)}$ one has
$t_{(a,b)}\leq 1/\epsilon$. We conclude that $\Limpl XY$ is a PCS, and
therefore $\Tens XY$ is also a PCS.

\begin{lemma}\label{lemma:PCS-moprh-charact}
  Let $X$ and $Y$ be PCSs. One has $\Pcoh{(\Limpl XY)}=\PCOH(X,Y)$. That is,
  given $t\in\Realpto{\Web X\times\Web Y}$, one has $t\in\Pcoh{(\Limpl XY)}$
  iff for all $u\in\Pcoh X$, one has $t\Compl u\in\Pcoh Y$.
\end{lemma}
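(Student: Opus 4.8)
The plan is to reduce the statement to unfolding the definitions of $\Limpl{}{}$, $\Orth{(\_)}$ and $\ITens$, followed by a single application of Tonelli's theorem for non-negative series. First I would observe that, by definition, $\Limpl XY=\Orth{(\Tens X{\Orth Y})}$ and that $\Pcoh{\Orth Z}=\Orth{\Pcoh Z}$ for any PCS $Z$. Hence $\Pcoh{(\Limpl XY)}=\Orth{\Pcoh{(\Tens X{\Orth Y})}}=\Orth{\Biorth{\{\Tens u{v'}\St u\in\Pcoh X\text{ and }v'\in\Pcoh{\Orth Y}\}}}$, and the general identity $\Orth{\Biorth\cF}=\Orth\cF$ (triple orthogonal collapses to single orthogonal) turns this into $\Orth{\{\Tens u{v'}\St u\in\Pcoh X,\ v'\in\Pcoh{\Orth Y}\}}$. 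Thus $t\in\Pcoh{(\Limpl XY)}$ holds iff $\sum_{(a,b)}t_{a,b}u_av'_b\leq 1$ for all $u\in\Pcoh X$ and $v'\in\Pcoh{\Orth Y}$, and it only remains to match this with the defining condition of $\PCOH(X,Y)$, namely that $\Matapp tu\in\Pcoh Y$ for every $u\in\Pcoh X$.

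The inclusion $\PCOH(X,Y)\subseteq\Pcoh{(\Limpl XY)}$ is the easy direction: assuming $\Matapp tu\in\Pcoh Y$ for all $u$, then for every $v'\in\Pcoh{\Orth Y}$ we have $\Eval{\Matapp tu}{v'}\leq 1$; unfolding $(\Matapp tu)_b=\sum_a t_{a,b}u_a$ and exchanging the order of summation (legitimate since every term is non-negative) gives $\sum_{(a,b)}t_{a,b}u_av'_b=\sum_b(\Matapp tu)_b v'_b\leq 1$, which is exactly membership in $\Orth{\{\Tens u{v'}\}}$.

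For the converse $\Pcoh{(\Limpl XY)}\subseteq\PCOH(X,Y)$, fix $t$ satisfying the orthogonality condition above and fix $u\in\Pcoh X$; I must show $\Matapp tu\in\Pcoh Y$. The point requiring care, and the main (small) obstacle, is that the orthogonality operators are only defined on genuine $\Realpto{\Web Y}$-vectors, so I first have to check that each coordinate $(\Matapp tu)_b$ is finite before I can invoke $\Biorth{\Pcoh Y}=\Pcoh Y$. To this end I would use the second defining axiom of a PCS applied to $\Orth Y$: for a fixed $b\in\Web Y$ there is $v'\in\Pcoh{\Orth Y}$ with $v'_b>0$, and then $v'_b\,(\Matapp tu)_b=\sum_a t_{a,b}u_av'_b\leq\sum_{(a,b')}t_{a,b'}u_av'_{b'}\leq 1$, whence $(\Matapp tu)_b\leq 1/v'_b<\infty$. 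Thus $\Matapp tu\in\Realpto{\Web Y}$, and for every $v'\in\Pcoh{\Orth Y}$ the same Tonelli exchange yields $\Eval{\Matapp tu}{v'}=\sum_{(a,b)}t_{a,b}u_av'_b\leq 1$. Hence $\Matapp tu\in\Biorth{\Pcoh Y}=\Pcoh Y$ by the first axiom for $Y$, completing the proof. The whole argument is thus routine once the summation exchange and the finiteness check are in place; no deeper structural input is needed.
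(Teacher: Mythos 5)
Your proof is correct and follows the same route the paper intends: the paper's one-line proof (``immediate from the definition of $\Limpl XY$'') relies on exactly the chain $\Pcoh{(\Limpl XY)}=\Orth{\{\Tens u{v'}\St u\in\Pcoh X,\ v'\in\Pcoh{\Orth Y}\}}=\PCOH(X,Y)$ that you unfold, with the triple-orthogonal collapse and the non-negative summation exchange treated as implicit. Your explicit finiteness check for $(\Matapp tu)_b$ via the second PCS axiom for $\Orth Y$ is a welcome detail the paper leaves unstated, but it does not change the argument.
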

This results immediately from the definition above of $\Limpl XY$.

\begin{lemma}\label{lemma:tens-morph-charact}
  Let $X_1$, $X_2$ and $Y$ be PCSs. Let
  $t\in(\Realp)^{\Web{\Limpl{\Tens{X_1}{X_2}}{Y}}}$. One has
  $t\in\PCOH(\Tens{X_1}{X_2},Y)$ iff for all $u_1\in\Pcoh{X_1}$ and
  $u_2\in\Pcoh{X_2}$ one has $\Matapp t{(\Tens{u_1}{u_2})}\in\Pcoh Y$.
\end{lemma}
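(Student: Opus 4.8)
The plan is to prove both implications, the left-to-right one being immediate and the converse carrying all the content. For the ``only if'' direction, suppose $t\in\PCOH(\Tens{X_1}{X_2},Y)$. For any $u_1\in\Pcoh{X_1}$ and $u_2\in\Pcoh{X_2}$ the family $\Tens{u_1}{u_2}$ belongs to the generating set of $\Pcoh{(\Tens{X_1}{X_2})}=\Biorth{\{\Tens{u_1}{u_2}\St u_1\in\Pcoh{X_1},u_2\in\Pcoh{X_2}\}}$, hence to $\Pcoh{(\Tens{X_1}{X_2})}$ itself; so $\Matapp t{(\Tens{u_1}{u_2})}\in\Pcoh Y$ directly by the definition of a morphism. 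Nothing more is needed here.

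For the converse, the real obstacle is that the hypothesis only controls $t$ on the \emph{generators} $\Tens{u_1}{u_2}$ of $\Pcoh{(\Tens{X_1}{X_2})}$, whereas an arbitrary $w\in\Pcoh{(\Tens{X_1}{X_2})}$ lies in the biorthogonal closure $\Biorth{\cG}$ of $\cG=\{\Tens{u_1}{u_2}\St u_1\in\Pcoh{X_1},u_2\in\Pcoh{X_2}\}$ and need not itself be a tensor. My plan is to dualise on the $Y$ side. For each $z\in\Pcoh{\Orth Y}$ I form the (nonnegative, possibly infinite) family $t^z=\Matapp{\Orth t}{z}$ over $\Web{X_1}\times\Web{X_2}$, that is $(t^z)_{(a_1,a_2)}=\sum_{b\in\Web Y}t_{(a_1,a_2),b}\,z_b$. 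The single computation I will record is the adjunction-style identity $\Eval{\Matapp tw}{z}=\Eval{w}{t^z}$, valid for every $w\in\Realpto{\Web{X_1}\times\Web{X_2}}$; it is just the reordering of the nonnegative double sum $\sum_{(a_1,a_2),b}t_{(a_1,a_2),b}\,w_{(a_1,a_2)}\,z_b$.

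With this identity in hand I would restate the hypothesis. Using $\Biorth{\Pcoh Y}=\Pcoh Y$, membership $v\in\Pcoh Y$ is equivalent to $\Eval vz\leq 1$ for all $z\in\Orth{\Pcoh Y}=\Pcoh{\Orth Y}$. Thus the assumption ``$\Matapp t{(\Tens{u_1}{u_2})}\in\Pcoh Y$ for all $u_1,u_2$'' becomes ``$\Eval{\Tens{u_1}{u_2}}{t^z}\leq 1$ for all $u_1,u_2$ and all $z\in\Pcoh{\Orth Y}$'', i.e.\ $t^z\in\Orth{\cG}$ for every $z$. Now comes the decisive step: since $\Pcoh{(\Tens{X_1}{X_2})}=\Biorth{\cG}$ by definition and triorthogonals collapse, $\Orth{\cG}=\Orth{\Biorth{\cG}}=\Orth{\Pcoh{(\Tens{X_1}{X_2})}}=\Pcoh{\Orth{(\Tens{X_1}{X_2})}}$. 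Hence $t^z\in\Pcoh{\Orth{(\Tens{X_1}{X_2})}}$ for every $z$, which unfolds to $\Eval{w}{t^z}\leq 1$, equivalently $\Eval{\Matapp tw}{z}\leq 1$, for all $w\in\Pcoh{(\Tens{X_1}{X_2})}$. Letting $z$ range over $\Pcoh{\Orth Y}$ and invoking $\Biorth{\Pcoh Y}=\Pcoh Y$ once more gives $\Matapp tw\in\Pcoh Y$ for every such $w$, that is $t\in\PCOH(\Tens{X_1}{X_2},Y)$.

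The only genuinely non-formal point is the collapse $\Orth{\Biorth{\cG}}=\Orth{\cG}$: this is exactly what propagates the constraint obtained on the generators $\Tens{u_1}{u_2}$ to the whole closed set $\Pcoh{(\Tens{X_1}{X_2})}$, and the partial-application identity is the device that lets us bring it to bear. Everything else is routine bookkeeping with the nonnegative bilinear pairing.
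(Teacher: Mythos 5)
Your proof is correct and follows essentially the same route as the paper's: both arguments dualise on the $Y$ side, observe that $\Matapp{\Orth t}{z}$ need only be tested against the generators $\Tens{u_1}{u_2}$ because $\Orth{\Biorth{\cG}}=\Orth{\cG}$, and transfer the bound back via the pairing identity $\Eval{\Matapp tw}{z}=\Eval{w}{\Matapp{\Orth t}{z}}$. The paper merely phrases this more compactly as ``it suffices to prove $\Orth t\in\PCOH(\Orth Y,\Orth{(\Tens{X_1}{X_2})})$''.
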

\Beginproof
The condition stated by the lemma is clearly necessary. Let us prove that it is
sufficient: under this condition, it suffices to prove that
\begin{align*}
  \Orth t\in\PCOH(\Orth Y,\Orth{(\Tens{X_1}{X_2})})\,.
\end{align*}
Let $v'\in\Pcoh{\Orth Y}$, it suffices to prove that $\Matapp{\Orth
  t}{v'}\in\Pcoh{\Orth{(\Tens{X_1}{X_2})}}$. So let $u_1\in\Pcoh{X_1}$ and
$u_2\in\Pcoh{X_2}$, it suffices to prove that $\Eval{\Matapp{\Orth
    t}{v'}}{\Tens{u_1}{u_2}}\leq 1$, that is $\Eval{\Matapp
  t{(\Tens{u_1}{u_2})}}{v'}\leq 1$, which follows from our assumption.
\Endproof

Let $s_i\in\PCOH(X_i,Y_i)$ for $i=1,2$. Then one defines
\begin{align*}
\Tens{s_1}{s_2}\in(\Realp)^{\Web{\Limpl{\Tens{X_1}{X_2}}{\Tens{Y_1}{Y_2}}}}  
\end{align*}
by
$(\Tens{s_1}{s_2})_{((a_1,a_2),(b_1,b_2))}=(s_1)_{(a_1,b_1)}(s_2)_{(a_2,b_2)}$
and one must check that
\[
\Tens{s_1}{s_2}\in\PCOH({\Tens{X_1}{X_2},\Tens{Y_1}{Y_2}})\,.
\] 
This follows directly from Lemma~\ref{lemma:tens-morph-charact}. Let
$\One=(\{*\},[0,1])$. There are obvious choices of natural isomorphisms
\begin{align*}
  \Leftu_X&\in\PCOH(\Tens\One X,X)\\
  \Rightu_X&\in\PCOH(\Tens X\One,X)\\
  \Assoc_{X_1,X_2,X_3}&\in\PCOH(\Tens{\Tensp{X_1}{X_2}}{X_3},
      \Tens{X_1}{\Tensp{X_2}{X_3}})\\
  \Sym_{X_1,X_2}&\in\PCOH(\Tens{X_1}{X_2},\Tens{X_2}{X_1})  
\end{align*}
which satisfy the standard coherence properties. This shows that 
$(\PCOH,\One,\Leftu,\Rightu,\Assoc,\Sym)$ is a symmetric monoidal category.

\subsubsection{Internal linear hom.}
Given PCSs $X$ and $Y$, let us define
$\Evlin\in(\Realp)^{\Web{\Limpl{\Tens{(\Limpl XY)}{X}}{Y}}}$ by
\begin{align*}
\Evlin_{(((a',b'),a),b)}=
\begin{cases}
  1 & \text{if }(a,b)=(a',b')\\
  0 & \text{otherwise.}
\end{cases}
\end{align*}
Then it is easy to see that $(\Limpl XY,\Evlin)$ is an internal linear hom
object in $\PCOH$, showing that this SMCC is closed. If $t\in\PCOH(\Tens
ZX,Y)$, the corresponding linearly curryfied morphism
$\Curlin(t)\in\PCOH(Z,\Limpl XY)$ is given by
$\Curlin(t)_{(c,(a,b))}=t_{((c,a),b)}$. 

\subsubsection{$*$-autonomy.}
Take $\Bot=\One$, then one checks readily that 
$(\PCOH,\One,\Leftu,\Rightu,\Assoc,\Sym,\Bot)$ is a $*$-autonomous category. The
duality functor $X\mapsto(\Limpl X\Bot)$ can be identified with the strictly
involutive contravariant functor $X\mapsto\Orth X$.

\subsubsection{Additives}\label{sec:sem-additives}
Let $(X_i)_{i\in I}$ be a countable family of PCSs. We define a PCS
$\Bwith_{i\in I}X_i$ by $\Web{\Bwith_{i\in I}X_i}=\bigcup_{i\in
  I}\{i\}\times\Web{X_i}$ and $u\in\Pcohp{\Bwith_{i\in I}X_i}$ if, for all
$i\in I$, the family $u(i)\in(\Realp)^{\Web{X_i}}$ defined by
$u(i)_a=u_{(i,a)}$ belongs to $\Pcoh{X_i}$.

\begin{lemma}
  Let $u'\in(\Realp)^{\Web{\Bwith_{i\in I}X_i}}$. One has
  $u'\in\Pcoh{\Orthp{\Bwith_{i\in I}X_i}}$ iff
  \begin{itemize}
  \item $\forall i\in I\ u'(i)\in\Pcoh{\Orth{X_i}}$
  \item and $\sum_{i\in I}\Norm{u'(i)}_{\Orth{X_i}}\leq 1$.
  \end{itemize}
\end{lemma}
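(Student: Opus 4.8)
The plan is to unfold the definition of the dual and reduce the orthogonality condition to a scalar product that decomposes along the index set $I$. Recall that by definition of $\Orth{(\cdot)}$ on PCSs we have $\Pcoh{\Orthp{\Bwith_{i\in I}X_i}} = \Orth{\Pcohp{\Bwith_{i\in I}X_i}}$, so $u'$ lies in this set exactly when $\Eval u{u'}\leq 1$ for every $u\in\Pcohp{\Bwith_{i\in I}X_i}$. The first thing I would record is the decomposition
\[
\Eval u{u'} = \sum_{i\in I}\Eval{u(i)}{u'(i)},
\]
which is immediate from $\Web{\Bwith_{i\in I}X_i} = \bigcup_{i\in I}\{i\}\times\Web{X_i}$ and the definitions of $u(i)$ and $u'(i)$. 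Combined with the stated characterization of $\Pcohp{\Bwith_{i\in I}X_i}$ (membership means $u(i)\in\Pcoh{X_i}$ for every $i$), this rephrases the orthogonality condition as: for every family with $u(i)\in\Pcoh{X_i}$, one has $\sum_{i\in I}\Eval{u(i)}{u'(i)}\leq 1$.

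For the ($\Leftarrow$) direction I would assume $u'(i)\in\Pcoh{\Orth{X_i}}$ for all $i$ and $\sum_{i\in I}\Norm{u'(i)}_{\Orth{X_i}}\leq 1$. Using biorthogonality, $\Orth{\Orth{X_i}} = X_i$, so the norm rewrites as $\Norm{u'(i)}_{\Orth{X_i}} = \sup\{\Eval w{u'(i)}\mid w\in\Pcoh{X_i}\}$. Hence for any $u\in\Pcohp{\Bwith_{i\in I}X_i}$ each term satisfies $\Eval{u(i)}{u'(i)}\leq\Norm{u'(i)}_{\Orth{X_i}}$, and summing gives $\Eval u{u'}\leq\sum_{i\in I}\Norm{u'(i)}_{\Orth{X_i}}\leq 1$, as required.

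For the ($\Rightarrow$) direction, assume $u'\in\Pcoh{\Orthp{\Bwith_{i\in I}X_i}}$. To obtain $u'(j)\in\Pcoh{\Orth{X_j}}$ for a fixed $j$, I would test $u'$ against the families concentrated at $j$: take $u(j)=w$ with $w\in\Pcoh{X_j}$ and $u(i)=0$ for $i\neq j$ (here $0\in\Pcoh{X_i}$ since $\Eval 0{\cdot}=0\leq 1$ forces $0\in\Biorth{\Pcoh{X_i}}=\Pcoh{X_i}$). The orthogonality condition then reads $\Eval w{u'(j)}\leq 1$ for all $w\in\Pcoh{X_j}$, i.e. $u'(j)\in\Orth{\Pcoh{X_j}}=\Pcoh{\Orth{X_j}}$. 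For the summed-norm bound, I would fix a finite $J\subseteq I$, pick $w_i\in\Pcoh{X_i}$ for $i\in J$, and feed in the family with $u(i)=w_i$ on $J$ and $u(i)=0$ elsewhere, obtaining $\sum_{i\in J}\Eval{w_i}{u'(i)}\leq 1$. Taking the supremum over the $w_i$ independently and using that $J$ is finite, so $\sup_{(w_i)}\sum_{i\in J}\Eval{w_i}{u'(i)} = \sum_{i\in J}\sup_{w_i}\Eval{w_i}{u'(i)}$, yields $\sum_{i\in J}\Norm{u'(i)}_{\Orth{X_i}}\leq 1$.

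Finally, since all terms are non-negative, the supremum over finite $J$ recovers $\sum_{i\in I}\Norm{u'(i)}_{\Orth{X_i}}\leq 1$. The only genuinely delicate point is this last interchange of a supremum with a possibly infinite sum; I keep it rigorous by first restricting to finite $J$, where sup and sum commute by an elementary argument, and then invoking the fact that a sum of non-negative reals is the supremum of its finite partial sums. Everything else is a direct unfolding of the definitions of $\Orth{(\cdot)}$, of the additive product, and of $\Norm{\cdot}$.
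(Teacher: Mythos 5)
Your proof is correct and is exactly the argument the paper has in mind when it says ``the proof is quite easy'' (the paper omits the details): unfold the dual, split the scalar product along $I$, test against families supported on a single index (using $0\in\Pcoh{X_i}$) and on finite subsets, and pass to the supremum over finite partial sums. The one delicate point, interchanging the supremum with the sum over a finite index set before taking the limit, is handled properly.
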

The proof is quite easy. It follows that $\Bwith_{i\in I}X_i$ is a
PCS. Moreover we can define $\Proj i\in\PCOH(\Bwith_{j\in I}X_j,X_i)$ by
\begin{align*}
(\Proj i)_{(j,a),a'}=
\begin{cases}
  1 & \text{if }j=i\text{ and }a=a'\\
  0 & \text{otherwise.}
\end{cases}
\end{align*}
Then $(\Bwith_{i\in I}X_i,(\Proj i)_{i\in I})$ is the cartesian product of the
family $(X_i)_{i\in I}$ in the category $\PCOH$. The coproduct $(\Bplus_{i\in
  I}X_i,(\Inj i)_{i\in I})$ is the dual operation, so that
\begin{align*}
  \Web{\Bplus_{i\in I}X_i}=\Union_{i\in I}\{i\}\times\Web{X_i}
\end{align*}
and $u\in\Pcoh{(\Bplus_{i\in I}X_i)}$ if $\forall i\in I\ u(i)\in\Pcoh{X_i}$
and $\sum_{i\in I}\Norm{u(i)}_{X_i}\leq 1$. The injections
$\Inj j\in\PCOH(X_j,\Bplus_{i\in I}X_i)$ are given by
\begin{align*}
(\Inj i)_{a',(j,a)}=
\begin{cases}
  1 & \text{if }j=i\text{ and }a=a'\\
  0 & \text{otherwise.}
\end{cases}
\end{align*}
Given morphisms $s_i\in\PCOH(X_i,Y)$ (for each $i\in I$), then the
unique morphism $s\in\PCOH(\Bplus_{i\in I} X_i, Y)$ is given by
$s_{(i,a),b}=(s_i)_{a,b}$ and denoted as $\Case_{i\in I}s_i$ (in the binary
case, we use $\Case(s_1,s_2)$).

\subsubsection{Exponentials}
Given a set $I$, a \emph{finite multiset} of elements of $I$ is a function
$b:I\to\Nat$ whose \emph{support} $\Supp b=\{a\in I\St b(a)\not=0\}$ is
finite. We use $\Mfin I$ for the set of all finite multisets of elements of
$I$. Given a finite family $\List a1n$ of elements of $I$, we use $\Mset{\List
  a1n}$ for the multiset $b$ such that $b(a)=\Card{\{i\St a_i=a\}}$. We use
additive notations for multiset unions: $\sum_{i=1}^kb_i$ is the multiset
$b$ such that $b(a)=\sum_{i=1}^k b_i(a)$. The empty multiset is denoted as
$0$ or $\Msetempty$. If $k\in\Nat$, the multiset $kb$ maps $a$ to $k\,b(a)$.

Let $X$ be a PCS. Given $u\in\Pcoh X$ and $b\in\Mfin{\Web X}$, we define
$u^b=\prod_{a\in\Web X}u_a^{b(a)}\in\Realp$. Then we set $\Prom
u=(u^b)_{b\in\Mfin{\Web X}}$ and finally
\begin{align*}
  \Excl X=(\Mfin{\Web X},\Biorth{\{\Prom u\St u\in\Pcoh X\}})
\end{align*}
which is a pre-PCS. 

We check quickly that $\Excl X$ so defined is a PCS. Let $b=\Mset{\List
  a1n}\in\Mfin{\Web X}$.  Because $X$ is a PCS, and by
Theorem~\ref{th:Pcoh-prop}, for each $i=1,\dots,n$ there is $u(i)\in\Pcoh X$
such that $u(i)_{a_i}>0$. Let $(\alpha_i)_{i=1}^n$ be a family of strictly
positive real numbers such that $\sum_{i=1}^n\alpha_i\leq 1$. Then
$u=\sum_{i=1}^n\alpha_iu(i)\in\Pcoh X$ satisfies $u_{a_i}>0$ for each
$i=1,\dots,n$. Therefore $\Prom u_b=u^b>0$. This shows that there is
$U\in\Pcoh{(\Excl X)}$ such that $U_b>0$.

Let now $A\in\Realp$ be
such that $\forall u\in\Pcoh X\,\forall i\in\{1,\dots,n\}\ u_{a_i}\leq A$. For
all $u\in\Pcoh X$ we have $u^b\leq A^n$. We have
\begin{align*}
  \Orth{(\Pcoh{(\Excl X)})}=\Triorth{\{\Prom u\St u\in\Pcoh X\}}
  =\Orth{\{\Prom u\St u\in\Pcoh X\}}\,.
\end{align*}
Let $t\in\Realpto{\Web{\Excl X}}$ be defined by $t_c=0$ if $c\not=b$ and
$t_b=A^{-n}>0$; we have $t\in\Orth{(\Pcoh{(\Excl X)})}$. We have exhibited an
element $t$ of $\Orth{(\Pcoh{(\Excl X)})}$ such that $t_b>0$. By
Lemma~\ref{lemma:pcoh-charact} it follows that $\Excl X$ is a PCS.

\subsubsection{Kleisli morphisms as functions.}\label{subsec:Kleisli_fun}
Let $s\in\Realpto{\Web{\Limpl{\Excl X}{Y}}}$. We define a function $\Fun
s:\Pcoh X\to\Realpcto{\Web Y}$ as follows.  Given $u\in\Pcoh X$, we set
\begin{align*}
  \Fun s(u)=s\Compl {\Prom u}=\left(\sum_{c\in\Web{\Excl
        X}}s_{c,b}u^c\right)_{b\in\Web Y}\,.
\end{align*}

\begin{theorem}\label{prop:kleisli-morph-charact}
  One has $s\in\Pcoh{(\Limpl{\Excl X}{Y})}$ iff, for all $u\in\Pcoh X$, one has
  $\Fun s(u)\in\Pcoh Y$.  
\end{theorem}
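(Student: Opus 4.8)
The whole proof will rest on the definition $\Excl X=(\Mfin{\Web X},\Biorth{\{\Prom u\St u\in\Pcoh X\}})$ and on the morphism characterisation of Lemma~\ref{lemma:PCS-moprh-charact}, which says that $s\in\Pcoh{(\Limpl{\Excl X}{Y})}$ holds exactly when $\Matapp sw\in\Pcoh Y$ for every $w\in\Pcoh{\Excl X}$. The only thing to notice at the outset is that $\Fun s(u)$ is literally $\Matapp s{\Prom u}$, i.e.\ the matrix $s$ applied to the distinguished vector $\Prom u$, and that each such $\Prom u$ belongs to $\Pcoh{\Excl X}$ since these vectors are precisely the generators of $\Pcoh{\Excl X}$.

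Granting this, the left-to-right implication is immediate: if $s\in\Pcoh{(\Limpl{\Excl X}{Y})}$ then, applying Lemma~\ref{lemma:PCS-moprh-charact} to the particular elements $\Prom u\in\Pcoh{\Excl X}$, we get $\Fun s(u)=\Matapp s{\Prom u}\in\Pcoh Y$ for every $u\in\Pcoh X$.

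For the converse the difficulty is that the hypothesis only controls $s$ on the generators $\Prom u$, whereas $\Pcoh{\Excl X}$ is their full biorthogonal closure. My plan is to move the quantifier to the dual side. Using Lemma~\ref{lemma:PCS-moprh-charact} I first reduce the goal to showing $\Eval{\Matapp sw}{v'}\leq 1$ for every $w\in\Pcoh{\Excl X}$ and every $v'\in\Pcoh{\Orth Y}$. Since all entries are non-negative, the unordered double sum may be reassociated, yielding the transpose identity $\Eval{\Matapp sw}{v'}=\Eval{w}{\Matapp{\Orth s}{v'}}$. Read through this identity, the hypothesis $\Fun s(u)\in\Pcoh Y$ says exactly that $\Eval{\Prom u}{\Matapp{\Orth s}{v'}}\leq 1$ for all $u\in\Pcoh X$, that is, $\Matapp{\Orth s}{v'}\in\Orth{\{\Prom u\St u\in\Pcoh X\}}$. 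Because $w\in\Pcoh{\Excl X}=\Biorth{\{\Prom u\St u\in\Pcoh X\}}$, the defining property of the (bi)orthogonal then gives $\Eval{w}{\Matapp{\Orth s}{v'}}\leq 1$, hence $\Eval{\Matapp sw}{v'}\leq 1$, as required.

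The crux --- and the step where the PCS machinery does the work --- is this last passage from ``orthogonal to every generator $\Prom u$'' to ``paired below $1$ with every element of $\Pcoh{\Excl X}$''. It is exactly the closure property $\Orth{\{\Prom u\St u\in\Pcoh X\}}=\Orth{(\Pcoh{\Excl X})}$ already used in the Exponentials subsection, specialised through the vector $\Matapp{\Orth s}{v'}$. No quantitative estimate is needed beyond non-negativity of the coefficients to legitimise the rearrangement of sums, so once the hypothesis has been dualised the argument is purely formal.
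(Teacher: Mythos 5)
Your proof is correct, and it is exactly the unfolding that the paper leaves implicit when it declares the theorem ``an immediate consequence of the definition'': the forward direction uses $\Prom u\in\Pcoh{(\Excl X)}$, and the converse dualises the hypothesis through $\Orth{\{\Prom u\St u\in\Pcoh X\}}=\Orth{(\Pcoh{(\Excl X)})}$, which is the same identity the paper already invokes in the Exponentials subsection. Nothing is missing; the rearrangement of the non-negative double sum and the biorthogonal closure step are the only ingredients needed.
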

This is an immediate consequence of the definition.

\begin{theorem}\label{th:pcoh-functional}
  Let $s\in\PCOH(\Excl X,Y)$.  The function $\Fun s$ is
  Scott-continuous. Moreover, given $s,s'\in\PCOH(\Excl X,Y)$, one has $s=s'$
  (as matrices) iff $\Fun s=\Fun{s'}$ (as functions $\Pcoh X\to\Pcoh Y$).
\end{theorem}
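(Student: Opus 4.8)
The plan is to prove the two statements in turn, relying on Theorem~\ref{prop:kleisli-morph-charact} to know that $\Fun s(u)\in\Pcoh Y$ for every $u\in\Pcoh X$, so that $\Fun s$ really is a function $\Pcoh X\to\Pcoh Y$.

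For Scott-continuity I would first observe that $\Fun s$ is monotone: if $u\le v$ in $\Pcoh X$ then $u^c=\prod_a u_a^{c(a)}\le\prod_a v_a^{c(a)}=v^c$ for every $c\in\Mfin{\Web X}$ (all factors being nonnegative), whence $\Fun s(u)_b=\sum_c s_{c,b}u^c\le\sum_c s_{c,b}v^c=\Fun s(v)_b$. For preservation of lubs, take a monotone sequence $(u^{(n)})_n$ with pointwise lub $u$; by Theorem~\ref{th:Pcoh-prop} this lub is computed coordinatewise and lies in $\Pcoh X$, and likewise on the $Y$ side. Each multiset $c$ has finite support, so $(u^{(n)})^c$ is a monotone sequence of nonnegative reals with sup $u^c$ (a finite product of converging factors). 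Since $s_{c,b}\ge0$, the terms $s_{c,b}(u^{(n)})^c$ increase to $s_{c,b}u^c$, and by the monotone convergence theorem for sums of nonnegative reals one may exchange $\sup_n$ with $\sum_c$, giving $\sup_n\Fun s(u^{(n)})_b=\sup_n\sum_c s_{c,b}(u^{(n)})^c=\sum_c s_{c,b}u^c=\Fun s(u)_b$. As this holds for every $b$ and lubs in $\Pcoh Y$ are coordinatewise, $\Fun s$ is Scott-continuous.

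For the second statement only the implication from $\Fun s=\Fun{s'}$ to $s=s'$ needs work. I would fix $b\in\Web Y$ and $c_0\in\Mfin{\Web X}$ and choose a finite $\{a_1,\dots,a_k\}\supseteq\Supp{c_0}$. The key preliminary step is to isolate a coordinate box inside $\Pcoh X$. First note that $\Pcoh X$ is downward closed: if $v\le u\in\Pcoh X$ then $\sum_a v_a u'_a\le\sum_a u_a u'_a\le1$ for every $u'\in\Orth{\Pcoh X}$, so $v\in\Biorth{\Pcoh X}=\Pcoh X$; in particular $0\in\Pcoh X$. For each $i$ pick $u^{(i)}\in\Pcoh X$ with $u^{(i)}_{a_i}>0$; iterating the convexity of Theorem~\ref{th:Pcoh-prop}, the combination $w=\frac1k\sum_{i=1}^k u^{(i)}\in\Pcoh X$ has $w_{a_i}>0$ for all $i$. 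Setting $\eta=\min_i w_{a_i}>0$, for any $(x_1,\dots,x_k)\in[0,\eta]^k$ the vector $u=\sum_{i=1}^k x_i\Base{a_i}$ satisfies $u\le w$ coordinatewise, hence $u\in\Pcoh X$ by downward closure.

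Evaluating $\Fun s$ on such $u$ yields $u^c=\prod_i x_i^{c(a_i)}$ when $\Supp c\subseteq\{a_1,\dots,a_k\}$ and $u^c=0$ otherwise, so $\Fun s(u)_b=\sum_{\Supp c\subseteq\{a_1,\dots,a_k\}}s_{c,b}\prod_i x_i^{c(a_i)}$. This is a power series in $x_1,\dots,x_k$ with nonnegative coefficients, convergent on $[0,\eta]^k$ since $\Fun s(u)_b$ is a finite real. Multisets with support in $\{a_1,\dots,a_k\}$ are in bijection with exponent vectors in $\Nat^k$, so distinct such $c$ index distinct monomials and $s_{c,b}$ is precisely the coefficient of $\prod_i x_i^{c(a_i)}$. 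Since the coefficients of a convergent power series are uniquely determined by the function it defines on a set with nonempty interior (they are its normalized partial derivatives at the origin), $\Fun s=\Fun{s'}$ forces $s_{c,b}=s'_{c,b}$ for all such $c$, in particular $s_{c_0,b}=s'_{c_0,b}$; as $b$ and $c_0$ were arbitrary, $s=s'$. The routine half is the continuity argument (merely monotone convergence); the crux is the geometric observation that $\Pcoh X$ contains a full box $[0,\eta]^k$ around the origin in any finite set of web elements, which converts $\Fun s$ into a genuine convergent power series to which uniqueness of coefficients applies.
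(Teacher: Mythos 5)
Your proof is correct and follows the same route the paper merely gestures at (the paper's justification is a one-line appeal to uniqueness of coefficients of real polynomial/analytic functions on an open set). Your elaboration — restricting to a box $[0,\eta]^k$ inside $\Pcoh X$ via downward closure and sub-convex combinations, and then invoking uniqueness of coefficients of a convergent power series with nonnegative coefficients — is exactly the missing detail, and it correctly treats $\Fun s$ as a genuine power series rather than a polynomial.
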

This is an easy consequence of the fact that two polynomials of $n$ variables
with real coefficients are identical iff they are the same function on any open
subset of $\Real^n$.

\emph{Terminology.} We say that $s\in\Pcoh{(\Limpl{\Excl X}Y)}$ is a
\emph{power series} whose monomial $u^c$ has coefficient
$s_{c,b}$. Since $s$ is characterized by the function $\Fun s$ we
sometimes say that $\Fun s$ is a power series.

\medskip
We can consider the elements of $\Kl\PCOH(X,Y)$ (the morphisms of the
Kleisli category of the comonad $\Excl\_$ on the category $\PCOH$) as
particular Scott continuous functions $\Pcoh X\to\Pcoh Y$ and this
identification is compatible with the definition of identity maps and of
composition in $\Kl\PCOH$, see Section~\ref{sec:EM-general}. Of course, not all
Scott continuous function are morphisms in $\Kl\PCOH$.

\begin{theorem}\label{prop:order-fun-kleiseli}
  Let $s,s'\in\Kl\PCOH(X,Y)$ be such that $s\leq s'$ (as elements of
  $\Pcoh{(\Limpl{\Excl X}Y)}$). Then $\forall u\in\Pcoh X\ \Fun
  s(u)\leq\Fun{s'}(u)$. Let $(s(i))_{i\in\Nat}$ be a monotone sequence of
  elements of $\Kl\PCOH(X,Y)$ and let $s=\sup_{i\in\Nat}s(i)$. Then $\forall
  u\in\Pcoh X\ \Fun s(u)=\sup_{i\in I}\Fun{s_i}(u)$.
\end{theorem}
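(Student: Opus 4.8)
The plan is to treat the two assertions separately, reducing each to an elementary manipulation of the explicit power-series expression $\Fun s(u)_b=\sum_{c\in\Mfin{\Web X}}s_{c,b}u^c$ recorded in Section~\ref{subsec:Kleisli_fun}, the only genuinely analytic ingredient being an interchange of a countable sum with a monotone supremum.

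For the monotonicity statement I would argue coordinatewise. Since $s\leq s'$ means $s_{c,b}\leq s'_{c,b}$ for every $(c,b)\in\Web{\Excl X}\times\Web Y$, and since $u^c=\prod_{a\in\Web X}u_a^{c(a)}\geq 0$ for every $u\in\Pcoh X$, each term of the defining sum increases: $s_{c,b}u^c\leq s'_{c,b}u^c$. Summing over $c$ gives $\Fun s(u)_b\leq\Fun{s'}(u)_b$ for every $b\in\Web Y$, which is exactly $\Fun s(u)\leq\Fun{s'}(u)$ in the coordinatewise order on $\Pcoh Y$. This step is immediate and presents no difficulty.

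For the continuity statement I would first record that the supremum $s=\sup_i s(i)$ in $\Pcoh{(\Limpl{\Excl X}Y)}$ is computed coordinatewise: the order is coordinatewise by definition, so the coordinatewise supremum is the least upper bound once it is known to belong to the space, and it does by closure of $\Pcoh{(\Limpl{\Excl X}Y)}$ under lubs of monotone sequences (Theorem~\ref{th:Pcoh-prop}). Hence $s_{c,b}=\sup_i s(i)_{c,b}$ for each $(c,b)$. By the monotonicity just proved, the sequence $(\Fun{s(i)}(u))_i$ is monotone in $\Pcoh Y$, so $\sup_i\Fun{s(i)}(u)$ exists in $\Pcoh Y$ (again Theorem~\ref{th:Pcoh-prop}); it remains to identify it with $\Fun s(u)$. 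Fixing $b\in\Web Y$ and setting $a_{i,c}=s(i)_{c,b}u^c\geq 0$, which is nondecreasing in $i$, the goal becomes the identity
\[
\sum_{c}\sup_i a_{i,c}=\sup_i\sum_{c}a_{i,c}\,.
\]

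The crux—and the only real obstacle—is this interchange of the countable sum over $c\in\Mfin{\Web X}$ with the monotone supremum over $i$. I would establish it by the standard two-inequalities argument for nonnegative terms. The inequality $\sup_i\sum_c a_{i,c}\leq\sum_c\sup_i a_{i,c}$ is clear, since $\sum_c a_{i,c}\leq\sum_c\sup_j a_{j,c}$ for each $i$. Conversely, for every finite $F\subseteq\Mfin{\Web X}$ one has $\sum_{c\in F}\sup_i a_{i,c}=\sup_i\sum_{c\in F}a_{i,c}\leq\sup_i\sum_c a_{i,c}$, using that a finite sum of nondecreasing sequences has supremum equal to the sum of the suprema; taking the supremum over all finite $F$ yields the reverse inequality. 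This is precisely monotone convergence for series of nonnegative reals. Substituting back, $\Fun s(u)_b=\sum_c(\sup_i s(i)_{c,b})u^c=\sup_i\sum_c s(i)_{c,b}u^c=\sup_i\Fun{s(i)}(u)_b$ for each $b\in\Web Y$, which gives $\Fun s(u)=\sup_i\Fun{s(i)}(u)$ as required.
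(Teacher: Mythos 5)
Your proof is correct and follows essentially the same route as the paper, which disposes of the first claim as obvious and attributes the second to the monotone convergence theorem --- precisely the sum/supremum interchange for nonnegative terms that you prove by the two-inequality argument. The only extra content in your write-up is the (correct) verification that suprema in $\Pcoh{(\Limpl{\Excl X}Y)}$ are computed coordinatewise, which the paper leaves implicit.
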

The first statement is obvious. The second one results from the monotone
convergence Theorem.

Given a multiset $b\in\Mfin I$, we define its \emph{factorial}
$\Factor b=\prod_{i\in I}\Factor{b(i)}$ and its \emph{multinomial
  coefficient} $\Multinom{}{b}=\Factor{(\Card b)}/\Factor b\in\Natnz$ where
$\Card b=\sum_{i\in I}b(i)$ is the cardinality of $b$. Remember that,
given an $I$-indexed family $a=(a_i)_{i\in I}$ of elements of a commutative
semi-ring, one has the multinomial formula
\begin{align*}
  \Big(\sum_{i\in I}a_i\Big)^n=\sum_{b\in\Mfinc nI}\Multinom{}b a^b
\end{align*}
where $\Mfinc nI=\{b\in\Mfin I\St\Card b=n\}$.

Given $c\in\Web{\Excl X}$ and $d\in\Web{\Excl Y}$ we define
$\Mexpset cd$ as the set of all multisets $r$ in $\Mfin{\Web X\times\Web
  Y}$ such that
\begin{align*}
  \forall a\in\Web X\ \sum_{b\in\Web Y}r(a,b)=c(a)
  \quad\text{and}\quad
  \forall b\in\Web Y\ \sum_{a\in\Web X}r(a,b)=d(b)\,.
\end{align*}

Let $t\in\PCOH(X,Y)$, we define $\Excl t\in\Realpto{\Limpl{\Excl X}{\Excl Y}}$
by
\begin{align*}
  (\Excl t)_{c,d}
  =\sum_{r\in\Mexpset cd}\frac{\Factor d}{\Factor r}t^r\,.
\end{align*}
Observe that the coefficients in this sum are all non-negative integers.

\begin{lemma}\label{lemma:excl-morph-app}
  For all $u\in\Pcoh X$ one has $\Excl t\Compl\Prom u=\Prom{(t\Compl u)}$.
\end{lemma}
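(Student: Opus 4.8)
The plan is to verify the matrix identity entrywise, fixing $d\in\Web{\Excl Y}=\Mfin{\Web Y}$ and expanding both sides as sums of non-negative reals, so that all the rearrangements below are justified by unconditional summability in $\Realpc$. On the left, by the definition of composition and of $\Excl t$,
\begin{align*}
  (\Excl t\Compl\Prom u)_d
  =\sum_{c\in\Mfin{\Web X}}u^c\,(\Excl t)_{c,d}
  =\sum_{c\in\Mfin{\Web X}}u^c\sum_{r\in\Mexpset cd}\frac{\Factor d}{\Factor r}\,t^r\,.
\end{align*}
On the right, $\Prom{(t\Compl u)}_d=\prod_{b\in\Web Y}\bigl((t\Compl u)_b\bigr)^{d(b)}=\prod_{b\in\Web Y}\bigl(\sum_{a\in\Web X}t_{a,b}u_a\bigr)^{d(b)}$, each factor being a finite real number since $t\Compl u\in\Pcoh Y$.

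First I would expand each factor of the right-hand product by the multinomial formula recalled just above: for each $b\in\Web Y$,
\begin{align*}
  \Bigl(\sum_{a\in\Web X}t_{a,b}u_a\Bigr)^{d(b)}
  =\sum_{\substack{r_b\in\Mfin{\Web X}\\ \Card{r_b}=d(b)}}\frac{\Factor{d(b)}}{\Factor{r_b}}\prod_{a\in\Web X}t_{a,b}^{r_b(a)}u_a^{r_b(a)}\,.
\end{align*}
Then I would distribute the product over $b\in\Web Y$ into a single sum indexed by families $(r_b)_{b\in\Web Y}$ with $\Card{r_b}=d(b)$. The key combinatorial step is to identify such a family with the multiset $r\in\Mfin{\Web X\times\Web Y}$ given by $r(a,b)=r_b(a)$: this is a bijection onto the set of those $r$ satisfying $\sum_{a}r(a,b)=d(b)$ for all $b$, i.e.\ the second constraint defining $\Mexpset{}{}$, with no constraint yet imposed on the row sums.

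Next I would match coefficients. Since $\Factor d=\prod_b\Factor{d(b)}$ and $\Factor r=\prod_b\Factor{r_b}$, the product of the multinomial coefficients collapses to $\Factor d/\Factor r$; likewise $\prod_{b}\prod_a t_{a,b}^{r_b(a)}=t^r$, while $\prod_b\prod_a u_a^{r_b(a)}=\prod_a u_a^{\sum_b r(a,b)}=u^c$, where $c\in\Mfin{\Web X}$ is defined by $c(a)=\sum_b r(a,b)$. Hence the right-hand side equals $\sum_r\frac{\Factor d}{\Factor r}\,t^r u^c$, the sum ranging over all $r$ with $\sum_a r(a,b)=d(b)$. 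Finally, grouping this sum according to the value of the induced $c$ — so that for each fixed $c$ the multiset $r$ ranges exactly over $\Mexpset cd$ — recovers the left-hand expression, which completes the proof.

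The computation is entirely elementary; the only point requiring care is the bookkeeping of multisets, namely checking that $(r_b)_b\mapsto r$ is the claimed bijection and that the factorial identities $\Factor d=\prod_b\Factor{d(b)}$ and $\Factor r=\prod_b\Factor{r_b}$ hold, together with the observation that non-negativity of every term legitimises the reindexing of the (a priori infinite) sums and the exchange of the product over $b$ with the sums over the $r_b$.
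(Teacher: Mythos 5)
Your proof is correct and is exactly the computation the paper has in mind: the paper dispatches this lemma with the remark that it ``results from a simple computation applying the multinomial formula'', and your entrywise expansion of $\Prom{(t\Compl u)}_d$ via the multinomial formula, followed by the reindexing of families $(r_b)_b$ as multisets $r\in\Mexpset cd$, is that computation carried out in full. The bookkeeping of factorials and the justification of the rearrangements by non-negativity are both sound, so nothing is missing.
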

This results from a simple computation applying the multinomial formula.

\begin{theorem}
  For all $t\in\PCOH(X,Y)$ one has $\Excl t\in\PCOH(\Excl X,\Excl Y)$ and the
  operation $t\mapsto\Excl t$ is functorial.
\end{theorem}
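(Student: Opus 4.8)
The plan is to prove the two claims separately, using Lemma~\ref{lemma:excl-morph-app} to reduce everything to the action of $\Excl t$ on the generating vectors $\Prom u$, and to settle functoriality of composition by a separation argument rather than by grinding through the multinomial coefficients.

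First I would show that $\Excl t\in\PCOH(\Excl X,\Excl Y)$. Recall from the construction of $\Excl X$ the identity $\Orth{(\Pcoh{(\Excl X)})}=\Orth{\{\Prom u\St u\in\Pcoh X\}}$. Hence, to prove that $\Excl t$ is a morphism it suffices to show that its transpose $\Orth{\Excl t}$ sends $\Pcoh{\Orth{\Excl Y}}$ into $\Orth{\{\Prom u\St u\in\Pcoh X\}}$. So I fix $V'\in\Pcoh{\Orth{\Excl Y}}$ and an arbitrary $u\in\Pcoh X$ and compute
\[
\Eval{\Matapp{\Orth{\Excl t}}{V'}}{\Prom u}=\Eval{V'}{\Excl t\Compl\Prom u}=\Eval{V'}{\Prom{(t\Compl u)}}\leq 1,
\]
where the middle equality is Lemma~\ref{lemma:excl-morph-app} and the final inequality holds because $t\Compl u\in\Pcoh Y$ (as $t$ is a morphism), whence $\Prom{(t\Compl u)}\in\Pcoh{\Excl Y}$, while $V'\in\Orth{\Pcoh{\Excl Y}}$. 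This is exactly the bound required, so $\Excl t\in\PCOH(\Excl X,\Excl Y)$.

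For functoriality, preservation of identities $\Excl{\Id_X}=\Id_{\Excl X}$ is a direct combinatorial check on the defining sum for $(\Excl\Id)_{c,d}$: only multisets $r$ supported on the diagonal contribute, since $\Id^r=0$ as soon as $r(a,b)>0$ for some $a\neq b$; the diagonal constraints defining $\Mexpset cd$ then force $c=d$, the unique contributing $r$ is $r(a,a)=c(a)$, for which $\Factor r=\Factor c=\Factor d$, so the coefficient is $1$ on the diagonal and $0$ off it. Preservation of composition $\Excl{(t\Compl s)}=\Excl t\Compl\Excl s$ (for $s\in\PCOH(X,Y)$ and $t\in\PCOH(Y,Z)$) is where I would avoid the brute-force identity over $\Mexpset{}{}$. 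Having just shown that $\Excl$ of a morphism is a morphism, both sides are elements of $\PCOH(\Excl X,\Excl Z)$, so by Theorem~\ref{th:pcoh-functional} it is enough to check that they induce the same function $\Fun{(\cdot)}\colon\Pcoh X\to\Pcoh{\Excl Z}$. Applying Lemma~\ref{lemma:excl-morph-app} together with associativity of composition, for every $u\in\Pcoh X$ we get
\[
\Excl{(t\Compl s)}\Compl\Prom u=\Prom{(t\Compl(s\Compl u))}=\Excl t\Compl\Prom{(s\Compl u)}=\Excl t\Compl\Excl s\Compl\Prom u,
\]
so the two matrices agree on all the $\Prom u$, and Theorem~\ref{th:pcoh-functional} then forces them to be equal as matrices.

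The genuine content of the statement is the composition law: done directly it would require an intricate identity contracting the data of $\Mexpset cd$ against the multinomial weights (a Cauchy-product type computation), and the main point of the argument above is that Lemma~\ref{lemma:excl-morph-app} plus the power-series uniqueness of Theorem~\ref{th:pcoh-functional} let us sidestep it entirely. The only delicate step is therefore the legitimacy of separating matrices over $\Web{\Excl X}\times\Web{\Excl Z}$ by their behaviour on the vectors $\Prom u$ alone; this is exactly what Theorem~\ref{th:pcoh-functional} guarantees, since $\Excl Z$ is a genuine PCS and both composites are bona fide morphisms in $\PCOH(\Excl X,\Excl Z)$.
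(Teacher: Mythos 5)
Your proof is correct and follows essentially the same route as the paper, whose proof of this theorem is literally ``Immediate consequences of Lemma~\ref{lemma:excl-morph-app} and Theorem~\ref{th:pcoh-functional}'': membership in $\PCOH(\Excl X,\Excl Y)$ from $\Excl t\Compl\Prom u=\Prom{(t\Compl u)}$, and preservation of composition by separating the two matrices on the vectors $\Prom u$ via Theorem~\ref{th:pcoh-functional}. You merely make explicit the details the paper omits (the orthogonality computation, which could equally be phrased via Theorem~\ref{prop:kleisli-morph-charact}, and the combinatorial check for the identity).
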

Immediate consequences of Lemma~\ref{lemma:excl-morph-app} and
Theorem~\ref{th:pcoh-functional}.

\subsubsection{Description of the exponential comonad.}
We equip now this functor with a structure of comonad: let $\Der
X\in\Realpto{\Web{\Limpl{\Excl X}X}}$ be given by $(\Der
X)_{b,a}=\Kronecker{\Mset a}{b}$ (the value of the Kronecker symbol
$\Kronecker ij$ is $1$ if $i=j$ and $0$ otherwise) and $\Digg
X\in\Realpto{\Web{\Limpl{\Excl X}{\Excl{\Excl X}}}}$ be given by $(\Digg
X)_{b,\Mset{\List b1n}}=\Kronecker{\sum_{i=1}^n b_i}{b}$. Then we have
$\Der X\in\PCOH(\Excl X,X)$ and $\Digg X\in\PCOH(\Excl X,\Excl{\Excl X})$
simply because
\begin{align*}
  \Fun{\Der X}(u)=u\quad\text{and}\quad\Fun{\Digg X}(u)=\Prom{(\Prom u)}
\end{align*}
for all $u\in\Pcoh X$, as easily checked. Using these equations, one also
checks easily the naturality of these morphisms, and the fact that
$(\Excl\_,\Der{},\Digg{})$ is a comonad.

As to the monoidality of this comonad, we introduce
$\Expmonisoz\in\Realpto{\Web{\Limpl{\One}{\Excl\Top}}}$ by
$\Expmonisoz_{\Onelem,\Mset{}}=1$ and $\Expmonisob
XY\in\Realpto{\Web{\Limpl{\Tens{\Excl X}{\Excl Y}}{\Excl{(\With XY)}}}}$ by
$(\Expmonisob XY)_{b,c,d}=\Kronecker{d}{\Injms 1b+\Injms
  2c}$ 
where $\Injms i{\Mset{\List a1n}}=\Mset{(i,a_1),\dots,(i,a_n)}$. It is easily
checked that the required commutations hold (again, we refer
to~\cite{Mellies09}).

\subsubsection{Fixpoints in $\Kl\PCOH$.}
For any object $Y$ of $\PCOH$, a morphism $t\in\Kl\PCOH(Y,Y)$ defines a
Scott-continuous function $f=\Fun t:\Pcohp Y\to\Pcohp Y$ which has a least
fixpoint $\sup_{n\in\Nat}f^n(0)$. Let $X$ be an object of $\PCOH$ and set
$Y=\Limpl{\Exclp{\Limpl{\Excl X}{X}}}X$. Then we have a morphism $t=\Curlin
s\in\Kl\PCOH(Y,Y)$ where 
$s\in\PCOH(
\Tens{\Excl Y} 
     {\Exclp{\Limpl{\Excl X}{X}}}
,X)$ is defined as the following composition of morphisms in $\PCOH$:
\begin{center}
  \begin{tikzpicture}[->, >=stealth]
    \node (1) {$\Tens{\Excl Y}{\Excl{(\Limpl{\Excl X}{X})}}$};
    \node (2) [right of=1, node distance=60mm] 
    {$\Excl Y\ITens\Excl{(\Limpl{\Excl X}{X})}
      \ITens\Excl{(\Limpl{\Excl X}{X})}$};
    \node (3) [below of=2, node distance=12mm] 
      {$\Excl X\ITens(\Limpl{\Excl X}{X})$};
    \node (4) [below of=2, node distance=24mm] {$X$};
    \tikzstyle{every node}=[midway,auto,font=\scriptsize]
   \draw (1) -- node [below] {$s$} (4);
    \draw (1) -- node {$\Tens{\Excl Y}{\Contr{\Limpl{\Excl X}{X}}}$} (2);
    \draw (2) -- node {$\Tens
      {\Prom{(\Evlin\Compl(\Tens{\Der Y}{\Excl{(\Limpl{\Excl X}{X})}}))}}
      {\Der{(\Limpl{\Excl X}{X})}}$} (3);
    \draw (3) -- node {$\Evlin\Compl\Sym$} (4);    

  \end{tikzpicture}
\end{center}
 Then $\Fun t$ is a Scott continuous function
$\Pcoh Y\to\Pcoh Y$ whose least fixpoint is $\Sfix$, considered as a
morphism $\Sfix\in\Kl\PCOH(\Limpl{\Excl X}{X},X)$, satisfies
$\Fun\Sfix(u)=\sup_{n=0}^\infty\Fun u^n(0)$.

\subsubsection{The partially ordered class of probabilistic coherence spaces}

\newcommand\Subobj{\subseteq}
\newcommand\Subwit[2]{\eta_{#1,#2}}

We define the category $\Embr\PCOH$. This category is actually a partially
ordered class whose objects are those of $\PCOH$. The order relation, denoted
as $\Subobj$, is defined as follows: $X\Subobj Y$ if $\Web X\subseteq\Web Y$
and the matrices $\Emb{\Subwit XY}$ and $\Ret{\Subwit XY}$ defined, for
$a\in\Web X$ and $b\in\Web Y$, by $(\Emb{\Subwit XY})_{a,b}=(\Ret{\Subwit
  XY})_{b,a}=\Kronecker ab$ satisfy $\Emb{\Subwit XY}\in\PCOH(X,Y)$ and
$\Ret{\Subwit XY}\in\PCOH(Y,X)$. In other words: given $u\in\Pcoh X$, the
element $\Matapp{\Emb{\Subwit XY}}{u}$ of $\Realpto{\Web Y}$ obtained by
extending $u$ with $0$'s outside $\Web X$ belongs to $\Pcoh Y$. And conversely,
given $v\in\Pcoh Y$, the element $\Matapp{\Ret{\Subwit XY}}{v}$ of
$\Realpto{\Web X}$ obtained by restricting $v$ to $\Web X$ belongs to $\Pcoh
X$. Considering $\Embr\PCOH$ as a category, $\Subwit XY$ is a
notation for the unique element of $\Embr\PCOH(X,Y)$ when $X\Subobj Y$, in
accordance with the notations of Paragraph~\ref{parag:sem-fix-points}.

\begin{lemma}\label{lemma:sem-orth-subobj}
  If $X\Subobj Y$ then $\Orth X\Subobj\Orth Y$, $\Emb{\Subwit{\Orth X}{\Orth
    Y}}=\Orth{(\Ret{\Subwit XY})}$ and  $\Ret{\Subwit{\Orth X}{\Orth
    Y}}=\Orth{(\Emb{\Subwit XY})}$.
\end{lemma}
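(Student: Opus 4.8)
The plan is to reduce everything to the already-established fact from the multiplicative constructs that transposition sends $\PCOH(X,Y)$ into $\PCOH(\Orth Y,\Orth X)$, so that the three claims follow by unwinding the definition of $\Subobj$ together with a trivial index computation. The point is that the structural matrices witnessing $\Orth X\Subobj\Orth Y$ are, up to transposition, precisely those witnessing $X\Subobj Y$.

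First I would record that $\Web{\Orth X}=\Web X$ and $\Web{\Orth Y}=\Web Y$, so the web inclusion $\Web{\Orth X}\subseteq\Web{\Orth Y}$ required by $\Orth X\Subobj\Orth Y$ is just the given $\Web X\subseteq\Web Y$. It then remains to exhibit the two structural matrices for $\Orth X\Subobj\Orth Y$ as morphisms; by the definition of $\Subobj$ these matrices are forced to be the Kronecker matrices $(\Emb{\Subwit{\Orth X}{\Orth Y}})_{a,b}=(\Ret{\Subwit{\Orth X}{\Orth Y}})_{b,a}=\Kronecker ab$ (for $a\in\Web X$ and $b\in\Web Y$), that is, extension-by-zeros from $\Web X$ to $\Web Y$ and restriction from $\Web Y$ to $\Web X$ respectively.

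The heart of the argument is then a one-line index check in each direction. Since $\Ret{\Subwit XY}\in\PCOH(Y,X)$ holds by the hypothesis $X\Subobj Y$, its transpose satisfies $\Orth{(\Ret{\Subwit XY})}\in\PCOH(\Orth X,\Orth Y)$, and computing entries gives $(\Orth{(\Ret{\Subwit XY})})_{a,b}=(\Ret{\Subwit XY})_{b,a}=\Kronecker ab$. This is exactly the matrix prescribed for $\Emb{\Subwit{\Orth X}{\Orth Y}}$, which is therefore a morphism $\Orth X\to\Orth Y$ and equals $\Orth{(\Ret{\Subwit XY})}$, giving the second claim. Symmetrically, from $\Emb{\Subwit XY}\in\PCOH(X,Y)$ one gets $\Orth{(\Emb{\Subwit XY})}\in\PCOH(\Orth Y,\Orth X)$ with entries $(\Orth{(\Emb{\Subwit XY})})_{b,a}=\Kronecker ab$, which is the restriction matrix $\Ret{\Subwit{\Orth X}{\Orth Y}}$; this supplies the retraction morphism and the third claim. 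Having produced both structural morphisms, the conditions defining $\Orth X\Subobj\Orth Y$ are met, so the first claim holds as well.

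There is no genuine obstacle here: the only nontrivial input is the transposition stability of $\PCOH$-morphisms proved earlier, and once that is invoked the remaining work is purely the bookkeeping of matching transposed Kronecker matrices to the extension and restriction matrices prescribed by the definition of $\Subobj$. The single point deserving a moment of care is keeping the two index orders straight---checking that transposing an extension matrix yields a restriction matrix and vice versa---so that the equalities of the second and third claims come out with the correct variances.
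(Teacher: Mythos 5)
Your proof is correct and matches the paper's intent: the paper dismisses this lemma as ``a straightforward verification,'' and your argument — identifying the webs, noting that the structural matrices are forced to be the Kronecker matrices, and invoking the transposition stability $t\in\PCOH(X,Y)\Rightarrow\Orth t\in\PCOH(\Orth Y,\Orth X)$ to recognize them as transposes of the given embedding/retraction pair — is exactly that verification spelled out.
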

The proof is a straightforward verification.

We contend that $\Embr\PCOH$ is directed co-complete.  Let
$(X_\gamma)_{\gamma\in\Gamma}$ be a countable directed family in $\Embr\PCOH$
(so $\Gamma$ is a countable directed poset and $\gamma\leq\gamma'\Implies
X_\gamma\Subobj X_{\gamma'}$), we have to check that this family has a least
upper bound $X$. We set $\Web X=\bigcup_{\gamma\in\Gamma}\Web{X_\gamma}$ and
$\Pcoh X=\{w\in\Realpto{\Web X}\St\forall\gamma\in\Gamma\ \Matapp{\Ret{\Subwit
    XY}}{w}\in\Pcoh{X_\gamma}\}$. This defines an object of $\PCOH$ which
satisfies $\Pcoh
X=\Biorth{\{\Matapp{\Emb{\Subwit{X_\gamma}{X}}}{u}\St\gamma\in\Gamma\text{ and
  }u\in\Pcoh{X_\gamma}\}}$ and is therefore the lub of the family
$(X_\gamma)_{\gamma\in\Gamma}$ in $\Embr\PCOH$. This object $X$ is denoted
$\bigcup_{\gamma\in\Gamma}X_\gamma$. One checks easily that
$\Orth{(\bigcup_{\gamma\in\Gamma}X_\gamma)}
=\bigcup_{\gamma\in\Gamma}\Orth{X_\gamma}$.

Then the functor $\Embf:\Embr\PCOH\to\PCOH$ defined by $\Embf(X)=X$ and
$\Embf(\Subwit XY)=\Emb{\Subwit XY}$ is continuous: given a directed family
$(X_\gamma)_{\gamma\in\Gamma}$ whose lub is $X$ and given a collection of
morphisms $t_\gamma\in\PCOH(X_\gamma,Y)$ such that
$t_{\gamma'}\Compl\Emb{\Subwit{X_{\gamma}}{X_{\gamma'}}}=t_\gamma$ for any
$\gamma,\gamma'\in\Gamma$ such that $\gamma\leq\gamma'$, there is exactly one
morphism $t\in\PCOH(X,Y)$ such that
$t\Compl\Emb{\Subwit{X_\gamma}{X}}=t_\gamma$
for each $\gamma\in\Gamma$. Given $a\in\Web X$ and $b\in\Web Y$,
$t_{a,b}=(t_\gamma)_{a,b}$ for any $\gamma$ such that $a\in\Web{X_\gamma}$ (our
hypothesis on the $t_\gamma$'s means that $(t_\gamma)_{a,b}$ does not depend on
the choice of $\gamma$).

All the operations of Linear Logic define monotone continuous functionals on
$\Embr\PCOH$ which moreover commute with the functor $\Funofemb$. This means
for instance that if $X\Subobj Y$ then $\Excl X\Subobj\Excl Y$,
$\Emb{\Subwit{\Excl X}{\Excl Y}}=\Excl{(\Emb{\Subwit{X}{Y}})}$,
$\Ret{\Subwit{\Excl X}{\Excl Y}}=\Excl{(\Ret{\Subwit{X}{Y}})}$ and
$\Excl{(\bigcup_{\gamma\in\Gamma}X_\gamma)}=
\bigcup_{\gamma\in\Gamma}\Excl{X_\gamma}$ and similarly for $\ITens$ and
$\IPlus$. As a consequence, and as a consequence of
Lemma~\ref{lemma:sem-orth-subobj}, if $X_i\Subobj Y_i$ for
$i=1,2$ then $\Limpl{X_1}{X_2}\Subobj\Limpl{Y_1}{Y_2}$,
$\Emb{\Subwit{\Limpl{X_1}{X_2}}{\Limpl{Y_1}{Y_2}}}
=\Limpl{\Ret{\Subwit{X_1}{Y_1}}}{\Emb{\Subwit{X_1}{Y_1}}}$ and
$\Ret{\Subwit{\Limpl{X_1}{X_2}}{\Limpl{Y_1}{Y_2}}}
=\Limpl{\Emb{\Subwit{X_1}{Y_1}}}{\Ret{\Subwit{X_1}{Y_1}}}$ and $\Linarrow$
commutes with directed colimits in $\Embr\PCOH$.

\newcommand\Coalgmt[1]{\Coalgm{#1}}

This notion of inclusion on probabilistic coherence spaces extends to
coalgebras as outlined in Section~\ref{parag:sem-fix-points} (again, we refer
to~\cite{Ehrhard16a} for more details). We describe briefly this notion of
inclusion in the present concrete setting. 

Let $P$ and $Q$ be object of $\EM\PCOH$, we have $P\Subobj Q$ in
$\Embr{\EM\PCOH}$ if $\Coalgc P\Subobj\Coalgc Q$ and $\Coalgm
Q\Compl\Emb{\Subwit{\Coalgc P}{\Coalgc Q}}=\Excl{(\Emb{\Subwit{\Coalgc
      P}{\Coalgc Q}})}\Compl\Coalgm P$. The lub of a directed family
$(P_\gamma)_{\gamma\in\Gamma}$ of coalgebras (for this notion of substructure)
is the coalgebra $P=\bigcup_{\gamma\in\Gamma}P_\gamma$ defined by $\Coalgc
P=\bigcup_{\gamma\in\Gamma}\Coalgc{P_\gamma}$ and $\Coalgm P$ is characterized
by the equation $\Coalgm P\Compl\Emb{\Subwit{\Coalgc{P_\gamma}}{\Coalgc
    P}}=\Excl{\Emb{\Subwit{\Coalgc{P_\gamma}}{\Coalgc
      P}}}\Compl\Coalgm{P_\gamma}$ which holds for each $\gamma\in\Gamma$.

As outlined in Section~\ref{sec:cbpv-interpretation}, this allows to interpret
any type $\sigma$ as an object $\Tsem\sigma$ of $\PCOH$ and any positive type
$\phi$ as an object $\Tsemca\phi$ such that $\Coalgc{\Tsemca\phi}=\Tsem\phi$,
in such a way that the coalgebras $\Tsemca{\TREC\zeta\phi}$ and
$\Tsemca{\Subst{\phi}{\TREC\zeta\phi}{\phi}}$ are \emph{exactly the same
  objects of $\EM\PCOH$}. We use $\Coalgmt\phi$ for $\Coalgm{\Tsemca\phi}$.

\subsubsection{Dense coalgebras}\label{sec:reg-coalgebras}

Let $P$ be an object of $\EM\PCOH$, so that $P=(\Coalgc P,\Coalgm P)$ where
$\Coalgc P$ is a probabilistic coherence space and $\Coalgm P\in\PCOH(\Coalgc
P,\Excl{\Coalgc P})$ satisfies $\Digg{\Coalgc P}\Compl\Coalgm P=\Excl{\Coalgm
  P}\Compl\Coalgm P$. Given coalgebras $P$ and $Q$, a morphism
$t\in\PCOH(\Coalgc P,\Coalgc Q)$ is coalgebraic (that is $t\in\EM\PCOH(P,Q)$)
if $\Coalgm Q\Compl t=\Excl t\Compl\Coalgm P$. In particular, we say that
$u\in\Pcoh{(\Coalgc P)}$ is coalgebraic if, considered as a morphism from
$\One$ to $\Coalgc P$, $u$ belongs to $\EM\PCOH(\One,P)$. This means that
$\Prom u={\Coalgm P}\Compl u$. 

\begin{definition}\label{def:dense}
  Given an object $P$ of $\EM\PCOH$, we use $\PcohEM{(P)}$ for the set of
  coalgebraic elements of $\Pcoh{(\Coalgc P)}$.
\end{definition}

The following lemma is useful in the sequel and holds in any model of Linear
Logic. 

\begin{lemma}\label{lemma:sem-values}
  Let $X$ be a probabilistic coherence space, one has $\PcohEM{(\Excl
    X)}=\{\Prom u\St u\in\Pcoh X\}$. Let $P_\ell$ and $P_r$ be objects of
  $\EM\PCOH$.

  $\Tens{P_\ell}{P_r}$ is the cartesian product of $P_\ell$ and $P_r$
  in $\EM\PCOH$. The function
  $\PcohEM{(P_\ell)}\times\PcohEM{(P_r)}\to\PcohEM{(\Tens{P_\ell}{P_r})}$
  which maps $(u,v)$ to $\Tens uv$ is a bijection. The projections
  $\Projt i\in\EM\PCOH(\Tens{P_\ell}{P_r},P_i)$ are characterized by
  $\Projt i(\Tens{u_\ell}{u_r})=u_i$.

  The function
  $\{\ell\}\times\PcohEM{(P_\ell)}\cup\{r\}\times\PcohEM{(P_r)}
  \to\PcohEM{(\Plus{P_\ell}{P_r})}$
  which maps $(i,u)$ to $\Inj i(u)$ is a bijection. The injection
  $u\mapsto\Inj i(u)$ has a left inverse
  $\Proj i\in\PCOH(\Plus{\Coalgc{P_\ell}}{\Coalgc{P_r}},\Coalgc{P_i})$
  defined by $(\Proj i)_{(j,a),b}=\Kronecker{i}{j}\Kronecker{a}{b}$,
  which is not a coalgebra morphism in general.  
\end{lemma}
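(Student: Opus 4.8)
The plan is to treat the three assertions separately, using the Eilenberg--Moore adjunction for the exponential case and product/coproduct reasoning for the others. For $\PcohEM{(\Excl X)}=\{\Prom u\St u\in\Pcoh X\}$, recall that $\PcohEM{(P)}$ is, by definition, the set of $u\in\Pcoh{(\Coalgc P)}$ that are coalgebra morphisms $\One\to P$, that is, the hom-set $\EM\PCOH(\One,P)$. Since $\Excl X=(\Excl X,\Digg X)$ is the cofree coalgebra on $X$ and $\Forgca\dashv\Excl\_$ with counit $\Der X$, the adjunction yields a bijection $\EM\PCOH(\One,\Excl X)\cong\PCOH(\One,X)=\Pcoh X$ sending $w$ to $\Der X\Compl w$ with inverse $u\mapsto\Prom u$. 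Hence every coalgebraic $w$ equals $\Prom{(\Der X\Compl w)}$, while each $\Prom u$ is coalgebraic because $\Digg X\Compl\Prom u=\Fun{\Digg X}(u)=\Prom{(\Prom u)}$. Concretely, the equation $\Digg X\Compl w=\Prom w$ unfolds to the multiplicativity conditions $w_{b+b'}=w_bw_{b'}$ and $w_{\Msetempty}=1$, which force $w_b=\prod_a(w_{\Mset a})^{b(a)}$.

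For the product part I would rely on the fact, recalled in Section~\ref{sec:EM-general}, that $\Tens{P_\ell}{P_r}$ (with carrier $\Tens{\Coalgc{P_\ell}}{\Coalgc{P_r}}$) is the cartesian product in $\EM\PCOH$, with projections obtained by weakening one factor. As $\PcohEM{(-)}=\EM\PCOH(\One,-)$ is a covariant hom-functor, it preserves products, giving at once the bijection $\PcohEM{(P_\ell)}\times\PcohEM{(P_r)}\cong\PcohEM{(\Tens{P_\ell}{P_r})}$. It then remains to identify the categorical pairing $\langle u,v\rangle=(\Tens uv)\Compl\Contr\One$ of two coalgebraic points with the plain tensor $\Tens uv$ of vectors; this holds because $\Contr\One$ is the canonical iso $\One\cong\Tens\One\One$, so a one-line matrix computation gives $\langle u,v\rangle_{(a,b)}=u_av_b$. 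The characterization $\Projt i(\Tens{u_\ell}{u_r})=u_i$ is then simply the defining equation of the product projections applied to this pairing.

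The substantial part is the coproduct statement, since a covariant hom-functor does not preserve coproducts and the conclusion genuinely reflects that $\One$ is indecomposable. Here I would compute $\Coalgm{\Plus{P_\ell}{P_r}}$ explicitly. Writing a web element of $\Excl{(\Plus{\Coalgc{P_\ell}}{\Coalgc{P_r}})}$ as a pair $(m_\ell,m_r)$ of multisets over the two sides, the requirement that $\Inj\ell$ and $\Inj r$ be coalgebra morphisms forces $(\Coalgm{\Plus{P_\ell}{P_r}})_{(\ell,a),(m_\ell,m_r)}=\Kronecker{m_r}{\Msetempty}(\Coalgm{P_\ell})_{a,m_\ell}$ and symmetrically for the rows indexed $(r,b)$; since $\Inj\ell,\Inj r$ are jointly epic, these rows determine the whole matrix. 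Given a coalgebraic $w=(w(\ell),w(r))$, I would read the equation $\Coalgm{\Plus{P_\ell}{P_r}}\Compl w=\Prom w$ coordinatewise, using $w^{(m_\ell,m_r)}=w(\ell)^{m_\ell}w(r)^{m_r}$. Evaluating at $(m_\ell,m_r)$ with both parts nonempty makes the left side vanish, whence $w(\ell)^{m_\ell}w(r)^{m_r}=0$; taking singletons gives $w(\ell)_aw(r)_b=0$ for all $a,b$, so one of $w(\ell),w(r)$ is $0$. Evaluating next at $m_r=\Msetempty$ (when $w(r)=0$) gives $(\Coalgm{P_\ell}\Compl w(\ell))_{m_\ell}=w(\ell)^{m_\ell}$, that is, $w(\ell)$ is coalgebraic; hence $w=\Inj\ell(w(\ell))$ with $w(\ell)\in\PcohEM{(P_\ell)}$. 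This is surjectivity; injectivity is immediate because the supports of $\Inj\ell(u)$ and $\Inj r(v)$ lie in disjoint halves of the web and no coalgebraic element is $0$ (indeed $\Prom 0$ has value $1$ at $\Msetempty$ whereas $\Coalgm P\Compl 0=0$). I expect this concentration argument to be the main obstacle, as it is the only step where the explicit exponential structure, rather than formal category theory, is needed.

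Finally, for the left inverse I would check by matrix multiplication that $\Proj i\Compl\Inj i=\Id$ from $(\Proj i)_{(j,a),b}=\Kronecker ij\Kronecker ab$ and $(\Inj i)_{a,(j,b)}=\Kronecker ij\Kronecker ab$, and note that $\Proj i\in\PCOH(\Plus{\Coalgc{P_\ell}}{\Coalgc{P_r}},\Coalgc{P_i})$ because $\Proj i\Compl w=w(i)\in\Pcoh{\Coalgc{P_i}}$ for every $w\in\Pcoh{(\Plus{\Coalgc{P_\ell}}{\Coalgc{P_r}})}$. The closing remark that $\Proj i$ is not a coalgebra morphism follows from the previous paragraph: it sends the coalgebraic element $\Inj r(v)$ to $0$, which is not coalgebraic, so it cannot preserve $\PcohEM{(-)}$.
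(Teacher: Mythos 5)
Your proof is correct, and it is worth comparing with the paper's, which is much terser. For the first claim your adjunction argument ($\EM\PCOH(\One,\Excl X)\cong\PCOH(\One,X)$ via $w\mapsto\Der X\Compl w$, $u\mapsto\Prom u$) is the same computation the paper performs by hand: from $\Prom v=\Digg X\Compl v$ one gets $\Prom{(\Der X\Compl v)}=\Excl{\Der X}\Compl\Digg X\Compl v=v$ by the counit law. For the product, both you and the paper reduce to the fact that $\ITens$ is the cartesian product in $\EM\PCOH$ and that the hom-functor $\EM\PCOH(\One,-)$ preserves products; your explicit identification of the pairing $(\Tens uv)\Compl\Contr\One$ with the vector $\Tens uv$ is a correct and welcome detail. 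The real divergence is the coproduct clause: the paper disposes of it with ``$\EM\PCOH$ is co-cartesian with $\IPlus$ as coproduct, see Melli\`es,'' but, as you rightly observe, the bijection $\{\ell\}\times\PcohEM{(P_\ell)}\cup\{r\}\times\PcohEM{(P_r)}\cong\PcohEM{(\Plus{P_\ell}{P_r})}$ is \emph{not} a formal consequence of co-cartesianness (covariant hom-functors do not preserve coproducts); it expresses the indecomposability of $\One$ in $\EM\PCOH$. Your concentration argument --- computing the rows of $\Coalgm{\Plus{P_\ell}{P_r}}$ from the requirement that $\Inj\ell,\Inj r$ be coalgebra morphisms, deducing $w(\ell)^{m_\ell}w(r)^{m_r}=0$ for mixed nonempty multisets, hence $w(\ell)_aw(r)_b=0$ for all $a,b$, and then reading off that the surviving component is itself coalgebraic --- is sound, as is the observation that $0\notin\PcohEM{(P)}$ because $(\Prom 0)_{\Msetempty}=1\neq 0$. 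This supplies precisely the model-specific content that the paper leaves implicit, and it also yields the final remark for free, since $\Proj i$ sends the coalgebraic point $\Inj j(v)$ ($j\neq i$) to $0$, which is never coalgebraic. The only cost of your route is length; what it buys is a self-contained verification of the one clause that genuinely needs one.
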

\begin{proof}
  Let $v\in\PcohEM{(\Excl X)}$, we have $\Prom
  v={\Coalgm{\Excl X}}\Compl v=\Digg X\Compl v$ hence $\Prom{(\Der X\Compl
    v)}=\Excl{\Der X}\Compl\Prom v=\Excl{\Der X}\Compl\Digg X v=v$. The
  other properties result from the fact that the Eilenberg-Moore category
  $\EM\PCOH$ is cartesian and co-cartesian with $\ITens$ and $\IPlus$ as
  product and co-product, see~\cite{Mellies09} for more details.
\end{proof}
Because of these properties we write sometimes $(u_\ell,u_r)$ instead of
$\Tens{u_\ell}{u_r}$ when $u_i\in\PcohEM{P_i}$ for $i\in\{\ell,r\}$.

\begin{definition}\label{def:coalg-dense}
  An object $P$ of $\EM\PCOH$ is \emph{dense} if, for any object
  $Y$ of $\PCOH$ and any two morphisms $t,t'\in\PCOH(\Coalgc P,Y)$, if $t\Compl
  u=t'\Compl u$ for all $u\in\PcohEM{(P)}$, then $t=t'$.
\end{definition}

\begin{theorem}\label{th:coalg-dense-colsed}
  For any probabilistic coherence space $X$, $\Excl X$ is a dense
  coalgebra. If $P_\ell$ and $P_r$ are dense coalgebras then $\Tens{P_\ell}{P_r}$
  and $\Plus{P_\ell}{P_r}$ are dense. The colimit in 
  $\Embr{(\EM\PCOH)}$ of a directed family of dense coalgebras is 
  dense.
\end{theorem}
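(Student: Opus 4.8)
The plan is to establish the four closure properties one at a time, in each case reducing density of the compound coalgebra to density of its constituents (or, for $\Excl X$, to the injectivity part of Theorem~\ref{th:pcoh-functional}). Throughout, recall that $P$ is dense precisely when the coalgebraic elements $\PcohEM{(P)}$ form a separating family of test points for linear maps out of $\Coalgc P$. For $\Excl X$ the result is immediate: by Lemma~\ref{lemma:sem-values} we have $\PcohEM{(\Excl X)}=\{\Prom u\St u\in\Pcoh X\}$, and for $t\in\PCOH(\Excl X,Y)$ the composite $t\Compl\Prom u$ is by definition $\Fun t(u)$. Hence the hypothesis $t\Compl\Prom u=t'\Compl\Prom u$ for all $u\in\Pcoh X$ says exactly that $\Fun t=\Fun{t'}$ as functions $\Pcoh X\to\Pcoh Y$, and Theorem~\ref{th:pcoh-functional} forces $t=t'$.

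For $\Tens{P_\ell}{P_r}$ I would exploit monoidal closedness. Curry $t,t'\in\PCOH(\Tens{\Coalgc{P_\ell}}{\Coalgc{P_r}},Y)$ to $\Curlin(t),\Curlin(t')\in\PCOH(\Coalgc{P_\ell},\Limpl{\Coalgc{P_r}}Y)$. By Lemma~\ref{lemma:sem-values} every coalgebraic element of the tensor has the form $\Tens uv$ with $u\in\PcohEM{P_\ell}$ and $v\in\PcohEM{P_r}$, and the currying identity gives $t\Compl(\Tens uv)=(\Curlin(t)\Compl u)\Compl v$. Fixing $u$ and letting $v$ range, density of $P_r$ yields $\Curlin(t)\Compl u=\Curlin(t')\Compl u$ for every $u\in\PcohEM{P_\ell}$; density of $P_\ell$ then yields $\Curlin(t)=\Curlin(t')$, hence $t=t'$ since currying is a bijection. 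The case of $\Plus{P_\ell}{P_r}$ is shorter: a map out of a coproduct is determined by its precompositions with $\Inj\ell$ and $\Inj r$, and by Lemma~\ref{lemma:sem-values} the set $\PcohEM{(\Plus{P_\ell}{P_r})}$ consists exactly of the elements $\Inj\ell u$ ($u\in\PcohEM{P_\ell}$) and $\Inj r v$ ($v\in\PcohEM{P_r}$); so density of $P_\ell$ and $P_r$ give $t\Compl\Inj\ell=t'\Compl\Inj\ell$ and $t\Compl\Inj r=t'\Compl\Inj r$, whence $t=t'$ by the universal property of the coproduct in $\PCOH$.

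For the colimit $P=\bigcup_{\gamma}P_\gamma$ of a directed family of dense coalgebras, write $\eta_\gamma=\Emb{\Subwit{\Coalgc{P_\gamma}}{\Coalgc P}}$ for the carrier embeddings. Since $P_\gamma\Subobj P$ makes $\eta_\gamma$ a coalgebra morphism $P_\gamma\to P$, and since coalgebraic elements are (identifying elements with morphisms from $\One$) just the members of $\EM\PCOH(\One,-)$, the composite $\eta_\gamma\Compl u$ of $u\in\PcohEM{P_\gamma}$ with $\eta_\gamma\in\EM\PCOH(P_\gamma,P)$ again lies in $\PcohEM P$ (alternatively, one checks $\Coalgm P\Compl\eta_\gamma\Compl u=\Excl{\eta_\gamma}\Compl\Coalgm{P_\gamma}\Compl u=\Excl{\eta_\gamma}\Compl\Prom u=\Prom{(\eta_\gamma\Compl u)}$ using Lemma~\ref{lemma:excl-morph-app}). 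Thus, if $t,t'\in\PCOH(\Coalgc P,Y)$ agree on all of $\PcohEM P$, the maps $t\Compl\eta_\gamma$ and $t'\Compl\eta_\gamma$ agree on every $u\in\PcohEM{P_\gamma}$, so density of $P_\gamma$ gives $t\Compl\eta_\gamma=t'\Compl\eta_\gamma$ for each $\gamma$. Finally, continuity of $\Embf$ exhibits $\Coalgc P$ as the colimit of the $\Coalgc{P_\gamma}$ in $\PCOH$ with cocone $(\eta_\gamma)_\gamma$, so the $\eta_\gamma$ are jointly epic; concretely $t_{a,b}=(t\Compl\eta_\gamma)_{a,b}$ for any $\gamma$ with $a\in\Web{\Coalgc{P_\gamma}}$, and $\Web{\Coalgc P}=\bigcup_\gamma\Web{\Coalgc{P_\gamma}}$, whence $t=t'$ coefficientwise.

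The main obstacle is the colimit case, where the argument rests on two bookkeeping facts that must be checked carefully: that carrier embeddings send coalgebraic elements to coalgebraic elements (so that the density hypothesis on $P$ restricts to a usable hypothesis on each $P_\gamma$), and that morphisms out of the colimit are determined coefficientwise by precomposition with the cocone embeddings. By contrast, the tensor and sum cases are routine once the coalgebraic elements are identified via Lemma~\ref{lemma:sem-values}, and the case of $\Excl X$ is an immediate application of Theorem~\ref{th:pcoh-functional}.
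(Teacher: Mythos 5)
Your proposal is correct and follows essentially the same route as the paper's proof: $\Excl X$ via Lemma~\ref{lemma:sem-values} and Theorem~\ref{th:pcoh-functional}, the tensor via currying and successive use of density of the two factors, the sum via the coproduct injections, and the colimit via the fact that the carrier embeddings are coalgebra morphisms (hence preserve coalgebraic elements) and are jointly epic. No gaps.
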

\begin{proof}
  Let $X$ be an object of $\PCOH$, one has $\PcohEM{(\Excl X)}=\{\Prom u\St
  u\in\Pcoh{X}\}$ by Lemma~\ref{lemma:sem-values}.  It follows that $\Excl X$
  is a dense coalgebra by Theorem~\ref{th:pcoh-functional}. Assume that $P_\ell$
  and $P_r$ are dense coalgebras. Let
  $t,t'\in\PCOH(\Tens{\Coalgc{P_\ell}}{\Coalgc{P_r}},Y)$ be such that $t\Compl
  w=t'\Compl w$ for all $w\in\PcohEM{(\Tens{P_\ell}{P_r})}$. We have
  $\Curlin{(t)},\Curlin{(t')}\in\PCOH(\Coalgc{P_\ell},\Limpl{\Coalgc P_r}{Y})$ so,
  using the density of $P_\ell$, it suffices to prove that $\Curlin{(t)}\Compl
  u_\ell=\Curlin{(t')}\Compl u_\ell$ for each $u_\ell\in\PcohEM{(P_\ell)}$. So let
  $u_\ell\in\PcohEM{(P_\ell)}$ and let $s=\Curlin{(t)}\Compl u_\ell$ and
  $s'=\Curlin{(t')}\Compl u_\ell$. Let $u_r\in\PcohEM{(P_r)}$, we have $s\Compl
  u_r=t\Compl(\Tens{u_\ell}{u_r})=t'\Compl(\Tens{u_\ell}{u_r})=s'\Compl u_r$ since
  $\Tens{u_\ell}{u_r}\in\PcohEM{(\Tens{P_\ell}{P_r})}$ and therefore $s=s'$ since
  $P_r$ is dense. Let now $t,t'\in\PCOH(\Plus{\Coalgc{P_\ell}}{\Coalgc{P_r}},Y)$
  be such that $t\Compl w=t'\Compl w$ for all
  $w\in\PcohEM{(\Plus{P_\ell}{P_r})}$. To prove that $t=t'$, it suffices to prove
  that $t\Compl\Inj i=t'\Compl\Inj i$ for $i\in\{\ell,r\}$. Since $P_i$ is dense, it
  suffices to prove that $t\Compl\Inj i\Compl u=t'\Compl\Inj i\Compl u$ for
  each $u\in\PcohEM{(P_i)}$ which follows from the fact that $\Inj i\Compl
  u\in\PcohEM{P_i}$. Last let $(P_\gamma)_{\gamma\in\Gamma}$ be a directed
  family of dense coalgebras (in $\Embr{\EM\PCOH}$) and let
  $P=\bigcup_{\gamma\in\Gamma}P_\gamma$, and let $t,t'\in\EM\PCOH(\Coalgc P,Y)$
  be such that $t\Compl w=t'\Compl w$ for all $w\in\PcohEM{(P)}$. It suffices
  to prove that, for each $\gamma\in\Gamma$, one has
  $t\Compl\Emb{\Subwit{\Coalgc{P_\gamma}}{\Coalgc
      P}}=t'\Compl\Emb{\Subwit{\Coalgc{P_\gamma}}{\Coalgc P}}$ and this results
  from the fact that $P_\gamma$ is dense and
  $\Emb{\Subwit{\Coalgc{P_\gamma}}{\Coalgc P}}$ is a coalgebra morphisms (and
  therefore maps $\PcohEM{(P_\gamma)}$ to $\PcohEM{(P)}$).
\end{proof}

The sub-category $\EM\PCOH$ of dense coalgebras is cartesian and
co-cartesian and is well-pointed by
Theorem~\ref{th:coalg-dense-colsed}. We use $\EMR\PCOH$ for this
sub-category and $\Embr{(\EMR\PCOH)}$ for the sub-class of
$\Embr{\EM\PCOH}$ whose objects are the dense coalgebras (with the
same order relation).

\subsubsection{Interpreting types and terms in $\PCOH$}\label{subsec:sem-type-term}

Given a type $\sigma$ with free type variables contained in the repetition-free
list $\Vect\zeta$, and given a sequence $\Vect P$ of length $n$ of objects of
$\EM\PCOH$, we define $\Tsem\sigma_{\Vect\zeta}(\Vect P)$ as an object of
$\PCOH$ and when $\phi$ is a positive type (whose free variables are contained
in $\Vect\zeta$) we define $\Tsemca\phi_{\Vect\zeta}(\Vect P)$ as an object of
$\EM\PCOH$. These operations are continuous and their definition follows the
general pattern described in Section~\ref{sec:cbpv-interpretation}.

\begin{theorem}\label{th:pos-types-dense}
  Let $\phi$ be a positive type and let $\Vect\zeta=(\List\zeta 1n)$
  be a repetition-free list of type variables which contains all the
  free variables of $\phi$. Let $\Vect P$ be a sequence of $n$ dense
  coalgebras. Then $\Tsemca\phi_{\Vect\zeta}(\Vect P)$ is a dense
  coalgebra. In particular, when $\phi$ is closed, the coalgebra~
  $\Tsemca\phi$ is dense.
\end{theorem}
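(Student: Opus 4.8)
The plan is to prove the statement by structural induction on the positive type $\phi$, carrying along the list $\Vect\zeta$ and the tuple $\Vect P$ of dense coalgebras, and reducing each case to one of the three closure properties recorded in Theorem~\ref{th:coalg-dense-colsed}. Concretely, the induction hypothesis I would use is: for every positive type $\phi$, every repetition-free list $\Vect\zeta$ containing the free variables of $\phi$, and every sequence $\Vect P$ of dense coalgebras (of the right length), the coalgebra $\Tsemca\phi_{\Vect\zeta}(\Vect P)$ is dense. The closed case claimed in the statement is then the instance where $\Vect\zeta$ and $\Vect P$ are empty.

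For the base cases I would argue as follows. If $\phi=\zeta_i$ is a type variable, then $\Tsemca{\zeta_i}_{\Vect\zeta}(\Vect P)=P_i$, which is dense by assumption. If $\phi=\EXCL\sigma$, then $\Tsemca{\EXCL\sigma}_{\Vect\zeta}(\Vect P)=\Excl{(\Tsem\sigma_{\Vect\zeta}(\Vect P))}$, which is dense by the first assertion of Theorem~\ref{th:coalg-dense-colsed}, valid for the exponential of any probabilistic coherence space (note $\sigma$ is a general type, but only the underlying object $\Tsem\sigma_{\Vect\zeta}(\Vect P)$ of $\PCOH$ is needed). For $\phi=\ONE$, the Seely isomorphism $\Expmonisoz$ exhibits $\ONE$ as isomorphic to $\Excl\Top$ in $\EM\PCOH$; since density is stable under coalgebra isomorphism — a coalgebra iso $\theta$ and its inverse carry coalgebraic elements to coalgebraic elements and induce a bijection on hom-sets by precomposition, so the condition of Definition~\ref{def:coalg-dense} transports along $\theta$ — the object $\ONE$ is dense.

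The inductive cases for the binary positive connectives are immediate: if $\phi=\TENS{\phi_\ell}{\phi_r}$ (resp.\ $\phi=\PLUS{\phi_\ell}{\phi_r}$), then the interpretation is $\Tens{\Tsemca{\phi_\ell}_{\Vect\zeta}(\Vect P)}{\Tsemca{\phi_r}_{\Vect\zeta}(\Vect P)}$ (resp.\ the corresponding $\Plus{}{}$), both of whose factors are dense by the induction hypothesis, so the result is dense by the second assertion of Theorem~\ref{th:coalg-dense-colsed}. The only genuinely delicate case is $\phi=\TREC\zeta\psi$, where $\psi$ is a positive type with free variables in $(\Vect\zeta,\zeta)$ and $\Tsemca{\TREC\zeta\psi}_{\Vect\zeta}(\Vect P)$ is the least fixpoint of the continuous functor $F\colon\Embr{(\EM\PCOH)}\to\Embr{(\EM\PCOH)}$ given by $F(Q)=\Tsemca\psi_{\Vect\zeta,\zeta}(\Vect P,Q)$, computed (as in Section~\ref{subsec:sem-type-term}) as the directed colimit $\bigcup_{n\in\Nat}F^n(\Zero)$. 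Here I would use the induction hypothesis for $\psi$, with list $(\Vect\zeta,\zeta)$ and sequence $(\Vect P,Q)$: it says exactly that $F$ sends dense coalgebras to dense coalgebras. The bottom object $\Zero$ is vacuously dense, since there is a unique morphism out of it. Hence $F^n(\Zero)$ is dense for every $n$, the chain $(F^n(\Zero))_{n\in\Nat}$ is directed, and its colimit — which is $\Tsemca{\TREC\zeta\psi}_{\Vect\zeta}(\Vect P)$ — is dense by the last assertion of Theorem~\ref{th:coalg-dense-colsed}.

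I expect the recursive case to be the main obstacle, and the two points to get right are: first, phrasing the induction uniformly over lists of type variables and tuples of dense coalgebras, so that the hypothesis applied to $\psi$ yields that the one-step functor $F$ preserves density; and second, using that the fixpoint is realized as a directed colimit of the chain $F^n(\Zero)$ whose starting object $\Zero$ is dense, so that the colimit-closure clause of Theorem~\ref{th:coalg-dense-colsed} applies. Both ingredients are furnished by the interpretation of recursive types in Section~\ref{subsec:sem-type-term} together with Theorem~\ref{th:coalg-dense-colsed}; the remaining verifications (the $\ONE$ and $\EXCL\sigma$ base cases, and the tensor/coproduct steps) are routine.
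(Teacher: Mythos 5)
Your proof is correct and is essentially the argument the paper intends: the paper dismisses this theorem as "an immediate consequence of the definition of $\Tsemca\phi$ and of Theorem~\ref{th:coalg-dense-colsed}", and what you have written is exactly that structural induction spelled out, with each clause of the type grammar matched to the corresponding closure property (exponential, tensor/coproduct, directed colimit for $\TREC\zeta\psi$, and the trivial observations that $\Zero$ and $\ONE$ are dense). The only stylistic difference is that you route the $\ONE$ case through the Seely isomorphism $\One\Isom\Excl\Top$, whereas one can note directly that $1\in\PcohEM{(\ONE)}$ already forces density; both are fine.
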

This is an immediate consequence of the definition of $\Tsemca\phi$ and
of Theorem~\ref{th:coalg-dense-colsed}.

\begin{remark}
  It turns out that the interpretation of positive types in the
  model $\PCOH/\EM\PCOH$ are dense coalgebras. This is mainly due to
  the fact that the colimit of a directed family of dense coalgebras
  \emph{in the partially ordered class $\Embr{\EM\PCOH}$} is dense,
  see Theorem~\ref{th:coalg-dense-colsed}. From the viewpoint of
  Levy's CBPV~\cite{LevyP04}, whose semantics is described in terms of
  adjunctions, we are using a resolution of the comonad ``$\oc$''
  through the category $\EM\PCOH$ (or, equivalently, through the
  category $\EMR\PCOH$). As pointed out to us by one of the referees
  and already mentioned in the introduction, there is another
  resolution through a category of families and introduced
  in~\cite{Abramsky1998}, which is initial among all resolutions that
  model CBPV. This other option will be explored in further work.
\end{remark}

Then $\BOOL=\PLUS\ONE\ONE$ satisfies
$\Web{\Tsem\BOOL}=\{(\ell,\Onelem),(r,\Onelem)\}$ and
$u\in\Realpto{\Web{\Tsem\BOOL}}$ satisfies $u\in\Pcoh{\Tsem\BOOL}$ iff
$u_{(\ell,\Onelem)}+u_{(r,\Onelem)}\leq 1$. The coalgebraic structure of this
object is given by
\begin{align*}
  (\Coalgmt\BOOL)_{(j,\Onelem),\Mset{(j_1,\Onelem),\dots,(j_k,\Onelem)}}=
  \begin{cases}
    1 & \text{if }j=j_1=\dots=j_k\\
    0 & \text{otherwise.}
  \end{cases}
\end{align*}
  The object $\Snat=\Tsem\NAT$ satisfies
$\Snat=\PLUS\One\Snat$ so that
$\Web\SNat=\{(\ell,\Onelem),(r,(\ell,\Onelem)),(r,(r,(\ell,\Onelem))),\dots\}$
and we use $\Snum n$ for the element of $\Web\Snat$ which has $n$
occurrences of $r$. Given $u\in\Realpto{\Web\Snat}$, we use $l(u)$ for
the element of $\Realpto{\Web\Snat}$ defined by
$l(u)_{\Snum n}=u_{\Snum{n+1}}$. By definition of $\Snat$, we have
$u\in\Pcoh\Snat$ iff $u_{\Snum 0}+\Norm{l(u)}_\Snat\leq 1$, and then
$\Norm u_\Snat=u_{\Snum 0}+\Norm{l(u)}_\Snat$. It follows that
$u\in\Pcoh\Snat$ iff $\sum_{n=0}^\infty u_{\Snum n}\leq 1$ and of
course $\Norm u_\Snat=\sum_{n=0}^\infty u_{\Snum n}$. Then the
coalgebraic structure $\Coalgmt\NAT$ is defined exactly as
$\Coalgmt\BOOL$ above. In the sequel, we identify $\Web\Snat$ with
$\Nat$.

Given a typing context $\cP=(x_1:\phi_1,\dots,x_k:\phi_k)$, a type $\sigma$ and a term
$M$ such that $\TSEQ\cP M\sigma$, $M$ is interpreted as a morphism $\Psem
M^\cP\in\PCOH(\Tsem\cP,\Tsem\sigma)$. For all constructs of the language but
probabilistic choice, this interpretation uses the generic structures of the
model described in Section~\ref{sec:LL-semantics-short}, the description of
this interpretation can be found in~\cite{Ehrhard16a}. 
We set $\Psem{\COIN p}{}=p\Base{(\ell,*)}+(1-p)\Base{(r,*)}$.

If $\TSEQ{x_1:\phi_1,\dots,x_k:\phi_k}{M}{\sigma}$, the morphism $\Psem{M}^\cP$
is completely characterized by its values on
$(u_1,\dots,u_k)\in\PcohEM{(\Tsemca\cP)}$. We describe now the interpretation of
terms using this observation.

\pagebreak
\begin{itemize}
\item $\Psem\ONELEM=1\in\Pcoh{\One}=[0,1]$.
\item $\Psem{x_i}^\cP(\List u1k)=u_i$.
\item $\Psem{\STOP N}^\cP(\List u1k)=\Prom{(\Psem N^\cP(\List u1k))}$.
\item $\Psem{\PAIR{M_\ell}{M_r}}^\cP(\List u1k)=
  \Tens{\Psem{M_\ell}^\cP(\List u1k)}{\Psem{M_r}^\cP(\List u1k)}$.
\item $\Psem{\IN i{N}}^\cP(\List u1k)=\Inj i(\Psem N^\cP(\List u1k))$, $i\in\{\ell,r\}$.
\item $\Psem{\GO N}^{\cP}(\List u1k)=\Der{\Tsem\sigma}(\Psem{N}^\cP(\List
  u1k))$, assuming that $\TSEQ{\cP}{N}{\EXCL\sigma}$.
\item If $\TSEQ{\cP}{N}{\LIMPL{\phi}{\sigma}}$ and $\TSEQ{\cP}{R}{\phi}$ then
  $\Psem{N}^\cP(\List u1k)\in\Pcoh{(\Limpl{\Tsem\phi}{\Tsem\sigma})}$,
and  $\Psem{R}^\cP(\List u1k)\in\Pcoh{(\Tsem\phi)}$ and using the application of a matrix to a vector we have $\Psem{\LAPP
    NR}^\cP(\List u1k)=\Psem{N}^\cP(\List u1k)\Compl\Psem{R}^\cP(\List u1k)$.
\item If $\TSEQ{\cP,x:\phi}{N}{\sigma}$ then $\Psem{\ABST{x}{\phi}{N}}^\cP(\List
  u1k)\in\Pcoh{(\Limpl{\Tsem\phi}{\Tsem\sigma})}$ is completely described
  by the fact that, for all $u\in\PcohEM{(\Tsemca\phi)}$, one has
  $\Psem{\ABST{x}{\phi}{N}}^\cP(\List u1k)\Compl u=\Psem{N}^{\cP,x:\phi}(\List
  u1k,u)$. This is a complete characterization of this interpretation by
  Theorem~\ref{th:pos-types-dense}.
\item If $\TSEQ{\cP}{N}{\PLUS{\phi_\ell}{\phi_r}}$ and
  $\TSEQ{\cP,y_i:\phi_i}{R_i}{\sigma}$ for $i\in\{\ell,r\}$, then\\
  $\Psem{\CASE{N}{y_\ell}{R_\ell}{y_r}{R_r}}^\cP(\List
  u1k)=\Psem{R_\ell}^{\cP,y_\ell:\phi_\ell}(\List u1k,\Proj \ell(\Psem{N}^\cP(\List
  u1k)))+\Psem{R_r}^{\cP,y_r:\phi_r}(\List u1k,\Proj r(\Psem{N}^\cP(\List
  u1k)))$ where $\Proj i\in\PCOH{(\Plus{P_\ell}{P_r},P_i)}$ is the $i$th
  ``projection'' introduced in~\ref{sec:reg-coalgebras}, left inverse for $\Inj
  i$.
\item If $\TSEQ{\cP,x:\EXCL\sigma}{N}{\sigma}$ then
  $\Psem{N}^{\cP,x:\EXCL\sigma}
  \in\PCOH(\Tens{\Coalgc{\Tsem\cP}}{\Excl{\Tsem\sigma}},\Tsem\sigma)$ and
  $\Psem{\FIXT x\sigma N}^\cP(\List u1k)=\sup_{n=0}^\infty f^n(0)$ where
  $f:\Pcoh{\Tsem\sigma}\to \Pcoh{\Tsem\sigma}$ is the Scott-continuous function
  given by $f(u)=\Psem{N}^{\cP,x:\Excl\sigma}(\List u1k,\Prom u)$.
\item If $\TSEQ{\cP}{N}{\Subst\psi{\TREC\zeta\psi}\zeta}$ then $\Psem{\FOLD
    N}^\cP=\Psem N^\cP$ which makes sense since
  $\Tsem{\Subst\psi{\TREC\zeta\psi}\zeta}=\Tsem{\TREC\zeta\psi}$.
\item If $\TSEQ{\cP}{N}{\TREC\zeta\psi}$ then $\Psem{\UNFOLD
    N}^\cP=\Psem N^\cP$.
\end{itemize}

\begin{theorem}[Soundness]\label{th:soundness}
  If $M$ satisfies $\TSEQ{\cP}{M}{\sigma}$ then
  \begin{align*}
    \Psem
    M^\cP=\sum_{\TSEQ{\cP}{M'}{\sigma}}\Redmats_{M,M'}\Psem{M'}^{\cP}
  \end{align*}
\end{theorem}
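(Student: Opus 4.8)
The plan is to read the statement as a \emph{local soundness} property: the interpretation is invariant under one step of the probabilistic reduction, in the sense that $\Psem M^\cP$ is the convex combination of the interpretations of the one-step reducts weighted by their probabilities. I would prove it by induction on $M$, using that the strategy fixed in Figure~\ref{fig:reduction-rules} is deterministic, so that for each $M$ that is not weak-normal there is a \emph{unique} position where reduction fires, and hence at most two reducts (a single deterministic $\Wred$-reduct with coefficient $1$, or the two outcomes of a $\COIN p$ with coefficients $p$ and $1-p$). Three situations arise: if $M$ is weak-normal the only contributing $M'$ is $M'=M$ with coefficient $1$ and the equation is trivial; if the active position is the root, $M$ is an instance of a reduction axiom and I verify the equation by a direct computation in $\PCOH$; otherwise $M$ is a one-layer evaluation context around a reducible proper subterm and I conclude by the induction hypothesis and linearity. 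Throughout I use subject reduction implicitly (every $M'$ in the sum again has type $\sigma$ in $\cP$), so that each $\Psem{M'}^\cP$ is defined.

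For the axioms I would invoke the interpretation clauses of Section~\ref{subsec:sem-type-term} and the model equations of Section~\ref{subsec:model-pcoh}. For $\GO{\STOP N}\Wred N$ the counit law gives $\Psem{\GO{\STOP N}}^\cP=\Der{\Tsem\sigma}\Compl\Prom{(\Psem N^\cP)}=\Psem N^\cP$; the $\UNFOLD{\FOLD V}\Wred V$ axiom is immediate since both constructs are the identity and $\Tsem{\TREC\zeta\psi}=\Tsem{\Subst\psi{\TREC\zeta\psi}\zeta}$. For $\PR i{\PAIR{V_\ell}{V_r}}\Wred V_i$ I use that $\Psem{V_\ell}^\cP$, $\Psem{V_r}^\cP$ are coalgebraic and that $\Projt i(\Tens{u_\ell}{u_r})=u_i$ (Lemma~\ref{lemma:sem-values}); the $\mathsf{case}$ axiom is symmetric through the coproduct clause and the left-inverse property $\Proj i\Compl\Inj i=\Id$ (with $\Proj j\Compl\Inj i=0$ for $j\neq i$). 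For the fixpoint axiom I use that $\Psem{\FIXT x\sigma N}^\cP(\Vect u)$ is the least fixpoint of $f(u)=\Psem N^{\cP,x:\Excl\sigma}(\Vect u,\Prom u)$, hence $\Psem{\FIXT x\sigma N}^\cP(\Vect u)=\Psem N^{\cP,x:\Excl\sigma}(\Vect u,\Prom{(\Psem{\FIXT x\sigma N}^\cP(\Vect u))})$, then rewrite $\Prom{(\Psem{\FIXT x\sigma N}^\cP(\Vect u))}=\Psem{\STOP{(\FIXT x\sigma N)}}^\cP(\Vect u)$. Finally the probabilistic axiom holds because $\Psem{\COIN p}=p\Base{(\ell,*)}+(1-p)\Base{(r,*)}$ while $\Psem{\IN\ell\ONELEM}^\cP=\Inj\ell(1)=\Base{(\ell,*)}$ and $\Psem{\IN r\ONELEM}^\cP=\Base{(r,*)}$, so the two outcomes recombine exactly to $\Psem{\COIN p}$.

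For the congruence cases the key observation is that, in this linear-logic-based semantics, every one-layer evaluation context is interpreted by a map that is \emph{linear} in the argument occupying the reducing position, the non-linear usage being absorbed into the coalgebra structure maps. Concretely $\GO{-}$, $\PR i{-}$, $\IN i{-}$, $\FOLD{-}$, $\UNFOLD{-}$, $\LAPP{-}V$, $\LAPP N{-}$, $\PAIR{-}{M_r}$, $\PAIR V{-}$ and $\CASE{-}{x_\ell}{M_\ell}{x_r}{M_r}$ act on the semantics of the reducible subterm by application of a fixed matrix, an injection, a projection, the identity, application of a vector, or a tensor with one component frozen, all linear in the relevant slot. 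Writing the induction hypothesis for the reducible subterm $N$ as $\Psem N^\cP=\sum_{N'}\Redmats_{N,N'}\Psem{N'}^\cP$ and pushing the context's interpretation through this countable non-negative combination (legitimate since all coefficients are $\geq 0$) yields $\Psem M^\cP=\sum_{M'}\Redmats_{M,M'}\Psem{M'}^\cP$, because the reducts $M'$ are precisely the context applied to the $N'$, with identical probabilities.

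The main obstacle is the substitution lemma underlying the $\beta$, $\mathsf{case}$ and fixpoint axioms: for a value $V$ with $\TSEQ\cP V\phi$ and $\TSEQ{\cP,x:\phi}M\sigma$ one needs $\Psem{\Subst MVx}^\cP(\Vect u)=\Psem M^{\cP,x:\phi}(\Vect u,\Psem V^\cP(\Vect u))$ for all coalgebraic $\Vect u$. This is delicate precisely because the interpretations of $\lambda$-abstraction and of $\mathsf{case}$ are only \emph{characterized} on coalgebraic arguments, through the density of positive types (Theorem~\ref{th:pos-types-dense}); density is then used a second time to deduce equality of the two morphisms from their agreement on all coalgebraic inputs. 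I would prove the substitution lemma by induction on $M$, the indispensable preliminary being that a value is interpreted as a coalgebraic element, i.e.\ $\Psem V^\cP(\Vect u)\in\PcohEM{(\Tsemca\phi)}$ whenever $\Vect u\in\PcohEM{(\Tsemca\cP)}$. This preliminary follows from Lemma~\ref{lemma:sem-values}: $\Psem{\STOP N}^\cP$ has the form $\Prom{(-)}$ and so lands in $\PcohEM{(\Excl{})}$, while tensors, injections and $\FOLD{}$ preserve coalgebraicity and variables are replaced by coalgebraic elements. It is exactly here, in the interplay between value-based reduction and the restriction to dense coalgebras, that the \emph{half-polarized} discipline (arguments of linear implications being positive) makes the interpretation well behaved, and I expect this lemma to be where essentially all the real work lies.
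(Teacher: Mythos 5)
Your proposal is correct and follows exactly the route the paper intends: the paper's own proof of Theorem~\ref{th:soundness} is the single sentence ``by induction and straightforward verification,'' and your case analysis (weak-normal terms, root redexes verified against the semantic clauses, congruence cases by linearity in the reducing position) together with the value-substitution lemma and the coalgebraicity of values is precisely the content that sentence elides. Your identification of the substitution lemma --- and of density of positive types as the mechanism that lets one pass from agreement on coalgebraic points to equality of morphisms --- as the place where the real work lies is accurate.
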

The proof is done by induction and is a straightforward verification.

\begin{corollary}\label{th:soundness-ineq}
  Let $M$ be a term such that $\Tseq{}{M}{\One}$ so that $\Psem M\in\Izu$. Then
  $\Psem M\geq\Redmats^\infty_{M,\ONELEM}$.
\end{corollary}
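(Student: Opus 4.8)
The plan is to reduce the claim to a finite-step statement and prove it by induction, using the Soundness Theorem~\ref{th:soundness} as the only real ingredient. By definition $\Redmats^\infty_{M,\ONELEM}$ is the least upper bound of the monotone sequence $(\Redmats^n_{M,\ONELEM})_{n\geq 1}$ (monotone because $\ONELEM$ is weak-normal), so since $\Psem M$ is a fixed real in $\Izu$ it suffices to establish the uniform bound
\[
\Psem N\geq\Redmats^n_{N,\ONELEM}
\]
for \emph{every} closed term $N$ with $\Tseq{}{N}{\One}$ and every $n\geq 1$; taking the supremum over $n$ with $N=M$ then yields the corollary.

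I would prove this by induction on $n$, quantifying over all such $N$ at once. For the base case $n=1$, Soundness gives $\Psem N=\sum_{\TSEQ{}{N'}{\One}}\Redmats_{N,N'}\Psem{N'}$; isolating the summand $N'=\ONELEM$ (for which $\Psem{\ONELEM}=1$) and discarding the remaining summands, which are products of non-negative reals, yields $\Psem N\geq\Redmats_{N,\ONELEM}=\Redmats^1_{N,\ONELEM}$. For the inductive step, I would read off the matrix identity $\Redmats^{n+1}_{N,\ONELEM}=\sum_{N'}\Redmats_{N,N'}\Redmats^n_{N',\ONELEM}$ from $\Redmats^{n+1}=\Redmats\,\Redmats^n$, apply Soundness to rewrite $\Psem N=\sum_{N'}\Redmats_{N,N'}\Psem{N'}$, and then bound each factor $\Psem{N'}$ from below by $\Redmats^n_{N',\ONELEM}$ using the induction hypothesis; since the coefficients $\Redmats_{N,N'}$ are non-negative, summing preserves the inequality and gives $\Psem N\geq\Redmats^{n+1}_{N,\ONELEM}$.

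The only points requiring care — and the place I would be most attentive — concern the applicability of the induction hypothesis to the reducts $N'$. First, each $N'$ in the support of $\Redmats_{N,-}$ must again be a \emph{closed} term of type $\One$, which is exactly subject reduction together with preservation of closedness for $\Rel{\Redone p}$; this is what lets me invoke the hypothesis at $N'$ rather than needing a separate argument. Second, the sums involved are harmless: as noted after the definition of $\Redmats$, for any $N$ there are at most two terms $N'$ with $\Redmats_{N,N'}\neq 0$, so no convergence or rearrangement issue arises and the manipulations are purely finite. With these observations the induction goes through, and passing to the supremum over $n$ completes the proof.
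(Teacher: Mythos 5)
Your proof is correct and is essentially the paper's own argument: the paper simply declares the corollary an immediate consequence of Theorem~\ref{th:soundness} and the definition of $\Redmats^\infty$, and your induction on $n$ showing $\Psem N\geq\Redmats^n_{N,\ONELEM}$ (via the matrix identity $\Redmats^{n+1}=\Redmats\,\Redmats^n$, non-negativity of all coefficients, and preservation of typing and closedness under reduction) is exactly the spelled-out version of that one-line justification. Nothing further is needed.
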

This is an immediate consequence of Theorem~\ref{th:soundness} and of the
definition of $\Redmats^\infty$, see Section~\ref{sec:obs-eq}.

\subsection{Examples of term interpretations}\label{subsec:exden}

We give the interpretation of terms that we gave as examples in Subsection~\ref{subsec:exsyn}.
\begin{itemize}
\item $\Psem{\LOOP\sigma}=\Psem{\FIXT x\sigma{\GO x}}=0$
\item $\Psem{\True}= \Base{(\ell,*)}$ and $\Psem{\False}=\Base{(r,*)}$
\item $\Psem{\IFB M{N_\ell}{N_r}}^\cP(\List u1k)=\Psem{M}^\cP_{(\ell,*)}(\List u1k)\Psem{N_\ell}^\cP(\List u1k)\\+\Psem{M}^\cP_{(r,*)}(\List u1k)\Psem{N_r}^\cP(\List u1k)$
\item $\Psem{\DICE p{M_\ell}{M_r}}^\cP(\List u1k)=p\Psem{M_\ell}^\cP(\List
  u1k)+(1-p)\Psem{M_r}^\cP(\List u1k)$
\item  $\Psem{\NUM n}=\Snum n$ for $n\in\Nat$
\item $\Psem{\TSUCC M}^\cP_{n+1}(\List u1k)= \Psem{M}_n(\List u1k)$
\item $\Psem{\IFV M{N_\ell}x{N_r}}^\cP(\List u1k)=\Psem{M}^\cP_0(\List u1k)\Psem{N_\ell}^\cP(\List u1k)\\+\sum_{n=0}^\infty\Psem{M}^\cP_{n+1}(\List u1k)\Psem{N_r}^\cP(\List u1k)(\Snum n)$
\item $\Psem{\Ran{\Vect p}}=\sum_{i=1}^n p_i \Base{\Snum i}$
\item $\Psem{\LAPP{\Probe_\ell}M}^\cP(\List u1k)= \Psem{M}^\cP(\List u1k)_\ell\Base{*}$
\item $\Psem{M_0\cdot N}^\cP(\List u1k)=\Psem{M_0}^\cP(\List u1k)\Psem N^\cP(\List u1k)$
\item $\Psem{M_0\AND\cdots\AND M_l}^\cP(\List u1k)=\prod_{i=0}^l\Psem{M_i}^\cP(\List u1k)$
\item $\Psem{\LAPP{\Pchoose^\sigma_{l+1}(\List N0l)}P}^\cP(\List u1k)=\sum_{i=0}^l \Psem{P}_i^\cP(\List u1k)\cdot  \Psem{N_i}^\cP(\List u1k)$
\item $\forall u \in\Pcoh{(\Tsem{\Tnat})},\ \Psem{\Pext lr}(u)=\sum_{i=l}^ru_i \Base{\Snum i}$ and \[\Psem{\Pwin l{\Vect n}}(u)=\sum_{i=n_1+\cdots+n_{l-1}+1}^{n_1+\cdots+n_l}u_i \Base{\Snum{i}}\]
\end{itemize}

\section{Adequacy}\label{sec:adequacy}

Our goal is to prove the converse of Corollary~\ref{th:soundness-ineq}: for any
closed term $M$ such that $\TSEQ{}{M}{\ONE}$, the probability that $M$ reduces
to $\ONELEM$ is larger than or equal to
$\Psem M{}\in\Pcoh{\Tsem\ONE}\Isom[0,1]$, so that we shall know that these two
numbers are actually equal.

In spite of its very simple statement, the proof of this property is rather
long mainly because we have to deal with the recursive type definitions allowed
by our syntax. As usual, the proof is based on the definition of a logical
relations between terms and elements of the model (more precisely, given any
type $\sigma$, we have to define a relation between closed terms of types
$\sigma$ and elements of $\Pcoh{\Tsem\sigma}$; let us call such a relation a
\emph{$\sigma$-relation}).

Since we have no positivity restrictions on the occurrence of type variables
wrt.~which recursive types are defined so that types are neither covariant nor
contravariant wrt.~these type variables, we use a very powerful technique
introduced in~\cite{Pitts93} for defining this logical relation. 

Indeed a type variable $\zeta$ can have positive and negative
occurrences in a positive\footnote{Warning: the word ``positive'' has
  two different meanings here!} type $\phi$, consider for instance the
case $\phi=\EXCL{(\LIMPL\zeta\zeta)}$ where the type variable $\zeta$
has a positive (on the right of the $\LIMPL{}{}$) and a negative
occurrence (on the left). To define the logical relation associated
with $\TREC\zeta\phi$, we have to find a fixpoint for the operation
which maps a $(\TREC\zeta{\phi})$-relation $\cR$ to the relation
$\Phi(\cR)=\EXCL{(\LIMPL{\cR}{\cR})}$ (which can be defined using
$\cR$ as a ``logical relation'' in a fairly standard way). Relations
are naturally ordered by inclusion, and this strongly suggests to
define the above fixpoint using this order relation by \Eg{}~Tarski's
Fixpoint Theorem. The problem however is that $\Phi$ is neither a
monotone nor an anti-monotone operation on relations, due to the fact
that $\zeta$ has a positive and a negative occurrence in $\phi$.

It is here that Pitts's trick comes in: we replace the relations $\cR$ with
pairs of relations $\cR=(\Nrel\cR,\Prel\cR)$ ordered as follows:
$\cR\Subrel\cS$ if $\Prel\cR\subseteq\Prel\cS$ and $\Nrel\cS\subseteq\Nrel\cR$.
Then we define accordingly $\Phi(\cR)$ as a pair of relations by
$\Nrel{\Phi(\cR)}=\EXCL{(\LIMPL{\Prel\cR}{\Nrel\cR})}$ and
$\Prel{\Phi(\cR)}=\EXCL{(\LIMPL{\Nrel\cR}{\Prel\cR})}$. Now the operation
$\Phi$ is monotone wrt.~the $\Subrel$ relation and it becomes possible to apply
Tarski's Fixpoint Theorem to $\Phi$ and get a pair of relations $\cR$ such that
$\cR=\Phi(\cR)$. The next step consists in proving that
$\Nrel\cR=\Prel\cR$. This is obtained by means of an analysis of the definition
of the interpretation of fixpoints of types as colimits in the category
$\Embr\PCOH$. One is finally in position of proving a fairly standard ``Logical
Relation Lemma'' from which adequacy follows straightforwardly.

In this short description of our adequacy proof, many technicalities have
obviously been hidden, the most important one being that values are handled in
a special way so that we actually consider two kinds of pairs of
relations. Also, a kind of ``biorthogonality closure'' plays an essential role
in the handling of positive types, no surprise for the readers acquainted with
Linear Logic, see for instance the proof of normalization in~\cite{Girard87}.

\subsection{Pairs of relations and basic operations}
Given a \emph{closed} type $\sigma$, we define $\Rels\sigma$ as the set of all
pairs of relations $\cR=(\Nrel\cR,\Prel\cR)$ such that, for
$\epsilon\in\{\POS,\NEG\}$, each element of $\Srel\cR\epsilon$ is a pair
$(M,u)$ where $\TSEQ{}{M}{\sigma}$ and $u\in\Pcoh{\Tsem\sigma}$. For a
\emph{closed} positive type $\phi$, we also define $\Relsv\phi$ as the set of
all pairs of relations $\cV=(\Nrel\cV,\Prel\cV)$ such that, for
$\epsilon\in\{\POS,\NEG\}$, each element of $\Srel\cV\epsilon$ is a pair
$(V,v)$ where $\TSEQ{}{V}{\phi}$ is a value and $v\in\PcohEM{\Tsem\phi}$.

Given $\cR,\cS\in\Rels\sigma$, we write $\cR\Subrel\cS$ if
$\Prel\cR\subseteq\Prel\cS$ and $\Nrel\cS\subseteq\Nrel\cR$. We define similarly
$\cV\Subrel\cW$ for $\cV,\cW\in\Relsv\phi$. Then $\Rels\sigma$ is a complete
meet-lattice, the infimum of a collection $(\cR_i)_{i\in I}$ being
$\Infrel_{i\in I}\cR_i=(\bigcup_{i\in I}\Nrel\cR_i,\bigcap_{i\in
  I}\Prel\cR_i)$. The same holds of course for $\Relsv\phi$ and we use the same
notations.

We define $\cR(\ONE)$ as the set of all pairs $(M,p)$ such that
$\TSEQ{}{M}{\ONE}$, $p\in[0,1]$ and $\Redmats^\infty_{M,\ONELEM}\geq p$.

We define in Figure~\ref{fig:adeq-rel-log-op} logical operations on these
pairs of relations.  The last one is the aforementioned
biorthogonality closure operation on pairs of relations.
\begin{figure}
  \centering
\begin{itemize}
\item Let $\cR\in\Rels\sigma$, we define $\EXCL\cR\in\Relsv{\EXCL\sigma}$ by:
  $\Srel{\EXCL\cR}\epsilon=\{(\STOP M,\Prom u)\St (M,u)\in\Srel\cR\epsilon\}$
  for $\epsilon\in\{\NEG,\POS\}$.
\item Let $\cV_i\in\Relsv{\phi_i}$ for $i\in\{\ell,r\}$. We define
  $\Srel{(\TENS{\cV_\ell}{\cV_r})}\epsilon=\{(\PAIR{V_\ell}{V_r},\Tens{v_\ell}{v_r})\}\St
  (V_i,v_i)\in\Srel{\cV_i}\epsilon\}$ for $\epsilon\in\{\NEG,\POS\}$, so that
  $\TENS{\cV_\ell}{\cV_r}\in\Relsv{\TENS{\phi_\ell}{\phi_r}}$. 
\item Let $\cV_i\in\Relsv{\phi_i}$ for $i\in\{\ell,r\}$. We define
  $\Srel{(\PLUS{\cV_\ell}{\cV_r})}\epsilon=\{(\IN i{V},\Inj i(v))\}\St
  i\in\{\ell,r\}\text{ and } (V,v)\in\Srel{\cV_i}\epsilon\}$ for
  $\epsilon\in\{\NEG,\POS\}$, so that
  $\PLUS{\cV_\ell}{\cV_r}\in\Relsv{\PLUS{\phi_\ell}{\phi_r}}$.
\item Let $\cV\in\Relsv\phi$ and $\cR\in\Rels\sigma$. We define
  $\LIMPL\cV\cR\in\Rels{\LIMPL\phi\sigma}$ as follows:
  $\Srel{(\LIMPL\cV\cR)}\epsilon=\{(M,u)\St\
  \TSEQ{}{M}{\LIMPL\phi\sigma},\
  u\in\Pcoh{\Tsem{\LIMPL\phi\sigma}}\text{ and }\forall
  (V,v)\in\Srel\cV{-\epsilon}\ (\LAPP MV,u\Matapp
  v)\in\Srel\cR\epsilon\}$.
\item Last, given $\cV\in\Relsv{\phi}$, we define $\Crel\cV\in\Rels{\phi}$ as
  follows: $\Srel{\Crel\cV}\epsilon$ is the set of all $(M,u)$ such that
  $\TSEQ{}{M}{\phi}$, $u\in\Pcoh{\Tsem\phi}$ and, for all
  $(T,t)\in\Srel{(\LIMPL\cV\cR(\ONE))}{-\epsilon}$, one has $(\LAPP
  TM,\Matapp tu)\in\cR(\ONE)$.
\end{itemize}  
  \caption{Logical operations for pairs of relations}
  \label{fig:adeq-rel-log-op}
\end{figure}

Observe that all these operations are monotone wrt.~$\Subrel$. For instance
$\cV\Subrel\cW\Andc\cR\Subrel\cS\Implies(\LIMPL\cV\cR)\Subrel(\LIMPL\cW\cS)$,
and $\cV\Subrel\cW\Implies\Crel\cV\Subrel\Crel\cW$.

\subsection{Fixpoints of pairs of relations}

To deal with fixpoint types $\TREC\zeta\phi$, we need to consider types
parameterized by relations as follows.

Let $\sigma$ be a type and let $\Vect\zeta=(\List\zeta 1n)$ be a list of type
variables without repetitions and which contains all free variables of
$\sigma$. For all list $\Vect\phi=(\List\phi 1n)$ of \emph{closed} positive
types, we define
\begin{align*}
  \Trel\sigma{\Vect\zeta}:\prod_{i=1}^n\Relsv{\phi_i}
  \to\Rels{\Subst\sigma{\Vect\phi}{\Vect\zeta}}\,.
\end{align*}
Let also $\phi$ be a positive type whose free variables are contained in
$\Vect\zeta$, we define
\begin{align*}
  \Trelv\phi{\Vect\zeta}:\prod_{i=1}^n\Relsv{\phi_i}
  \to\Relsv{\Subst\phi{\Vect\phi}{\Vect\zeta}}\,.
\end{align*}
The definition is by simultaneous induction on $\sigma$ and $\phi$.  All cases
but one consist in applying straightforwardly the above defined logical
operations on pairs of relations, for instance
\begin{align*}
  \Trel{\LIMPL\phi\tau}{\Vect\zeta}(\Vect\cV)
  =\LIMPL{\Trelv\phi{\Vect\zeta}(\Vect\cV)}{\Trel\sigma{\Vect\zeta}(\Vect\cV)}
  \quad\text{and}\quad
  \Trel{\phi}{\Vect\zeta}(\Vect\cV)=\Crel{\Trelv\phi{\Vect\zeta}(\Vect\cV)}\,.
\end{align*}
We are left with the case of recursive definitions of types, so assume that
$\phi=\TREC\zeta\psi$. Let $\Vect\phi=(\List\phi 1n)$ be a list of closed
positive types and let $\Vect\cV\in\prod_{i=1}^n\Relsv{\phi_i}$, we set
\begin{align}
  \label{eq:bi-rel-fixpoint-def}
  \Trelv\phi{\Vect\zeta}(\Vect\cV)=
  \Infrel\{\cV\in\Relsv{\Subst\phi{\Vect\phi}{\Vect\zeta}}\St
  \FOLD{\Trelv{\psi}{\Vect\zeta,\zeta}(\Vect\cV,\cV)}\Subrel\cV\}
\end{align}
where we use the following notation: given
$\cW\in\Relsv{\Substbis\psi{\Vect\phi/\Vect\zeta,\Subst\phi{\Vect\phi}{\Vect\zeta}/\zeta}}$,
$\FOLD{\cW}\in\Relsv{\Subst{\phi}{\Vect\phi}{\Vect\zeta}}$ is given by
$\Srel{\FOLD\cW}{\epsilon}=\{(\FOLD W,v)\St(W,v)\in\Srel\cW\epsilon\}$ for
$\epsilon\in\{\POS,\NEG\}$. 

We recall the statement of Tarski's fixpoint theorem.
\begin{theorem}
  Let $S$ and $T$ be complete meet semi-lattices and let $f:S\times T\to T$ be
  a monotone function. For $x\in S$, let $g(x)$ be the meet of the set $\{y\in
  T\St f(x,y)\leq y\}$. Then the function $g$ is monotone and satisfies
  $f(x,g(x))=g(x)$ for each $x\in S$.
\end{theorem}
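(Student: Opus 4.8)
The plan is to recognize this as the parameterized form of the Knaster--Tarski fixpoint theorem and to run the classical argument with $x$ treated as a fixed parameter. First I would fix $x\in S$ and abbreviate $P_x=\{y\in T\mid f(x,y)\leq y\}$, the set of post-fixpoints of the monotone map $y\mapsto f(x,y)$; by definition $g(x)=\bigwedge P_x$, and this meet exists because $T$ is a complete meet semi-lattice. I would note in passing that $P_x$ is nonempty, since the top element $\top$ of $T$ (the meet of the empty subset) trivially satisfies $f(x,\top)\leq\top$; this makes the argument transparent, although completeness already guarantees the meet regardless.

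The core is to establish $f(x,g(x))=g(x)$ by two inequalities. For $f(x,g(x))\leq g(x)$: for every $y\in P_x$ we have $g(x)\leq y$, so monotonicity of $f$ in its second argument gives $f(x,g(x))\leq f(x,y)\leq y$; hence $f(x,g(x))$ is a lower bound of $P_x$ and therefore lies below its meet $g(x)$. For the reverse inequality I would apply $f(x,-)$ to the inequality just obtained: monotonicity yields $f(x,f(x,g(x)))\leq f(x,g(x))$, which says precisely that $f(x,g(x))\in P_x$; since $g(x)$ is the meet of $P_x$, this gives $g(x)\leq f(x,g(x))$. Combining the two inequalities produces the fixpoint equation.

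Finally, for monotonicity of $g$ I would take $x\leq x'$ in $S$ and show $P_{x'}\subseteq P_x$: if $f(x',y)\leq y$ then, using monotonicity of $f$ in its first argument, $f(x,y)\leq f(x',y)\leq y$, so $y\in P_x$. A smaller set of post-fixpoints has a larger meet, whence $g(x)=\bigwedge P_x\leq\bigwedge P_{x'}=g(x')$, as required.

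There is no serious obstacle here: the whole proof uses only meets (never joins), so no appeal to the complete-lattice structure beyond what is hypothesized is needed. The only delicate point is the standard Knaster--Tarski move in the first inequality---deducing that $f(x,g(x))$ lies below $\bigwedge P_x$ by verifying it is a lower bound of $P_x$. What makes the argument go through is precisely that the joint monotonicity of $f$ on the product order furnishes monotonicity separately in each argument, which is exactly what each of the three steps invokes.
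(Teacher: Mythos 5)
Your proof is correct and is the standard Knaster--Tarski argument; the paper itself gives no proof here, merely recalling the theorem as a known result, so there is nothing to diverge from. All three steps (the two inequalities yielding the fixpoint equation, and the inclusion $P_{x'}\subseteq P_x$ giving monotonicity of $g$) are exactly as one would expect, and you correctly observe that only meets are needed.
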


Applying this theorem we obtain, by induction on types, the following property.

\begin{lemma}\label{lemma:bi-rel-monotone}
  For any type $\sigma$ and any positive type $\phi$, the maps
  $\Trel\sigma{\Vect\zeta}$ and $\Trelv\phi{\Vect\zeta}$ are monotone wrt.~the
  $\Subrel$ order relation. If $\psi$ is a positive type,
  $\Vect\zeta=(\List\zeta 1n,\zeta)$ is a repetition-free list of type
  variables containing all the free variables of $\psi$ and
  $\phi=\TREC\zeta\psi$ and $\Vect\cV=(\List\cV 1n)$ is a list of pairs of
  relations such that $\cV_i\in\Relsv{\phi_i}$ for each $i$, then
  $\cV=\Trel\phi{\Vect\zeta}(\Vect\cV)$ satisfies
  $\cV=\FOLD{\Trel{\psi}{\Vect\zeta,\zeta}}(\Vect\cV,\cV)$.
\end{lemma}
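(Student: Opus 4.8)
The plan is to prove both assertions simultaneously, by induction on the common structure of $\sigma$ and $\phi$, taking monotonicity (the first assertion) as the main induction hypothesis and extracting the fixpoint equation (the second assertion) as a by-product of the recursive-type case. First I would dispose of all the non-recursive cases uniformly: each clause in the definition of $\Trel\sigma{\Vect\zeta}$ and $\Trelv\phi{\Vect\zeta}$ presents the map as the composite of one of the logical operations of Figure~\ref{fig:adeq-rel-log-op} with the maps attached to the immediate subtypes. Since all of $\EXCL\cR$, $\TENS{\cV_\ell}{\cV_r}$, $\PLUS{\cV_\ell}{\cV_r}$, $\LIMPL\cV\cR$ and the closure $\Crel\cV$ were already observed to be monotone wrt.~$\Subrel$ — crucially $\LIMPL\cV\cR$ is monotone in \emph{both} arguments, since the order swaps inclusion on the negative component, which is exactly the point of Pitts's trick — and since the variable clause $\Trelv{\zeta_i}{\Vect\zeta}(\Vect\cV)=\cV_i$ is a projection and the clause for $\ONE$ is constant, each of these maps is monotone by composing monotone maps, invoking the induction hypothesis on the strictly smaller subtypes.

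The only case demanding real work is $\phi=\TREC\zeta\psi$. Here I would set
\[
f(\Vect\cV,\cV)=\FOLD{\Trelv{\psi}{\Vect\zeta,\zeta}(\Vect\cV,\cV)},
\]
viewed as a map $\prod_{i=1}^n\Relsv{\phi_i}\times\Relsv{\Subst\phi{\Vect\phi}{\Vect\zeta}}\to\Relsv{\Subst\phi{\Vect\phi}{\Vect\zeta}}$ between complete meet semi-lattices (the completeness of $\Relsv{-}$ having been recorded earlier). One checks directly that $\FOLD{-}$ is monotone wrt.~$\Subrel$, as it applies the injective syntactic wrapper to the positive and the negative component separately; combined with the induction hypothesis applied to the strict subtype $\psi$, which is monotone in all of its arguments including the last, this makes $f$ jointly monotone in $(\Vect\cV,\cV)$. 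Tarski's fixpoint theorem then applies with $g(\Vect\cV)=\Infrel\{\cV\St f(\Vect\cV,\cV)\Subrel\cV\}$, which is precisely $\Trelv\phi{\Vect\zeta}(\Vect\cV)$ by~\eqref{eq:bi-rel-fixpoint-def}. The theorem yields at once that $g$ is monotone in $\Vect\cV$, completing the monotonicity statement for $\phi$, and that $f(\Vect\cV,g(\Vect\cV))=g(\Vect\cV)$, i.e.~$\cV=\FOLD{\Trelv{\psi}{\Vect\zeta,\zeta}(\Vect\cV,\cV)}$ for $\cV=\Trelv\phi{\Vect\zeta}(\Vect\cV)$, which is the second assertion.

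The main obstacle is conceptual rather than computational: it lies entirely in having set up the definitions so that Tarski's theorem is applicable. Joint monotonicity of $f$ hinges on the mixed-variance order $\Subrel$ rendering every logical operation monotone — in particular turning the genuinely contravariant first argument of $\LIMPL\cV\cR$ into a monotone one — and on $\prod_{i=1}^n\Relsv{\phi_i}$ and $\Relsv{\Subst\phi{\Vect\phi}{\Vect\zeta}}$ being complete meet semi-lattices. Once these facts are in hand the induction is routine; the only bookkeeping care needed is that the parameter list grows from $\Vect\zeta$ to $\Vect\zeta,\zeta$ in the recursive case while the type $\psi$ strictly decreases, so the induction (on the size of the type) remains well-founded.
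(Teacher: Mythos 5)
Your proof is correct and follows essentially the same route as the paper: the non-recursive cases are handled by composing the logical operations of Figure~\ref{fig:adeq-rel-log-op}, all monotone wrt.~$\Subrel$, and the recursive case is exactly the paper's application of Tarski's fixpoint theorem to $f(\Vect\cV,\cV)=\FOLD{\Trelv{\psi}{\Vect\zeta,\zeta}(\Vect\cV,\cV)}$, whose $g$ is the glb of~\eqref{eq:bi-rel-fixpoint-def}, yielding both monotonicity and the fixpoint equation at once. The paper leaves all of this implicit ("by induction on types"), so your write-up is simply a more detailed rendering of the same argument.
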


\subsection{Some useful closeness lemmas}
We state and prove a series of lemmas expressing that our pairs of relations
are closed under various syntactic and semantic operations.

\begin{lemma}\label{lemma:prob-conv-indep-det}
  Let $M$ and $M'$ be terms such that $\TSEQ{}{M}{\ONE}$ and
  $\TSEQ{}{M'}{\ONE}$. If $M\Rel\Wred M'$ then $\Redmats^\infty_{M,\ONELEM}
  =\Redmats^\infty_{M',\ONELEM}$.
\end{lemma}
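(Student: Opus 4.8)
The plan is to exploit the fact that a $\Wred$-step is a deterministic, probability-one transition of the Markov chain $\Redmats$, so that $M$ and $M'$ have exactly the same probability of reaching the weak-normal term $\ONELEM$.

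First I would note that since $M\Rel\Wred M'$ the term $M$ is not weak-normal (no value or abstraction admits a $\Wred$-reduction), and in particular $M\neq\ONELEM$. By the rule $M\Rel\Wred M'\Implies M\Rel{\Redone 1}M'$ we have $\Redmats_{M,M'}=1$. Since $\Redmats$ is stochastic, that is $\sum_{M''}\Redmats_{M,M''}=1$, this single entry already saturates the row indexed by $M$, forcing $\Redmats_{M,M''}=0$ for every $M''\neq M'$. This is exactly the point where the determinism of $\Wred$ is used: when the redex selected by the reduction strategy is not a $\COIN p$, the term has a unique one-step reduct and it carries probability one, so no competing probabilistic branch can occur.

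Next I would unfold one step of the chain: for every $n\geq 1$, the definition of matrix multiplication gives
\[
\Redmats^{n+1}_{M,\ONELEM}=\sum_{M''}\Redmats_{M,M''}\,\Redmats^n_{M'',\ONELEM}=\Redmats^n_{M',\ONELEM},
\]
the only surviving summand being $M''=M'$. Finally I would pass to the supremum. As $\ONELEM$ is weak-normal, both sequences $(\Redmats^n_{M,\ONELEM})_n$ and $(\Redmats^n_{M',\ONELEM})_n$ are monotone, so shifting the index by one leaves their lubs unchanged; hence
\[
\Redmats^\infty_{M,\ONELEM}=\sup_{n\geq 1}\Redmats^{n+1}_{M,\ONELEM}=\sup_{n\geq 1}\Redmats^n_{M',\ONELEM}=\Redmats^\infty_{M',\ONELEM}.
\]

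The argument is essentially routine; the only step deserving care is the justification that the row of $\Redmats$ indexed by $M$ has its unique nonzero entry at $M'$ with value $1$, which is precisely where the determinism of $\Wred$ and the absence of a competing $\COIN p$ redex enter.
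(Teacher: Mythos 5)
Your proof is correct and follows the same idea as the paper's, which simply observes that any reduction path from $M$ to $\ONELEM$ must begin with the probability-$1$ step $M\Rel\Wred M'$; you have merely made this precise at the level of the stochastic matrix (row saturation by stochasticity, one-step unfolding, and the index shift being absorbed by monotonicity of $(\Redmats^n_{M,\ONELEM})_n$). No gaps.
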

This is straightforward since any reduction path from $M$ to $\ONELEM$ must
start with the step $M\Rel\Wred M'$, and this is a probability $1$ step.

\begin{lemma}\label{lemma:rel-app-closeness}
  Let $\phi$ be a closed positive type and let $\sigma$ be a closed type. Let
  $(M,u)\in\Srel{\Trel\phi{}}{-\epsilon}$ and
  $(R,r)\in\Srel{\Trel{\LIMPL\phi\sigma}{}}{\epsilon}$. Then $(\LAPP
  RM,\Matapp ru)\in\Srel{\Trel{\sigma}{}}{\epsilon}$. 
\end{lemma}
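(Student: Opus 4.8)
The plan is to peel the goal down to the only ``pole'' occurring in these definitions, namely the ground relation $\cR(\ONE)$, and then to combine the two hypotheses, which speak about complementary things. Unfolding $\Trel{\LIMPL\phi\sigma}{}=\LIMPL{\Trelv\phi{}}{\Trel\sigma{}}$ and $\Trel\phi{}=\Crel{\Trelv\phi{}}$, the hypothesis on $R$ says that $(\LAPP RV,\Matapp rv)\in\Srel{\Trel\sigma{}}{\epsilon}$ for every \emph{value} $(V,v)\in\Srel{\Trelv\phi{}}{-\epsilon}$, while the hypothesis on $M$ (the biorthogonal closure) says that $(\LAPP TM,\Matapp tu)\in\cR(\ONE)$ for every \emph{linear} test $(T,t)\in\Srel{(\LIMPL{\Trelv\phi{}}{\cR(\ONE)})}{\epsilon}$ of type $\LIMPL\phi\ONE$. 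The target is $(\LAPP RM,\Matapp ru)\in\Srel{\Trel\sigma{}}{\epsilon}$.

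First I would isolate a characterization of the general-type relations by \emph{linear $\sigma$-continuations}, proved by induction on $\sigma$: a pair $(N,w)$ with $\TSEQ{}N\sigma$ belongs to $\Srel{\Trel\sigma{}}{\epsilon}$ iff for every linear evaluation context $E$ with a hole of type $\sigma$ and body of type $\ONE$ that is related to $\cR(\ONE)$ at polarity $-\epsilon$ one has $(E[N],\widehat E\,w)\in\cR(\ONE)$, where $\widehat E$ is the (linear) matrix interpreting $E$. For $\sigma=\FORG{\phi_0}$ (a positive type regarded as general) this is exactly the definition of $\Crel{(\cdot)}$, the continuations being the consumers $T$ of type $\LIMPL{\phi_0}\ONE$; for $\sigma=\LIMPL{\phi_0}\tau$ it follows from the arrow clause and the inductive characterization for $\tau$, a $\sigma$-continuation being ``apply to a related value, then run a $\tau$-continuation''.

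Granting this, the proof is a composition argument through $R$. Fix a linear $\sigma$-continuation $E$ related to $\cR(\ONE)$ and set $T=\ABST x\phi{E[\LAPP Rx]}$, which is well typed of type $\LIMPL\phi\ONE$. The decisive point is that $\LIMPL\phi\sigma$ is a \emph{linear} arrow, so $\LAPP Rx$ is linear in $x$ and no thunk is interposed; hence $T$ is genuinely linear and, by density of $\Tsemca\phi$ (Theorem~\ref{th:pos-types-dense}), its matrix $t=\Psem T$ is the linear composite of $\widehat E$ with $r$, so that $\Matapp tw=\widehat E(\Matapp rw)$ for \emph{every} $w\in\Pcoh{\Tsem\phi}$ and not merely the coalgebraic ones. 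To see $(T,t)\in\Srel{(\LIMPL{\Trelv\phi{}}{\cR(\ONE)})}{\epsilon}$, take a related value $(V,v)$; then $\LAPP TV\Rel\Wred E[\LAPP RV]$, the hypothesis on $R$ gives $(\LAPP RV,\Matapp rv)\in\Srel{\Trel\sigma{}}{\epsilon}$, the continuation property of $E$ gives $(E[\LAPP RV],\widehat E(\Matapp rv))\in\cR(\ONE)$, and Lemma~\ref{lemma:prob-conv-indep-det} carries this back across the probability-one weak step to $(\LAPP TV,\Matapp tv)\in\cR(\ONE)$. The closure hypothesis on $M$ then yields $(\LAPP TM,\Matapp tu)\in\cR(\ONE)$.

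It remains to replace $\LAPP TM$ by $E[\LAPP RM]$, and here the chosen reduction strategy is what makes things align: both $\LAPP TM$ and $E[\LAPP RM]$ reduce the argument $M$ first and follow the \emph{same} probabilistic reduction tree on $M$, meeting at $E[\LAPP RV]$ as soon as $M$ reaches any value $V$ — the only difference being the single probability-one $\beta$-step $\LAPP TV\Rel\Wred E[\LAPP RV]$. By Lemma~\ref{lemma:prob-conv-indep-det} they therefore have equal probability of convergence to $\ONELEM$, so from $\Matapp tu=\widehat E(\Matapp ru)$ we get $(E[\LAPP RM],\widehat E(\Matapp ru))\in\cR(\ONE)$. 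Since $E$ was arbitrary, the characterization above gives $(\LAPP RM,\Matapp ru)\in\Srel{\Trel\sigma{}}{\epsilon}$. I expect the main obstacle to be precisely this continuation characterization together with its linearity bookkeeping: testing instead through $\EXCL\sigma$ (i.e.\ thunking $M$) would force a promotion $\Prom{(\Matapp ru)}$ into the test value, which is nonlinear in $u$ and fails to match the required $\Matapp ru$ unless $u$ happens to be coalgebraic; routing the test through the linear arrow of $R$, and using the argument-first strategy to couple the two reductions, is exactly what repairs this — presumably the simplification the authors allude to when fixing that strategy.
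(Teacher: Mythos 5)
Your proof is correct and follows essentially the same route as the paper's: writing $\sigma=\LIMPL{\phi_1}{\cdots\LIMPL{\phi_n}{\psi}}$ and unfolding down to the pole $\cR(\ONE)$ is exactly your ``linear $\sigma$-continuation'' characterization, your test $\ABST x\phi{E[\LAPP Rx]}$ is the paper's $S=\ABST{x}{\phi}{\LAPP T{(\LAPP R{x\,V_1\cdots V_n})}}$ with matrix $\Matapp s{u'}=t(\Matapp r{u'}\Appsep v_1\cdots v_n)$, and the final transfer is the same probability-preserving bijection of reduction paths enabled by the argument-first strategy. (Only cosmetic difference: the paper defines $s$ directly as a composite matrix rather than as $\Psem T$, so the appeal to density is not needed there.)
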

\begin{proof}
  We can write $\sigma=\LIMPL{\phi_1}{\LIMPL\cdots{\LIMPL{\phi_n}\psi}}$ for
  some $n$ and $\List\phi 1n,\psi$ positive and closed. Given
  $(V_i,v_i)\in\Srel{\Trelv{\phi_i}{}}{-\epsilon}$ for $i=1,\dots,n$, we have
  to prove that
  \begin{align*}
    (\LAPP R{M\,V_1\cdots V_n},\Matapp ru\Appsep v_1\cdots v_n)
    \in\Srel{(\Crel{\Trelv{\psi}{}})}{\epsilon}
  \end{align*}
    so let $(T,t)\in\Srel{(\LIMPL{\Trelv\psi{}}{\Trel\ONE{}})}{-\epsilon}$, we
  have to prove that
  \begin{align*}
    (\LAPP{T}{(\LAPP R{M\,V_1\cdots V_n})},t(\Matapp ru\Appsep v_1\cdots v_n))
    \in\Trel{\ONE}{}\,.
  \end{align*}
  Let $S=\ABST{x}{\phi}{\LAPP T{(\LAPP R{x\,V_1\cdots V_n})}}$ so that
  $\TSEQ{}{S}{\LIMPL\phi\ONE}$. Similarly let
  $s\in\Pcoh{\Tsem{\LIMPL\phi\ONE}}$ be the linear morphism defined by
  $\Matapp s{u'}=t(\Matapp r{u'}\Appsep v_1\cdots v_n)$ (the fact that $s$ so
  defined is actually a morphism in $\PCOH$ results from the symmetric
  monoidal closeness of that category and from the fact that $r$ and
  $t$ are morphisms in $\PCOH$).
  Let $(V,v)\in\Srel{\Trelv\phi{}}{-\epsilon}$, we have
  $(\LAPP RV,\Matapp rv)\in\Srel{\Trel\sigma{}}{\epsilon}$ and hence
  $(\LAPP R{V\Appsep V_1\cdots V_n}),\Matapp rv\Appsep v_1\cdots
  v_n\in\Srel{\Trel{\psi}{}}{\epsilon}$. Therefore
  \begin{align*}
    (\LAPP T{(\LAPP R{V\Appsep V_1\cdots
        V_n})}),t(\Matapp rv\Appsep v_1\cdots v_n)\in\Trel{\ONE}{}
  \end{align*}
  since we have assumed that
  $(T,t)\in\Srel{(\LIMPL{\Trelv\psi{}}{\Trel\ONE{}})}{-\epsilon}$. Since
  $t(\Matapp rv\Appsep v_1\cdots v_n)=\Matapp sv$, and by
  Lemma~\ref{lemma:prob-conv-indep-det}, it follows that
  $(\LAPP SV,\Matapp sv)\in\Trel\ONE{}$. Hence
  $(S,s)\in\Srel{\Trel{\LIMPL\phi\ONE}{}}{\epsilon}$ and therefore
  $(\LAPP SM,\Matapp su)\in\Trel\ONE{}$ since we have
  $(M,u)\in\Srel{\Trel\phi{}}{-\epsilon}$.

  We finish the proof by observing that
  $\Matapp su=t(\Matapp ru\Appsep v_1\cdots v_n)$ and by showing that
  \begin{align*}
    \Redmats^\infty_{\LAPP T{\LAPP R{M\,V_1\cdots V_n}},\ONELEM}
    =\Redmats^\infty_{\LAPP SM,\ONELEM}
  \end{align*}
  For this it suffices to observe (by inspection of the reduction rules) that
  each reduction path $\pi$ from $\LAPP T{(\LAPP R{M\,V_1\cdots V_n})}$ to
  $\ONELEM$ is of shape $\pi=\lambda\rho$ where
  \begin{itemize}
  \item $\lambda$ is a reduction path \[\LAPP T{(\LAPP R{M_1\,V_1\cdots
        V_n})}\Rel{\Redone{p_1}}\LAPP T{(\LAPP R{M_2\,V_1\cdots
        V_n})}\Rel{\Redone{p_2}}\cdots\Rel{\Redone{p_k}}\LAPP T{(\LAPP
      R{M_{k+1}\,V_1\cdots V_n})}\] where $M_1=M$, $M_{k+1}$ is a value $V$ and
    $M=M_1\Rel{\Redone{p_1}}M_2\Rel{\Redone{p_2}}\cdots\Rel{\Redone{p_{k}}}
    M_{k+1}=V$
  \item and $\rho$ is a reduction path from $\LAPP T{(\LAPP R{V\,V_1\cdots
        V_n})}$ to $\ONELEM$.
  \end{itemize}  
  Then we have $\LAPP SM=\LAPP S{M_1}\Rel{\Redone{p_1}}\LAPP
  S{M_2}\Rel{\Redone{p_2}}\cdots\Rel{\Redone{p_{k}}}\LAPP SV\Rel{\Redone
    1} \LAPP T{(\LAPP R{V\,V_1\cdots V_n})}$, the last step resulting from the
  definition of $S$. In that way, we have defined a probability preserving
  bijection between the reduction paths from $\LAPP T{(\LAPP R{M_1\,V_1\cdots
      V_n})}$ to $\ONELEM$ and the reduction paths from $\LAPP SM$ to
  $\ONELEM$, proving our contention.
\end{proof}

\begin{lemma}\label{lemma:tens-crel}
  Let $\phi_i$ be closed positive types and
  $(M_i,u_i)\in\Srel{\Trel{\phi_i}{}}{\epsilon}$ for $i\in\{\ell,r\}$. Then
  $(\PAIR{M_\ell}{M_r},\TENS{u_\ell}{u_r})
  \in\Srel{\Trel{\TENS{\phi_\ell}{\phi_r}}{}}{\epsilon}$.
\end{lemma}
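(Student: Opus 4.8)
The plan is to unfold the statement through the definitions of the closure, tensor and arrow operations on pairs of relations, and then peel the two components off one at a time. Writing $\cV=\TENS{\Trelv{\phi_\ell}{}}{\Trelv{\phi_r}{}}$, the tensor clause of $\Trelv{(\cdot)}{}$ and the positive-type clause $\Trel{\phi}{}=\Crel{\Trelv\phi{}}$ give $\Trel{\TENS{\phi_\ell}{\phi_r}}{}=\Crel\cV$. So, by the definition of $\Crel{(\cdot)}$, I must show that for every test $(T,t)\in\Srel{(\LIMPL\cV{\cR(\ONE)})}{-\epsilon}$ one has $(\LAPP T{\PAIR{M_\ell}{M_r}},\Matapp t{(\TENS{u_\ell}{u_r})})\in\cR(\ONE)$. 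Unfolding $\LIMPL\cV{\cR(\ONE)}$ and the tensor case of $\cV$, the hypothesis on such a $(T,t)$ is exactly that $\TSEQ{}{T}{\LIMPL{(\TENS{\phi_\ell}{\phi_r})}\ONE}$ and that $(\LAPP T{\PAIR{V_\ell}{V_r}},\Matapp t{(\TENS{v_\ell}{v_r})})\in\cR(\ONE)$ for all $(V_\ell,v_\ell)\in\Srel{\Trelv{\phi_\ell}{}}\epsilon$ and $(V_r,v_r)\in\Srel{\Trelv{\phi_r}{}}\epsilon$.

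Given such a $(T,t)$, I would first build a linear test for $M_\ell$: set $S=\ABST{x}{\phi_\ell}{\LAPP T{\PAIR x{M_r}}}$, with $\TSEQ{}{S}{\LIMPL{\phi_\ell}\ONE}$, and let $s$ be the morphism of $\PCOH$ defined by $\Matapp s{u'}=\Matapp t{(\TENS{u'}{u_r})}$ (this is a morphism by the symmetric monoidal closed structure, exactly as the morphism $s$ is built in the proof of Lemma~\ref{lemma:rel-app-closeness}). The intended payoff is that, once I verify $(S,s)\in\Srel{(\LIMPL{\Trelv{\phi_\ell}{}}{\cR(\ONE)})}{-\epsilon}$, applying the hypothesis $(M_\ell,u_\ell)\in\Srel{\Crel{\Trelv{\phi_\ell}{}}}\epsilon$ to this test yields $(\LAPP S{M_\ell},\Matapp s{u_\ell})\in\cR(\ONE)$, and $\Matapp s{u_\ell}=\Matapp t{(\TENS{u_\ell}{u_r})}$ is the coefficient I want.

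To check that $(S,s)$ is a legal test I would fix $(V_\ell,v_\ell)\in\Srel{\Trelv{\phi_\ell}{}}\epsilon$, use $\LAPP S{V_\ell}\Rel\Wred\LAPP T{\PAIR{V_\ell}{M_r}}$ and Lemma~\ref{lemma:prob-conv-indep-det} to reduce to proving $(\LAPP T{\PAIR{V_\ell}{M_r}},\Matapp t{(\TENS{v_\ell}{u_r})})\in\cR(\ONE)$, and then repeat the construction one level down on the right component: set $S'=\ABST{y}{\phi_r}{\LAPP T{\PAIR{V_\ell}y}}$ and let $s'$ be the morphism with $\Matapp{s'}{u'}=\Matapp t{(\TENS{v_\ell}{u'})}$. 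For any $(V_r,v_r)\in\Srel{\Trelv{\phi_r}{}}\epsilon$ one has $\LAPP{S'}{V_r}\Rel\Wred\LAPP T{\PAIR{V_\ell}{V_r}}$ and $\Matapp{s'}{v_r}=\Matapp t{(\TENS{v_\ell}{v_r})}$, so the displayed property of $(T,t)$ (with Lemma~\ref{lemma:prob-conv-indep-det}) shows $(S',s')\in\Srel{(\LIMPL{\Trelv{\phi_r}{}}{\cR(\ONE)})}{-\epsilon}$; feeding this test to $(M_r,u_r)\in\Srel{\Crel{\Trelv{\phi_r}{}}}\epsilon$ gives $(\LAPP{S'}{M_r},\Matapp t{(\TENS{v_\ell}{u_r})})\in\cR(\ONE)$.

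The remaining work, and the only genuinely delicate point, is to transfer each conclusion across the reduction that reduces the relevant subterm to a value. In both situations the two terms involved differ, after that subterm has become a value, by a single probability-$1$ $\Wred$ step: $\LAPP{S'}{M_r}$ reduces $M_r$ in argument position and then does one $\Wred$ step to $\LAPP T{\PAIR{V_\ell}{V_r}}$, while $\LAPP T{\PAIR{V_\ell}{M_r}}$ reduces the same $M_r$ inside the pair (legal because $V_\ell$ is already a value and the fixed strategy reduces the left component of a pair first) and reaches $\LAPP T{\PAIR{V_\ell}{V_r}}$ directly; symmetrically for $\LAPP S{M_\ell}$ against $\LAPP T{\PAIR{M_\ell}{M_r}}$. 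Exactly as in the path bijection at the end of the proof of Lemma~\ref{lemma:rel-app-closeness}, this yields a probability-preserving bijection between the reduction paths to $\ONELEM$ of the two terms, so their values $\Redmats^\infty_{(\cdot),\ONELEM}$ agree. Chaining the two applications then delivers $(\LAPP T{\PAIR{M_\ell}{M_r}},\Matapp t{(\TENS{u_\ell}{u_r})})\in\cR(\ONE)$, which is what is required; I expect this reduction-path bookkeeping under the chosen strategy to be the main obstacle, everything else being a direct unfolding of the definitions.
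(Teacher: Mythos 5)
Your proof is correct, and it takes a mildly but genuinely different route from the paper's. The paper also unfolds $\Crel{(\cdot)}$ against a test $(T,t)$, but it then \emph{curries} $T$ into the two-argument term $S=\ABST{x_\ell}{\phi_\ell}{\ABST{x_r}{\phi_r}{\LAPP T{\PAIR{x_\ell}{x_r}}}}$ with $\Matapp s{u_\ell}\Appsep u_r=\Matapp t{(\Tens{u_\ell}{u_r})}$, notes that $(S,s)\in\Srel{(\LIMPL{\Trelv{\phi_\ell}{}}{(\LIMPL{\Trelv{\phi_r}{}}{\Trel{\ONE}{}})})}{-\epsilon}$ (immediate from the hypothesis on $(T,t)$, since $\LAPP S{V_\ell\,V_r}$ weak-reduces to $\LAPP T{\PAIR{V_\ell}{V_r}}$ for values), and then discharges both applications by two uses of Lemma~\ref{lemma:rel-app-closeness}, so that only one reduction-path bijection --- between $\LAPP{S}{M_\ell\,M_r}$ and $\LAPP T{\PAIR{M_\ell}{M_r}}$ --- is needed at the end. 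You instead peel the components off one at a time, hard-coding $M_r$ into the outer test $S$ and $V_\ell$ into the inner test $S'$; this avoids any appeal to Lemma~\ref{lemma:rel-app-closeness}, at the cost of an explicit verification that $(S,s)$ is a legal test (via the inner argument on $(S',s')$) and of two separate path bijections instead of one. Your polarity bookkeeping is right ($(V_\ell,v_\ell)$ and $(V_r,v_r)$ are taken in $\Srel{\Trelv{\phi_i}{}}{\epsilon}$, matching the hypotheses on the $(M_i,u_i)$), and both bijections are sound under the fixed strategy: the argument of an application and the left component of a pair are reduced first, so in each comparison the two terms perform the same probabilistic steps on the relevant subterm and then differ by a single deterministic $\Wred$ step, which Lemma~\ref{lemma:prob-conv-indep-det} handles. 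The trade-off is essentially one of packaging: the paper's version is shorter because the application lemma absorbs the bookkeeping, while yours is more self-contained.
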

\begin{proof}
  Let
  $(T,t)\in\Srel{(\Limpl{\Trelv{\TENS{\phi_\ell}{\phi_r}}{}}
    {\Trel{\ONE}{}})}{-\epsilon}$,
  we must prove that
  $(\LAPP
  T{\PAIR{M_\ell}{M_r}},t(\Tens{u_\ell}{u_r}))\in\Trel{\ONE}{}$.
  Let
  $S=\ABST{x_\ell}{\phi_\ell}{\ABST{x_r}{\phi_r}{\LAPP
      T{\PAIR{x_\ell}{x_r}}}}$
  and $s\in\Pcoh{\Tsem{\LIMPL{\phi_\ell}{(\LIMPL{\phi_r}{\ONE})}}}$ be
  defined by $\Matapp s{u_\ell}\Appsep u_r=t(\Tens{u_\ell}{u_r})$
  (again, $s$ is a morphism in $\PCOH$ by symmetric monoidal closeness
  of that category). It is clear that
  $(S,s)\in\Srel{(\LIMPL{\Trelv{\phi_\ell}{}}
    {(\LIMPL{\Trelv{\phi_r}{}}{\Trel{\ONE}{}}{})}{})}{-\epsilon}$.
  By Lemma~\ref{lemma:rel-app-closeness} we get
  $(\LAPP S{M_\ell},\Matapp
  s{u_\ell})\in\Srel{(\LIMPL{\Trelv{\phi_r}{}}{\Trel{\ONE}{}})}{-\epsilon}$
  and then
  $(\LAPP{S}{M_\ell\,M_r},t(\Tens{u_\ell}{u_r})\in\Trel{\ONE}{}$. Observing
  that there is a probability preserving bijection between the
  reduction paths from $\LAPP{S}{M_\ell\,M_r}$ to $\ONELEM$ and the
  reduction paths from $\LAPP T{\PAIR{M_\ell}{M_r}}$ to $\ONELEM$, we
  conclude that
  $(\LAPP
  T{\PAIR{M_\ell}{M_r}},t(\Tens{u_\ell}{u_r}))\in\Trel{\ONE}{}$
  as contended (in both terms one has to reduce first $M_\ell$ and
  then $M_r$ to a value).
\end{proof}

\begin{lemma}\label{lemma:proj-crel}
  Let $\phi_\ell$ and $\phi_r$ be closed positive types. If
  $(M,u)\in\Srel{\Trel{\TENS{\phi_\ell}{\phi_r}}{}}\epsilon$ then
  $(\PR iM,\Matapp{\Proj i}u)\in\Srel{\Trel{\phi_i}{}}\epsilon$.
\end{lemma}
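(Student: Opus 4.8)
The plan is to run the same orthogonality argument used for Lemmas~\ref{lemma:rel-app-closeness} and~\ref{lemma:tens-crel}. Since $\Trel{\phi_i}{}=\Crel{\Trelv{\phi_i}{}}$, unfolding the definition of $\Crel{(-)}$ reduces the goal to the following: for every test $(T',t')\in\Srel{(\LIMPL{\Trelv{\phi_i}{}}{\cR(\ONE)})}{-\epsilon}$, one must show $(\LAPP{T'}{\PR iM},\Matapp{t'}{(\Matapp{\Proj i}u)})\in\cR(\ONE)$. Given such a test, I would convert it into a test for $\TENS{\phi_\ell}{\phi_r}$ by setting $T=\ABST y{\TENS{\phi_\ell}{\phi_r}}{\LAPP{T'}{\PR iy}}$, so that $\TSEQ{}{T}{\LIMPL{\TENS{\phi_\ell}{\phi_r}}{\ONE}}$, and by letting $t\in\PCOH(\Tsem{\TENS{\phi_\ell}{\phi_r}},\Tsem\ONE)$ be the composite defined by $\Matapp tu=\Matapp{t'}{(\Matapp{\Proj i}u)}$; this is legitimate because $\Proj i\in\PCOH(\Tsem{\TENS{\phi_\ell}{\phi_r}},\Tsem{\phi_i})$ (it is the tensor projection of Lemma~\ref{lemma:sem-values}, seen as a linear morphism) and $t'\in\PCOH(\Tsem{\phi_i},\Tsem\ONE)$.

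First I would verify that $(T,t)\in\Srel{(\LIMPL{\Trelv{\TENS{\phi_\ell}{\phi_r}}{}}{\cR(\ONE)})}{-\epsilon}$. Any element of $\Srel{\Trelv{\TENS{\phi_\ell}{\phi_r}}{}}{\epsilon}$ has the form $(\PAIR{V_\ell}{V_r},\Tens{v_\ell}{v_r})$ with $(V_j,v_j)\in\Srel{\Trelv{\phi_j}{}}{\epsilon}$, so in particular $v_\ell,v_r$ are coalgebraic and $\Matapp t{(\Tens{v_\ell}{v_r})}=\Matapp{t'}{(\Matapp{\Proj i}{(\Tens{v_\ell}{v_r})})}=\Matapp{t'}{v_i}$ by the characterization of $\Proj i$ on coalgebraic tensors (Lemma~\ref{lemma:sem-values}). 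On the syntactic side, since $\PAIR{V_\ell}{V_r}$ is a value, $\LAPP T{\PAIR{V_\ell}{V_r}}\Rel\Wred\LAPP{T'}{\PR i{\PAIR{V_\ell}{V_r}}}\Rel\Wred\LAPP{T'}{V_i}$, two probability-$1$ steps. Because $(V_i,v_i)\in\Srel{\Trelv{\phi_i}{}}{\epsilon}$ and $(T',t')$ is a test, $(\LAPP{T'}{V_i},\Matapp{t'}{v_i})\in\cR(\ONE)$; as probability-$1$ steps preserve the convergence probability to $\ONELEM$ (Lemma~\ref{lemma:prob-conv-indep-det}), this gives $(\LAPP T{\PAIR{V_\ell}{V_r}},\Matapp t{(\Tens{v_\ell}{v_r})})\in\cR(\ONE)$, which is exactly what the $\LIMPL{}{}$ clause demands.

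Then I would apply the hypothesis $(M,u)\in\Srel{\Crel{\Trelv{\TENS{\phi_\ell}{\phi_r}}{}}}\epsilon$ to the test $(T,t)$, obtaining $(\LAPP TM,\Matapp tu)\in\cR(\ONE)$; note $\Matapp tu=\Matapp{t'}{(\Matapp{\Proj i}u)}$ is already the intended semantic value, so the semantic side matches on the nose. It remains to compare the two closed terms $\LAPP TM$ and $\LAPP{T'}{\PR iM}$ of type $\ONE$ operationally. Here I would exhibit a probability-preserving bijection between their reduction paths to $\ONELEM$, exactly as in the proofs of Lemmas~\ref{lemma:rel-app-closeness} and~\ref{lemma:tens-crel}: both terms first drive $M$ in argument position to a value $\PAIR{V_\ell}{V_r}$ with identical step probabilities (via the application-argument rule, resp.~the same rule under $\PR i{(\cdot)}$), after which $\LAPP TM$ performs one extra probability-$1$ $\beta$-step $\LAPP T{\PAIR{V_\ell}{V_r}}\Rel\Wred\LAPP{T'}{\PR i{\PAIR{V_\ell}{V_r}}}$ and the two trajectories coincide. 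Hence $\Redmats^\infty_{\LAPP TM,\ONELEM}=\Redmats^\infty_{\LAPP{T'}{\PR iM},\ONELEM}$, so $(\LAPP{T'}{\PR iM},\Matapp{t'}{(\Matapp{\Proj i}u)})\in\cR(\ONE)$; since $(T',t')$ was arbitrary, this is precisely the membership $(\PR iM,\Matapp{\Proj i}u)\in\Srel{\Crel{\Trelv{\phi_i}{}}}\epsilon=\Srel{\Trel{\phi_i}{}}\epsilon$.

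The only genuinely delicate point is the path bijection in the last step: one must match, step by step, the reductions of $M$ in the two evaluation contexts and account for the extra $\beta$-step and the projection step. This is routine bookkeeping of the very same kind already carried out in the earlier closeness lemmas, so I expect no real difficulty; the rest is a direct unfolding of the biorthogonality definition of $\Crel{(-)}$ together with the coalgebraic characterization of $\Proj i$.
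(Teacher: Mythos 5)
Your proof is correct and follows essentially the same route as the paper's: you wrap the test $(T',t')$ for $\phi_i$ into a test for $\TENS{\phi_\ell}{\phi_r}$ via $\ABST y{\TENS{\phi_\ell}{\phi_r}}{\LAPP{T'}{\PR iy}}$ paired with $t'\Compl\Proj i$, verify it on values using Lemma~\ref{lemma:sem-values} and Lemma~\ref{lemma:prob-conv-indep-det}, apply the hypothesis, and conclude by the probability-preserving bijection on reduction paths. This is precisely the paper's argument (with $S$, $s$ renamed to $T$, $t$), including the final path-matching bookkeeping.
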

\begin{proof}
  Let
  $(T,t)\in\Srel{(\Limpl{\Trelv{\phi_i}{}}{\Trel\ONE{}})}{-\epsilon}$,
  we have to prove that
  $(\LAPP T{\PR iM},\Matapp t{(\Matapp{\Proj i}u)})\in\Trel\ONE{}$.
  Let $S=\ABST x{\TENS{\phi_\ell}{\phi_r}}{\LAPP T{\PR ix}}$ and
  $s\in\Pcoh{\Tsem{\LIMPL{\TENS{\phi_\ell}{\phi_r}}\ONE}}$ be defined
  by $\Matapp s{u_0}=\Matapp t{(\Matapp{\Proj i}{u_0})}$ for all
  $u_0\in\Pcoh{\Tsem{\TENS{\phi_\ell}{\phi_r}}}$. Let
  $(W,w)\in\Srel{\Trelv{\TENS{\phi_\ell}{\phi_r}}{}}\epsilon$, which
  means that $W=\PAIR{V_\ell}{V_r}$ and $w=\Tens{v_\ell}{v_r}$ with
  $(V_j,v_j)\in\Srel{\Trelv{\phi_j}{}}\epsilon$ for
  $j\in\{\ell,r\}$. We have $\LAPP SW\Rel\Wred\LAPP T{V_i}$ and
  $\Matapp sw=\Matapp t{v_i}$ and we know that
  $(\LAPP T{V_i},\Matapp t{v_i})\in\Trel\ONE{}$ from which it follows
  by Lemma~\ref{lemma:prob-conv-indep-det} that
  $(\LAPP SW,\Matapp sw)\in\Trel{\ONE}{}$. So we have proven that
  $(S,s)\in\Srel{(\Limpl{\Trelv{\TENS{\phi_\ell}{\phi_r}}{}}
    {\Trel\ONE{}})}{-\epsilon}$
  and hence $(\LAPP SM,\Matapp su)\in\Trel{\ONE}{}$. We have
  $\Matapp su=\Matapp t{(\Matapp{\Proj i}u)}$. Moreover we have a
  probability preserving bijection between the reduction paths from
  $\LAPP T{\PR iM}$ to $\ONELEM$ and the reduction paths from
  $\LAPP SM$ to $\ONELEM$, and hence we have
  $(\LAPP T{\PR iM},\Matapp t{(\Matapp{\Proj i}u)})\in\Trel\ONE{}$ as
  contended.

  Indeed, any reduction path $\pi$ from $\LAPP T{\PR iM}$ to $\ONELEM$ has
  shape $\pi=\lambda\rho$ where $\lambda$ is a reduction path $\LAPP T{\PR
    iM}=\LAPP T{\PR i{M_\ell}}\Rel{\Redone{p_1}}\LAPP T{\PR
    i{M_r}}\Rel{\Redone{p_2}}\cdots\Rel{\Redone{p_k}}\LAPP T{\PR
    i{W}}\Rel{\Redone 1}\LAPP T{V_i}$ (with $W=\PAIR{V_\ell}{V_r}$) and $\rho$ is
  a reduction path from $\LAPP T{V_i}$ to $\ONELEM$. The first steps $\lambda$
  of this reduction are determined by the reduction path
  $M=M_\ell\Rel{\Redone{p_1}}\cdots\Rel{\Redone{p_k}}W$ from $M$ to the value
  $W$. This reduction path determines uniquely the reduction path $\LAPP
  SM=\LAPP S{M_\ell}\Rel{\Redone{p_1}}\cdots\Rel{\Redone{p_k}}\LAPP SW\Rel{\Redone
    1}\LAPP T{\PR iW}\Rel{\Redone 1}\LAPP T{V_i}$ followed by the reduction
  $\rho$ from $\LAPP T{V_i}$ to $\ONELEM$ by $\rho$.
\end{proof}

\begin{lemma}\label{lemma:plus-crel}
  Let $\phi_\ell$ and $\phi_r$ be closed positive types and let
  $(M,u)\in\Srel{\Trel{\phi_i}{}}{\epsilon}$ for $i=\ell$ or $i=r$. Then $(\IN
  i{M},\Matapp{\Inj i}u)\in\Srel{\Trel{\PLUS{\phi_\ell}{\phi_r}}{}}{\epsilon}$.
\end{lemma}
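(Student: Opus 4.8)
The plan is to follow exactly the pattern of Lemma~\ref{lemma:proj-crel}, replacing the projection by the injection. Recall that $\Trel{\PLUS{\phi_\ell}{\phi_r}}{}=\Crel{\PLUS{\Trelv{\phi_\ell}{}}{\Trelv{\phi_r}{}}}$, so by the definition of the closure operation $\Crel{(-)}$ it suffices to fix a test $(T,t)\in\Srel{(\LIMPL{\PLUS{\Trelv{\phi_\ell}{}}{\Trelv{\phi_r}{}}}{\cR(\ONE)})}{-\epsilon}$ and to show that $(\LAPP T{\IN iM},\Matapp t{(\Matapp{\Inj i}u)})\in\cR(\ONE)$.

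First I would turn the test $(T,t)$ into a test adapted to the smaller type $\phi_i$: set $S=\ABST x{\phi_i}{\LAPP T{\IN ix}}$, so that $\TSEQ{}{S}{\LIMPL{\phi_i}\ONE}$, and let $s\in\Pcoh{\Tsem{\LIMPL{\phi_i}\ONE}}$ be the linear morphism characterised by $\Matapp s{u_0}=\Matapp t{(\Matapp{\Inj i}{u_0})}$. This $s$ is a genuine morphism of $\PCOH$ because $\Inj i$ and $t$ are, using symmetric monoidal closedness, exactly as in Lemma~\ref{lemma:proj-crel}.

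The key step is to verify $(S,s)\in\Srel{(\LIMPL{\Trelv{\phi_i}{}}{\cR(\ONE)})}{-\epsilon}$. To that end take $(V,v)\in\Srel{\Trelv{\phi_i}{}}{\epsilon}$; then $(\IN iV,\Matapp{\Inj i}v)\in\Srel{\PLUS{\Trelv{\phi_\ell}{}}{\Trelv{\phi_r}{}}}{\epsilon}$ directly by the definition of the value relation for $\PLUS{}{}$, whence $(\LAPP T{\IN iV},\Matapp t{(\Matapp{\Inj i}v)})\in\cR(\ONE)$ because $(T,t)$ is a test. Since $\LAPP SV\Rel\Wred\LAPP T{\IN iV}$ and $\Matapp sv=\Matapp t{(\Matapp{\Inj i}v)}$, Lemma~\ref{lemma:prob-conv-indep-det} yields $(\LAPP SV,\Matapp sv)\in\cR(\ONE)$, which is the required membership. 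Feeding this test $(S,s)$ into the hypothesis $(M,u)\in\Srel{\Trel{\phi_i}{}}{\epsilon}=\Srel{\Crel{\Trelv{\phi_i}{}}}{\epsilon}$ then gives $(\LAPP SM,\Matapp su)\in\cR(\ONE)$, and $\Matapp su=\Matapp t{(\Matapp{\Inj i}u)}$ by construction.

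The last step, which is the only place where real care is needed, is to transport this from $\LAPP SM$ to $\LAPP T{\IN iM}$, i.e.\ to show $\Redmats^\infty_{\LAPP SM,\ONELEM}=\Redmats^\infty_{\LAPP T{\IN iM},\ONELEM}$. As in the projection lemma I would exhibit a probability-preserving bijection between the reduction paths to $\ONELEM$ of the two terms: since $S$ is an abstraction (hence weak-normal) and our strategy reduces arguments first, any path out of $\LAPP SM$ begins by reducing the argument $M$ to a value $V$ through some $M=M_1\Rel{\Redone{p_1}}\cdots\Rel{\Redone{p_k}}V$, reaching $\LAPP SV$, and then takes the probability-$1$ step $\LAPP SV\Rel\Wred\LAPP T{\IN iV}$; on the other side $\IN iM$ reduces by the rule $\IN i{M_j}\Rel{\Redone{p_j}}\IN i{M_{j+1}}$ through the very same steps to the value $\IN iV$, after which $\LAPP T{\IN iV}$ continues identically. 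The only mismatch is the single extra probability-$1$ $\Wred$-step, so the paths are in probability-preserving bijection. This gives $(\LAPP T{\IN iM},\Matapp t{(\Matapp{\Inj i}u)})\in\cR(\ONE)$ and completes the proof. The main (mild) obstacle is precisely checking that this reduction-path bijection is exhaustive and probability-preserving; everything else is a direct transcription of Lemma~\ref{lemma:proj-crel}.
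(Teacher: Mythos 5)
Your proof is correct and follows essentially the same route as the paper's: define the adapted test $S=\ABST x{\phi_i}{\LAPP T{\IN ix}}$ with matching matrix $s$, verify it is a test for $\phi_i$ using the value clause of the sum relation and Lemma~\ref{lemma:prob-conv-indep-det}, feed it to the hypothesis on $(M,u)$, and transfer back via the probability-preserving bijection on reduction paths. You in fact supply details the paper elides (the explicit characterisation $\Matapp s{u_0}=\Matapp t{(\Matapp{\Inj i}{u_0})}$ and the path bijection), so nothing further is needed.
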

\begin{proof}
  Let
  $(T,t)\in\Srel{(\Limpl{\Trelv{\PLUS{\phi_\ell}{\phi_r}}{}}
    {\Trel{\ONE}{}})}{-\epsilon}$,
  we must prove that
  $(\LAPP T{\IN i{M}},\Matapp t{(\Matapp{\Inj i}u)})\in\Trel{\ONE}{}$.
  Let $S=\ABST x{\phi_i}{\LAPP t{\IN i(x)}}$ and let
  $s\in\Pcoh{\Tsem{\LIMPL{\phi_i}{\ONE}}}$. It is clear that
  $(S,s)\in\Srel{(\Limpl{\Trelv{\phi_i}{}}{\Trel{\ONE}{}}{})}{-\epsilon}$
  and it follows that $(\LAPP S{M},\Matapp su)\in\Trel{\ONE}{}$ which
  implies
  $(\LAPP T{\IN i{M}},\Matapp t{(\Matapp{\Inj i}u)})\in\Trel{\ONE}{}$
  by the usual bijective and probability preserving bijection on
  reductions.
\end{proof}

The next lemma uses notations introduced in Section~\ref{sec:sem-additives}.
\begin{lemma}\label{lemma:case-crel}
  Let $\phi_\ell$ and $\phi_r$ be closed positive type and $\sigma$ be a closed
  type. Let $(M,u)\in\Srel{\Trel{\PLUS{\phi_\ell}{\phi_r}}{}}\epsilon$. For
  $i\in\{\ell,r\}$, let $R_i$ be a term such that
  $\TSEQ{y_i:\phi_i}{R_i}{\sigma}$ and assume that
  $(\ABST{x_i}{\phi_i}{R_i},r_i)
  \in\Srel{\Trel{\LIMPL{\phi_i}{\sigma}}{}}{-\epsilon}$. Then
  $(\CASE{M}{y_\ell}{R_\ell}{y_r}{R_r},
  \Matapp{\Case(r_\ell,r_r)}u)\in\Srel{\Trel{\sigma}{}}{-\epsilon}$.
\end{lemma}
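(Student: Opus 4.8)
The plan is to follow the pattern of Lemma~\ref{lemma:rel-app-closeness}, exploiting that $M$ lives in the biorthogonal closure $\Srel{\Crel{\Trelv{\PLUS{\phi_\ell}{\phi_r}}{}}}\epsilon$ and feeding it a single cleverly chosen test that absorbs the whole continuation of the $\mathsf{case}$. First I would write $\sigma=\LIMPL{\psi_1}{\LIMPL\cdots{\LIMPL{\psi_n}\theta}}$ with $\List\psi 1n,\theta$ closed positive types. To establish membership in $\Srel{\Trel\sigma{}}{-\epsilon}$, I fix arbitrary $(V_j,v_j)\in\Srel{\Trelv{\psi_j}{}}{\epsilon}$ for $j=1,\dots,n$ and $(T,t)\in\Srel{(\LIMPL{\Trelv\theta{}}{\Trel\ONE{}})}{\epsilon}$, so that unfolding the $\LIMPL$ and $\Crel{}$ clauses reduces the goal to
\[
(\LAPP T{(\LAPP{(\CASE{M}{y_\ell}{R_\ell}{y_r}{R_r})}{V_1\cdots V_n})},\ t\,((\Matapp{\Case(r_\ell,r_r)}u)\, v_1\cdots v_n))\in\Trel\ONE{}.
\]

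The core construction is the abstraction $S=\ABST x{\PLUS{\phi_\ell}{\phi_r}}{\LAPP T{(\LAPP{(\CASE x{y_\ell}{R_\ell}{y_r}{R_r})}{V_1\cdots V_n})}}$ together with the linear morphism $s\in\Pcoh{\Tsem{\LIMPL{\PLUS{\phi_\ell}{\phi_r}}\ONE}}$ determined by $\Matapp s{u_0}=t\,((\Matapp{\Case(r_\ell,r_r)}{u_0})\, v_1\cdots v_n)$; that $s$ is a genuine $\PCOH$-morphism follows, as in the earlier lemmas, from symmetric monoidal closedness together with the linearity of $\Case(r_\ell,r_r)$, of application to $v_1,\dots,v_n$, and of $t$. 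I would then show $(S,s)\in\Srel{(\LIMPL{\Trelv{\PLUS{\phi_\ell}{\phi_r}}{}}{\Trel\ONE{}})}{-\epsilon}$. For this, take a value $(W,w)\in\Srel{\Trelv{\PLUS{\phi_\ell}{\phi_r}}{}}{\epsilon}$, which by the definition of $\PLUS$ on value relations has the form $W=\IN iV$, $w=\Matapp{\Inj i}v$ with $(V,v)\in\Srel{\Trelv{\phi_i}{}}{\epsilon}$ for some $i\in\{\ell,r\}$. A short deterministic reduction gives $\LAPP SW\Rel\Wred\LAPP T{(\LAPP{(\CASE{\IN iV}{y_\ell}{R_\ell}{y_r}{R_r})}{V_1\cdots V_n})}\Rel\Wred\LAPP T{(\LAPP{(\Subst{R_i}V{y_i})}{V_1\cdots V_n})}$, while semantically $\Matapp sw=t\,((\Matapp{r_i}v)\, v_1\cdots v_n)$ because $\Case(r_\ell,r_r)\Compl\Inj i=r_i$.

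Now I exploit the hypothesis on the branches (reading its abstracted variable as the case binder $y_i$): from $(\ABST{y_i}{\phi_i}{R_i},r_i)\in\Srel{\Trel{\LIMPL{\phi_i}\sigma}{}}{-\epsilon}$ and $(V,v)\in\Srel{\Trelv{\phi_i}{}}{\epsilon}$ I get $(\LAPP{(\ABST{y_i}{\phi_i}{R_i})}V,\Matapp{r_i}v)\in\Srel{\Trel\sigma{}}{-\epsilon}$, and unfolding this $\sigma$-membership against the already-fixed $(V_j,v_j)$ and $(T,t)$ yields $(\LAPP T{(\LAPP{(\LAPP{(\ABST{y_i}{\phi_i}{R_i})}V)}{V_1\cdots V_n})},t\,((\Matapp{r_i}v)\, v_1\cdots v_n))\in\Trel\ONE{}$. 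Since $\LAPP{(\ABST{y_i}{\phi_i}{R_i})}V\Rel\Wred\Subst{R_i}V{y_i}$ is a probability-$1$ step inside an evaluation context, Lemma~\ref{lemma:prob-conv-indep-det} (applied along the matching probability-preserving bijection of reduction paths to $\ONELEM$) transports this membership to $\LAPP T{(\LAPP{(\Subst{R_i}V{y_i})}{V_1\cdots V_n})}$, and then, again by Lemma~\ref{lemma:prob-conv-indep-det} along the two leading $\Wred$-steps out of $\LAPP SW$, to $(\LAPP SW,\Matapp sw)$. Hence $(S,s)$ lies in the required orthogonal relation.

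Finally, since $(M,u)\in\Srel{\Trel{\PLUS{\phi_\ell}{\phi_r}}{}}\epsilon=\Srel{\Crel{\Trelv{\PLUS{\phi_\ell}{\phi_r}}{}}}\epsilon$, the definition of $\Crel{}$ applied to the test $(S,s)$ gives $(\LAPP SM,\Matapp su)\in\Trel\ONE{}$. It then remains to observe that $\Matapp su=t\,((\Matapp{\Case(r_\ell,r_r)}u)\, v_1\cdots v_n)$ and to set up a probability-preserving bijection between the reduction paths to $\ONELEM$ of $\LAPP SM$ and of $\LAPP T{(\LAPP{(\CASE{M}{y_\ell}{R_\ell}{y_r}{R_r})}{V_1\cdots V_n})}$: by the chosen reduction strategy both first reduce the scrutinee $M$ to a value (the argument position in $\LAPP S{\,\cdot\,}$, resp.\ the head of the innermost application) with identical probabilistic prefixes, after which both deterministically reach the common term $\LAPP T{(\LAPP{(\Subst{R_i}V{y_i})}{V_1\cdots V_n})}$. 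This yields the displayed goal. I expect the main obstacle to be the bookkeeping rather than any single hard idea: keeping the polarities straight (the branches and the conclusion sit at $-\epsilon$, the scrutinee and the supplied values at $\epsilon$) and checking that every syntactic commutation of the $\mathsf{case}$ past its continuation is matched by a probability-preserving bijection of reduction paths, so that no probability mass is lost.
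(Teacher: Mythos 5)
Your proof is correct and follows essentially the same route as the paper's: decompose $\sigma$ into a chain of positive argument types, absorb the continuation into a test $S$ with matching linear morphism $s$, verify $(S,s)$ on values $\IN iV$ using the branch hypothesis, and transfer via the biorthogonal closure on $(M,u)$ plus a probability-preserving bijection of reduction paths. If anything, your polarity bookkeeping (placing $(S,s)$ at $-\epsilon$ and the test values $(W,w)$ at $\epsilon$) is more consistent with the definition of $\Crel{(\cdot)}$ than the paper's own annotations.
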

\begin{proof}
  We can write $\sigma=\LIMPL{\psi_1}{\LIMPL\cdots{\LIMPL{\psi_k}{\psi}}}$
  where $\psi$ and the $\psi_j$'s are closed and positive types. Given
  $(W_j,w_j)\in\Srel{\Trelv{\psi_j}{}}{\epsilon}$ for $j=1,\dots,k$, we have to
  prove that
  \begin{align}\label{eq:adeq-case-test1}
    (\LAPP{\CASE{M}{y_\ell}{R_\ell}{y_r}{R_r}}{\Vect W},
    \Matapp{\Case(r_\ell,r_r)}u\Appsep\Vect w)
    \in\Srel{\Trelv{\psi}{}}{-\epsilon}
  \end{align}
  so let $(T,t)\in\Srel{(\Limpl{\Trelv{\psi}{}}{\Trel\ONE{}})}{\epsilon}$, our
  goal is to prove that
  \begin{align}\label{eq:adeq-case-test2}
    (\LAPP T{\LAPP{\CASE{M}{y_\ell}{R_\ell}{y_r} {R_r}}{\Vect W}},
    \Matapp t{(\Matapp{\Case(r_\ell,r_r)}u\Appsep\Vect w)})\in\Trel\ONE{}\,.
  \end{align}
  Let
  $S=\ABST x{\PLUS{\phi_\ell}{\phi_r}}{\LAPP
    T{\LAPP{\CASE{x}{y_\ell}{R_\ell}{y_r} {R_r}}{\Vect W}}}$
  and $s\in\Pcoh{\Tsem{\LIMPL{\PLUS{\phi_\ell}{\phi_r}}{\ONE}}}$ be
  defined by
  $\Matapp s{u_0}=\Matapp
  t{(\Matapp{\Case(r_\ell,r_r)}{u_0}\Appsep\Vect w)}$
  for each $u_0\in\Pcoh{\Tsem{\PLUS{\phi_\ell}{\phi_r}}}$. Then we
  have
  $(S,s)\in\Srel{(\Limpl{\Trelv{\PLUS{\phi_\ell}{\phi_r}}{}}
    {\Trel\ONE{}})}{\epsilon}$.
  Let indeed $i\in\{\ell,r\}$ and let
  $(V,v)\in\Srel{\Trelv{\phi_i}{}}{-\epsilon}$ so that
  $(\IN iV,\Matapp{\Inj
    i}v)\in\Srel{\Trelv{\PLUS{\phi_\ell}{\phi_r}}{}}{-\epsilon}$.
  We have
  $\LAPP S{\IN iV}\Rel{\Transcl{\Wred}}\LAPP{T}
  {\LAPP{\Subst{R_i}{V}{y_i}}{\Vect W}}$
  and
  $\Matapp s{(\Matapp{\Inj i}v)}=\Matapp t{(\Matapp{r_i}v\Appsep\Vect
    w)}$
  and, by our assumptions and Lemma~\ref{lemma:prob-conv-indep-det},
  $(\Subst{R_i}{V}{y_i},\Matapp{r_i}v)\in\Srel{\Trel{\sigma}{}}{-\epsilon}$
  and hence
  $(\LAPP{\Subst{R_i}{V}{y_i}}{\Vect W},\Matapp{r_i}v\Appsep\Vect
  w)\in\Srel{\Trelv{\psi}{}}{-\epsilon}$.
  By Lemma~\ref{lemma:prob-conv-indep-det} it follows that
  $(\LAPP S{\IN iV},\Matapp s{(\Matapp{\Inj i}v)})\in\Trel\ONE{}$ and hence
  $(S,s)\in\Srel{(\Limpl{\Trelv{\PLUS{\phi_\ell}{\phi_r}}{}}
    {\Trel\ONE{}})}{\epsilon}$ as contended.

  Therefore $(\LAPP SM,\Matapp su)\in\Trel\ONE{}$. There is a bijective and
  probability preserving correspondence between the reductions from
  $\LAPP SM$ to $\ONELEM$ and the reductions from
  $\LAPP T{\LAPP{\CASE{M}{x_\ell}{\LAPP{R_\ell}{x_\ell}}{x_r}
      {\LAPP{R_r}{x_r}}}{\Vect W}}$
  to $\ONELEM$: as usual such reductions start with a reduction
  $M=M_1\Rel{\Redone{p_1}}M_2\Rel{\Redone{p_2}}\cdots\Rel{\Redone{p_k}}M_k=\IN
  iV$
  where $i\in\{\ell,r\}$ and $V$ is a value of type $\phi_i$ and
  (after a few $\Wred$-steps) continue with a reduction from
  $\LAPP{T}{\LAPP{R_i}{V\,\Vect W}}$ to
  $\ONELEM$. Therefore~\Eqref{eq:adeq-case-test2} holds and hence we
  have~\Eqref{eq:adeq-case-test1}, this ends the proof of the lemma.
\end{proof}

\begin{lemma}\label{lemma:der-crel}
  Let $\sigma$ be a closed type and let
  $(M,u)\in\Srel{\Trel{\EXCL\sigma}{}}\epsilon$. We have
  $(\GO M,\Matapp{\Der{}}u)\in\Srel{\Trel\sigma{}}\epsilon$.
\end{lemma}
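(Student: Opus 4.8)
The plan is to follow the pattern of Lemma~\ref{lemma:rel-app-closeness}, exploiting that $\Trel{\EXCL\sigma}{}=\Crel{\EXCL{\Trel\sigma{}}}$ and $\Trelv{\EXCL\sigma}{}=\EXCL{\Trel\sigma{}}$ while unfolding the general type $\sigma$ into its linear-implication skeleton. First I would write $\sigma=\LIMPL{\phi_1}{\LIMPL\cdots{\LIMPL{\phi_n}\psi}}$ with $\List\phi1n,\psi$ closed and positive (the case $n=0$, i.e. $\sigma=\FORG\psi$, being included), so that $\Trel\sigma{}=\LIMPL{\Trelv{\phi_1}{}}{\LIMPL\cdots{\LIMPL{\Trelv{\phi_n}{}}{\Crel{\Trelv\psi{}}}}}$. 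Unfolding the implication relations, proving $(\GO M,\Matapp{\Der{}}u)\in\Srel{\Trel\sigma{}}\epsilon$ reduces to fixing arbitrary $(V_i,v_i)\in\Srel{\Trelv{\phi_i}{}}{-\epsilon}$ for $i=1,\dots,n$ and a test $(T,t)\in\Srel{(\LIMPL{\Trelv\psi{}}{\cR(\ONE)})}{-\epsilon}$, and then checking that $(\LAPP T{(\LAPP{\GO M}{V_1\cdots V_n})},\Matapp t{(\Matapp{\Der{}}u\Appsep v_1\cdots v_n)})\in\cR(\ONE)$.

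Second, I would introduce the auxiliary term $S=\ABST x{\EXCL\sigma}{\LAPP T{(\LAPP{\GO x}{V_1\cdots V_n})}}$, of type $\LIMPL{\EXCL\sigma}\ONE$, together with the morphism $s\in\Pcoh{\Tsem{\LIMPL{\EXCL\sigma}\ONE}}$ defined by $\Matapp s{u'}=\Matapp t{(\Matapp{\Der{}}{u'}\Appsep v_1\cdots v_n)}$; this $s$ is a genuine morphism since it is the composite of $\Der{\Tsem\sigma}$, the linear application to the fixed vectors $v_1,\dots,v_n$, and $t$, using monoidal closedness exactly as in the previous lemmas. The key sublemma is that $(S,s)\in\Srel{(\LIMPL{\EXCL{\Trel\sigma{}}}{\cR(\ONE)})}{-\epsilon}$: given any $(\STOP N,\Prom w)\in\Srel{\EXCL{\Trel\sigma{}}}{\epsilon}$, i.e. any $(N,w)\in\Srel{\Trel\sigma{}}\epsilon$, the term $\LAPP S{\STOP N}$ reduces by probability-one steps to $\LAPP T{(\LAPP N{V_1\cdots V_n})}$ (first the $\beta$-step since $\STOP N$ is a value, then $\GO{\STOP N}\Rel\Wred N$ under the evaluation context), while $\Matapp s{\Prom w}=\Matapp t{(w\Appsep v_1\cdots v_n)}$ because $\Matapp{\Der{}}{\Prom w}=w$ (the counit identity $\Fun{\Der{\Tsem\sigma}}(w)=w$ from the description of the exponential comonad). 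Applying the $n$ implications of $\Trel\sigma{}$ to the $(V_i,v_i)$ and then the test $(T,t)$ gives $(\LAPP T{(\LAPP N{V_1\cdots V_n})},\Matapp t{(w\Appsep v_1\cdots v_n)})\in\cR(\ONE)$, and Lemma~\ref{lemma:prob-conv-indep-det} transports this back along the probability-one reduction to $(\LAPP S{\STOP N},\Matapp s{\Prom w})\in\cR(\ONE)$, establishing the sublemma.

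Finally, since $(M,u)\in\Srel{\Crel{\EXCL{\Trel\sigma{}}}}\epsilon$ and $(S,s)$ lies in the test relation just produced, the definition of $\Crel{}$ yields $(\LAPP SM,\Matapp su)\in\cR(\ONE)$, where $\Matapp su=\Matapp t{(\Matapp{\Der{}}u\Appsep v_1\cdots v_n)}$ is precisely the required coefficient. It then remains to identify $\Redmats^\infty_{\LAPP SM,\ONELEM}$ with $\Redmats^\infty_{\LAPP T{(\LAPP{\GO M}{V_1\cdots V_n})},\ONELEM}$; I expect this reduction-path bookkeeping to be the main obstacle, precisely because $M$ need not be a value, so neither term can be $\beta$-reduced outright. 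The argument is the one used in Lemma~\ref{lemma:rel-app-closeness}: every reduction to $\ONELEM$ of either term is forced to reduce $M$ first to a value, necessarily of the form $\STOP N$ at type $\EXCL\sigma$, the probabilistic prefix $M=M_1\Rel{\Redone{p_1}}\cdots\Rel{\Redone{p_k}}\STOP N$ being shared, after which $\LAPP S{\STOP N}$ and $\LAPP T{(\LAPP{\GO{\STOP N}}{V_1\cdots V_n})}$ both reach the common term $\LAPP T{(\LAPP N{V_1\cdots V_n})}$ by probability-one steps; this yields a probability-preserving bijection between the two sets of reduction paths and hence equality of the two convergence probabilities. Combining this equality with $(\LAPP SM,\Matapp su)\in\cR(\ONE)$ closes the goal, and as $(V_i,v_i)$ and $(T,t)$ were arbitrary we conclude $(\GO M,\Matapp{\Der{}}u)\in\Srel{\Trel\sigma{}}\epsilon$.
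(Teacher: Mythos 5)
Your proposal is correct and follows essentially the same route as the paper's proof: introduce $S=\ABST x{\EXCL\sigma}{\LAPP T{\LAPP{\GO x}{\Vect V}}}$ with the matching morphism $s$, show $(S,s)$ lies in the test relation for $\EXCL\sigma$ using that values of that type are promotions $\STOP N$ with $\Matapp{\Der{}}{\Prom w}=w$, and conclude via the closure $\Crel{}$ plus the probability-preserving bijection on reduction paths. You merely spell out the details that the paper delegates to "as in the proof of Lemma~\ref{lemma:case-crel}", and those details are right.
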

\begin{proof}
  We can write $\sigma=\LIMPL{\psi_1}{\LIMPL\cdots{\LIMPL{\psi_k}{\psi}}}$
  where $\psi$ and the $\psi_j$'s are closed and positive types. Given
  $(W_j,w_j)\in\Srel{\Trelv{\psi_i}{}}{-\epsilon}$, we have to prove that
  \begin{align}\label{eq:adeq-der-test1}
    (\LAPP{\GO M}{\Vect W},
  \Matapp{\Der{}}u\Appsep\Vect w)\in\Srel{\Trelv{\psi}{}}{\epsilon}
  \end{align}
  so let $(T,t)\in\Srel{(\Limpl{\Trelv{\psi}{}}{\Trel\ONE{}})}{-\epsilon}$, our
  goal is to prove that
  \begin{align}\label{eq:adeq-der-test2}
    (\LAPP T{\LAPP{\GO M}{\Vect W}},
    \Matapp t{(\Matapp{\Der{}}u\Appsep\Vect w)})\in\Trel\ONE{}\,.
  \end{align}
  We set $S=\ABST x{\EXCL\sigma}{\LAPP T{\LAPP{\GO x}{\Vect W}}}$ and we define
  $s\in\Pcoh{\Tsem{\LIMPL{\EXCL\sigma}{\ONE}}}$ by
  $\Matapp s{u_0}=\Matapp t{(\Matapp{\Der{}}{u_0}\Appsep\Vect w)}$
  for all $u_0\in\Pcoh{\Tsem{\EXCL\sigma}}$, and we prove that
  $(S,s)\in\Srel{(\Limpl{\Trelv{\EXCL\sigma}{}}{\Trel\ONE{}})}{-\epsilon}$ as in
  the proof of Lemme~\ref{lemma:case-crel} (for instance). We finish the
  proof in the same way, showing~\Eqref{eq:adeq-der-test2} by establishing a
  bijective and probability preserving correspondence between reductions. Our
  contention~\Eqref{eq:adeq-der-test1} follows.
\end{proof}

\begin{lemma}\label{lemma:unfold-crel}
  Let $\phi$ be a closed positive type of shape $\phi=\TREC\zeta\psi$. If
  $(M,u)\in\Srel{\Trel{\phi}{}}\epsilon$ then $(\UNFOLD
  M,u)\in\Srel{\Trel{\Subst\psi\phi\zeta}{}}\epsilon$.
\end{lemma}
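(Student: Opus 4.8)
The plan is to unfold both sides through the biorthogonality-closure operation and to reduce everything to a single value computation. Since $\Subst\psi\phi\zeta$ is again a closed positive type, we have $\Trel{\Subst\psi\phi\zeta}{}=\Crel{\Trelv{\Subst\psi\phi\zeta}{}}$, and likewise $\Trel\phi{}=\Crel{\Trelv\phi{}}$. Two facts drive the argument. First, denotationally $\FOLD{}$ and $\UNFOLD{}$ are identities and $\Tsem\phi=\Tsem{\Subst\psi\phi\zeta}$ (the two types are interpreted by the same object), so the semantic component $u$ is literally unchanged and a morphism $t\in\Pcoh{\Tsem{\LIMPL{\Subst\psi\phi\zeta}\ONE}}$ may be reused verbatim as an element of $\Pcoh{\Tsem{\LIMPL\phi\ONE}}$. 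Second — and this is the structural heart — the fixpoint equation of Lemma~\ref{lemma:bi-rel-monotone}, instantiated in the closed case and combined with the substitution identity $\Trelv\psi\zeta(\Trelv\phi{})=\Trelv{\Subst\psi\phi\zeta}{}$, yields $\Trelv\phi{}=\FOLD{\Trelv{\Subst\psi\phi\zeta}{}}$. Concretely this means that every pair $(V,v)\in\Srel{\Trelv\phi{}}\epsilon$ has the form $V=\FOLD{V'}$ with $(V',v)\in\Srel{\Trelv{\Subst\psi\phi\zeta}{}}\epsilon$.

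With these in hand I would argue exactly as in Lemmas~\ref{lemma:der-crel} and~\ref{lemma:plus-crel}. Let $(M,u)\in\Srel{\Trel\phi{}}\epsilon$. To prove $(\UNFOLD M,u)\in\Srel{\Crel{\Trelv{\Subst\psi\phi\zeta}{}}}\epsilon$, take an arbitrary test $(T,t)\in\Srel{(\LIMPL{\Trelv{\Subst\psi\phi\zeta}{}}{\cR(\ONE)})}{-\epsilon}$; the goal is $(\LAPP T{\UNFOLD M},\Matapp tu)\in\cR(\ONE)$. Set $S=\ABST x\phi{\LAPP T{\UNFOLD x}}$, which is typed $\TSEQ{}{S}{\LIMPL\phi\ONE}$, and let $s=t$ viewed in $\Pcoh{\Tsem{\LIMPL\phi\ONE}}$. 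I claim $(S,s)\in\Srel{(\LIMPL{\Trelv\phi{}}{\cR(\ONE)})}{-\epsilon}$. Indeed, let $(V,v)\in\Srel{\Trelv\phi{}}\epsilon$; by the displayed unfolding, $V=\FOLD{V'}$ with $(V',v)\in\Srel{\Trelv{\Subst\psi\phi\zeta}{}}\epsilon$, and since $\FOLD{V'}$ is a value,
\[
\LAPP S{\FOLD{V'}}\Rel\Wred\LAPP T{\UNFOLD{\FOLD{V'}}}\Rel{\Redone 1}\LAPP T{V'}
\]
is a deterministic (probability-$1$) reduction. By the hypothesis on $(T,t)$ we have $(\LAPP T{V'},\Matapp tv)\in\cR(\ONE)$, and a probability-preserving bijection between reduction paths to $\ONELEM$ (as in the preceding closeness lemmas, the prefix being deterministic) gives $\Redmats^\infty_{\LAPP S{\FOLD{V'}},\ONELEM}=\Redmats^\infty_{\LAPP T{V'},\ONELEM}$; since $\Matapp sv=\Matapp tv$ this yields $(\LAPP S{\FOLD{V'}},\Matapp sv)\in\cR(\ONE)$, proving the claim.

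Now apply the defining property of $\Crel{\Trelv\phi{}}$ to $(M,u)\in\Srel{\Crel{\Trelv\phi{}}}\epsilon$ and the test $(S,s)$: we obtain $(\LAPP SM,\Matapp su)\in\cR(\ONE)$ with $\Matapp su=\Matapp tu$. It remains to transfer convergence from $\LAPP SM$ to $\LAPP T{\UNFOLD M}$. Both terms reduce by first reducing the argument $M$ to a value $V=\FOLD{V'}$ with identical probabilities; after $M$ has been consumed, $\LAPP SM$ fires the $\beta$-redex to $\LAPP T{\UNFOLD{\FOLD{V'}}}$, which is exactly the term reached by $\LAPP T{\UNFOLD M}$ along $\UNFOLD M\to^*\UNFOLD{\FOLD{V'}}$, so from this common term the two computations coincide (up to one probability-$1$ step, irrelevant to convergence). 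Hence $\Redmats^\infty_{\LAPP SM,\ONELEM}=\Redmats^\infty_{\LAPP T{\UNFOLD M},\ONELEM}$ and therefore $(\LAPP T{\UNFOLD M},\Matapp tu)\in\cR(\ONE)$, as required. The only genuinely delicate point is the structural identity $\Trelv\phi{}=\FOLD{\Trelv{\Subst\psi\phi\zeta}{}}$: it rests on Lemma~\ref{lemma:bi-rel-monotone} together with the substitution property for the parameterized relations, and it is precisely here that the Tarski-fixpoint definition of $\Trelv{\TREC\zeta\psi}{}$ pays off; everything downstream is the routine test-term manipulation already used in the preceding lemmas.
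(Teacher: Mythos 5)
Your proof is correct and follows essentially the same route as the paper's: the same test term $S=\ABST x\phi{\LAPP T{\UNFOLD x}}$, the same key fact that values of type $\TREC\zeta\psi$ are of the form $\FOLD W$ with $(W,v)$ in the unfolded value relation (via the Tarski fixpoint equation), and the same probability-preserving correspondence on reduction paths. You merely spell out more explicitly the substitution identity and the deterministic-prefix bookkeeping that the paper leaves as ``the usual reasoning.''
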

\begin{proof}
  Let
  $(T,t)\in\Srel{(\Limpl{\Trelv{\Subst\psi\phi\zeta}{}}{\Trel\ONE{}})}
  {-\epsilon}$,
  we must prove that $(\LAPP T{\UNFOLD M},u)\in\Trel{\ONE}{}$. As
  usual one defines $S=\ABST x{\phi}{\LAPP T{\UNFOLD x}}$ and one
  proves that
  $(S,t)\in\Srel{(\Limpl{\Trelv{\phi}{}}{\Trel\ONE{}})}
  {-\epsilon}$.
  This results from the fact that if
  $(V,v)\in\Srel{\Trelv{\phi}{}}{\epsilon}$ then $V=\FOLD W$ with
  $(W,v)\in\Srel{\Trelv{\Subst\psi\phi\zeta}{}}{\epsilon}$, from
  Lemma~\ref{lemma:prob-conv-indep-det} and from the fact that
  $\UNFOLD{\FOLD W}\Rel\Wred W$ (and of course from our assumption on
  $(T,t)$). It follows that $(\LAPP SM,\Matapp tu)\in\Trel{\ONE}{}$
  from which we deduce $(\LAPP T{\UNFOLD M},u)\in\Trel{\ONE}{}$ by the
  usual reasoning involving a bijective probability preserving
  correspondence on reductions.
\end{proof}

\begin{lemma}\label{lemma:fold-crel}
  Let $\phi$ be a closed positive type of shape $\phi=\TREC\zeta\psi$. If
  $(M,u)\in\Srel{\Trel{\Subst\psi\phi\zeta}{}}\epsilon$ then $(\FOLD
  M,u)\in\Srel{\Trel{\phi}{}}\epsilon$.
\end{lemma}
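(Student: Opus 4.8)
The plan is to treat this as the exact $\mathsf{fold}$-dual of Lemma~\ref{lemma:unfold-crel}, reusing the same template. Since both $\phi=\TREC\zeta\psi$ and its unfolding $\Subst\psi\phi\zeta$ are positive types, I would first record that $\Trel\phi{}=\Crel{\Trelv\phi{}}$ and $\Trel{\Subst\psi\phi\zeta}{}=\Crel{\Trelv{\Subst\psi\phi\zeta}{}}$, so that the statement reduces to a test-closure property. Concretely, given an arbitrary test $(T,t)\in\Srel{(\Limpl{\Trelv\phi{}}{\Trel\ONE{}})}{-\epsilon}$, the goal is to show $(\LAPP T{\FOLD M},\Matapp tu)\in\Trel\ONE{}$, and I would reduce this to the hypothesis on $(M,u)$ by manufacturing a test for the unfolded type: set $S=\ABST x{\Subst\psi\phi\zeta}{\LAPP T{\FOLD x}}$, of type $\LIMPL{\Subst\psi\phi\zeta}\ONE$.

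The semantic companion $s$ of $S$ can simply be taken to be $t$ itself: this is legitimate because $\Tsem{\Subst\psi\phi\zeta}=\Tsem\phi$ in the model and $\FOLD$ (like $\UNFOLD$) is interpreted as the identity, so $t$ already lives in $\Pcoh{\Tsem{\LIMPL{\Subst\psi\phi\zeta}\ONE}}$. The crux is then to verify $(S,s)\in\Srel{(\Limpl{\Trelv{\Subst\psi\phi\zeta}{}}{\Trel\ONE{}})}{-\epsilon}$. Here I would use the fixpoint identity from Lemma~\ref{lemma:bi-rel-monotone}, namely $\Trelv\phi{}=\FOLD{\Trelv{\Subst\psi\phi\zeta}{}}$, read in the direction opposite to the one used in Lemma~\ref{lemma:unfold-crel}: for any $(W,w)\in\Srel{\Trelv{\Subst\psi\phi\zeta}{}}\epsilon$ we obtain $(\FOLD W,w)\in\Srel{\Trelv\phi{}}\epsilon$. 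Applying the assumption on $(T,t)$ yields $(\LAPP T{\FOLD W},\Matapp tw)\in\Trel\ONE{}$, and since $\LAPP SW\Rel\Wred\LAPP T{\FOLD W}$ is a single deterministic step with $\Matapp sw=\Matapp tw$, Lemma~\ref{lemma:prob-conv-indep-det} transfers this to $(\LAPP SW,\Matapp sw)\in\Trel\ONE{}$.

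With $(S,s)$ established as a valid test, the hypothesis $(M,u)\in\Srel{\Crel{\Trelv{\Subst\psi\phi\zeta}{}}}\epsilon$ gives $(\LAPP SM,\Matapp su)\in\Trel\ONE{}$, and $\Matapp su=\Matapp tu$. The last step, as in all the preceding closeness lemmas, is to match the convergence probabilities of $\LAPP SM$ and $\LAPP T{\FOLD M}$ by a probability-preserving bijection between their reduction paths: both first reduce $M$ to a value $V$ (the reduction strategy evaluates the argument first), after which $\LAPP SV\Rel\Wred\LAPP T{\FOLD V}$ merges the two computations, using that $\FOLD V$ is itself a value so the strategy indeed passes through $\LAPP T{\FOLD V}$. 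I expect this reduction-path bookkeeping to be the only delicate point; everything else is a mechanical dualization of Lemma~\ref{lemma:unfold-crel}, the one genuine input being the correct orientation of the Tarski fixpoint equation of Lemma~\ref{lemma:bi-rel-monotone}.
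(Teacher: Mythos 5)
Your proof is correct and follows essentially the same route as the paper's: the same test term $S=\ABST x{\Subst\psi\phi\zeta}{\LAPP T{\FOLD x}}$ with semantic companion $t$, the same use of the fixpoint identity of Lemma~\ref{lemma:bi-rel-monotone} to turn $(W,w)\in\Srel{\Trelv{\Subst\psi\phi\zeta}{}}\epsilon$ into $(\FOLD W,w)\in\Srel{\Trelv\phi{}}\epsilon$, and the same concluding probability-preserving bijection on reduction paths. Your write-up is in fact slightly more explicit than the paper's (which compresses these steps into ``the usual reasoning''), but there is no substantive difference.
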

\begin{proof}
  Let
  $(T,t)\in\Srel{(\Limpl{\Trelv{\phi}{}}{\Trel\ONE{}})} {-\epsilon}$,
  we must prove that $(\LAPP T{\FOLD M},u)\in\Trel{\ONE}{}$. As usual
  one defines $S=\ABST x{\Subst\psi\phi\zeta}{\LAPP T{\FOLD x}}$ and
  one proves that
  $(S,t)\in\Srel{(\Limpl{\Trelv{\Subst\psi\phi\zeta}{}}{\Trel\ONE{}})}
  {-\epsilon}$.
  This results easily from the fact that, if
  $(V,v)\in\Srel{\Trelv{\Subst\psi\phi\zeta}{}}{\epsilon}$ then
  $(\FOLD V,v)\in\Srel{\Trelv{\phi}{}}{\epsilon}$, from
  Lemma~\ref{lemma:prob-conv-indep-det} and from our assumption about
  $(T,t)$. Therefore we have $(\LAPP SM,\Matapp tu)\in\Trel{\ONE}{})$
  from which we deduce $(\LAPP T{\FOLD M},u)\in\Trel{\ONE}{}$ by the
  usual reasoning.
\end{proof}

\begin{lemma}\label{lemma:Trel-sem-closeness}
  Let $\sigma$ be a closed type and let $M$ be a closed term of type
  $\sigma$. Then $(M,0)\in\Srel{\Trel{\sigma}{}}{\epsilon}$ and, if
  $D\subseteq\Pcoh{\Tsem\sigma}$ is directed and satisfies $\forall u\in D\
  (M,u)\in\Srel{\Trel{\sigma}{}}{\epsilon}$ then $(M,\sup
  D)\in\Srel{\Trel{\sigma}{}}{\epsilon}$. Last, if
  $(M,u)\in\Srel{\Trel{\sigma}{}}{\epsilon}$ and $u'\leq u$ then
  $(M,u')\in\Srel{\Trel{\sigma}{}}{\epsilon}$.
\end{lemma}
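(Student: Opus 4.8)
The plan is to package the three assertions into a single \emph{admissibility} property and to prove it for $\Trel\sigma{}$ by structural induction on the general type $\sigma$. I would call a pair $\cR\in\Rels\sigma$ \emph{admissible} if, for every closed term $M$ of type $\sigma$ and every $\epsilon\in\{\POS,\NEG\}$, the set $U^\epsilon_\cR(M)=\{u\in\Pcoh{\Tsem\sigma}\St(M,u)\in\Srel\cR\epsilon\}$ contains $0$, is downward closed, and is closed under directed least upper bounds (these sups exist and are computed coordinatewise, a direct consequence of $\Biorth{\Pcoh{\Tsem\sigma}}=\Pcoh{\Tsem\sigma}$; cf.\ Theorem~\ref{th:Pcoh-prop}). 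The lemma is then exactly the assertion that $\Trel\sigma{}$ is admissible. Since a closed general type is either $\FORG\phi$ for a closed positive $\phi$, or of shape $\LIMPL\phi\tau$, there will be only two cases; the key point is that recursive types occur only inside positive types, which are always wrapped by the closure operator $\Crel{(\cdot)}$, so the Tarski fixpoint of~\eqref{eq:bi-rel-fixpoint-def} never needs to be analysed directly.

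In the base case $\sigma=\FORG\phi$ we have $\Trel\phi{}=\Crel{\Trelv\phi{}}$. Fixing $M$ and $\epsilon$ and unfolding the definition of $\Crel{(\cdot)}$, the set to analyse is
\begin{align*}
  U^\epsilon_{\Trel\phi{}}(M)=\bigcap_{(T,t)}\{u\in\Pcoh{\Tsem\phi}\St\Eval ut\leq\Redmats^\infty_{\LAPP TM,\ONELEM}\}\,,
\end{align*}
the intersection ranging over all $(T,t)\in\Srel{(\LIMPL{\Trelv\phi{}}{\cR(\ONE)})}{-\epsilon}$ (recall that $\Matapp tu=\Eval ut\in[0,1]$ here). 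For each test the bound $\Redmats^\infty_{\LAPP TM,\ONELEM}$ is a fixed nonnegative scalar, and the functional $u\mapsto\Eval ut$ vanishes at $0$, is monotone, and is Scott-continuous, since for directed $D$ with coordinatewise sup $w$ one has $\Eval wt=\sum_a t_a\sup_{u\in D}u_a=\sup_{u\in D}\Eval ut$ by monotone convergence for directed families of nonnegative reals. Each half-space is therefore admissible, and an arbitrary intersection of admissible sets is admissible, so $\Crel{\Trelv\phi{}}$ is admissible \emph{irrespective of $\Trelv\phi{}$} — this is precisely what neutralises recursive types.

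In the inductive case $\sigma=\LIMPL\phi\tau$ we have $\Trel\sigma{}=\LIMPL{\Trelv\phi{}}{\Trel\tau{}}$, and $\Trel\tau{}$ is admissible by the induction hypothesis. Fixing $M$ and $\epsilon$ and unfolding $\LIMPL{(\cdot)}{(\cdot)}$, the set to analyse is
\begin{align*}
  U^\epsilon_{\Trel\sigma{}}(M)=\bigcap_{(V,v)}\{u\in\Pcoh{\Tsem\sigma}\St\Matapp uv\in U^\epsilon_{\Trel\tau{}}(\LAPP MV)\}\,,
\end{align*}
the intersection ranging over $(V,v)\in\Srel{\Trelv\phi{}}{-\epsilon}$. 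For each fixed $(V,v)$ the map $u\mapsto\Matapp uv$ sends $0$ to $0$, is monotone, and is Scott-continuous (again by interchanging the directed sup with the nonnegative sum $(\Matapp uv)_b=\sum_a u_{a,b}v_a$), hence maps directed sets to directed sets. Pulling the admissible set $U^\epsilon_{\Trel\tau{}}(\LAPP MV)$ back along such a map yields an admissible condition on $u$, and intersecting over all $(V,v)$ preserves admissibility; this closes the induction.

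The only genuine work lies in the two continuity checks — that $\Eval\cdot t$ and $\Matapp\cdot v$ commute with directed sups — and both reduce to interchanging a directed supremum with a countable sum of nonnegative reals, together with the fact that directed sups in a probabilistic coherence space are coordinatewise and remain in the space. I expect this to be the main (though routine) obstacle; everything else is the purely order-theoretic observation that the property ``contains $0$, downward closed, directed-sup closed'' is stable under arbitrary intersections and under pullback along monotone, $0$-preserving, Scott-continuous maps.
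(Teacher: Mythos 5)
Your proof is correct and is essentially the paper's argument: the paper flattens $\sigma$ into $\LIMPL{\phi_1}{\cdots\LIMPL{\phi_n}{\psi}}$ with $\psi$ positive and observes in one step that membership in $\Srel{\Trel\sigma{}}{\epsilon}$ is cut out by the scalar tests $u\mapsto\Matapp t{(u\Appsep v_1\cdots v_n)}$, each of which is strict, monotone and Scott-continuous, which is exactly the content of your admissibility predicate unrolled along the chain of $\LIMPL{}{}$'s. Your inductive packaging (pullback of admissible sets along the linear application maps, plus the observation that the $\Crel{(\cdot)}$ closure makes the positive base case independent of $\Trelv\phi{}$ and hence of the Tarski fixpoint) is just a cleaner bookkeeping of the same idea and relies on the same two continuity checks.
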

\begin{proof}
  We can write
  $\sigma=\LIMPL{\phi_1}{\LIMPL\cdots{\LIMPL{\phi_n}\psi}}$ for some $n$ and
  $\List\phi 1n,\psi$ positive and closed. Let us prove the second
  statement. For $i=1,\dots,n$, let
  $(V_i,v_i)\in\Srel{\Trelv{\phi_i}{}}{-\epsilon}$, we must prove that $(\LAPP
  M{V_1\cdots V_n},\Matapp{(\sup
  D)}{v_1}\cdots v_n)\in\Srel{\Crel{\Trelv{\psi}{}}}{\epsilon}$, knowing that
  \[\forall u\in D\quad (\LAPP M{V_1\cdots
    V_n},u\Appsep v_1\cdots v_n)\in\Srel{\Crel{\Trelv{\psi}{}}}{\epsilon}\,.
  \]
  This results from the fact that, given
  $t\in\Pcoh{\Tsem{\LIMPL\psi\ONE}}$, the map
  $u\mapsto \Matapp t{(u\Appsep v_1\cdots v_n)}$
  is Scott continuous from $\Pcoh{\Tsem\phi}$ to $[0,1]$. The first statement
  of the lemma results from the fact that this function maps $0$ to $0$. The
  last one results from the fact that this function is monotone.
\end{proof}

\subsection{Uniqueness of the relation}
With any closed type $\sigma$ we have associated a pair of relations
$\Trel\sigma{}$. We need now to prove that this pair satisfies
$\Srel{\Trel\sigma{}}\POS=\Srel{\Trel\sigma{}}\NEG$ so that we have actually
associated a unique relation with any type.

To this end we prove first that
$\Srel{\Trel\sigma{}}\POS\subseteq\Srel{\Trel\sigma{}}\NEG$. Defining, for any
pair of relations $\cR$, the relation $\Op\cR$ as $(\Srel\cR\POS,\Srel\cR\NEG)$,
this amounts to proving that ${\Trel\sigma{}}\Subrel\Op{\Trel\sigma{}}$. We use
the same notation for the elements of $\Relsv\phi$ for $\phi$ positive.

For the next lemma, we use the same notational conventions as above.
\begin{lemma}
  Let $\Vect\cV$ be a list of pairs of relations such that
  $\cV_i\in\Relsv{\phi_i}$ and $\cV_i\Subrel\Op{\cV_i}$ for each $i$. Then
  $\Trel\sigma{\Vect\zeta}(\Vect\cV)
  \Subrel\Op{\Trel\sigma{\Vect\zeta}(\Vect\cV)}$ and
  $\Trelv\phi{\Vect\zeta}(\Vect\cV)
  \Subrel\Op{\Trelv\phi{\Vect\zeta}(\Vect\cV)}$.
\end{lemma}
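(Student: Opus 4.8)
The plan is to prove, by simultaneous induction on $\sigma$ and $\phi$, a statement that is slightly more general than the lemma and that \emph{decouples} the two occurrences of $\Vect\cV$: for any two lists $\Vect\cV,\Vect\cW$ with $\cV_i,\cW_i\in\Relsv{\phi_i}$ and $\cV_i\Subrel\Op{\cW_i}$ for each $i$, one has $\Trel\sigma{\Vect\zeta}(\Vect\cV)\Subrel\Op{\Trel\sigma{\Vect\zeta}(\Vect\cW)}$ and $\Trelv\phi{\Vect\zeta}(\Vect\cV)\Subrel\Op{\Trelv\phi{\Vect\zeta}(\Vect\cW)}$. Taking $\Vect\cW=\Vect\cV$ then recovers the lemma. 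Two facts drive the induction. First, each logical operation of Figure~\ref{fig:adeq-rel-log-op}, together with $\FOLD{(\cdot)}$, \emph{commutes with} $\Op{(\cdot)}$: since each defines its $\epsilon$-component symmetrically from the $(\pm\epsilon)$-components of its arguments, one reads off from the definitions that $\Op{\EXCL\cR}=\EXCL{\Op\cR}$, $\Op{(\TENS{\cV_\ell}{\cV_r})}=\TENS{\Op{\cV_\ell}}{\Op{\cV_r}}$, $\Op{(\PLUS{\cV_\ell}{\cV_r})}=\PLUS{\Op{\cV_\ell}}{\Op{\cV_r}}$, $\Op{(\LIMPL\cV\cR)}=\LIMPL{\Op\cV}{\Op\cR}$, $\Op{\Crel\cV}=\Crel{\Op\cV}$ (using that $\cR(\ONE)$ is self-dual, $\Op{\cR(\ONE)}=\cR(\ONE)$) and $\Op{\FOLD\cW}=\FOLD{\Op\cW}$. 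Second, all these operations are monotone for $\Subrel$, as observed after Figure~\ref{fig:adeq-rel-log-op} and in Lemma~\ref{lemma:bi-rel-monotone}; and $\Op{(\cdot)}$ is an order-reversing involution.

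The non-recursive cases are then uniform. For $\sigma=\LIMPL\phi\tau$, for instance, $\Trel{\LIMPL\phi\tau}{\Vect\zeta}(\Vect\cV)=\LIMPL{\Trelv\phi{\Vect\zeta}(\Vect\cV)}{\Trel\tau{\Vect\zeta}(\Vect\cV)}$; by the induction hypothesis $\Trelv\phi{\Vect\zeta}(\Vect\cV)\Subrel\Op{\Trelv\phi{\Vect\zeta}(\Vect\cW)}$ and $\Trel\tau{\Vect\zeta}(\Vect\cV)\Subrel\Op{\Trel\tau{\Vect\zeta}(\Vect\cW)}$, so monotonicity of $\LIMPL{}{}$ followed by the commutation identity gives $\Trel{\LIMPL\phi\tau}{\Vect\zeta}(\Vect\cV)\Subrel\LIMPL{\Op{\Trelv\phi{\Vect\zeta}(\Vect\cW)}}{\Op{\Trel\tau{\Vect\zeta}(\Vect\cW)}}=\Op{\Trel{\LIMPL\phi\tau}{\Vect\zeta}(\Vect\cW)}$. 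The cases $\EXCL{}$, $\TENS{}{}$, $\PLUS{}{}$ and the closure $\Trel{\phi}{\Vect\zeta}(\Vect\cV)=\Crel{\Trelv\phi{\Vect\zeta}(\Vect\cV)}$ have the same shape. The base cases are immediate: for a variable $\zeta_j$ the claim $\cV_j\Subrel\Op{\cW_j}$ is the hypothesis, and for $\ONE$ both sides coincide.

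The only real difficulty is the recursive case $\phi=\TREC\zeta\psi$. Write $G=\Trelv\psi{\Vect\zeta,\zeta}$ and $\Phi_{\Vect\cV}(\cX)=\FOLD{G(\Vect\cV,\cX)}$, so that by \eqref{eq:bi-rel-fixpoint-def} the values $\cV^\ast=\Trelv\phi{\Vect\zeta}(\Vect\cV)$ and $\cW^\ast=\Trelv\phi{\Vect\zeta}(\Vect\cW)$ are the least pre-fixpoints of $\Phi_{\Vect\cV}$ and $\Phi_{\Vect\cW}$, and are fixpoints by Lemma~\ref{lemma:bi-rel-monotone}. I will show that $\Op{\cW^\ast}$ is a pre-fixpoint of $\Phi_{\Vect\cV}$; since $\cV^\ast$ is the least one, $\cV^\ast\Subrel\Op{\cW^\ast}$ then follows. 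Using that $\cW^\ast$ is a fixpoint and that $\FOLD{}$ commutes with $\Op$, we have $\Op{\cW^\ast}=\Op{\FOLD{G(\Vect\cW,\cW^\ast)}}=\FOLD{\Op{G(\Vect\cW,\cW^\ast)}}$, whereas $\Phi_{\Vect\cV}(\Op{\cW^\ast})=\FOLD{G(\Vect\cV,\Op{\cW^\ast})}$. By monotonicity of $\FOLD{}$ it therefore suffices to prove $G(\Vect\cV,\Op{\cW^\ast})\Subrel\Op{G(\Vect\cW,\cW^\ast)}$.

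This last inequality is \emph{exactly} the induction hypothesis for the body $\psi$ applied to the two extended lists $(\Vect\cV,\Op{\cW^\ast})$ and $(\Vect\cW,\cW^\ast)$: its hypotheses hold because $\cV_i\Subrel\Op{\cW_i}$ for $i\le n$ and, in the last slot, $\Op{\cW^\ast}\Subrel\Op{\cW^\ast}$ by reflexivity. Here is precisely where the decoupling into two lists is indispensable: the two invocations of $G$ carry \emph{different} last arguments ($\Op{\cW^\ast}$ and $\cW^\ast$), so the single-list form of the statement could not be applied. I expect this recursive step to be the main obstacle, for the structural reason that $\Op$ reverses $\Subrel$ and hence maps least fixpoints to greatest ones; the two-list strengthening of the induction hypothesis is exactly what lets Tarski's theorem, monotonicity of $\FOLD{}$, and the hypothesis at the strictly smaller type $\psi$ combine without ever needing the (generally false) commutation of $\Op$ with the fixpoint operation itself.
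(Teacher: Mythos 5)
Your proof is correct, and its skeleton --- induction on types, monotonicity of the logical operations for the non-recursive connectives, and a least-pre-fixpoint argument for $\TREC\zeta\psi$ --- is the same as the paper's. The genuine difference is that you strengthen the inductive statement to the decoupled two-list form $\cV_i\Subrel\Op{\cW_i}\Rightarrow\Trel\sigma{\Vect\zeta}(\Vect\cV)\Subrel\Op{\Trel\sigma{\Vect\zeta}(\Vect\cW)}$, whereas the paper states and inducts on the single-list form only. This is not cosmetic: in the recursive case the paper's chain is $\FOLD{\Trelv{\psi}{\Vect\zeta,\zeta}(\Vect\cV,\Op\cV)}\Subrel\FOLD{\Trelv{\psi}{\Vect\zeta,\zeta}(\Op{\Vect\cV},\Op\cV)}\Subrel\Op{\FOLD{\Trelv{\psi}{\Vect\zeta,\zeta}(\Vect\cV,\cV)}}=\Op\cV$, and the middle step, attributed there to ``the inductive hypothesis'', is not literally an instance of the single-list statement for $\psi$ (that would require $\Op{\cV_i}\Subrel\cV_i$ and would produce a different right-hand side); it is exactly an instance of your two-list statement, or equivalently of the unconditional semi-commutation $\Trelv{\psi}{\Vect\zeta,\zeta}(\Op{\Vect\cA})\Subrel\Op{\Trelv{\psi}{\Vect\zeta,\zeta}(\Vect\cA)}$, which your form yields by taking $\cA_i=\Op{\cB_i}$ and which conversely recovers your form via monotonicity. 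So your decoupling buys a proof in which every appeal to the induction hypothesis is an instance of the statement actually being proved, at the small cost of carrying two lists; the paper's version is shorter but leaves this strengthening implicit. Your supporting observations --- that each operation of Figure~\ref{fig:adeq-rel-log-op} and $\FOLD{(\cdot)}$ commutes with $\Op{(\cdot)}$ (reading $\cR(\ONE)$ as the self-dual constant pair), and that $\Op{(\cdot)}$ reverses $\Subrel$ and hence cannot be pushed through the fixpoint operation itself --- are accurate and are precisely the facts the paper relies on without spelling them out.
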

\begin{proof}
  The proof is by induction on types. All cases result straightforwardly from
  the monotony of the logical operations on pairs of relations, but the case of
  fixpoints of types. So assume that $\phi=\TREC\zeta\psi$, let
  $\cV=\Trelv\phi{\Vect\zeta}(\Vect\cV)$ and let us prove that $\cV
  \Subrel\Op\cV$. For this, because of the definition of $\cV$ as a
  glb~\Eqref{eq:bi-rel-fixpoint-def}, it suffices to show that
  \begin{align*}
    \FOLD{\Trelv{\psi}{\Vect\zeta,\zeta}(\Vect\cV,\Op\cV)}\Subrel\Op\cV\,.
  \end{align*}
  By the first statement of Lemma~\ref{lemma:bi-rel-monotone} and our
  assumption on the $\cV_i$'s we have
  \[
  \FOLD{\Trelv{\psi}{\Vect\zeta,\zeta}({\Vect\cV},\Op\cV)}
  \Subrel\FOLD{\Trelv{\psi}{\Vect\zeta,\zeta}(\Op{\Vect\cV},\Op\cV)}\,.
  \]
  By inductive hypothesis we have
  $\FOLD{\Trelv{\psi}{\Vect\zeta,\zeta}(\Op{\Vect\cV},\Op\cV)}
  \Subrel\Op{\FOLD{\Trelv{\psi}{\Vect\zeta,\zeta}({\Vect\cV},\cV)}}=\Op\cV$
  since $\cV=\FOLD{\Trelv{\psi}{\Vect\zeta,\zeta}({\Vect\cV},\cV)}$ by
  Lemma~\ref{lemma:bi-rel-monotone}.
\end{proof}

We are left with proving the converse property, namely that
$\Op{\Trel\sigma{}}\Subrel\Trel\sigma{}$ for each closed type $\sigma$. This
requires a bit more work, and is based on a notion of ``finite'' approximation
of elements of the model, that we define by syntactic means as follows. 

\subsubsection{Restriction operators}

We define\footnote{This definition as well as our
  reasoning below features some similarities with \emph{step-indexing} that we
  would like to understand better.}
closed terms $\Restr n\sigma$ and $\Restrv n\phi$ (for $n\in\Nat$, $\sigma$ a
type and $\phi$ a positive type) typed as follows:
$\TSEQ{}{\Restr n\sigma}{\LIMPL{\EXCL\sigma}{\sigma}}$ and
$\TSEQ{}{\Restrv n\phi}{\LIMPL{\phi}{\phi}}$.
\begin{align*}
  \Restr n\phi &= \ABST x{\Excl\phi}{\LAPP{\Restrv n\phi}{\GO x}} \\
  \Restr n{\LIMPL\phi\sigma} &= \ABST f{\EXCL{(\LIMPL\phi\sigma)}}
  {\ABST{x}{\phi}{\LAPP{\Restr{n}{\sigma}}
      {\STOP{(\LAPP{\GO f}{\LAPP{\Restrv n\phi}{x}})}}}}\\
  \Restrv n\ONE &= \ABST x\ONE x\\
  \Restrv{n}{\EXCL\sigma}
  &= \ABST{x}{\EXCL\sigma}{\STOP{(\LAPP{\Restr n\sigma}{x})}}\\
  \Restrv{n}{\TENS{\phi_\ell}{\phi_r}} &= \ABST
  x{\TENS{\phi_\ell}{\phi_r}}{\PAIR{\LAPP{\Restrv n{\phi_\ell}}{\PR
        \ell x}}{\LAPP{\Restrv n{\phi_r}}{\PR r x}}}\\
  \Restrv n{\PLUS{\phi_\ell}{\phi_r}}
  &= \ABST x{\PLUS{\phi_\ell}{\phi_r}}{\CASE{x}{x_\ell}
    {\IN \ell {\LAPP{\Restrv n{\phi_\ell}}{x_\ell}}}{x_r}
    {\IN r {\LAPP{\Restrv n{\phi_r}}{x_r}}}}\\
  \Restrv 0{\TREC\zeta\phi}
  &= \ABST x{\TREC\zeta\phi}{\LOOP{\TREC\zeta\phi}}\\
  \Restrv{n+1}{\TREC\zeta\phi}
  &= \ABST{x}{\TREC\zeta\phi}{\FOLD{\LAPP{\Restrv
        n{\Subst\phi{\TREC\zeta\phi}\zeta}}{\UNFOLD x}}}
\end{align*}
This is a well-founded lexicographic inductive definition on triples
$(n,\sigma,l)$ (where $l\in\{\Vsymb,\Gsymb\}$) if we order the symbols
$\Vsymb$ and $\Gsymb$ by $\Vsymb<\Gsymb$. We consider actually only
triples $(n,\sigma,l)$ such that $\sigma$ is positive if $l=\Vsymb$.

We describe similarly the interpretation of these terms: we give an
explicit description of the matrices $\Psem{\Restr n\sigma}{}$ and
$\Psem{\Restrv n\phi}{}$. To this end, we define a family of sets
$\Srestrp n\sigma l\subseteq\Web{\Tsem\sigma}$ 
by induction on $(n,\sigma,l)$ (where $n\in\Nat$,
$l\in\{\Vsymb,\Gsymb\}$ and $\sigma$ is a closed type which is
positive if $l=\Vsymb$).  

\begin{itemize}
\item $\Srestrv n{\Excl\sigma}=\Mfin{\Srestr n\sigma}$
\item $\Srestrv n{\TENS{\phi_\ell}{\phi_r}}=\Srestrv n{\phi_\ell}\times\Srestrv
  n{\phi_r}$
\item $\Srestrv n{\PLUS{\phi_\ell}{\phi_r}}=
  \{\ell\}\times\Srestrv n{\phi_\ell}\cup\{r\}\times\Srestrv n{\phi_r}$
\item $\Srestrv 0{\TREC\zeta\psi}=\emptyset$
\item
  $\Srestrv{n+1}{\TREC\zeta\psi}=\Srestrv{n}{\Subst\phi{\TREC\zeta\psi}\zeta}$
\item $\Srestr n\phi=\Srestrv n\phi$
\item $\Srestr n{\LIMPL\phi\sigma}=\Srestrv n\phi\times\Srestr n\sigma$.
\end{itemize}

\begin{lemma}\label{lemma:restr-expression}
  Let $n\in\Nat$, $\phi$ be a closed positive type and $\sigma$ be a closed
  type. One has
  \begin{align*}
    \Psem{\Restrv n\phi}{}_{(a,b)}=
    \begin{cases}
      1 & \text{if }a=b\in\Srestrv n\phi\\
      0 & \text{otherwise.}
    \end{cases}
    \quad
    \Psem{\Restr n\sigma}{}_{(c,b)}=
    \begin{cases}
      1 & \text{if }c=\Mset b\text{ and }b\in\Srestr n\sigma\\
      0 & \text{otherwise.}
    \end{cases}
  \end{align*}
\end{lemma}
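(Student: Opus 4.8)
The plan is to prove both formulas simultaneously by well-founded induction on the same lexicographic triple $(n,\sigma,l)$ that underlies the definition of $\Restr n\sigma$ and $\Restrv n\phi$ (first $n$, then the type by the subterm order, then $l$ with $\Vsymb<\Gsymb$); the defining equations were already observed to be well-founded for this order, and each recursive call on a right-hand side strictly decreases the triple. The content of the statement is that $\Psem{\Restrv n\phi}{}$ is the matrix of the partial identity on $\Srestrv n\phi\subseteq\Web{\Tsem\phi}$, whose associated linear map sends $u$ to its restriction $u|_{\Srestrv n\phi}$ (coordinates outside $\Srestrv n\phi$ set to $0$), and that $\Psem{\Restr n\sigma}{}$ is the matrix whose associated Kleisli function sends $v\in\Pcoh{\Tsem\sigma}$ to $v|_{\Srestr n\sigma}$. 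In no case would I compute matrix coefficients directly; instead I would evaluate the interpreting morphism, through the rules of Section~\ref{subsec:sem-type-term}, on \emph{coalgebraic} arguments only, and then recover the full matrix by a rigidity principle.

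For the operators $\Restrv n\phi$ of type $\LIMPL\phi\phi$ I would use that the interpretation of an abstraction is completely determined by its action on $\PcohEM{\Tsemca\phi}$ (Theorem~\ref{th:pos-types-dense}), so it suffices to evaluate the body on a coalgebraic argument. By Lemma~\ref{lemma:sem-values} the coalgebraic elements of $\EXCL\sigma$, $\TENS{\phi_\ell}{\phi_r}$, $\PLUS{\phi_\ell}{\phi_r}$ are respectively $\Prom v$, $\Tens{u_\ell}{u_r}$, $\Inj i(u)$; using the interpretation of $\GO{}$, of $\PR i$ (the evident interpretation $\Psem{\PR iM}=\Projt i\Compl\Psem M$ via the tensor projections $\Projt i$ of Lemma~\ref{lemma:sem-values}), of $\CASE$ (whose interpretation uses the left inverses $\Proj i$ of $\Inj i$, with $\Proj\ell\Compl\Inj r=0$ etc.), of $\STOP{}$ and of $\IN i$, each body reduces to restriction operators at strictly smaller triples, to which the induction hypothesis applies. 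For $\EXCL\sigma$ one gets $\Psem{\Restrv n{\EXCL\sigma}}{}\Compl\Prom v=\Prom{(v|_{\Srestr n\sigma})}$, and a short computation with $\Prom{}$ shows this equals $(\Prom v)|_{\Mfin{\Srestr n\sigma}}$, i.e.\ the restriction to $\Srestrv n{\EXCL\sigma}=\Mfin{\Srestr n\sigma}$. Since the claimed partial-identity matrix induces exactly these same values on all coalgebraic arguments, the two morphisms agree on $\PcohEM{\Tsemca\phi}$, hence coincide by density of $\Tsemca\phi$ (Theorems~\ref{th:coalg-dense-colsed} and~\ref{th:pos-types-dense}). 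The base case $\phi=\ONE$, where $\Srestrv n\ONE=\Web{\Tsem\ONE}=\{\Onelem\}$, is the identity, and $\Restrv0{\TREC\zeta\psi}$ has interpretation $0$ since $\Psem{\LOOP{}}{}=0$, matching $\Srestrv0{\TREC\zeta\psi}=\emptyset$.

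For the operators $\Restr n\sigma$ of type $\LIMPL{\EXCL\sigma}\sigma$ I would instead compute the Kleisli function $\Fun{\Psem{\Restr n\sigma}{}}$ and invoke Theorem~\ref{th:pcoh-functional}, which guarantees that a morphism out of $\EXCL\sigma$ is determined by that function. The case $\Restr n\phi=\ABST x{\EXCL\phi}{\LAPP{\Restrv n\phi}{\GO x}}$ gives $\Fun{\Psem{\Restr n\phi}{}}(v)=\Matapp{\Psem{\Restrv n\phi}{}}v=v|_{\Srestrv n\phi}$ by the inductive hypothesis, exactly the function of the claimed matrix. The case $\Restr n{\LIMPL\phi\sigma}$ is the most involved: evaluating the inner abstraction on a coalgebraic $v\in\PcohEM{\Tsemca\phi}$ yields, using $\GO{}$, $\STOP{}$ and the hypotheses for $\Restrv n\phi$ and $\Restr n\sigma$, the value $(g\Compl(v|_{\Srestrv n\phi}))|_{\Srestr n\sigma}$ where $g\in\PCOH(\Tsem\phi,\Tsem\sigma)$ is the argument; one checks this is precisely the action of $g|_{\Srestrv n\phi\times\Srestr n\sigma}$ on $v$, and concludes by density of $\Tsemca\phi$ and then Theorem~\ref{th:pcoh-functional}. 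Finally, for $\Restrv{n+1}{\TREC\zeta\psi}$ I would use that $\FOLD{}$ and $\UNFOLD{}$ are interpreted as identities and that $\Tsem{\TREC\zeta\psi}=\Tsem{\Subst\psi{\TREC\zeta\psi}\zeta}$ is literally the same object, so the body reduces to $\Restrv n{\Subst\psi{\TREC\zeta\psi}\zeta}$ and the hypothesis (with strictly smaller $n$) gives restriction to $\Srestrv n{\Subst\psi{\TREC\zeta\psi}\zeta}=\Srestrv{n+1}{\TREC\zeta\psi}$.

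The routine part is the bookkeeping with the interpretation rules; the step that deserves care is the passage from ``the operator acts as the expected restriction on every coalgebraic element'' to ``its matrix is the expected partial identity''. This is where density is essential: for the $\Restrv$ operators it is well-pointedness of $\Tsemca\phi$ (Theorem~\ref{th:pos-types-dense}), and for the $\Restr$ operators it is the functional rigidity of Theorem~\ref{th:pcoh-functional}. The single delicate type-theoretic point is the recursive case, where one must exploit that a fixpoint type and its unfolding denote the \emph{same} coalgebra, so that no nontrivial transport is hidden inside $\FOLD{}/\UNFOLD{}$.
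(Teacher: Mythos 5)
Your proposal is correct and follows essentially the same route as the paper: the paper's proof likewise reduces both matrix identities to the action of the interpreting morphisms on coalgebraic elements (via Theorem~\ref{th:pos-types-dense}) and on promoted elements (via Theorem~\ref{th:pcoh-functional}), and then disposes of the resulting statements by induction on the defining triples. You have merely written out in full the induction that the paper declares ``easily proved'', and your treatment of the delicate points (the identity $\Prom{(v|_{\Srestr n\sigma})}=(\Prom v)|_{\Mfin{\Srestr n\sigma}}$, the rewriting of $(g\Compl(v|_{\Srestrv n\phi}))|_{\Srestr n\sigma}$ as the action of the restricted matrix, and the vacuity of $\FOLD{}/\UNFOLD{}$ on recursive types) is accurate.
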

\begin{proof}
  By Theorem~\ref{th:pos-types-dense}, for a closed positive type $\phi$ and for
  $u\in\PcohEM{\Tsemca\phi}$, it suffices to prove that
  \begin{align*}
    \Psem{\Restrv n\phi}(u)_a=
    \begin{cases}
      u_a & \text{if }a\in\Srestrv n\phi\\
      0 & \text{otherwise}
    \end{cases}
  \end{align*}
  And for a closed type $\sigma$ and for $u\in\Pcoh{\Tsem\sigma}$, it suffices
  to prove
  \begin{align*}
    \Psem{\Restr n\sigma}(\Prom u)_a=
    \begin{cases}
      u_a & \text{if }a\in\Srestr n\sigma\\
      0 & \text{otherwise}
    \end{cases}
  \end{align*}
  Both statements are easily proved by induction.
\end{proof}

We need now to prove that, given $u\in\Pcoh{\Tsem{\sigma}}$, the sequence
$\Psem{\Restr n\sigma}{(\Prom u)}$ is monotone and has $u$ as lub. 

\begin{lemma}\label{lemma:restr-monotone-limit}
  For any triple $(n,\sigma,l)$ where $\sigma$ is positive if
  $l=\Vsymb$, one has
  $\Srestrp n\sigma l\subseteq\Srestrp{n+1}\sigma l$. Moreover
  \begin{align*}
    \bigcup_{n=0}^\infty\Srestrp n\sigma l=\Web{\Tsem\sigma}\,.
  \end{align*}
\end{lemma}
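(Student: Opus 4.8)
The plan is to establish the two assertions separately, both by induction; the inclusion $\Srestrp n\sigma l\subseteq\Srestrp{n+1}\sigma l$ is routine, whereas the equality of the union with the whole web carries all the difficulty in the recursive-type clause.

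\textbf{Monotonicity.} I would prove $\Srestrp n\sigma l\subseteq\Srestrp{n+1}\sigma l$ by a lexicographic induction on $(n,\sigma)$, comparing $n$ first and the syntactic size of $\sigma$ second. For $\sigma=\ONE$ both sides equal $\{\Onelem\}$. For $\sigma=\Excl\tau$, $\TENS{\phi_\ell}{\phi_r}$, $\PLUS{\phi_\ell}{\phi_r}$ and $\LIMPL\phi\tau$, the set $\Srestrp n\sigma l$ is obtained by applying a monotone set-construction ($\Mfin{-}$, product, disjoint union) to the sets attached to the immediate subtypes at the \emph{same} index $n$, so the inclusion follows from the induction hypothesis on those strictly smaller types. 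The only clause that alters $n$ is the recursive one: when $n=0$ we have $\Srestrv 0{\TREC\zeta\psi}=\emptyset$ and the inclusion is trivial, while for $n=m+1$ the required inclusion $\Srestrv{m+1}{\TREC\zeta\psi}\subseteq\Srestrv{m+2}{\TREC\zeta\psi}$ unfolds to $\Srestrv m{\Subst\psi{\TREC\zeta\psi}\zeta}\subseteq\Srestrv{m+1}{\Subst\psi{\TREC\zeta\psi}\zeta}$, which is exactly the induction hypothesis at the strictly smaller index $m$ applied to the closed type $\Subst\psi{\TREC\zeta\psi}\zeta$.

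\textbf{The union, easy direction.} The inclusion $\bigcup_{n}\Srestrp n\sigma l\subseteq\Web{\Tsem\sigma}$ holds by construction, since each $\Srestrp n\sigma l$ is defined as a subset of $\Web{\Tsem\sigma}$.

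\textbf{The union, hard direction, and the main obstacle.} For the reverse inclusion I would fix $a\in\Web{\Tsem\sigma}$ and exhibit $n$ with $a\in\Srestrp n\sigma l$, arguing by induction on the height $\Height a$ of $a$ viewed as a finite object (with $\Height\Onelem=0$ and the height of a multiset, a pair or a tagged element one more than the heights of its immediate components). For the constructors $\ONE,\Excl,\TENS,\PLUS,\LIMPL$ the element $a$ splits into finitely many components of strictly smaller height lying in the interpretations of the immediate subtypes; the induction hypothesis supplies an index for each, and taking the maximum $N$ and invoking the just-proved monotonicity places all components in the level-$N$ sets at once, so that $a\in\Srestrp N\sigma l$ ($\Onelem$ and the empty multiset being covered by $n=0$). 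The genuinely delicate clause is $\sigma=\TREC\zeta\psi$: here $\Web{\Tsem{\TREC\zeta\psi}}=\Web{\Tsem{\Subst\psi{\TREC\zeta\psi}\zeta}}$ and $a$ is \emph{literally the same} element of the unfolded type, so its height does not decrease upon unfolding and the primary induction stalls. My plan is to run a secondary induction on the syntactic structure of $\psi$, descending through the head constructor of the unfolding: every branch ends either at a leaf $\ONE$, or at an occurrence of the bound variable $\zeta$, where the corresponding component of $a$ sits strictly below at least one constructor node of $a$ and hence has strictly smaller height, so the \emph{primary} hypothesis applies and yields a level placing that component in $\Srestrv{k}{\TREC\zeta\psi}$; collecting these finitely many levels, taking the maximum, and unfolding once gives $a\in\Srestrv{n+1}{\TREC\zeta\psi}$. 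The descent could expose no constructor only for an unguarded body such as $\psi=\zeta$, but then $\Tsem{\TREC\zeta\psi}$ has empty web and no such $a$ exists, so that case is vacuous. I expect this bookkeeping — certifying that repeated unfolding eventually strictly decreases the height of the relevant sub-elements while ruling out the empty-web degeneracies — to be the main work. A cleaner alternative, which I would fall back on for the recursive clause, is to identify $\Srestrv n{\TREC\zeta\psi}$ with the web of the $n$-th approximant of the directed colimit defining $\Tsemca{\TREC\zeta\psi}$ in $\Embr{\EM\PCOH}$ (so that $\Srestrv 0$ is the empty-web stage and $\Srestrv{n+1}$ corresponds to one application of the functor attached to $\psi$); the union statement then reduces to the elementary fact that the web of a directed colimit is the union of the webs of its stages.
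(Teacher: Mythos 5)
Your proposal is correct and follows essentially the same route as the paper: monotonicity by the well-founded (lexicographic) induction underlying the definition, and the hard inclusion by induction on the height $\Height a$ of the web element, with the $\TREC\zeta\psi$ case resolved by unfolding until a height-incrementing constructor is exposed (the unguarded bodies being vacuous because their webs are empty). The only cosmetic difference is that the paper strips \emph{all} $k$ leading $\mu$-binders at once and lands in $\Srestrv{n+k}{\phi}$ rather than unfolding one level at a time, which tidies the bookkeeping your secondary induction would otherwise have to track.
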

\begin{proof}
  The first statement is straightforward, by induction on $(n,\sigma,l)$.
  For the second statement, we only have to prove the right-to-left
  inclusions. We define the \emph{height} $\Height a$ of an element
  $a$ of $\Web{\Tsem\sigma}$ 
  as follows.
  \begin{itemize}
  \item $\Height\Onelem=1$
  \item $\Height{a_1,a_2}=1+\max{(\Height{a_1},\Height{a_2})}$ (this definition
    is used when $\phi$ is a tensor and when $\sigma$ is a linear implication)
  \item $\Height{i,a}=1+\Height a$
  \item $\Height{\Mset{\List a1k}}=1+\max{(\Height{a_1},\dots,\Height{a_k})}$
  \end{itemize}
  Then by induction on $\Height a$ one proves that
  \begin{align*}
    \forall a\in\Web{\Tsem\sigma}\,\exists n\in\Nat\ a\in\Srestrp n\sigma l
  \end{align*}
    We deal only with the statement relative to $\Srestrv n\phi$. The
  closed positive type $\phi$ is of shape
  \begin{align*}
    \phi=\TREC{\zeta_1}{\cdots\TREC{\zeta_k}{\psi}}
  \end{align*}
  where $\psi$ is not of shape $\TREC\zeta\rho$. We introduce auxiliary closed
  types $\List \phi 1k$ as follows:
  \begin{align*}
    \phi_1&=\phi=\TREC{\zeta_1}{\cdots\TREC{\zeta_k}{\psi}}\\
    \phi_2&=
      \TREC{\zeta_2}{\cdots\TREC{\zeta_k}{\Subst\psi{\phi_1}{\zeta_1}}}\\
    &\ \ \vdots\\
    \phi_{k+1}&=
      \Substbis\psi{\phi_1/\zeta_1,\phi_2/\zeta_2,\dots,\phi_k/\zeta_k}
  \end{align*}
  and all these types have the same interpretation in $\EM\PCOH$. The type
  $\psi$ cannot be one of the type variables $\zeta_i$ as otherwise we would
  have $\Web{\Tsem\phi}=\emptyset$, contradicting our assumption that $a$
  belongs to this set. Assume that $\psi=\EXCL\sigma$ so that we must have
  $a=\Mset{\List b1l}$ with $b_i\in\Web{\Tsem{\sigma'}}$ (where
  $\sigma'=\Substbis\sigma{\phi_1/\zeta_1,\phi_2/\zeta_2,\dots,\phi_k/\zeta_k}$)
  for each $i=1,\dots,l$. We have $\Height{b_i}<\Height a$ for each $i$ so that
  we can apply the inductive hypothesis: for each $i$ there is $n_i$ such that
  $b_i\in\Srestr{n_i}{\sigma'}$. Using the monotonicity property (first
  statement of the lemma) and setting $n=\max(\List n1l)$ we have
  $b_i\in\Srestr n{\sigma'}$ and hence $a\in\Srestrv
  n{\Excl{\sigma'}}$. Therefore
    $a\in\Srestrv {n+k}{\phi}$ (coming
  back to the definition of this set). The other
  cases are dealt with similarly.
\end{proof}

\begin{lemma}\label{lemma:restr-sup-id}
  Let $\sigma$ be a closed type and let $\phi$ be a closed positive
  type. If $u\in\Pcoh{\Tsem\sigma}$ then the sequence
  $(\Matapp{\Psem{\Restr n\sigma}{}}{\Prom u})_{n\in\Nat}$ is monotone
  (in $\Pcoh{\Tsem\sigma}$) and has $u$ as lub. If
  $u\in\Pcoh{\Tsem\phi}$ then the sequence
  $(\Matapp{\Psem{\Restrv n\phi}{}}u)_{n\in\Nat}$ is monotone and has
  $u$ as lub.
\end{lemma}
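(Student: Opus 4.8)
The plan is to reduce everything to the two preceding lemmas, observing that at the semantic level the restriction operators act by nothing more than coordinatewise truncation to the support sets $\Srestr n\sigma$ and $\Srestrv n\phi$. First I would record, straight from Lemma~\ref{lemma:restr-expression}, that for any $u\in\Pcoh{\Tsem\sigma}$ one has $(\Matapp{\Psem{\Restr n\sigma}{}}{\Prom u})_b=u_b$ when $b\in\Srestr n\sigma$ and $0$ otherwise, using the identity $(\Prom u)_{\Mset b}=u^{\Mset b}=u_b$; and symmetrically $(\Matapp{\Psem{\Restrv n\phi}{}}{u})_a=u_a$ when $a\in\Srestrv n\phi$ and $0$ otherwise. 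Thus each term of either sequence is obtained from $u$ by simply zeroing the coordinates lying outside the relevant support set. These terms automatically lie in $\Pcoh{\Tsem\sigma}$ (resp.\ $\Pcoh{\Tsem\phi}$), since $\Psem{\Restr n\sigma}{}$ is a morphism of $\PCOH$ from $\Excl{\Tsem\sigma}$ to $\Tsem\sigma$ applied to $\Prom u\in\Pcoh{\Excl{\Tsem\sigma}}$ (and likewise $\Psem{\Restrv n\phi}{}$ is a morphism applied to $u$).

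Monotonicity then follows immediately from the first statement of Lemma~\ref{lemma:restr-monotone-limit}: since $\Srestrp n\sigma l\subseteq\Srestrp{n+1}\sigma l$, enlarging the support set can only switch on additional coordinates, so the truncations increase in the pointwise order, which is precisely the order of $\Pcoh{\Tsem\sigma}$. For the supremum I would argue coordinatewise. Fixing a web element $b$, the second statement of Lemma~\ref{lemma:restr-monotone-limit} gives $\bigcup_{n}\Srestrp n\sigma l=\Web{\Tsem\sigma}$, so there is some $n_0$ with $b\in\Srestrp{n_0}\sigma l$, and hence the $b$-coordinate of the truncation equals $u_b$ for every $n\geq n_0$. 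Therefore the coordinatewise supremum of the sequence is exactly $u$. Because the order on $\Pcoh{\Tsem\sigma}$ is the pointwise order and $\Pcoh{\Tsem\sigma}$ is closed under lubs of monotone sequences (Theorem~\ref{th:Pcoh-prop}), this coordinatewise supremum is the lub computed in $\Pcoh{\Tsem\sigma}$, and it equals $u$. The two assertions are settled by the very same computation, instantiating $l=\Gsymb$ for the general type $\sigma$ and $l=\Vsymb$ for the positive type $\phi$.

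The genuine content of the statement has already been discharged in Lemmas~\ref{lemma:restr-expression} and~\ref{lemma:restr-monotone-limit}, so what remains is essentially bookkeeping. The only point that deserves a moment's care, and which I regard as the mild obstacle, is the justification that the domain-theoretic lub coincides with the naive coordinatewise supremum: one must know both that suprema of monotone sequences in $\Pcoh X$ are taken pointwise and that the intermediate truncations genuinely remain inside $\Pcoh X$. Both are guaranteed by the closure properties of Theorem~\ref{th:Pcoh-prop} (the pointwise supremum of a monotone sequence in $\Pcoh X$ lies in $\Pcoh X$, and is then trivially least among pointwise upper bounds), together with the fact that each truncation is the value of a $\PCOH$-morphism on an element of the appropriate space. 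Once these are in place, the proof is a direct combination of the two earlier lemmas.
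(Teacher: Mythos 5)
Your proof is correct and follows exactly the route the paper takes: the paper dispatches this lemma as an immediate consequence of Lemmas~\ref{lemma:restr-expression} and~\ref{lemma:restr-monotone-limit}, and your write-up simply makes explicit the coordinatewise-truncation reading of those lemmas, the pointwise order, and the closure of $\Pcoh{\Tsem\sigma}$ under lubs of monotone sequences. Nothing is missing.
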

\begin{proof}
  Immediate consequence of Lemmas~\ref{lemma:restr-expression}
  and~\ref{lemma:restr-monotone-limit}.
\end{proof}

\subsubsection{Main Inclusion Lemma}

Now we are in position of proving the key lemma in the proof of the uniqueness
of relations.

\begin{lemma}\label{lemma:restr-minus-sub-plus}
  Let $\sigma$ be a closed type, $n\in\Nat$ and
  $l\in\{\Vsymb,\Gsymb\}$. If $l=\Gsymb$ and
  $(M,u)\in\Nrel{\Trel\sigma{}}$ then
  $(M,\Matapp{\Psem{\Restr n\sigma}{}}{\Prom
    u})\in\Prel{\Trel\sigma{}}$.
  If $l=\Vsymb$ and $\sigma$ is a closed positive type $\phi$ then
  $(V,v)\in\Nrel{\Trelv{\phi}{}}\Implies(V,\Matapp{\Psem{\Restrv
      n\phi}{}}v)\in\Prel{(\Crel{\Trelv{\phi}{}})}=\Prel{\Trel{\phi}{}}$.
\end{lemma}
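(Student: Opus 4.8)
The plan is to prove the two statements simultaneously by well-founded induction on the triple $(n,\sigma,l)$, following exactly the lexicographic order used to define $\Restr n\sigma$ and $\Restrv n\phi$ (with $\Vsymb<\Gsymb$, the type decreasing except in the fixpoint case where $n$ drops). The first thing I would record is the denotational meaning of the restriction operators, read off from Lemma~\ref{lemma:restr-expression}: $\Matapp{\Psem{\Restrv n\phi}{}}{v}$ is $v$ truncated to the support $\Srestrv n\phi$, $\Matapp{\Psem{\Restr n\sigma}{}}{\Prom u}$ is $u$ truncated to $\Srestr n\sigma$, and for positive $\phi$ one has $\Matapp{\Psem{\Restr n\phi}{}}{\Prom u}=\Matapp{\Psem{\Restrv n\phi}{}}{u}$ (because $\GO$ is interpreted by dereliction and $\Der{}\Compl\Prom u=u$). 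The whole point is that these finite truncations are exactly the data that can be moved across the asymmetry between $\Nrel{\cdot}$ and $\Prel{\cdot}$.

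For the value case $l=\Vsymb$ I would argue by the shape of $\phi$, always landing in the stated closure conclusion $\Prel{\Crel{\Trelv\phi{}}}=\Prel{\Trel\phi{}}$ by reusing the closeness lemmas already proved. The case $\phi=\ONE$ is trivial since $\Restrv n\ONE$ is the identity. For $\phi=\TENS{\phi_\ell}{\phi_r}$ and $\phi=\PLUS{\phi_\ell}{\phi_r}$ the operator acts through the projections, resp. through $\Case$, so I destructure $(V,v)$ by the corresponding clause of Figure~\ref{fig:adeq-rel-log-op}, apply the $\Vsymb$ induction hypothesis to the components at $(n,\phi_i,\Vsymb)$, and recombine with Lemma~\ref{lemma:tens-crel}, resp. Lemma~\ref{lemma:plus-crel}. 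For $\phi=\EXCL\sigma$ one has $V=\STOP M$ and $v=\Prom u$; the $\Gsymb$ induction hypothesis at $(n,\sigma,\Gsymb)$ gives $(M,\Matapp{\Psem{\Restr n\sigma}{}}{\Prom u})\in\Prel{\Trel\sigma{}}$, whence the pair $(\STOP M,\Prom{(\cdots)})$ lies in $\Prel{\Trelv{\EXCL\sigma}{}}$ by the $\EXCL$ clause, and membership in the closure follows since value-relation membership trivially implies closure membership (immediate from the definitions of $\LIMPL{}{}$ and $\Crel{\cdot}$). For $\phi=\TREC\zeta\psi$ with index $n+1$ I unfold $\Trelv\phi{}$ via Lemma~\ref{lemma:bi-rel-monotone}, apply the induction hypothesis at the strictly smaller $n$ to $\UNFOLD V$, and re-fold with Lemma~\ref{lemma:fold-crel}; at $n=0$ the operator sends $v$ to $0$, and $(V,0)\in\Prel{\Trel\phi{}}$ holds by the ``zero is always related'' clause of Lemma~\ref{lemma:Trel-sem-closeness} (this is the one place where the closure conclusion, and not a value-relation one, is genuinely needed, since $0$ need not be coalgebraic).

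For the general case $l=\Gsymb$ I would follow the same strategy as the closeness lemmas (Lemma~\ref{lemma:rel-app-closeness} and its companions): unfold $\sigma=\LIMPL{\phi_1}{\cdots\LIMPL{\phi_k}\psi}$, fix negative values $(V_j,v_j)$ and a negative test $(T,t)$ provided by the definition of $\Prel{\Trel\sigma{}}$, build a single auxiliary test-term $S$ that inserts the restriction operators $\Restrv n{\phi_j}$ on the arguments and $\Restrv n\psi$ on the result, and define the matching $s$ by the appropriate pre/post-composition so that $\Matapp su$ equals the restricted scalar $t\cdot(\Matapp{u'}{v_1}\cdots v_k)$, where $u'=\Matapp{\Psem{\Restr n\sigma}{}}{\Prom u}$. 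The role of the $\Vsymb$ induction hypothesis is precisely to certify that $S$ is a \emph{positive} test usable against the negative membership $(M,u)\in\Nrel{\Trel\sigma{}}$: running $\LAPP{\Restrv n{\phi_j}}{V_j}$ turns each negative value into (the reduct of) a restricted value that the $\Vsymb$ statement places in the positive class, which is what makes $S$ orthogonal to the negative members. Applying $M$'s membership to $S$ yields a $\cR(\ONE)$-statement, and a probability-preserving bijection between reduction paths (exactly as in the proofs of Lemmas~\ref{lemma:prob-conv-indep-det} and~\ref{lemma:rel-app-closeness}, the deterministic $\Wred$-steps created by the restriction operators contributing probability-$1$ transitions) transports it to the goal configuration. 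The purely positive case $\sigma=\phi$ is the instance with no leading arrows, where $S$ is simply $\ABST x\phi{\LAPP T{\LAPP{\Restrv n\phi}{x}}}$ and the polarity flip is driven by the $\Vsymb$ result alone.

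The main obstacle is exactly this verification in the $\Gsymb$ step that $S$ lands in the opposite polarity. This is where the negative/positive asymmetry is crossed, and it requires simultaneously (i) matching the operational behaviour of $\Restrv n\phi$ — which reduces a related value to a \emph{truncated} value, or diverges outright in the degenerate fixpoint case $n=0$ — with the semantic truncation $\Matapp{\Psem{\Restrv n\phi}{}}{v}$ furnished by Lemma~\ref{lemma:restr-expression}, and (ii) keeping the polarity bookkeeping ($\POS$ versus $\NEG$, and the sign flips $-\epsilon$ in the clauses for $\LIMPL{}{}$ and $\Crel{\cdot}$) aligned with the direction in which restriction moves elements. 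The divergence at $n=0$ is reconciled automatically: the restricted denotation is then $0$, and both the auxiliary and the goal reductions contribute convergence probability $0$, so the reduction-path bijection still matches the two sides. Once $S$ is shown to be of the correct polarity, each $\Gsymb$ case closes routinely, and the lemma is established.
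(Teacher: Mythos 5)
Your treatment of the value half ($l=\Vsymb$) is essentially the paper's: same lexicographic induction, same case split, with the $n$-decrement isolated at $\TREC\zeta\psi$, the $n=0$ base discharged by the ``zero is always related'' clause of Lemma~\ref{lemma:Trel-sem-closeness}, the $\EXCL\sigma$ case routed through the $\Gsymb$ hypothesis, and the $\ITens/\IPlus$ cases recombined via Lemmas~\ref{lemma:tens-crel} and~\ref{lemma:plus-crel}. (Minor point: at $\TREC\zeta\psi$ you should destructure $V=\FOLD W$ and apply the hypothesis to the value $W$ rather than to the non-value $\UNFOLD V$, but the intent is right.)

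The $\Gsymb$ half, however, has a genuine gap. You propose to build a test term $S$ that \emph{syntactically} inserts $\Restrv n{\phi_j}$ on the arguments and $\Restrv n\psi$ on the result, and to justify its polarity by saying that ``running $\LAPP{\Restrv n{\phi_j}}{V_j}$ turns each negative value into (the reduct of) a restricted value that the $\Vsymb$ statement places in the positive class.'' That is not what the $\Vsymb$ statement gives you. It places the pair $(V_j,\Matapp{\Psem{\Restrv n{\phi_j}}{}}{v_j})$ --- the \emph{original, unmodified} term paired with the \emph{truncated denotation} --- in $\Prel{\Trel{\phi_j}{}}$. It says nothing about the pair formed by the reduct of $\LAPP{\Restrv n{\phi_j}}{V_j}$ and the truncated point; that is a different statement, roughly an adequacy property of the restriction terms themselves, which is not available at this stage (and cannot be extracted from the closure membership either, since precomposing a test with $\Restrv n{\phi_j}$ only \emph{decreases} convergence probabilities --- at a $\mu$-type with $n=0$ it destroys them entirely --- so the required inequality would point the wrong way). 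In this lemma the operators $\Restr n\sigma$ and $\Restrv n\phi$ must be used purely through their denotations, never executed: the whole content is that the term side stays fixed while only the semantic side is truncated. The working argument is the paper's: push the truncation through the arrow denotationally via $\Matapp{\Psem{\Restr{n}{\LIMPL\phi\tau}}{}}{\Prom u}\Appsep v=\Matapp{\Psem{\Restr{n}{\tau}}{}}{\Prom{(\Matapp u{(\Matapp{\Psem{\Restrv{n}{\phi}}{}}v)})}}$, use the $\Vsymb$ hypothesis to get $(V,\Matapp{\Psem{\Restrv n\phi}{}}v)\in\Prel{\Trel\phi{}}$ with the \emph{same} $V$, feed this closure element to $(M,u)\in\Nrel{\Trel{\LIMPL\phi\tau}{}}$ via Lemma~\ref{lemma:rel-app-closeness} (which is precisely the device for applying a function-relation member to an argument known only in the closure, not in the value relation), and finish with the $\Gsymb$ hypothesis at the smaller type $\tau$ to absorb the remaining truncation on the result. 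Your plan also needs Lemma~\ref{lemma:rel-app-closeness} (or an inline reproof of it) for the closure-versus-value-relation mismatch you glide over, but that part is repairable; the term-versus-denotation confusion is the step that would fail.
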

\begin{proof}
  By    lexicographic induction on triples $(n,\sigma,l)$ (with $\sigma$
  positive when $l=\Vsymb$). 

  Until further notice, we assume that $l=\Vsymb$.

  The only case where ``$n$ decreases'' in this induction is when
  $\phi=\TREC\zeta\psi$, we start with this case.

  Assume that $\phi=\TREC\zeta\psi$ and that
  $(V,v)\in\Nrel{\Trelv{\phi}{}}$. If $n=0$ we have
  $\Matapp{\Psem{\Restrv n\phi}{}}v=0$ and hence
  $(V,\Matapp{\Psem{\Restrv n\phi}{}}v)\in\Prel{\Trel{\phi}{}}$ by
  Lemma~\ref{lemma:Trel-sem-closeness}. Assume that the implication
  holds for $n$ and let us prove it for $n+1$. Let
  $(V,v)\in\Nrel{\Trelv{\phi}{}}$, that is $V=\FOLD W$ with
  $(W,v)\in\Nrel{\Trelv{\Subst\psi\phi\zeta}{}}$. We have
  $\Matapp{\Psem{\Restrv{n+1}{\TREC\zeta\psi}}{}}v=
  \Matapp{\Psem{\Restrv{n}{\Subst\psi\phi\zeta}}{}}v$
  by definition. By inductive hypothesis we have
  \begin{align}\label{eq:hyprec-unfold}
    (W,\Matapp{\Psem{\Restrv{n}{\Subst\psi\phi\zeta}}{}}v)
    \in\Prel{\Trel{\Subst\psi\phi\zeta}{}}
  \end{align}
  and we must prove that
  $(\FOLD
  W,\Matapp{\Psem{\Restrv{n}{\Subst\psi\phi\zeta}}{}}v)
  \in\Prel{\Trel{\phi}{}}$.
  Let $(T,t)\in\Nrel{(\LIMPL{\Trelv\phi{}}{\Trel\ONE{}})}$, we must
  prove that
  $(\LAPP T{\FOLD
    W},t(\Matapp{\Psem{\Restrv{n}{\Subst\psi\phi\zeta}}{}}v))\in\Trel\ONE{}$.
  Let $S=\ABST x{\Subst\psi\phi\zeta}{\LAPP T{\FOLD x}}$, we have
  $(S,t)\in\Nrel{(\LIMPL{\Trelv{\Subst\psi\phi\zeta}{}}{\Trel\ONE{}})}$
  by Lemma~\ref{lemma:prob-conv-indep-det} and therefore
  \[
  (\LAPP
  SW,t(\Matapp{\Psem{\Restrv{n}{\Subst\psi\phi\zeta}}{}}v))\in\Trel\ONE{}
  \]
  by~\Eqref{eq:hyprec-unfold} and Lemma~\ref{lemma:rel-app-closeness} and this
  implies $(\LAPP T{\FOLD
    W},\Matapp t{(\Matapp{\Psem{\Restrv{n}{\Subst\psi\phi\zeta}}{}}v)})
  \in\Trel\ONE{}$ by
  Lemma~\ref{lemma:prob-conv-indep-det}.

  Assume that $\phi=\EXCL\sigma$ and that
  $(V,v)\in\Nrel{\Trelv{\EXCL\sigma}{}}$, that is $V=\STOP M$ and
  $v=\Prom u$ with $(M,u)\in\Nrel{\Trel\sigma{}}$. By inductive
  hypothesis we have
  $(M,\Matapp{\Psem{\Restr{n}{\sigma}}}{\Prom u})
  \in\Prel{\Trel{\sigma}{}}$
  and hence
  $(\STOP M,\Prom{(\Matapp{\Psem{\Restr{n}{\sigma}}}{\Prom
      u})})\in\Prel{\Trelv{\EXCL\sigma}{}}$
  and since
  $\Prom{(\Matapp{\Psem{\Restr{n}{\sigma}}}{\Prom
      u})}=\Matapp{\Psem{\Restrv{n}{\EXCL\sigma}}}{\Prom u}$ we get
  \[
  (V,\Matapp{\Psem{\Restrv{n}{\EXCL\sigma}}}v)
  \in\Prel{\Trelv{\EXCL\sigma}{}}\subseteq\Prel{\Trel{\EXCL\sigma}{}}
  \]
  as expected.

  Assume that $\phi=\TENS{\phi_\ell}{\phi_r}$ and that
  $(V,v)\in\Nrel{\Trelv{\TENS{\phi_\ell}{\phi_r}}{}}$, that is
  $V=\PAIR{V_\ell}{V_r}$ and $v=\Tens{v_\ell}{v_r}$ with
  $(V_i,v_i)\in\Nrel{\Trelv{\phi_i}{}}$ for $i\in\{\ell,r\}$. By
  inductive hypothesis we have
  $(V_i,\Matapp{\Psem{\Restr{n}{\phi_i}}}{v_i})\in\Prel{\Trel{\phi_i}{}}$. By
  Lemma~\ref{lemma:tens-crel} we get
  $(\PAIR{V_\ell}{V_r},\Tens{\Matapp{\Psem{\Restr{n}{\phi_\ell}}}
    {v_\ell}}{\Matapp{\Psem{\Restr{n}{\phi_r}}}{v_r}})
  \in\Prel{\Trel{\TENS{\phi_\ell}{\phi_r}}{}}$,
  that is
  $(\PAIR{V_\ell}{V_r},\Psem{\Restr{n}{\TENS{\phi_\ell}{\phi_r}}}{}(\Tens{v_\ell}{v_r}))
  \in\Prel{\Trel{\TENS{\phi_\ell}{\phi_r}}{}}$.

  Assume that $\phi=\PLUS{\phi_\ell}{\phi_r}$ and
  $(V,v)\in\Nrel{\Trelv{\PLUS{\phi_\ell}{\phi_r}}{}}$. This means that
  for some $i\in\{\ell,r\}$, one has $V=\IN i{W}$ and
  $v=\Matapp{\Inj i}w$ for $(W,w)\in\Nrel{\Trelv{\phi_i}{}}$. By
  inductive hypothesis we have
  $(W,\Matapp{\Psem{\Restr{n}{\phi_i}}{}}w)\in\Prel{\Trel{\phi_i}{}}$
  and hence
  $(\IN i{W},\Matapp{\Inj
    i}{(\Matapp{\Psem{\Restr{n}{\phi_i}}{}}w)})
  \in\Prel{\Trel{\PLUS{\phi_\ell}{\phi_r}}{}}$
  by Lemma~\ref{lemma:plus-crel}, that is
  $(\IN i{W}, \Matapp{\Psem{\Restr{n}{\PLUS{\phi_\ell}{\phi_r}}}{}}w)
  \in\Prel{\Trel{\PLUS{\phi_\ell}{\phi_r}}{}}$.

  We assume now that  $l=\Gsymb$.

  If $\sigma$ is a closed positive type $\phi$ and let
  $(M,u)\in\Nrel{\Trel{\sigma}{}}$, we must prove that
  \[
  (M,\Matapp{\Psem{\Restr{n}{\sigma}}{}}{\Prom u})\in\Prel{\Trel\sigma{}}
  \]
  which follows directly from the definition of $\Restr{n}{\phi}$ and from the
  inductive hypothesis applied to $(n,\phi,\Vsymb)$.

  Assume last that $\sigma=\LIMPL\phi\tau$ and that
  $(M,u)\in\Nrel{\Trel{\LIMPL\phi\tau}{}}$, we must prove that
  $(M,\Matapp{\Psem{\Restr{n}{\LIMPL\phi\tau}}{}}{\Prom
  u})\in\Prel{\Trel{\IMPL\phi\tau}{}}$.  Let $(V,v)\in\Nrel{\Trelv{\phi}{}}$,
  we must prove that
  \begin{align}\label{eq:goal-app}
    (\LAPP MV,\Matapp{\Psem{\Restr{n}{\LIMPL\phi\tau}}{}}{\Prom u}\Appsep v)
    \in\Prel{\Trel\tau{}}
  \end{align}
  which follows from the fact that
  $\Matapp{\Psem{\Restr{n}{\LIMPL\phi\tau}}{}}{\Prom
    u}\Appsep v=
  \Psem{\Restr{n}{\tau}}{\Prom{(u(\Matapp{\Psem{\Restrv{n}{\phi}}{}}v))}}$.
  Indeed the inductive hypothesis applied to $(n,\phi)$ yields
  $(V,\Matapp{\Psem{\Restrv n\phi}{}}v)\in\Prel{\Trel{\phi}{}}$ and hence
  $(\LAPP MV,\Matapp u{(\Matapp{\Psem{\Restrv n\phi}{}}v)})
  \in\Prel{\Trel{\tau}{}}$ by
  Lemma~\ref{lemma:rel-app-closeness}, from which we derive~\Eqref{eq:goal-app}
  by Lemma~\ref{lemma:restr-sup-id} and Lemma~\ref{lemma:Trel-sem-closeness}.
\end{proof}

\begin{lemma}
  For any closed type $\sigma$ one has
  $\Nrel{\Trel\sigma{}}=\Prel{\Trel\sigma{}}$.
  \end{lemma}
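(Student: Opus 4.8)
The plan is to prove the two inclusions $\Prel{\Trel\sigma{}}\subseteq\Nrel{\Trel\sigma{}}$ and $\Nrel{\Trel\sigma{}}\subseteq\Prel{\Trel\sigma{}}$ separately. Recalling that $\Op\cR=(\Srel\cR\POS,\Srel\cR\NEG)$ and that $\Subrel$ is \emph{contravariant} in the negative component, these two inclusions are exactly the statements $\Trel\sigma{}\Subrel\Op{\Trel\sigma{}}$ and $\Op{\Trel\sigma{}}\Subrel\Trel\sigma{}$, respectively. The first of these is already in hand: applying the monotonicity lemma established at the beginning of this subsection (the one asserting $\Trel\sigma{\Vect\zeta}(\Vect\cV)\Subrel\Op{\Trel\sigma{\Vect\zeta}(\Vect\cV)}$ whenever every $\cV_i\Subrel\Op{\cV_i}$) to the empty list of type variables and relations yields $\Trel\sigma{}\Subrel\Op{\Trel\sigma{}}$, which unfolds precisely to $\Prel{\Trel\sigma{}}\subseteq\Nrel{\Trel\sigma{}}$.

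For the reverse inclusion $\Nrel{\Trel\sigma{}}\subseteq\Prel{\Trel\sigma{}}$, I would fix an arbitrary pair $(M,u)\in\Nrel{\Trel\sigma{}}$ and invoke the Main Inclusion Lemma~\ref{lemma:restr-minus-sub-plus} in the case $l=\Gsymb$: it gives, for every $n\in\Nat$, that $(M,\Matapp{\Psem{\Restr n\sigma}{}}{\Prom u})\in\Prel{\Trel\sigma{}}$. By Lemma~\ref{lemma:restr-sup-id} the family $D=\{\Matapp{\Psem{\Restr n\sigma}{}}{\Prom u}\St n\in\Nat\}$ is a monotone, hence directed, subset of $\Pcoh{\Tsem\sigma}$ whose least upper bound is $u$ itself. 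Since each element of $D$, paired with the fixed term $M$, lies in $\Prel{\Trel\sigma{}}$, the directed-sup closure property of Lemma~\ref{lemma:Trel-sem-closeness} (instantiated at $\epsilon=\POS$) applies and delivers $(M,\sup D)=(M,u)\in\Prel{\Trel\sigma{}}$. As $(M,u)$ was arbitrary, this establishes $\Nrel{\Trel\sigma{}}\subseteq\Prel{\Trel\sigma{}}$, and combining the two inclusions gives the claimed equality.

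I expect this final step to be essentially bookkeeping: all the genuine difficulty has already been discharged in the Main Inclusion Lemma, whose proof carries the lexicographic induction on $(n,\sigma,l)$ together with the delicate interplay of the restriction operators and the biorthogonality-style closure $\Crel{(\cdot)}$. The only points requiring care here are purely formal, namely making sure the contravariant reading of $\Subrel$ correctly matches each half of the argument to the inclusion it supplies, and checking that the two hypotheses of Lemma~\ref{lemma:Trel-sem-closeness} --- directedness of $D$ and membership of every element of $D$ in $\Prel{\Trel\sigma{}}$ --- are exactly what Lemmas~\ref{lemma:restr-sup-id} and~\ref{lemma:restr-minus-sub-plus} provide.
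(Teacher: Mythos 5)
Your proof is correct and follows exactly the paper's route: the inclusion $\Prel{\Trel\sigma{}}\subseteq\Nrel{\Trel\sigma{}}$ comes from the monotonicity lemma giving $\Trel\sigma{}\Subrel\Op{\Trel\sigma{}}$, and the converse inclusion is obtained by combining Lemma~\ref{lemma:restr-minus-sub-plus} with Lemmas~\ref{lemma:restr-sup-id} and~\ref{lemma:Trel-sem-closeness} to pass to the supremum of the restricted approximants. The paper's own proof is a one-line citation of these same three lemmas, so your write-up is just a more explicit rendering of the identical argument.
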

\begin{proof}
  Immediate consequence of lemmas~\ref{lemma:Trel-sem-closeness},
  \ref{lemma:restr-sup-id} and~\ref{lemma:restr-minus-sub-plus}.
\end{proof}
From now on we simply use
the notation $\Trel\sigma{}$ instead of $\Nrel{\Trel\sigma{}}$ and
$\Prel{\Trel\sigma{}}$.

\subsection{Logical relation lemma}

We can prove now the main result of this section.  

\begin{theorem}[Logical Relation Lemma]\label{lemma:adequacy}
  Assume that $\TSEQ{x_1:\phi_1,\dots,x_k:\phi_k}{M}{\sigma}$ and let
  $(V_i,v_i)\in\Trel{\phi_i}{}$ (where $V_i$ is a value and
  $v_i\in\PcohEM{\Tsem{\phi_i}}$) for $i=1,\dots,k$. Then $(\Substbis
  M{V_1/x_1,\dots,V_k/x_k},\Psem{M}^{\List
    x1k}{\Vect v})\in\Trel\sigma{}$ where $\Vect v=(\List v1k)$.
\end{theorem}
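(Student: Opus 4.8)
The plan is to argue by structural induction on the term $M$, reading the statement as universally quantified over the context $\cP=(x_1:\phi_1,\dots,x_k:\phi_k)$, over the positive types $\phi_i$ and over the value pairs $(V_i,v_i)$, so that the induction hypothesis is available for every immediate subterm, possibly typed in an enlarged context. Write $\gamma=[V_1/x_1,\dots,V_k/x_k]$ for the (capture-avoiding, simultaneous) substitution, let $\Vect v=(\List v1k)$, and let $M\gamma$ denote $\Substbis M{V_1/x_1,\dots,V_k/x_k}$; the goal is $(M\gamma,\Psem M^\cP(\Vect v))\in\Trel\sigma{}$. Since we have already proved $\Nrel{\Trel\sigma{}}=\Prel{\Trel\sigma{}}$, the polarities $\epsilon$ no longer need to be tracked. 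The variable case $M=x_i$ is exactly the hypothesis $(V_i,v_i)\in\Trel{\phi_i}{}$, and $M=\ONELEM$ follows from the fact that $\ONELEM$ is weak normal, so $\Redmats^\infty_{\ONELEM,\ONELEM}=1$ and $(\ONELEM,1)\in\cR(\ONE)=\Trel\ONE{}$. The elimination constructs are handled by combining the induction hypothesis on the immediate subterms with the matching closeness lemma, reading off the interpretation from Section~\ref{subsec:sem-type-term} (e.g.\ $\Psem{\LAPP NR}^\cP(\Vect v)=\Psem N^\cP(\Vect v)\Compl\Psem R^\cP(\Vect v)$): Lemma~\ref{lemma:rel-app-closeness} for $\LAPP NR$, Lemma~\ref{lemma:proj-crel} for $\PR iN$, Lemma~\ref{lemma:case-crel} for $\CASE N{x_\ell}{R_\ell}{x_r}{R_r}$ (with the branches reorganized as abstractions, see below), Lemma~\ref{lemma:der-crel} for $\GO N$, and Lemma~\ref{lemma:unfold-crel} for $\UNFOLD N$; symmetrically the data constructors use Lemma~\ref{lemma:tens-crel} for $\PAIR{M_\ell}{M_r}$, Lemma~\ref{lemma:plus-crel} for $\IN iN$ and Lemma~\ref{lemma:fold-crel} for $\FOLD N$.

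Two facts outside the listed lemmas must be isolated first. The first is reflexivity of the biorthogonality closure: for any value relation $\cV$ one has $\cV\Subrel\Crel\cV$ (equivalently $\Trelv\phi{}\subseteq\Trel\phi{}$), which is immediate by unfolding the definitions of $\Crel\cV$ and of $\LIMPL\cV{\cR(\ONE)}$. This both promotes a value pair living in a $\Relsv{\phi}$-relation into $\Trel\phi{}$ and lets the induction hypothesis be supplied with pairs $(V,v)\in\Trelv\phi{}$. The second is backward closure under weak reduction: if $M\Rel\Wred M'$ and $(M',u)\in\Trel\sigma{}$ then $(M,u)\in\Trel\sigma{}$. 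Writing $\sigma=\LIMPL{\psi_1}{\cdots\LIMPL{\psi_n}\psi}$, a test of $M$ produces a term of the form $\LAPP T{(\LAPP M{W_1\cdots W_n})}$, and the congruence rules of Figure~\ref{fig:reduction-rules} lift $M\Rel\Wred M'$ to a single probability-$1$ step $\LAPP T{(\LAPP M{W_1\cdots W_n})}\Rel\Wred\LAPP T{(\LAPP{M'}{W_1\cdots W_n})}$, so Lemma~\ref{lemma:prob-conv-indep-det} gives equality of convergence probabilities and hence the claim. These close the abstraction case: for $M=\ABST x\phi N$ I use the characterization $\Psem{\ABST x\phi N}^\cP(\Vect v)\Compl v=\Psem N^{\cP,x:\phi}(\Vect v,v)$ valid for $v\in\PcohEM{(\Tsemca\phi)}$, apply the induction hypothesis to $N$ with a value pair $(V,v)\in\Trelv\phi{}$ (promoted to $\Trel\phi{}$), and conclude with backward closure along $\LAPP{(\ABST x\phi{(N\gamma)})}V\Rel\Wred\Subst{N\gamma}Vx$. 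Promoting $(\ABST{x_i}{\phi_i}{R_i\gamma})$ through this same case is exactly what feeds Lemma~\ref{lemma:case-crel}.

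The fixpoint case carries the real weight. For $M=\FIXT x\sigma N$ we have $\Psem{\FIXT x\sigma N}^\cP(\Vect v)=\sup_{n}f^n(0)$ with $f(u)=\Psem N^{\cP,x:\EXCL\sigma}(\Vect v,\Prom u)$; since $(f^n(0))_n$ is monotone, Lemma~\ref{lemma:Trel-sem-closeness} reduces the goal to proving $(\FIXT x\sigma{(N\gamma)},f^n(0))\in\Trel\sigma{}$ for every $n$, which I would establish by an inner induction on $n$. The base case is the clause $(\FIXT x\sigma{(N\gamma)},0)\in\Trel\sigma{}$ of Lemma~\ref{lemma:Trel-sem-closeness}; for the step, the inner hypothesis gives $(\STOP{(\FIXT x\sigma{(N\gamma)})},\Prom{(f^n(0))})\in\EXCL{\Trel\sigma{}}=\Trelv{\EXCL\sigma}{}\subseteq\Trel{\EXCL\sigma}{}$ (with $\Prom{(f^n(0))}\in\PcohEM{(\Excl{\Tsem\sigma})}$ by Lemma~\ref{lemma:sem-values}), feeding this pair to the outer hypothesis on $N$ yields $(\Subst{N\gamma}{\STOP{(\FIXT x\sigma{(N\gamma)})}}x,f^{n+1}(0))\in\Trel\sigma{}$, and backward closure along $\FIXT x\sigma{(N\gamma)}\Rel\Wred\Subst{N\gamma}{\STOP{(\FIXT x\sigma{(N\gamma)})}}x$ finishes the step. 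The only genuinely probabilistic case is $\COIN p$, of type $\PLUS\ONE\ONE$ with $\Psem{\COIN p}{}=p\Base{(\ell,*)}+(1-p)\Base{(r,*)}$: since $(\IN\ell\ONELEM,\Base{(\ell,*)})$ and $(\IN r\ONELEM,\Base{(r,*)})$ lie in $\Trelv{\PLUS\ONE\ONE}{}$, for any test $(T,t)$ the Markov-chain decomposition of the single probabilistic step gives $\Redmats^\infty_{\LAPP T{\COIN p},\ONELEM}=p\,\Redmats^\infty_{\LAPP T{\IN\ell\ONELEM},\ONELEM}+(1-p)\,\Redmats^\infty_{\LAPP T{\IN r\ONELEM},\ONELEM}\ge\Matapp t{\Psem{\COIN p}{}}$ by linearity of $t$. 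I expect the fixpoint case to be the main obstacle: it is the one place where two inductions are nested and where the $\EXCL$/closure-reflexivity machinery and the Scott continuity of the interpretation must be combined just so, whereas all the remaining cases are routine dispatches to the closeness lemmas.
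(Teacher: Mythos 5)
Your proposal follows essentially the same route as the paper's proof: induction on the typing derivation, dispatching each construct to the corresponding closeness lemma, with the abstraction and fixpoint cases handled exactly as in the paper — promotion of value pairs via $\Trelv\phi{}\Subrel\Crel{\Trelv\phi{}}=\Trel\phi{}$, closure of $\Trel\sigma{}$ under a backwards $\Wred$-step obtained from the congruence rules and Lemma~\ref{lemma:prob-conv-indep-det}, and the inner induction on $f^n(0)$ closed off by the lub clause of Lemma~\ref{lemma:Trel-sem-closeness}. The only differences are cosmetic: you isolate backward closure under $\Wred$ as a standalone fact instead of re-deriving the probability-preserving bijection in each case, and your remark that polarities ``no longer need to be tracked'' is slightly too strong (the paper notes that $\Nrel{\Trelv\phi{}}=\Prel{\Trelv\phi{}}$ is not known), but since you only ever use the value relations through their inclusion into $\Trel\phi{}$, where the two components do coincide, this does not affect correctness.
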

\begin{remark}
  One would expect to have rather assumptions of the shape
  ``$(V_i,v_i)\in\Trelv{\phi_i}{}$''; the problem is that we don't know whether
  $\Srel{\Trelv{\phi_i}{}}\POS=\Srel{\Trelv{\phi_i}{}}\NEG$. 
\end{remark}
\begin{proof}
  By induction on the typing derivation of $M$, that is, on $M$. We set
  $\cP=(x_1:\phi_1,\dots,x_k:\phi_k)$ and, given a term $R$, we use $R'$ for
  the term $\Substbis R{V_1/x_1,\dots,V_k/x_k}$. We also use $\Vect v$ for the
  sequence $\List v1k$ and $\Vect x$ for the sequence $\List x1k$.

  The case $M=x_i$ is straightforward.

  Assume that $M=\STOP N$ and that $\phi=\EXCL\sigma$ with
  $\TSEQ\cP N\sigma$. By inductive hypothesis we have
  $(N',\Matapp{\Psem{N}^{\Vect x}}{\Vect v})\in\Trel\sigma{}$.
  Therefore
  $(\STOP{(N')},\Prom{(\Matapp{\Psem{N}^{\Vect x}}{\Vect
      v})})\in\Srel{\Trelv{\EXCL\sigma}{}}{\epsilon}$
  (for $\epsilon=\POS$ or $\epsilon=\NEG$)\footnote{It is not clear
    whether $\Nrel{\Trelv\phi{}}=\Prel{\Trelv\phi{}}$ for any closed
    positive type $\phi$, but we don't need this property in this
    proof, so we leave this technical question unanswered.}. We have
  $\Srel{\Trelv{\EXCL\sigma}{}}{\epsilon}
  \subseteq\Crel{\Srel{\Trelv{\EXCL\sigma}{}}{\epsilon}}
  ={\Srel{\Trel{\EXCL\sigma}{}}{\epsilon}}=\Trel{\EXCL\sigma}{}$
  and hence
  $(M',\Matapp{\Psem{M}^{\Vect x}}{\Vect v}))\in\Trel{\EXCL\sigma}{}$
  as contended.

  Assume that $M=\PAIR{N_\ell}{N_r}$ and
  $\sigma=\TENS{\psi_\ell}{\psi_r}$ with $\TSEQ{\cP}{N_i}{\psi_i}$ for
  $i\in\{\ell,r\}$. By inductive hypothesis we have
  $(N'_i,\Matapp{\Psem{N_i}^{\Vect x}}{\Vect v})\in\Trel{\psi_i}{}$. By
  Lemma~\ref{lemma:tens-crel} we get
  \[(\PAIR{N'_\ell}{N'_r},\Matapp{\Psem{\PAIR{N_\ell}{N_r}}^{\Vect
      x}}{\Vect v})\in\Trel{\TENS{\psi_\ell}{\psi_r}}{}\]
  as contended, since
  $\Matapp{\Psem{\PAIR{N_\ell}{N_r}}^{\Vect x}}{\Vect v}
  =\Tens{\Matapp{\Psem{N_\ell}^{\Vect x}}{\Vect
      v}}{\Matapp{\Psem{N_r}^{\Vect x}}{\Vect v}}$.

  The case $M=\IN iN$ (and $\sigma=\PLUS{\psi_\ell}{\psi_r}$) is handled
  similarly, using Lemma~\ref{lemma:plus-crel}.

  Assume that $M=\FOLD N$ and $\sigma=\phi=\TREC\zeta\psi$ with
  $\TSEQ{\cP}{N}{\Subst\psi\phi\zeta}$. By inductive hypothesis we have
  $(N',\Matapp{\Psem{N}^{\Vect x}}{\Vect v})
  \in\Trel{\Subst\psi\phi\zeta}{}$ which
  implies $(\FOLD{N'},\Matapp{\Psem{N}^{\Vect x}}{\Vect v})\in\Trel{\phi}{}$ by
  Lemma~\ref{lemma:fold-crel}.

  Assume that $M=\ABST x\phi N$ and $\sigma=\LIMPL\phi\tau$, with
  $\TSEQ{\cP,x:\phi}{N}{\tau}$. We must prove that
  $(\ABST{x}{\phi}{N'},\Matapp{\Psem{\ABST x\phi N}^{\Vect x}}{\Vect
    v})\in\Srel{(\Limpl{\Trelv\phi{}}{\Trel\tau{}})}\epsilon$
  for an arbitrary $\epsilon\in\{\NEG,\POS\}$. So let
  $(V,v)\in\Srel{\Trelv{\phi}{}}{-\epsilon}$. Since
  $\Srel{\Trelv{\phi}{}}{-\epsilon}\subseteq\Trel{\phi}{}$, we have
  $(\Subst{N'}{V}{x},\Psem{N}^{\Vect x,x}(\Vect v,v))\in\Trel{\tau}{}$
  by inductive hypothesis. It follows that
  $(\LAPP{\ABST{x}{\phi}{N'}}{V},\Matapp{\Psem{\ABST x\phi N}^{\Vect
      x}}{\Vect v}\Appsep v)\in\Trel\tau{}$
  by Lemma~\ref{lemma:prob-conv-indep-det}, proving our contention.

  Assume that $M=\LAPP RN$ with $\TSEQ{\cP}{R}{\LIMPL\phi\sigma}$ and
  $\TSEQ{\cP}{N}{\phi}$ where $\phi$ is a closed positive type. By
  inductive hypothesis we have
  $(R',\Psem R^{\Vect x}(\Vect v))\in\Trel{\LIMPL\phi\sigma}{}$ and
  $(N',\Matapp{\Psem N^{\Vect x}}{\Vect v})\in\Trel{\phi}{}$ and hence
  $(\LAPP{R'}{N'},\Matapp{\Psem R^{\Vect x}}{\Vect
    v}\Appsep(\Matapp{\Psem N^{\Vect x}}{\Vect v}))\in\Trel{\sigma}{}$
  by Lemma~\ref{lemma:rel-app-closeness}, that is
  $(M',\Matapp{\Psem M^{\Vect x}}{\Vect v})\in\Trel{\sigma}{}$.

  Assume that $M=\FIXT x\sigma N$ with
  $\TSEQ{\cP,x:\EXCL\sigma}{N}\sigma$. The function
  $f:\Pcoh{\Tsem\sigma}\to\Pcoh{\Tsem\sigma}$ defined by
  \begin{align*}
    f(u)=\Psem N^{\Vect x,x}(\Vect v,\Prom u)
  \end{align*}
  is Scott continuous and we have $\Matapp{\Psem M^{\Vect x}}{\Vect
  v}=\sup_{k\in\Nat}f^k(0)$. By induction on $k$, we prove that
  \begin{align}\label{eq:adeq-fixpoint-rec}
    \forall k\in\Nat\quad (M',f^k(0))\in\Trel{\sigma}{}\,. 
  \end{align}
  The base case is proven by Lemma~\ref{lemma:Trel-sem-closeness}. Assume that
  $(M',f^k(0))\in\Trel{\sigma}{}$. Choosing an arbitrary $\epsilon$, we have
  $(\STOP{(M')},\Prom{(f^k(0))})
  \in\Srel{\Trelv{\EXCL\sigma}{}}\epsilon\subseteq\Trel{\EXCL\sigma}{}$ and
  hence by our ``outermost'' inductive hypothesis we have
  $(\Subst{N'}{\STOP{(M')}}{x},f^{k+1}(0))\in\Trel\sigma{}$ from which we get
  $(M',f^{k+1}(0))\in\Trel\sigma{}$ by Lemma~\ref{lemma:prob-conv-indep-det}
  and this ends the proof of~\Eqref{eq:adeq-fixpoint-rec}. We conclude that
  $(M',\Matapp{\Psem M^{\Vect x}}{\Vect v})\in\Trel\sigma{}$ by
  Lemma~\ref{lemma:Trel-sem-closeness}.

  Assume that $M=\GO N$ with $\TSEQ{\cP}{N}{\EXCL\sigma}$. By
  inductive hypothesis we have
  $(N',\Matapp{\Psem{N}^{\Vect x}}{\Vect v})\in\Trel{\EXCL\sigma}{}$
  which implies
  $(\GO{N'},\Der{}(\Matapp{\Psem N^{\Vect x}}{\Vect
    v}))\in\Trel{\sigma}{}$
  by Lemma~\ref{lemma:der-crel}, that is
  $(M',\Matapp{\Psem{M}^{\Vect x}}{\Vect v})\in\Trel{\sigma}{}$.

  Assume that $M=\PR jN$ with $j\in\{\ell,r\}$,
  $\sigma=\Tens{\phi_\ell}{\phi_r}$ and
  $\TSEQ{\cP}{M}{\TENS{\phi_\ell}{\phi_r}}$. By inductive hypothesis
  we have
  $(N'\Matapp{,\Psem{N}^{\Vect x}}{\Vect
    v})\in\Trel{\TENS{\phi_\ell}{\phi_r}}{}$
  and hence
  $(\PR j{N'},\Matapp{\Proj j}{(\Matapp{\Psem{N}^{\Vect x}}{\Vect v})})
  \in\Trel{\phi_j}{}$
  by Lemma~\ref{lemma:proj-crel} that is
  $(M',\Matapp{\Psem{M}^{\Vect x}}{\Vect v})\in\Trel{\phi_j}{}$.

  Assume that $M=\CASE{N}{y_\ell}{R_\ell}{y_r}{R_r}$ with
  $\TSEQ{\cP}{N}{\PLUS{\phi_\ell}{\phi_r}}$ and
  $\TSEQ{\cP,y_j:\phi_j}{R_j}{\sigma}$ for $j\in\{\ell,r\}$. By
  inductive hypothesis we have
  $(N',\Matapp{\Psem N^{\Vect x}}{\Vect
  v})\in\Trel{\PLUS{\phi_\ell}{\phi_r}}{}$
  and
  $(\ABST{y_j}{\phi_j}{R'_j},\Matapp{\Psem{\ABST{y_j}{\phi_j}{R_j}}^{\Vect
    x}}{\Vect v})\in\Trel{\LIMPL{\phi_j}{\sigma}}{}$
  for $j\in\{\ell,r\}$ (to prove this latter fact, one chooses
  $\epsilon\in\{\NEG,\POS\}$ and considers an arbitrary
  $(V,v)\in\Srel{\Trelv{\phi_j}{}}{-\epsilon}$, we have
  $(V,v)\in\Trel{\phi_j}{}$ and hence
  $(\Subst{R'_j}{V}{y_j},\Psem{R_j}^{\Vect x,y_j}(\Vect
  v,v))\in\Trel{\sigma}{}$ by inductive hypothesis, which implies
  \[
  (\LAPP{\ABST{y_j}{\phi_j}{R'_j}}V,
  \Matapp{\Psem{\ABST{y_j}{\phi_j}{R_j}}^{\Vect
    x}}{\Vect v}\Appsep v)\in\Trel{\sigma}{}
  \]
  by Lemma~\ref{lemma:prob-conv-indep-det}). By Lemma~\ref{lemma:case-crel} we
  get
  \[(\CASE{N'}{y_\ell}{R'_\ell}{y_r}{R'_r},
  \Matapp{\Case(\Psem{\ABST{y_\ell}{\phi_\ell}{R_\ell}}^{\Vect
    x}}{\Vect v}),\Matapp{\Psem{\ABST{y_r}{\phi_r}{R_r}}^{\Vect
    x}}{\Vect v}))(\Matapp{\Psem N^{\Vect x}}{\Vect v}))\in\Trel{\sigma}{}
  \]
  that is $(M',\Psem{M}^{\Vect x}(\Vect v))\in\Trel\sigma{}$, by
  Lemma~\ref{lemma:prob-conv-indep-det}.

  Assume that $M=\UNFOLD N$ where $\TSEQ{\cP}{N}{\phi}$ with
  $\phi=\TREC\zeta\psi$. We apply Lemma~\ref{lemma:unfold-crel}
  straightforwardly.

  Assume that $M=\ONELEM$ and the typing derivation consists of the axiom
  $\TSEQ{\cP}{\ONELEM}{\ONE}$ so that $\sigma=\ONE$. We have $(M,\Psem
  M{})\in\Trel{\ONE}{}$ by definition since
  $\Redmats^\infty_{M,\ONELEM}=1=\Psem M{}$.

  Assume last that $M=\COIN p$ for some $p\in[0,1]\cap\Rational$ and the typing
  derivation consists of the axiom $\TSEQ{\cP}{\COIN p}{\PLUS\ONE\ONE}$ so
  that $\sigma=\PLUS\ONE\ONE$. We must prove that $(\COIN p,\Psem{\COIN
    p}{})\in\Trel{\PLUS\ONE\ONE}{}$. Remember that $\Psem{\COIN
    p}{}=p\Base{(\ell,*)}+(1-p)\Base{(r,*)}$. Let $\epsilon\in\{\NEG,\POS\}$ and
  let $(T,t)\in
  \Srel{(\Limpl{\Trelv{\PLUS\ONE\ONE}{}}{{\Trel{\ONE}{}}})}{\epsilon}$, we must
  prove that $(\LAPP T{\COIN
    p},t(p\Base{(\ell,*)}+(1-p)\Base{(r,*)}))\in\Trel{\ONE}{}$. We have
  \begin{align*}
    \Redmats^\infty_{\LAPP T{\COIN p},\ONELEM}
    = p\Redmats^\infty_{\LAPP T{\IN \ell \ONELEM},
      \ONELEM}+(1-p)\Redmats^\infty_{\LAPP T{\IN r \ONELEM},\ONELEM}
  \end{align*}
  since the first reduction step must be $\COIN p\Redone p\IN \ell \ONELEM$ or
  $\COIN p\Redone{1-p}\IN r \ONELEM$. By our assumption on $(T,t)$ we have
  $\Redmats^\infty_{\LAPP T{\IN i\ONELEM}}\geq t(\Base{(i,*)})$ and hence
  $\Redmats^\infty_{\LAPP T{\COIN
      p},\ONELEM}\geq t(p\Base{(\ell,*)}+(1-p)\Base{(r,*)})$ as contended, by
  linearity of $t$.
\end{proof}

\begin{theorem}[Adequacy]\label{th:rel-ad-lemma}
  Let $M$ be a closed term such that $\TSEQ{}{M}{\ONE}$. Then $\Psem
  M{}=\Redmats^\infty_{M,\ONELEM}$. 
\end{theorem}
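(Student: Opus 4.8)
The plan is to obtain the Adequacy Theorem as a direct combination of two facts already established: the soundness inequality of Corollary~\ref{th:soundness-ineq}, which gives $\Psem M{}\geq\Redmats^\infty_{M,\ONELEM}$, and the reverse inequality, which I would extract from the Logical Relation Lemma (Theorem~\ref{lemma:adequacy}). All the genuine difficulty has by now been absorbed into the construction of the pairs of relations $\Trel\sigma{}$, the proof that $\Nrel{\Trel\sigma{}}=\Prel{\Trel\sigma{}}$, and the Logical Relation Lemma itself; the final statement is then just a specialization to the observation type $\ONE$.

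Concretely, since $M$ is closed I would instantiate Theorem~\ref{lemma:adequacy} at the empty typing context, i.e.\ with $k=0$. The hypothesis on the pairs $(V_i,v_i)$ is then vacuous, so the instantiation is legitimate, and the conclusion $(\Substbis M{V_1/x_1,\dots,V_k/x_k},\Psem{M}^{\List x1k}{\Vect v})\in\Trel\sigma{}$ degenerates (taking $\sigma=\ONE$) to
\begin{align*}
  (M,\Psem M{})\in\Trel\ONE{}\,.
\end{align*}
The remaining step is to unwind the definition of the relation at type $\ONE$. By construction $\Trel\ONE{}$ is exactly the relation $\cR(\ONE)$ of Figure~\ref{fig:adeq-rel-log-op}, namely the set of all pairs $(N,p)$ with $\TSEQ{}{N}{\ONE}$, $p\in[0,1]$ and $\Redmats^\infty_{N,\ONELEM}\geq p$. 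Reading off the membership $(M,\Psem M{})\in\cR(\ONE)$ therefore yields precisely $\Redmats^\infty_{M,\ONELEM}\geq\Psem M{}$.

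Finally I would combine the two inequalities: Corollary~\ref{th:soundness-ineq} gives $\Psem M{}\geq\Redmats^\infty_{M,\ONELEM}$ and the step above gives $\Redmats^\infty_{M,\ONELEM}\geq\Psem M{}$, so the two scalars in $[0,1]$ coincide, $\Psem M{}=\Redmats^\infty_{M,\ONELEM}$, which is the claim. I do not anticipate a real obstacle here, since the heavy lifting is done by the preceding lemmas; the only point requiring a little care is checking that the empty-context, value-free instance of the Logical Relation Lemma is admissible and that the identification of $\Trel\ONE{}$ with $\cR(\ONE)$ is the one used throughout the closeness lemmas, both of which hold by the very definitions in Figure~\ref{fig:adeq-rel-log-op}.
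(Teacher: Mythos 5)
Your proposal is correct and is essentially the paper's own proof, which simply cites Corollary~\ref{th:soundness-ineq} together with Theorem~\ref{lemma:adequacy} instantiated at the empty context. The one imprecision is that $\Trel\ONE{}$ is by definition the biorthogonal closure $\Crel{\Trelv\ONE{}}$ rather than literally $\cR(\ONE)$; the inclusion $\Trel\ONE{}\subseteq\cR(\ONE)$ that you actually need is obtained by testing against the pair $(\ABST x\ONE x,\Id)$, which belongs to $\Srel{(\LIMPL{\Trelv\ONE{}}{\cR(\ONE)})}{\epsilon}$ and, via the usual probability-preserving correspondence of reduction paths, turns $(M,\Psem M{})\in\Trel\ONE{}$ into $\Redmats^\infty_{M,\ONELEM}\geq\Psem M{}$.
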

By Corollary~\ref{th:soundness-ineq} and Theorem~\ref{lemma:adequacy}.

\section{Full Abstraction}\label{sec:FA}

We prove now the Full Abstraction Theorem~\ref{thm:fa}, that is the converse of the Adequacy Theorem.

\subsection{Outline of the proof.} 
We reason by contrapositive. Assume that two closed terms $M_1$ and $M_2$
have different semantics. Remember from Section~\ref{subsec:model-pcoh}, that a
closed term of type $\sigma$ is interpreted as a vector with indices in the web
$\Web{\Psem\sigma}$, so that there is $a\in\Web{\Psem\sigma}$ such that
$\Psem{M_1}_a\neq\Psem{M_2}_a$. We want to design a term that will separate
$M_1$ and $M_2$ observationally.

We define a \emph{testing term}
$\TSEQ{}{\Testt a}{\LIMPL{\EXCL\Tnat}{(\LIMPL{\EXCL\sigma}\ONE)}}$
that will depend only on the structure of the element $a$ of the
web. We then use properties of the semantics (namely that terms of
type $\LIMPL{\EXCL\Tnat}\tau$ can be seen as power series) to find reals $\vec p$ such
that $\LAPP{\Testt a}{\STOP{\Ran{\vec p}}}$ separates $M_1$ and $M_2$:
$$
\Redmats^\infty_{\LAPP{\LAPP{\Testt a}{\STOP{\Ran{\vec p}}}}{\STOP
    M_1},\ONELEM}\neq \Redmats^\infty_{\LAPP{\LAPP{\Testt a}{\STOP{\Ran{\vec
          p}}}}{\STOP M_2},\ONELEM}
$$

\medskip
\subsubsection*{Let us detail the key points of the proof.}

Remember from Section~\ref{subsec:Kleisli_fun} that, because
$\TSEQ{}{\Testt a}{\LIMPL{\EXCL\Tnat}{(\LIMPL{\EXCL\sigma}\ONE)}}$, its
interpretation $\Psem{\Testt a}$ can be seen as a power series
\begin{equation*}
 \Fun{\Psem{\Testt a}}(\vec\zeta)
=\Psem{\Testt a}\Compl {\Prom{\vec\zeta\ }}=\left(\sum_{[\List k1n]\in\Web{\Excl
         \Tnat}}\Psem{\Testt a}_{[\List k1n],b}\prod_{i=1}^n\zeta_{k_i}\right)_{b\in\Web{\LIMPL{\EXCL\sigma}\ONE}}
\end{equation*}
with infinitely many parameters
$\vec\zeta=(\zeta_0,\dots,\zeta_n,\dots)$. Moreover, if
$\sum_{i=0}^\infty \zeta_i\le 1$, then $\vec\zeta\in\Pcoh{\Psem\Tnat}$ and
$\Fun{\Psem{\Testt a}}(\vec\zeta)\in\Pcoh{\Psem{\LIMPL{\EXCL\sigma}\ONE}}$ (see
Theorem~\ref{prop:kleisli-morph-charact}).

The first key point is to remark that the testing term $\Testt a$ is defined in
such a way that ${\Psem{\Testt a}}$ has actually only finitely many parameters
$\List\zeta 0{\Lent a}$ (meaning that if the support of the multiset $c$ is not
included in $\{0,\dots,\Lent a\}$, then $\Psem{\Testt a}_{(c,b)}=0$).  Now, for any $u\in\Pcoh{\Psem\sigma}$,  $\Psem{\ABST y{\EXCL\sigma}{\ABST x{\EXCL\Tnat}{\LAPP{\LAPP{\Testt a}{ x}}y}}} \Prom u\in\Pcoh{\Psem {\LIMPL{\EXCL\Tnat}\ONE}}$. It is also a power series that depends on the same finitely many parameters $\List\zeta 0{\Lent a}$.

The second key point is a separation property of $\Psem{\Testt a }$:
we prove that, in the power series
$\Psem{\ABST y{\EXCL\sigma}{\ABST x{\EXCL\Tnat}{\LAPP{\LAPP{\Testt a}{
          x}}y}}} \Prom u$,
the coefficient of the unitary monomial\footnote{That is, the monomial
  where each exponent is equal to one.} $\prod_{k=0}^{\Lent a}\zeta_k$
is equal to $\coeff - a\,u_a$ with a coefficient $\coeff - a\neq 0$
which depends only on $a$. Now, by assumption, $\Psem{M_1}_a$ and
$\Psem{M_2}_a$ have different coefficient. 
For $i=1,2$, we have
  ${\Psem{\ABST x{\EXCL\Tnat}{\LAPP{\LAPP{\Testt a}{ x}}{\STOP
          M_i}}}}= \Psem{\ABST y{\EXCL\sigma}{\ABST
      x{\EXCL\Tnat}{\LAPP{\LAPP{\Testt a}{
            x}}y}}}{\Prom{\Psem{M_i}}}$. 
Thus, the power series
${\Psem{\ABST x{\EXCL\Tnat}{\LAPP{\LAPP{\Testt a}{ x}}{\STOP M_i}}}}$
(for $i=1,2$) have \emph{different coefficients}.

The last key point uses classical analysis: if two power series with
non-negative real coefficients and finitely many parameters have different
coefficients, then they differ on non-negative arguments $\vec p$ close enough
to zero: $\sum p_i\le 1$, so that $\vec p\in\Pcoh{\Psem\Tnat}$ and
${\Psem{\ABST x{\EXCL\Tnat}{\LAPP{\LAPP{\Testt a}{ x}}{\STOP M_1}}}}\Prom{\vec
p\,}\neq{\Psem{\ABST x{\EXCL\Tnat}{\LAPP{\LAPP{\Testt a}{ x}}{\STOP
        M_2}}}}\Prom{\vec p\,}$.

Finally, in order to substitute in $\Testt a$ the parameters $\vec \zeta$ with
the reals $\vec p$, we use $\Ran{\vec p}$ as introduced in
Paragraph~\ref{subsec:exsyn}. Indeed,
$\Psem{\LAPP{\LAPP{\Testt a}{\STOP{\Ran{\vec p}}}}{\STOP M_i}}=
\Fun{\Psem{\ABST x{\EXCL\Tnat}{\LAPP{\LAPP{\Testt a}{ x}}{\STOP M_i}}}}\Prom{\vec
p\,}$.
We conclude thanks to the Adequacy Theorem~\ref{th:rel-ad-lemma} that ensures
that
$$\Redmats^\infty_{\LAPP{\LAPP{\Testt a}{\STOP{\Ran{\vec p}}}}{\STOP M_1},\ONELEM}\neq \Redmats^\infty_{\LAPP{\LAPP{\Testt a}{\STOP{\Ran{\vec p}}}}{\STOP M_2},\ONELEM}$$

\subsection{Notations.}
In order to define the testing term $\Testt a$, we will reason by
induction and we will need to associate three kinds of \emph{testing
  terms} with the points of the webs. More precisely:
\begin{itemize}
\item Given a positive type $\phi$ and $a\in\Web{\Tsem\phi}$, we define a term
  $\Testv a$ such that 
  \begin{align*}
    &\TSEQ{}{\Testv a}{\LIMPL{\EXCL\Tnat}{\LIMPL\phi\ONE}}.
  \end{align*}
\item Given a general type $\sigma$ and $a\in\Web{\Tsem\sigma}$, we define
  terms $\Testa a$  and
  $\Testt a$ such that 
  \begin{align*}
    &\TSEQ{}{\Testa a}{\LIMPL{\EXCL\Tnat}\sigma}
    &\TSEQ{}{\Testt a}{\LIMPL{\EXCL\Tnat}{\LIMPL{\EXCL\sigma}\ONE}}.
  \end{align*}
  \end{itemize}

We also introduce natural numbers $\Lenv a$, $\Lent a$ and $\Lena a$
depending only on $a$. They represent the finite numbers of parameters
on which the power series $\Fun{\Psem{\Testv a}}$,
$\Fun{\Psem{\Testt a}}$ and $\Fun{\Psem{\Testa a}}$ depend
respectively.  

We denote as $\coeffv a$, $\coefft a$ and $\coeffa a$ natural numbers
depending only on $a$ and that will appear as the coefficient of the
unitary monomial $\prod_{k=0}^{\Lenv a}\zeta_k$,
$\prod_{k=0}^{\Lent a}\zeta_k$ and $\prod_{k=0}^{\Lena a}\zeta_k$
respectively of the corresponding power series. These numbers are all
$>0$.

\medskip

We use the terms introduced in the probabilistic tests paragraph of Subsection~\ref{subsec:exsyn}  and whose semantics are given in Subsection~\ref{subsec:exden}:
\begin{itemize}
\item $\Ran{\Vect p}$ which reduces to $\Num i$ with probability $p_i$ for
  $\sum_{i=0}^{k-1}p_i\le 1$,
\item $M_0\cdot N$ which reduces to $V$ with probability $ p\,q$ if $M_0$
  reduces to $\ONELEM$ with probability $p$ and $N$ reduces to $V$ with
  probability $q$,
\item $M_0\AND\dots\AND M_{k-1}$ which reduces to $\ONELEM$ with probability
  $\prod_{i=0}^{k-1}p_i$ if $M_i$ reduces to $\ONELEM$ with probability $p_i$,
\item $\LAPP{\Pchoose_k^\sigma(\List M0{k-1})}P$ which reduces to $M_i$
  with probability $p_i$ if  $p_i$ is the probability
  of $P$ to reduce to $\NUM i$,
\item $\LAPP{\Pext lr}{\GO Z}$ and $\LAPP{\Pwin k{\vec n}}{\GO Z}$ to
  partition the parameters $\GO Z$. Indeed $Z$ will denote a variable
  of type $\EXCL\Tnat$ and $\GO Z$ has to be considered as the
  sequence of parameters $\Vect\zeta$ of the power series
  interpreting testing terms. $\Lenv a$, $\Lent a$ and $\Lena a$
  represent the number of parameters on which the respective testing
  terms depend. We use $\LAPP{\Pwin i{\vec n}}{\GO Z}$ to extract
  subsequences of $\Vect \zeta$ that will be given as arguments to
  subterms in the inductive definition of the testing terms.  Remember
  that $\LAPP{\Pwin i{n_0,\dots,n_k}}{\GO Z} \Redone {p_l} \Num l$ if
  $l$ is in the $i$th window of size $n_i$, that is
  $n_0+\cdots+n_{i-1}\le l\le n_0+\cdots+n_{i}-1$ and $p_l$ is the
  probability that $\GO Z$ reduces to $\Num l$, that is the
  non-negative real   parameter $\zeta_l$.
  This is a key ingredient in the computation of the coefficient of the
  unitary monomial of the interpretation of testing terms by induction
  on type and on the structure of $a$. 
\end{itemize}

\subsection{Testing terms.}
We define the terms $\Testv a$, $\Testa a$ and $\Testt a$ and
the associated natural numbers $\Lenv a$, $\Lena a$ and $\Lent a$, by induction
on the structure of the point $a$.

\medskip
{Let $\phi$ be a positive type and $a\in\Web{\Tsem\phi}$.  We define
$\Testv a$ and $\Testa a$ by induction on the size of $a$ using the structure
of $\phi$}

\paragraph{Let $\phi=\EXCL\tau$ and $a=\Mset{\List b0{k-1}}$ with
$b_i\in\Web{\Tsem\tau}$.}

 By inductive hypothesis, we have built terms
$\TSEQ{}{\Testt{b_i}}{\LIMPL{\EXCL\Tnat}{\LIMPL{\EXCL\tau}\ONE}}$ and $\TSEQ{}{\Testa{b_i}}{\LIMPL{{\EXCL\Tnat}}\tau}$. Then
we set

\begin{multline*}
  \Testv a= \ABST{Z}{\EXCL\Tnat}{\ABST{x}{\EXCL\tau}{\LAPP{\EAPP{\Testt{b_0}}{{\left(\LAPP{\Pwin 0{\Lent{b_0},\dots,\Lent{b_{k-1}}}}\GO Z\right)}}}x\AND\cdots\\
\AND\LAPP{\EAPP{\Testt{b_{k-1}}}{\left(\LAPP{\Pwin {k-1}{\Lent{b_0},\dots,\Lent{b_{k-1}}}}\GO Z\right)}}x}},\\
 \coeffv a =\prod_{i=0}^{k-1} \coefft{b_i},\ 
\text{ and }  \Lenv a = \Lent{b_0} + \cdots + \Lent{b_{k-1}}\,.
\end{multline*}
\begin{multline*}
  \Testa a = \ABST{Z}{\EXCL\Tnat}{\STOP{\left(\langle\Pchoose^{\tau}_k\left(\EAPP{\Testa{b_0}}{\left(\LAPP{\Pwin{1}{k,\Lena{b_0},\cdots,\Lena{b_{k-1}}}}\GO Z\right)}\right.\right.,\dots,\\\left.\left.\EAPP{\Testa{b_{k-1}}}{\left(\LAPP{\Pwin{k}{k,\Lena{b_0},\cdots,\Lena{b_{k-1}}}}\GO Z\right)}\right)\rangle{\GO Z}\right)}},
\\ \coeffa a =\Factor a \,\prod_{i=0}^{k-1} \coeffa{b_i},\
  \text{ and }
  \Lena a =k+ \Lena{b_0} + \cdots + \Lena{b_{k-1}}\,.
\end{multline*}
Remember that the factorial $\Factor a$ of a multiset $a$ has been defined in
Paragraph~\ref{subsec:Kleisli_fun} as the number of permutations that fix $a$.

\paragraph{If $\phi=\TENS{\phi_\ell}{\phi_r}$ and $a=(b_\ell,b_r)$ with
$b_i\in\Web{\Tsem{\phi_i}}$ for $i\in\{\ell,r\}$,} then we set
\begin{multline*}
  \Testv a = \ABST{Z}{\EXCL\Tnat}{\ABST{x}{\phi
}
    {\LAPP{\EAPP{\Testv{b_\ell}}{\left(\LAPP{\Pwin 0{\Lenv{b_\ell},\Lenv{b_r}}}\GO Z\right)}}{\PR \ell x}\AND\LAPP{\EAPP{\Testv{b_r}}{\left(\LAPP{\Pwin 1{\Lenv{b_\ell},\Lenv{b_r}}}\GO Z\right)}}{\PR r x}}},\\
   \coeffv a = \coeffv{b_\ell}\, \coeffv{b_r},\ 
  \text{ and } \Lenv a=\Lenv{b_\ell}+\Lenv{b_r}\,.
\end{multline*}
\begin{multline*}
  \Testa a = \ABST{Z}{\EXCL\Tnat}\PAIR{\EAPP{\Testa{b_\ell}}{\left(\LAPP{\Pwin 0{\Lena{b_\ell},\Lena{b_r}}}\GO Z\right)}}{\EAPP{\Testa{b_r}}{\left(\LAPP{\Pwin 1{\Lena{b_\ell},\Lena{b_r}}}\GO Z\right)}},\\
  \coeffa a = \coeffa{b_\ell}\,\coeffa{b_r}
  \text{ and } 
  \Lena a=\Lena{b_\ell}+\Lena{b_r}\,.
\end{multline*}

\paragraph{If $\phi=\PLUS{\phi_\ell}{\phi_r}$ and $a=(\ell,a_\ell)$ with $b\in\Web{\Tsem{\phi_\ell}}$
(the case $a=(r,a_r)$ is similar),} then we set
\begin{align*}
  \Testv a = 
    \ABST{Z}{\EXCL\Tnat}{\ABST{x}
  {\PLUS{\phi_\ell}{\phi_r}}{\CASE{x}{y_\ell}
    {\LAPP{\LAPP{\Testv {a_\ell}}Z}{y_\ell}}{y_r}{\LOOP\ONE}}}, \quad \coeffv a = \coeffv{a_\ell}
\quad\text{and }  \Lenv a = \Lenv{a_\ell}\,.
\end{align*}
\begin{align*}
  \Testa a =\ABST{Z}{\EXCL\Tnat}{ \IN \ell {\LAPP{\Testa a_\ell}{Z}}}, \quad \coeffa a = \coeffa{a_\ell}\quad\text{and }
  \Lena a = \Lena{a_\ell}\,. 
\end{align*}

\medskip
Finally, for a general type $\sigma$ and $a\in\Web{\Tsem\sigma}$, we define  $\Testa a$ and $\Testt a$.

\paragraph{If $\sigma=\phi$ is positive,} then we have already defined  $\Testa a$.
\\
Let us now define $\Testt a$. This term does not
depend on the structure of $\phi$:
\begin{align*}
  \Testt a = \ABST Z{\EXCL\Tnat}{\ABST x{\EXCL\phi}{\LAPP{\LAPP{\Testv a}Z}{\GO x}}}
  , \quad \coefft a = \coeffv{a}\quad\text{and }   \Lent a = \Lenv{a} \,.
\end{align*}

\paragraph{If $\sigma=\LIMPL\phi\tau$ and $a=(b,c)$
with $b\in\Web{\Tsem\phi}$ and $c\in\Web{\Tsem\tau}$,} then we set
\begin{multline*}
  \Testa a =\ABST{Z}{\EXCL\Tnat}{ \ABST{x}{\phi}
      {{{\LAPP{\EAPP{\Testv b}{\left(\LAPP{\Pwin 0{\Lenv b,\Lena c}}\GO Z\right)}}x}\,\cdot\,{{\EAPP{\Testa c}{\left({\LAPP{\Pwin 1{\Lenv b,\Lena c}}\GO Z}\right)}}}}}}\,,\\
    \coeffa a = \coeffv{b}\,\coeffa{c}\,,
    \text{ and }\Lena a=\Lenv b+\Lena c\,.
\end{multline*}
\begin{multline*}
  \Testt a = \ABST Z{\EXCL\Tnat}{\ABST{f}{\EXCL{(\LIMPL\phi\tau)}}{
      {\LAPP{\EAPP{\Testt c}{\left(\LAPP{\Pwin 1{\Lena b,\Lent c}}{\GO Z}\right)}}{\STOP{\left(\LAPP{\GO f}{\EAPP{\Testa b}{\left(\LAPP{\Pwin 0{\Lena b,\Lent c}}\GO Z\right)}}\right)}}}}}\\
  \coefft a = \coeffa{b}\,\coefft{c},\,
  \text{ and }\Lent a=\Lena b+\Lent c\,.
\end{multline*}

It is easy to check that these terms satisfy the announced typing judgments. It
is also clear that $\coeffv a$, $\coeffa a$ and $\coefft a$ are non zero natural numbers.

\medskip We will now tackle the proof of the main observation: that is
that 
the semantics of
$\Testt a$ is a power series with finitely many parameters
and whose coefficient of the unitary monomial can be seen as a
morphism in $\Pcoh{\Tsem{\LIMPL{\EXCL\sigma}\ONE}}$.

  Lemma~\ref{lem:coeff-morph}
  introduces notations for the unitary monomials and provides
  useful properties for proving the key
  Lemma~\ref{lem:point-test} which gives the coefficients of these
  monomials.
\begin{lemma}\label{lem:coeff-morph}
  Let $\sigma$ be a general type and 
$t\in\Pcoh{\Tsem{\LIMPL{\EXCL\Tnat}{\sigma}}}$. 
\begin{enumerate}
\item 
Assume that there is $k\in\Nat$ such that for any $c\in\Web{\Tsem{\sigma}}$, the power series $\Fun t_c$
over $\Pcoh{\Tsem\Tnat}$ depends on  the $k$ first parameters.
For any $c\in\Web{\Tsem\sigma}$, let us denote as $\mon{\vec\zeta}{\Fun t}_c$  the coefficient of the monomial $\Listbis\zeta 0{k-1}$ of $\Fun t_c$. Then,  $k^{-k}\,\mon{\vec\zeta}{\Fun t}\in\Pcoh{\Tsem\sigma}$.  
\label{item1:coeff-morph}
\item Assume moreover that $\sigma=\LIMPL\phi\tau$ where $\phi$ is
a positive type and $\tau$
a general type. 
Let $m\in\Pcoh{\Tsem\tau}$
and $a\in\Web{\Tsem{\phi}}$.
If $\forall
u\in \PcohEM{\Tsemca\phi}\ \mon{\vec\zeta}{\Fun t}u
=mu_a$ then $\forall u\in \Pcoh{\Tsem\phi}\ \mon{\vec\zeta}{\Fun
  t}u =mu_a$.
\label{item2:coeff-morph}
\end{enumerate}
\end{lemma}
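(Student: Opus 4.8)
The plan is to treat the two parts separately, both reducing to elementary positivity facts together with the density of $\Tsemca\phi$. For part~\ref{item1:coeff-morph} I would first recognise $t$ as a power series $\Fun t$ via Theorem~\ref{prop:kleisli-morph-charact}, so that $\mon{\vec\zeta}{\Fun t}_c=t_{\Mset{0,\dots,k-1},c}$ is exactly the coefficient indexed by the multiset in which each of $0,\dots,k-1$ occurs once. The key move is to \emph{evaluate} $\Fun t$ at the particular argument $w\in\Realpto{\Web{\Tsem\Tnat}}$ with $w_i=1/k$ for $0\le i<k$ and $w_i=0$ otherwise. Since $\sum_i w_i=1$ we have $w\in\Pcoh{\Tsem\Tnat}$, hence $\Fun t(w)\in\Pcoh{\Tsem\sigma}$ by Theorem~\ref{prop:kleisli-morph-charact}. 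Writing $\Fun t(w)_c=\sum_{b}t_{b,c}\,w^b$ and noting that every summand is non-negative while the summand for $b=\Mset{0,\dots,k-1}$ equals $k^{-k}\mon{\vec\zeta}{\Fun t}_c$, I obtain $\Fun t(w)\ge k^{-k}\mon{\vec\zeta}{\Fun t}$ coordinatewise. The conclusion $k^{-k}\mon{\vec\zeta}{\Fun t}\in\Pcoh{\Tsem\sigma}$ then follows because $\Pcoh{\Tsem\sigma}$ is downward closed, which is immediate from $\Pcoh X=\Biorth{\Pcoh X}$: if $0\le v\le w'$ with $w'\in\Pcoh X$ then $\Eval v{u'}\le\Eval{w'}{u'}\le 1$ for every $u'\in\Orth{\Pcoh X}$.

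For part~\ref{item2:coeff-morph} I would write $\Web{\Tsem\sigma}=\Web{\Tsem\phi}\times\Web{\Tsem\tau}$, so that $\mon{\vec\zeta}{\Fun t}$ is a \emph{matrix}, and the hypothesis equates the two linear maps $u\mapsto\mon{\vec\zeta}{\Fun t}u$ and $u\mapsto m\,u_a$ on the coalgebraic elements $u\in\PcohEM{\Tsemca\phi}$. The second map is given by the matrix $s$ with $s_{(b,d)}=\Kronecker ba\,m_d$, since $(\Matapp su)_d=m_d u_a$. The plan is then to apply density of $\Tsemca\phi$ (Theorem~\ref{th:pos-types-dense}, through Definition~\ref{def:coalg-dense}) to conclude $\mon{\vec\zeta}{\Fun t}=s$. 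The one point requiring care is that density is stated for \emph{morphisms} of $\PCOH$, whereas a matrix need not be a morphism; so I first check that both matrices are norm-bounded. For $\mon{\vec\zeta}{\Fun t}$ this is exactly part~\ref{item1:coeff-morph}, which makes $k^{-k}\mon{\vec\zeta}{\Fun t}$ a morphism. For $s$ it follows from the third axiom of a PCS, which yields $A>0$ with $u_a\le A$ for all $u\in\Pcoh{\Tsem\phi}$, whence $\Eval{\Matapp su}{v'}=u_a\Eval m{v'}\le A$ for all $u\in\Pcoh{\Tsem\phi}$ and all $v'\in\Orth{\Pcoh{\Tsem\tau}}$ (using $\Eval m{v'}\le 1$). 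Choosing a single $\lambda>0$ with $\lambda\le k^{-k}$ and $\lambda A\le 1$, both $\lambda\mon{\vec\zeta}{\Fun t}$ and $\lambda s$ lie in $\PCOH(\Tsem\phi,\Tsem\tau)$, closure under scaling by $[0,1]$ being part of Theorem~\ref{th:Pcoh-prop}.

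These two morphisms agree on every $u\in\PcohEM{\Tsemca\phi}$ by hypothesis, and $\Tsemca\phi$ is dense because $\phi$ is positive (Theorem~\ref{th:pos-types-dense}); hence $\lambda\mon{\vec\zeta}{\Fun t}=\lambda s$ by Definition~\ref{def:coalg-dense}, so $\mon{\vec\zeta}{\Fun t}=s$ as matrices and therefore $\mon{\vec\zeta}{\Fun t}u=\Matapp su=m\,u_a$ for \emph{all} $u\in\Pcoh{\Tsem\phi}$, as required. I expect the main obstacle to be not conceptual but exactly this bookkeeping step: one must notice that $\sup_u u_a$ is finite (from the PCS axioms) in order to scale the plainly-unbounded map $u\mapsto m\,u_a$ into an honest morphism before the density theorem can be invoked; once both maps are morphisms, everything reduces to linearity plus Theorem~\ref{th:pos-types-dense}.
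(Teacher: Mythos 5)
Your proposal is correct and follows essentially the same route as the paper: part (1) by evaluating the power series at the uniform distribution $(1/k,\dots,1/k)$ and using downward closure of $\Pcoh{\Tsem\sigma}$, and part (2) by rescaling both $u\mapsto\mon{\vec\zeta}{\Fun t}u$ and $u\mapsto m\,u_a$ into honest morphisms (using the PCS boundedness axiom for the latter) before invoking density of $\Tsemca\phi$. The only differences are presentational — you make the downward-closure and scaling bookkeeping more explicit than the paper does.
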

\begin{proof}
  We prove~\Eqref{item1:coeff-morph}.

  First notice that $\forall c\in\Web{\Tsem\sigma}$,
  the coefficient of the monomial $\prod_{i=0}^{k-1}\zeta_i$ is $\mon{\vec\zeta}{\Fun
    t}_c=t_{([0,\dots,k-1],c)}$. Now, let $\Vect{\tfrac1k}$ be the sequence of $k$ coefficients all equal to $\frac1k$:
  \begin{equation*}
    \Fun{t}_c(\vec{\tfrac
    1k})=\displaystyle\sum_{\substack{\mu\in\Web{\Tsem{\EXCL\Tnat}}\\\Supp\mu\subseteq\{0,\dots,k-1\}}}t_{(\mu,c)}{\tfrac 1k}^{\Card\mu}
\end{equation*}
so that $\mon{\vec\zeta}{\Fun t}_c\, k^{-k}\le
  \Fun{t}(\vec{\tfrac 1k})_c$. Since $\Vect{\tfrac
    1k}\in\Pcoh{\Tsem\Tnat}$, $\Fun{t}(\vec{\tfrac
    1k})\in\Pcoh{\Tsem\sigma}$ which is downward closed, we have that
  $k^{-k}\,\mon{\vec\zeta}{\Fun t}\in\Pcoh{\Tsem\sigma}$.

  Now we prove~\Eqref{item2:coeff-morph}.

  For any $a\in\Web{\Tsem{\phi}}$, there is $A$ such that $u_a\le A$
  for any $u\in\Pcoh{\Tsem\phi}$ (see
  Subsection~\ref{subsec:model-pcoh}). Hence, for any
  $u\in\Pcoh{\Tsem\phi}$, $\tfrac{u_a}A m\in\Pcoh{\Tsem\tau}$ and we
  deduce thanks to Lemma~\ref{lemma:PCS-moprh-charact} that
  $u\mapsto u_a\,m\,A^{-1}$ is in
  $\Pcoh{\Tsem{\LIMPL\phi\tau}}$. Without loss of generality, we can
  choose $A\ge k^{k}$ so that $u\mapsto u_a\,m\,A^{-1}$ and
  $u\mapsto A^{-1}\mon{\vec\zeta}{\Fun t}u$ are both in
  $\Pcoh{\Tsem{\LIMPL\phi\tau}}$. Now, since $\Tsemca\phi$
  is dense (see
  Theorem~\ref{th:pos-types-dense}), if
  $u\mapsto A^{-1}\,\mon{\vec\zeta}{\Fun t}u$ and
  $u\mapsto A^{-1}\,u_a\,m$ are equal on all coalgebraic points
  $u\in\PcohEM{\Tsemca\phi}$, they are equal (see
  Definition~\ref{def:dense}). Thus, for all $u\in\Pcoh{\Tsem\phi}$,
  $u_a\, m=\mon{\vec\zeta}{\Fun t}u$.
\end{proof}

We are now ready to prove that the coefficient of the
unitary monomial of a testing term associated with a point $a$ of the
web allows to extract the $a$-coefficient of an argument, up to a
non-zero coefficient depending only on $a$.

This is   central in the proof of Full
Abstraction.
Let us first introduce some notations that will be used along this proof.

Intuitively, for $s\in \Pcoh{\Tsem{\LIMPL{\EXCL\Tnat}\sigma}}$, we
reason on power series $\Fun s$ with values in
$\Pcoh{\Tsem{\sigma}}$. But formally, we reason on the non-negative
real power series $\Fun s_{a'}$ defined for each
$a'\in\Web{\Tsem\sigma}$ and for all
parameters\footnote{\label{footnote:parameters}We follow the common
  mathematical practice of using the same notation
  $\Vect\zeta=(\List\zeta0n)$ to refer to the formal parameters of a
  power series and to real arguments of the corresponding
  function. 
}
$\vec\zeta\in\Pcoh{\Tsem\iota}$ as $\Fun s_{a'}(\vec\zeta)=(\Fun
s(\vec\zeta))_{a'}=(s\Prom{\left.\vec \zeta\,\right.})_{a'}$ (see
Paragraph~\ref{subsec:Kleisli_fun}).

We want to compute the unitary monomial of $\Fun s$ which will be in $\Pcoh{\Tsem\sigma}$. We define it for each $a'\in\Web{\Tsem\sigma}$ as $\mon{\vec\zeta}{s}_{a'}=\mon{\vec\zeta}{\Fun s_{a'}}$.

We will also use the fact that a morphism
$t\in\Pcoh{\Tsem{\LIMPL\phi\ONE}}$ is defined by the collection of
$t_{(a',\ast)}$
for $a'\in\Web{\Tsem\phi}$ and is extensionally characterized
by its values $t\,u$ on every $u\in \Pcoh{\Tsem\phi}$.  Indeed, for
any $a'\in\Web{\Tsem\phi}$, there is $\epsilon>0$ such that
$\epsilon \Base{a'}\in\Pcoh{\Tsem\phi}$ and
$t_{(a',\ast)}=\tfrac1\epsilon\,(t\,\epsilon\Base{a'})_{\ast}$ by
linearity of matrix multiplication.

\begin{lemma}\label{lem:point-test}
  Let $\sigma$ be a type and $a\in\Web{\Tsem\sigma}$.

\begin{enumerate}
\item\label{point-test0}
Assume that $\sigma=\phi$ is positive. If $a'\in\Web{\Tsem\phi}$, then
$\Fun{\Psem{\Testv a}}_{(a',\ast)}$ is a power series over
$\Pcoh{\Tsem\Tnat}$ depending on $\Lenv a$ parameters, so we define $\mon{\vec\zeta}{\Psem{\Testv a}}_{(a',\ast)}=\mon{\vec\zeta}{\Fun{\Psem{\Testv a}}_{(a',\ast)}}$ and check that
$\mon{\vec\zeta}{\Psem{\Testv a}}\in\Pcoh{\Tsem{\LIMPL\phi\ONE}}$  and that
for any $u\in\Pcoh{(\Tsem\phi)}$,
$\mon{\vec\zeta}{\Psem{\Testv a}}u=\coeffv
a\,u_a$.
\label{point-testa0}
\item
Assume that $\sigma$ is a general type. For any $a'\in\Web{\Tsem\sigma}$, $\Fun{\Psem{\Testa a}}_{a'}$ 
is a power series over $\Pcoh{\Tsem\Tnat}$ depending on $\Lena a$ parameters, so we define $\mon{\vec\zeta}{\Psem{\Testa a}}_{a'}=\mon{\vec\zeta}{\Fun{\Psem{\Testa a}}_{a'}}$ and check that $\mon{\vec\zeta}{\Psem{\Testa a}}\in\Pcoh{\Tsem\sigma}$ and that 
$\mon{\vec\zeta}{\Psem{\Testa a}}=\coeffa a\,\Base a$ where $\Base a$ is the base vector such that $(\Base a)_{a'}=\Kronecker{a'}a$ for $a'\in\Web{\Tsem\sigma}$.
\label{point-testa+}
\item
  Let $\sigma$ be a general type.
  For any $a'\in\Web{\Tsem{\EXCL\sigma}}$, $\Fun{\Psem{\Testt a}}_{(a',\ast)}$ is a power series over $\Pcoh{\Tsem\Tnat}$ depending on $\Lent a$ parameters, so we define $\mon{\vec\zeta}{\Psem{\Testt a}}_{(a',\ast)}=\mon{\vec\zeta}{\Fun{\Psem{\Testt a}}_{(a',\ast)}}$ and check that
  $\mon{\vec\zeta}{\Psem{\Testt a}}\in\Pcoh{\Tsem{\LIMPL{\EXCL\sigma}\ONE}}$
  and that for any $u\in \Pcoh{\Tsem\sigma}$,
  $\mon{\vec\zeta}{\Psem{\Testt a}}\Prom u=\coefft
  a\,u_a$. 
\label{point-testa-}
\end{enumerate}
\end{lemma}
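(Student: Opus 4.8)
The plan is to prove the three statements \Eqref{point-test0}, \Eqref{point-testa+} and \Eqref{point-testa-} simultaneously by induction on the structure of the point $a$, following exactly the case analysis used to define $\Testv a$, $\Testa a$ and $\Testt a$. For each case I would verify three things in turn: (i) that the relevant power series depends only on the first $\Lenv a$ (resp.\ $\Lena a$, $\Lent a$) parameters; (ii) that the unitary-monomial coefficient is a morphism of the claimed type; and (iii) the explicit value of that coefficient. Point (i) is read off structurally: the variable $\GO Z$ occurs in each testing term only inside subterms of the form $\LAPP{\Pwin i{\Vect n}}{\GO Z}$, and by the reduction behaviour of $\Pwin i{\Vect n}$ recalled in Paragraph~\ref{subsec:not-terms} these select pairwise disjoint, contiguous windows of the parameter sequence whose lengths are exactly the $\Lent{b_i}$ (resp.\ $\Lenv{b_i}$, $\Lena{b_i}$) of the immediate subpoints; summing the lengths gives the stated total and shows the windows tile an initial segment $\{0,\dots,L-1\}$. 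Granting (i), point (ii) is then immediate from Lemma~\ref{lem:coeff-morph}\Eqref{item1:coeff-morph} applied with $k$ equal to the relevant length.

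The heart of the argument is point (iii), and its mechanism is the same disjointness of windows. Since the subterms feed on disjoint parameter blocks, the power series attached to a composite testing term factors, for each fixed output index, as a product (or composite) of the subterm power series in disjoint sets of variables; consequently the coefficient of the unitary monomial $\prod_{k=0}^{L-1}\zeta_k$ of the whole is the product of the unitary-monomial coefficients of the parts. This is precisely what makes the coefficients multiply and the lengths add in the definitions. Concretely, in the case $\phi=\EXCL\tau$ with $a=\Mset{\List b0{k-1}}$ the term $\Testv a$ is a conjunction $\AND$ of the $k$ probes $\LAPP{\EAPP{\Testt{b_i}}{(\LAPP{\Pwin i{\Vect n}}{\GO Z})}}{x}$; using $\Psem{M_0\AND\cdots\AND M_{k-1}}=\prod_i\Psem{M_i}$ (Subsection~\ref{subsec:exden}) and the inductive value of each $\coefft{b_i}$ one gets $\coeffv a=\prod_i\coefft{b_i}$. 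The tensor and sum cases use $\Psem{\PAIR{M_\ell}{M_r}}=\Tens{\Psem{M_\ell}}{\Psem{M_r}}$ and the injections, while the arrow cases additionally use the combinators $M_0\cdot N$ and $\Pchoose$ together with the promotion $\STOP{(\LAPP{\GO f}{\cdots})}$.

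Direct induction only yields the coefficient identity on the coalgebraic points, where the structure is clean: by Lemma~\ref{lemma:sem-values} one has $\PcohEM{(\Excl X)}=\{\Prom u\}$, and $\PcohEM{(\Tens{P_\ell}{P_r})}$, $\PcohEM{(\Plus{P_\ell}{P_r})}$ are the expected product and coproduct, so that $\Prom u_a$, $(\Tens{u_\ell}{u_r})_{(b_\ell,b_r)}$ and $(\Inj i(u))_{(i,a_\ell)}$ split exactly along the window decomposition. To upgrade the identities $\mon{\Vect\zeta}{\Psem{\Testv a}}u=\coeffv a\,u_a$ and $\mon{\Vect\zeta}{\Psem{\Testt a}}\Prom u=\coefft a\,u_a$ from coalgebraic to arbitrary $u\in\Pcoh{\Tsem\phi}$ I would invoke the density of $\Tsemca\phi$ (Theorem~\ref{th:pos-types-dense}, Definition~\ref{def:dense}) in the packaged form of Lemma~\ref{lem:coeff-morph}\Eqref{item2:coeff-morph}. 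The positive subcase of \Eqref{point-testa-} is handled separately and trivially: $\Testt a$ is built from $\Testv a$ through a dereliction $\GO x$, so $\coefft a=\coeffv a$ and $\Lent a=\Lenv a$, and the claim follows from \Eqref{point-test0}.

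The main obstacle I expect is the arrow case $\sigma=\LIMPL\phi\tau$, say with $a=(b,c)$. Consider $\Testt a$. Extracting the coordinate $u_{(b,c)}$ of a \emph{linear} morphism $u$ is not a mere projection: the term first probes the domain with $\Testa b$---which by the inductive hypothesis \Eqref{point-testa+} returns the base vector $\coeffa b\,\Base b$---packages it as $\STOP{(\LAPP{\GO f}{\cdots})}$, and then reads off the $c$-coordinate via $\Testt c$, all while keeping the two windows $\LAPP{\Pwin 0{\Lena b,\Lent c}}{\GO Z}$ and $\LAPP{\Pwin 1{\Lena b,\Lent c}}{\GO Z}$ disjoint so that the unitary monomial still factors. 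The delicate bookkeeping is to check that this two-stage probing contributes exactly degree $\Lena b+\Lent c$ and coefficient $\coeffa b\,\coefft c$, and to apply density at the precise point where the linear argument is replaced by a coalgebraic one; the case of $\Testa a$ is entirely analogous, with $\Lenv b+\Lena c$ and $\coeffv b\,\coeffa c$. A secondary, routine point is the finiteness-of-parameters claim under recursive-type unfoldings, which is nonetheless controlled by the structural decrease of $a$ that also underlies the definition of the testing terms.
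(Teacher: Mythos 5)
Your overall architecture coincides with the paper's: the same mutual induction on the structure of $a$ following the definition of the testing terms, the same use of Lemma~\ref{lem:coeff-morph}\Eqref{item1:coeff-morph} for membership of the unitary-monomial coefficient in the right PCS, and the same use of density via Lemma~\ref{lem:coeff-morph}\Eqref{item2:coeff-morph} to upgrade the identities from coalgebraic points to arbitrary $u$. The treatment you sketch for the arrow case also matches the paper's.

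There is, however, one concrete gap: your stated mechanism --- ``the coefficient of the unitary monomial of the whole is the product of the unitary-monomial coefficients of the parts'' --- is false in the case that actually carries the combinatorial content, namely statement~\Eqref{point-testa+} for $\phi=\EXCL\tau$ and $a=\Mset{\List b0{k-1}}$. There $\Testa a$ is built with $\Pchoose^\tau_k$ driven by a selector window $\vec\zeta^\ast$ of length $k$ (whence the $+k$ in $\Lena a$), and one must expand
\begin{align*}
  \Promp{\textstyle\sum_{i=0}^{k-1}\zeta^\ast_i\,\Psem{\Testa{b_i}}\Prom{\left.\vec\zeta^{i}\right.}}_{a'}
  =\prod_{j=0}^{k'-1}\Bigl(\textstyle\sum_{i=0}^{k-1}\zeta^\ast_i\,\Psem{\Testa{b_i}}\Prom{\left.\vec\zeta^{i}\right.}\Bigr)_{b'_j}
\end{align*}
for $a'=\Mset{\List{b'}0{k'-1}}$. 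Requiring each $\zeta^\ast_i$ to appear exactly once forces $k'=k$ (which is what makes the coefficient vanish off the diagonal, a point your proposal never addresses but which is needed for $\mon{\vec\zeta}{\Psem{\Testa a}}=\coeffa a\,\Base a$), and when $k'=k$ the coefficient is a \emph{sum over permutations} $\sum_{\rho\in\mathfrak{S}_k}\prod_i\mon{\vec\zeta^i}{\Psem{\Testa{b_i}}}_{b'_{\rho(i)}}$, not a bare product; counting the permutations that fix the multiset is what produces the factor $\Factor a$ in the definition $\coeffa a=\Factor a\,\prod_i\coeffa{b_i}$. Without this multinomial/permutation step the induction does not close (your heuristic would yield $\prod_i\coeffa{b_i}$, contradicting the definition of $\coeffa a$ that the arrow case of \Eqref{point-testa-} then relies on). A minor related slip: $\Pchoose$ occurs in this exponential case of $\Testa a$, not in the arrow cases as you state.
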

\begin{proof}
  Let us argue by mutual induction on the size of $a$ and the structure of  $\phi$.
 
\medskip
Let $\phi$ be a positive type and $a\in\Web{\Tsem\phi}$.  We prove~\Eqref{point-testa+} and~\Eqref{point-testa0}  by induction on  the structure
of $\phi$

\paragraph{Assume that $\phi=\EXCL\tau$ and that  $a=\Mset{\List b0{k-1}}$} with
$b_i\in\Web{\Tsem\tau}$.

\smallskip We prove~\Eqref{point-testa0}. Let
$a'=\Mset{\List{b'}0{k'-1}}\in\Tsem{\Excl\tau}$ with
$b'_j\in\Web{\Tsem\tau}$ and $\vec\zeta\in\Pcoh{\Tsem\Tnat}$ be the
concatenation of the finite sequences\footnote{We assume
  that the support of indices of the sequences are disjoint even if
  this requires some renaming.} $\vec\zeta^i\in\Pcoh{\Tsem\Tnat}$ such
that the length of $\vec\zeta^i$ is $\Lent{b_i}$.

By Theorem~\ref{th:pos-types-dense},
$\Fun{\Psem{\Testv a}}_{(a',\ast)}(\vec\zeta)=(\Psem{\Testv
  a}\,\Prom{\left.\vec\zeta\right.})_{(a',\ast)}\in\Pcoh{\Tsem{\LIMPL{\EXCL\tau}\ONE}}$
is completely determined by the function $u\mapsto \Psem{\Testv a}\,\Prom{\left.\vec\zeta\right.}\,\Prom u$ defined on $\Pcoh{\Psem{\tau}}$.
 By inductive hypothesis,
$\Fun{\Psem{\Testv a}}_{(a',\ast)}$ depends on finitely many
parameters $\Lenv a=\Lent{b_0}+\dots+\Lent{b_{k-1}}$, since
\begin{align*}
  \Psem{\Testv a}\Prom{\left.\vec\zeta\right.}\Prom u
  = \prod_{i=0}^{k-1}\Psem{\Testt{b_i}}\Prom{\left.\vec\zeta^i\right.}\Prom u
  \qquad\text{and therefore}\qquad 
  \mon{\vec\zeta}{  \Psem{\Testv a}}\Prom u
  =
  \prod_{i=0}^{k-1}\mon{\vec\zeta^i}{\Psem{\Testt{b_i}}}\Prom{u}.
\end{align*}
Again, by inductive hypothesis it follows that
\[\mon{\vec\zeta}{\Psem{\Testv{a}}}\Prom u=\prod_{i=0}^{k-1}\coefft{b_i}\,u_{b_i}=\coeffv a\,(\Prom u)_a. \]
We can apply~\Eqref{item2:coeff-morph} of Lemma~\ref{lem:coeff-morph}, so that we have $\mon{\vec\zeta}{\Psem{\Testv{a}}}u=\coeffv a\,u_a$  for all $u\in\Pcoh{\Tsem{\EXCL\tau}}$.

\smallskip
We prove~\Eqref{point-testa+}.
Let
$a'=\Mset{\List{b'}0{k'-1}}\in\Tsem{\Excl\tau}$ and
$\vec\zeta\in\Pcoh{\Tsem\Tnat}$ be the concatenation of the finite sequences
$\vec\zeta^\ast,\vec\zeta^0,\dots,\vec\zeta^{k-1}\in\Pcoh{\Tsem\Tnat}$ such that the length of $\Vect\zeta^{\ast}$ is $k$
and the length of $\vec\zeta^{i}$ is $\Lena{b_i}$ for $i\ge 0$.
By inductive hypothesis, $\Fun{\Psem{\Testa a}}_{a'}$ depends on finitely many parameters $\Lenv a=k+\Lena{b_0}+\dots+\Lena{b_{k-1}}$, since

\begin{align}
  \label{eq:aplus-expression}
  \Fun{\Psem{\Testa a}}_{a'}(\vec \zeta)=(\Psem{\Testa a}\Prom{\left.\vec\zeta\right.})_{a'} &= \Promp{\sum_{i=0}^{k-1} {\zeta^\ast_i}\ \Psem{\Testa{b_i}}\,\Prom{\left.\vec\zeta^{i}\right.}}_{a'}=\prod_{j=0}^{k'-1}\left(\sum_{i=0}^{k-1} {\zeta^\ast_i}\ \Psem{\Testa{b_i}}\,\Prom{\left.\vec\zeta^{i}\right.}\right)_{b'_j}\,.
\end{align}
We want to compute the coefficient of the unitary monomial, which
contains exactly one copy of each parameter of each $\Vect\zeta^i$. 
If $k'\neq k$ then expression~\Eqref{eq:aplus-expression} contains no monomial where each parameter of $\Vect\zeta^\ast$ appears exactly once,
 so that
$\mon{\vec\zeta}{\Psem{\Testa a}}_{a'}=0$ in that case. If $k'=k$ and
$\mathfrak{S}_k$ is the set of permutations over $k$, then by using
the fact that factorial
$\Factor a=\Card\{\rho\in\mathfrak S_k \St \forall i\,
b_i=b_{\rho(i)}\}$,
by denoting the Kronecker symbol as $\Kronecker{a}{a'}$ and by the
inductive hypothesis, we get:
\begin{align*}
  \mon{\vec\zeta}{\Psem{\Testa a}}_{a'}=\sum_{\rho\in\mathfrak{S}_k}\prod_{i=0}^{k-1}\mon{\vec\zeta^i}{\Psem{\Testa{b_i}}}_{b'_{\rho(i)}}=\Factor a\  \prod_{i=0}^{k-1} \coeffa{b_i}\Kronecker{b_i}{b'_i}=\coeffa a\, \Kronecker{a}{a'}=\coeffa a\,   (\Base a)_{a'}\,.
\end{align*}

\paragraph{
Assume that $\phi=\TENS{\phi_\ell}{\phi_r}$ and  that $a=(b_\ell,b_r)$ with
$a_i\in\Web{\Tsem{\phi_i}}$.} Let $\vec\zeta\in\Pcoh{\Tsem\Tnat}$ be the concatenation of the finite sequences $\vec\zeta^\ell,\vec\zeta^r\in\Pcoh{\Tsem\Tnat}$ such that the length of $\vec\zeta^i$ is $\Lena{b_i}$. 

\smallskip
We prove~\Eqref{point-testa0}. 
Let $a'=(b'_\ell,b'_r)\in\Web{\Tsem{\TENS{\phi_\ell}{\phi_r}}}$.
By Theorem~\ref{th:pos-types-dense},
$\Fun{\Psem{\Testv a}}_{(a',\ast)}(\vec\zeta)=(\Psem{\Testv a}\,\Prom{\left.\vec\zeta\right.})_{(a',\ast)}\in\Pcoh{\Tsem{\LIMPL{\TENS{\phi_\ell}{\phi_r}}\ONE}}$ 
is completely determined by the function $u\mapsto \Psem{\Testv a}\,\Prom{\left.\vec\zeta\right.}\,\Prom u$ defined on $\Pcoh{\Psem{\Tens{\phi_\ell}{\phi_r}}}$.
Besides, 
if $u\in\PcohEM{(\Tsemca{\TENS{\phi_\ell}{\phi_r}})}$, then $u=\Tens{u_\ell}{u_r}$ where $u_i=\Projt
i(u)\in\PcohEM{(\Tsemca{\phi_i})}$ for $i\in\{\ell,r\}$ (see Lemma~\ref{lemma:sem-values}). Therefore,
 by inductive hypothesis, $\Fun{\Psem{\Testv a}}_{(a',\ast)}$ depends on finitely many parameters $\Lenv a=\Lenv{b_\ell}+\Lenv{b_r}$, since
\begin{align*}
  \Psem{\Testv a}\Prom{\left.\vec\zeta\right.}u
  &= 
  \Psem{\Testv{b_\ell}}\Prom{\left.\vec\zeta^\ell\right.}u_\ell\,\Psem{\Testv{b_r}}\Prom{\left.\vec\zeta^r\right.}u_r
\quad  \text{ and  therefore }\quad
  \mon{\vec\zeta}{  \Psem{\Testv a}}u=
  \mon{\vec\zeta^\ell}{\Psem{\Testv{b_\ell}}}u_\ell\,\mon{\vec\zeta^r}{\Psem{\Testv{b_r}}}u_r.
\end{align*}
Hence, by inductive hypothesis $\mon{\vec\zeta}{  \Psem{\Testv a}}u=\coeffv{b_\ell}(u_\ell)_{b_\ell}\,\coeffv{b_r}(u_r)_{b_r}=\coeffv{a}u_{a}$ for $u\in\PcohEM{\Tsemca\phi}$. We conclude by Lemma~\ref{lem:coeff-morph}, that this holds also for $u\in\Pcoh{\Tsem\phi}$.

\smallskip
We prove~\Eqref{point-testa+}. 
Let $a'=(b'_\ell,b'_r)\in\Web{\Tsem{\TENS{\phi_\ell}{\phi_r}}}$. By inductive hypothesis, $\Fun{\Psem{\Testa a}}_{a'}$ depends on finitely many parameters $\Lena a=\Lena{b_\ell}+\Lena{b_r}$, since
\begin{align*}
 \Fun{\Psem{\Testa a}}_{a'}(\vec\zeta)= (\Psem{\Testa a}\Prom{\vec\zeta\,})_{a'} = \Tens{(\Psem{\Testa{b_\ell}}\Prom{\left.\vec\zeta^\ell\right.})_{b'_\ell}}{(\Psem{\Testa{b_r}}\Prom{\left.\vec\zeta^r\right.})_{b'_r}}
\end{align*}
We deduce using inductive hypothesis that 
\begin{align*}
  \mon{\vec\zeta}{\Psem{\Testa a}}_{a'}
  = \mon{\vec\zeta^\ell}{\Psem{\Testa{b_\ell}}}_{b'_\ell}\,\mon{\vec\zeta^r}{\Psem{\Testa{b_r}}}_{b'_r}=\coeffa{b_\ell}\,\Kronecker{b_\ell}{b'_\ell}\ \coeffa{b_r}\,\Kronecker{b_r}{b'_r}=\coeffa{a}\,\Kronecker{a}{a'}.
\end{align*}

\paragraph{
Assume that $\phi=\PLUS{\phi_\ell}{\phi_r}$ and  that $a=(\ell,a_\ell)$ with
$a_\ell\in\Web{\Tsem{\phi_\ell}}$}(the case $a=(r,a_r)$ is similar).

\smallskip
We prove~\Eqref{point-testa0}.  Let $a'=(i,a'_i)\in \Web{\Tsem{\PLUS{\phi_\ell}{\phi_r}}}$.
By Theorem~\ref{th:pos-types-dense},
$\Fun{\Psem{\Testv a}}_{(a',\ast)}(\vec\zeta)=(\Psem{\Testv a}\,\Prom{\left.\vec\zeta\right.})_{(a',\ast)}\in\Pcoh{\Tsem{\LIMPL{\PLUS{\phi_\ell}{\phi_r}}\ONE}}$ is completely determined by the function $u\mapsto \Psem{\Testv a}\,\Prom{\left.\vec\zeta\right.}\,u$ defined on $\PcohEM{\Tsemca{\PLUS{\phi_\ell}{\phi_r}}}$. 
Besides, 
if $u\in\PcohEM{(\Tsemca{\PLUS{\phi_\ell}{\phi_r}})}$, then there is $i\in\{\ell,r\}$ such that
$u=\Inj i {u_i}$ with $u_i\in\PcohEM{(\Tsemca{\phi_i})}$ (see Lemma~\ref{lemma:sem-values}).  Therefore,
 by inductive hypothesis, $\Fun{\Psem{\Testv a}}_{(a',\ast)}$ depends on finitely many parameters $\Lenv a=\Lenv{a_\ell}$, since
if $i=\ell$, then
\begin{align*}
  \Psem{\Testv a}\Prom{\left.\vec\zeta\right.}u=\Psem{\ABST{x}
  {\PLUS{\phi_\ell}{\phi_r}}{\CASE{x}{y_\ell}
    {\LAPP{\Testv{a_\ell}}{y_\ell}}{y_r}{\LOOP\ONE}}}\Prom{\left.\vec\zeta\right.}u=\Psem{\Testv{a_\ell}}\Prom{\left.\vec\zeta\right.}u_\ell
\end{align*}
and if $i=r$ then $\Psem{\Testv a}\Prom{\left.\vec\zeta\right.}u=\Psem{\Loopt\One}=0$. 
So we can compute that  $\mon{\vec\zeta}{\Psem{\Testv
  a}}u=\coeffv{a_\ell}(u_\ell)_{a_\ell}=\coeffv a\, u_a$ for $u\in\PcohEM{\Tsemca\phi}$ and this still holds for $u\in\Pcoh{\Tsem\phi}$ by Lemma~\ref{lem:coeff-morph}.

\smallskip
We prove~\Eqref{point-testa+}.  Let $a'=(i,a'_i)\in \Web{\Tsem{\PLUS{\phi_\ell}{\phi_r}}}$. By inductive hypothesis, $\Fun{\Psem{\Testa a}}_{a'}$ depends on finitely many parameters $\Lena a=\Lena{a_\ell}$, since
\begin{align*}
  \Fun{\Psem{\Testa a}}_{a'}(\vec\zeta)=
  (\Psem{\Testa a}\Prom{\left.\vec\zeta\right.})_{a'}=(\Psem{\IN{\ell}{\Testa{a_\ell}}}\Prom{\left.\vec\zeta\right.})_{(i,a'_i)}=\Injtr{\ell}(\Psem{\Testa{a_\ell}}\Prom{\left.\vec\zeta\right.})_{i,a'_i}=\Kronecker\ell i\, (\Psem{\Testa{a_\ell}}\Prom{\left.\vec\zeta\right.})_{a'_i}. 
\end{align*}
We can therefore compute $
  \mon{\vec\zeta}{\Psem{\Testa a}}_{(i,a'_i)}
  =\Kronecker\ell i\,\mon{\vec\zeta}{\Psem{{\Testa{a_\ell}}}}_{a'_i}
  =\coeffa a\,\Kronecker{a}{(i,a'_i)}
$ by inductive hypothesis.

\medskip
Finally, for a general type $\sigma$ and $a\in\Web{\Tsem\sigma}$, we prove~\Eqref{point-testa+} and~\Eqref{point-testa-}.

\paragraph{If $\sigma=\phi$ is positive,} we have already proved~\Eqref{point-testa+}. Let us prove~\Eqref{point-testa-}. Let $a\in\Web{\Tsem{\phi}}$. 
Let $a'\in\Web{\Tsem{\EXCL\phi}}$. 
By Theorem~\ref{th:pos-types-dense} and~Lemma~\ref{lemma:sem-values},
$\Fun{\Psem{\Testt a}}_{(a',\ast)}(\vec\zeta)=(\Psem{\Testt a}\,\Prom{\left.\vec\zeta\right.})_{(a',\ast)}\in\Pcoh{\Tsem{\LIMPL{\EXCL\tau}\ONE}}$ is completely determined by $u\mapsto\Psem{\Testt a}\,\Prom{\left.\vec\zeta\right.}\,\Prom u$ defined on $\Pcoh{\Tsem{\tau}}$. Therefore,
 by inductive hypothesis, $\Fun{\Psem{\Testt a}}_{(a',\ast)}$ depends on finitely many parameters $\Lent a=\Lenv{a}$, since
\begin{align*}
  \Psem{\Testt a}\Prom{\left.\vec\zeta\right.}\Prom u
  =\Psem{\Testv a}\Prom{\left.\vec\zeta\right.}u
  \qquad\text{and therefore}\qquad \mon{\vec\zeta}{\Psem{\Testt a}}\Prom u=\mon{\vec\zeta}{\Psem{\Testv a}}u
\end{align*}
By inductive hypothesis,
$\mon{\vec\zeta}{\Psem{\Testv a}}u= \coeffv a\, u_a=\coefft a\,u_a$.

\paragraph{
Last, let $\sigma=\LIMPL\phi\tau$} Let $a=(b,c)\in\Web{\Tsem\sigma}$.

\smallskip
We prove~\Eqref{point-testa+}. Let $a'=(b',c')\in\Web{\Tsem\sigma}$.
Let $\vec\zeta\in\Pcoh{\Tsem\Tnat}$ be the concatenation of the finite sequences $\vec\zeta^1,\vec\zeta^2\in\Pcoh{\Tsem\Tnat}$ such that the length of $\vec\zeta^1$ is $\Lenv{b}$ and the length of $\vec\zeta^2$ is $\Lena{c}$. 
By inductive hypothesis, for any $u\in\Pcoh{\Tsem\phi}$,
\begin{align*}
  \Fun{\Psem{\Testa a}}(\vec\zeta)u=  \Psem{\Testa a}\Prom{\left.\vec\zeta\right.}u=(\Psem{\Testv b}\Prom{\left.\vec\zeta^1\right.}u)_{\ast}\,(\Psem{\Testa c}\Prom{\left.\vec\zeta^2\right.}).
\end{align*}
Now, let $\epsilon>0$ such that $\epsilon\Bcanon{b'}\in\Pcoh{\Tsem{\phi}}$, we compute
\begin{align*}
\Fun{\Psem{\Testa a}}_{(b',c')}(\Prom{\left.\vec\zeta\right.})=\tfrac 1\epsilon  (\Psem{\Testa a}\Prom{\left.\vec\zeta\right.} \,\epsilon\Bcanon{b'})_{c'}=\tfrac 1\epsilon(\Psem{\Testv b}\Prom{\left.\vec\zeta^1\right.}\,\epsilon\Bcanon{b'})_{\ast}\,(\Psem{\Testa c}\Prom{\left.\vec\zeta^2\right.})_{c'}=\Fun{\Psem{\Testv b}}_{(b',\ast)}(\vec\zeta^1)\,\Fun{\Psem{\Testa c}}_{c'}(\vec\zeta^2).
\end{align*}
Therefore, by inductive hypothesis, $\Fun{\Psem{\Testa a}}_{(b',c')}$ depends on finitely many coefficients $\Lena a=\Lenv b+\Lena c$.
We compute using inductive hypothesis that 
\begin{align*}
\mon{\vec\zeta}{\Psem{\Testa a}}_{(b',c')}=\mon{\vec\zeta^1}{\Psem{\Testv b}}_{(b',\ast)}\,\mon{\vec\zeta^2}{\Psem{\Testa c}}_{c'}=\coeffv b\,\Kronecker{b}{b'}\,\coeffa c\,\Kronecker{c}{c'}.
\end{align*}
We conclude that $\mon{\vec\zeta}{\Psem{\Testa a}}=\coeffv b\,\coeffa c\,\Bcanon{b,c}=\coeffa a\,\Bcanon a$.

\smallskip
We prove~\Eqref{point-testa-}.
Let $a'=[(b'_0,c'_0),\dots,(b'_{k-1},c'_{k-1})])\in\Web{\Tsem{\EXCL\sigma}}$.
 Let  $\vec\zeta\in\Pcoh{\Tsem\Tnat}$ be the concatenation of the finite sequences $\vec\zeta^1,\vec\zeta^2\in\Pcoh{\Tsem\Tnat}$ such that the length of $\vec\zeta^1$ is $\Lena{b}$ and the length of $\vec\zeta^2$ is $\Lent{c}$. 
For any $w\in\Pcoh{\Tsem{\EXCL{(\LIMPL{\phi}\tau}})}$, we have:
\begin{align*}
  \Psem{\Testt a}\Prom{\left.\vec\zeta\right.} w
  = \Psem{\Testt c}\Prom{\left.\vec\zeta^2\right.}\Prom{\left(\Derel{\Tsem{\LIMPL\phi\tau}}(w)\Psem{\Testa b}\Prom{\left.\vec\zeta^1\right.}\right)}.
\end{align*}
Let $\epsilon>0$ such that $\epsilon\Bcanon{a'}\in\Pcoh{\Tsem{\EXCL{(\LIMPL\phi\tau})}}$, then
\begin{align*}
    \Fun{\Psem{\Testt a}}_{(a',\ast)}(\vec\zeta)=\tfrac 1\epsilon(\Psem{\Testt a}\Prom{\left.\vec\zeta\right.}\epsilon\Bcanon{a'})_\ast=\tfrac 1\epsilon(\Psem{\Testt c}\Prom{\left.\vec\zeta^2\right.}\Prom{\left(\Derel{\Tsem{\LIMPL\phi\tau}}(\epsilon\Bcanon{a'})\Psem{\Testa b}\Prom{\left.\vec\zeta^1\right.}\right)})_\ast.
  \end{align*}
By inductive hypothesis, we get that $\Fun{\Psem{\Testt a}}_{(a',\ast)}$ depends on  $\Lent a=\Lenv b+\Lent c$ coefficients.

Let now $u\in\Pcoh{\Psem{\LIMPL\phi\tau}}$, then by~Lemma~\ref{lemma:sem-values} $\Derel{\Tsem{(\LIMPL\phi\tau)}}(\Prom u)=u$ and we compute:
\begin{align*}
  \mon{\vec\zeta}{\Psem{\Testt a}}{\Prom u}=\mon{\vec\zeta^1}{\mon{\vec\zeta^2}{\Psem{\Testt c}}\Prom{\left({u\Psem{\Testa b}\Prom{\left.\vec\zeta^1\right.}}\right)}}\,.
\end{align*}
By inductive hypothesis, we have
$\mon{\vec\zeta^2}{\Psem{\Testt c}}\Prom{({u\Psem{\Testa b}\Prom{\left.\vec\zeta^1\right.}})}=\coefft
c\, (u\Psem{\Testa b}\Prom{\left.\vec\zeta^1\right.})_c$.
Moreover, notice that  $u\in\Pcoh{\Tsem{\LIMPL\phi\tau}}$, seen as a morphism in
$\PCOH(\Tsem\phi,\Tsem\tau)$ is linear, and there is $\epsilon>0$ such that
$\epsilon \Bcanon b\in\Pcoh{\Tsem\phi}$), so that we can apply $u$ to $\Bcanon b$. Now, by using inductive hypothesis, we get
that
$\mon{\vec\zeta^1}{u\Psem{\Testa b}}_c=(u\,\mon{\vec\zeta^1}{\Psem{\Testa
    b}})_c=
(u\,\coeffa b\,e_{b})_c=\coeffa b\,u_{(b,c)}$.
Therefore, we have
$\mon{\vec\zeta}{\Psem{\Testt a}}\Prom u=\coeffa b\,\coefft c\,u_{(b,c)}=\coefft
a\, u_{a}$.
\end{proof}

\begin{theorem}[Full Abstraction]\label{thm:fa}
\hspace{1cm}

  If $\TSEQ{}{M_1}{\sigma}$ and  $\TSEQ{}{M_2}{\sigma}$ satisfy $M_1\Rel\Obseq M_2$ then
  $\Psem{M_1}=\Psem{M_2}$.
\end{theorem}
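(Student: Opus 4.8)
The plan is to establish the contrapositive: from $\Psem{M_1}\neq\Psem{M_2}$ I will produce a closed context separating $M_1$ and $M_2$, which contradicts $M_1\Rel\Obseq M_2$. Since a closed term of type $\sigma$ is interpreted as a vector indexed by $\Web{\Tsem\sigma}$, the hypothesis yields a point $a\in\Web{\Tsem\sigma}$ with $\Psem{M_1}_a\neq\Psem{M_2}_a$. First I would bring in the testing term $\Testt a$ of type $\LIMPL{\EXCL\Tnat}{(\LIMPL{\EXCL\sigma}\ONE)}$ constructed in the previous subsection, and for $i=1,2$ form the closed terms $N_i=\ABST Z{\EXCL\Tnat}{\LAPP{\LAPP{\Testt a}{Z}}{\STOP{M_i}}}$ of type $\LIMPL{\EXCL\Tnat}\ONE$. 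By Theorem~\ref{prop:kleisli-morph-charact} their interpretations are power series $\Fun{\Psem{N_i}}$ in the parameters $\vec\zeta\in\Pcoh{\Tsem\Tnat}$ with values in $\Pcoh{\Tsem\ONE}=[0,1]$.

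The second step is to read off the coefficient of the unitary monomial $\prod_{k=0}^{\Lent a}\zeta_k$. By the third item of Lemma~\ref{lem:point-test}, $\Psem{\Testt a}$ depends on only the finitely many parameters $\List\zeta 0{\Lent a}$ and its unitary-monomial coefficient $\mon{\vec\zeta}{\Psem{\Testt a}}$ is a morphism in $\Pcoh{\Tsem{\LIMPL{\EXCL\sigma}\ONE}}$ with $\mon{\vec\zeta}{\Psem{\Testt a}}\Prom u=\coefft a\,u_a$ for every $u\in\Pcoh{\Tsem\sigma}$, where $\coefft a$ is a strictly positive integer depending only on $a$. Since $\Fun{\Psem{N_i}}(\vec\zeta)=\Fun{\Psem{\Testt a}}(\vec\zeta)\,\Prom{\Psem{M_i}}$ and $\Prom{\Psem{M_i}}$ does not depend on $\vec\zeta$, the unitary-monomial coefficient of $\Fun{\Psem{N_i}}$ equals $\mon{\vec\zeta}{\Psem{\Testt a}}\Prom{\Psem{M_i}}=\coefft a\,\Psem{M_i}_a$. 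As $\coefft a\neq 0$ and $\Psem{M_1}_a\neq\Psem{M_2}_a$, the power series $\Fun{\Psem{N_1}}$ and $\Fun{\Psem{N_2}}$ have distinct coefficients, hence are distinct as matrices.

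The third step is analytic. By Theorem~\ref{th:pcoh-functional}, two elements of $\Pcoh{\Tsem{\LIMPL{\EXCL\Tnat}\ONE}}$ that differ as matrices also differ as functions, so $\Fun{\Psem{N_1}}$ and $\Fun{\Psem{N_2}}$ take different values at some $\vec q\in\Pcoh{\Tsem\Tnat}$. Since both depend on only the finitely many parameters $\List\zeta 0{\Lent a}$ and are continuous there, the strict inequality survives on an open neighbourhood of $\vec q$, inside which I can pick a rational sub-probability vector $\Vect p=(\List p0{\Lent a})$ with $p_k\in[0,1]\cap\Rational$ and $\sum_k p_k\le 1$ (setting $p_k=0$ for $k>\Lent a$), so that $\Vect p\in\Pcoh{\Tsem\Tnat}$ and $\Fun{\Psem{N_1}}(\Vect p)\neq\Fun{\Psem{N_2}}(\Vect p)$. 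I would then realise $\Vect p$ syntactically with the term $\Ran{\Vect p}$ of Paragraph~\ref{subsec:exsyn}, whose semantics is $\Vect p$, so that by the Kleisli description of composition $\Psem{\LAPP{N_i}{\STOP{\Ran{\Vect p}}}}=\Fun{\Psem{N_i}}(\Vect p)$, and these two scalars of type $\ONE$ differ.

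Finally, setting $C=\ABST y{\EXCL\sigma}{\LAPP{\LAPP{\Testt a}{\STOP{\Ran{\Vect p}}}}{y}}$ of type $\LIMPL{\EXCL\sigma}\ONE$, one has $\LAPP C{\STOP{M_i}}\Rel\Wred\LAPP{\LAPP{\Testt a}{\STOP{\Ran{\Vect p}}}}{\STOP{M_i}}$, whose semantics is again $\Fun{\Psem{N_i}}(\Vect p)$ by soundness. The Adequacy Theorem~\ref{th:rel-ad-lemma} then turns the semantic inequality into $\Redmats^\infty_{\LAPP C{\STOP{M_1}},\ONELEM}\neq\Redmats^\infty_{\LAPP C{\STOP{M_2}},\ONELEM}$, contradicting $M_1\Rel\Obseq M_2$. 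The substantial work is already packaged in Lemma~\ref{lem:point-test}, whose inductive construction of the testing terms and computation of their unitary-monomial coefficients is the real heart of the matter. The step requiring the most care in the present argument is the analytic passage from distinct matrices to a separating \emph{rational} point $\Vect p$ lying inside the sub-probability simplex $\Pcoh{\Tsem\Tnat}$, since rationality is exactly what makes $\Ran{\Vect p}$ definable through the primitive $\COIN p$.
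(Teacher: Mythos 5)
Your proposal is correct and follows essentially the same route as the paper's proof: isolate a point $a$ where the interpretations differ, invoke Lemma~\ref{lem:point-test} to see that the unitary-monomial coefficients of the power series $\Psem{\Testt a}\Prom{\left.\vec\zeta\right.}\Prom{\Psem{M_i}}$ are $\coefft a\,\Psem{M_i}_a$, find a rational sub-probability point separating the two power series, realise it with $\Ran{\Vect p}$, and conclude by adequacy. Your only addition is to make explicit the analytic step (distinct matrices $\Rightarrow$ distinct functions by Theorem~\ref{th:pcoh-functional}, then continuity and density of rationals in the finite-dimensional sub-probability simplex) and the final packaging into a context $C$ of type $\LIMPL{\EXCL\sigma}\ONE$, both of which the paper leaves implicit.
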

\begin{proof}
  By contrapositive. Assume that
  $\Psem{M_1}\not=\Psem{M_2}$. There is $a\in\Web{\Tsem\sigma}$ such
  that $\Psem{M_1}_a\neq \Psem{M_2}_a$. Then by
  Lemma~\ref{lem:point-test},
  $\Psem{\ABST x{\EXCL\Tnat}{\LAPP{\LAPP{\Testt a}x}{\STOP{M_i}}}}$,
  for $i\in \{1,2\}$, are power series with different coefficients,
  namely the coefficients of the monomial
  $\Listbis \zeta 0{\Lent a-1}$ are $\coefft a\,\Psem{M_i}_a$ for
  $i\in\set{1,2}$ as
  $\Psem{\ABST x{\EXCL\Tnat}{\LAPP{\LAPP{\Testt
          a}x}{\STOP{M_i}}}}{\Prom{\left.\Vect\zeta\right.}}=\Psem{\Testt
    a}{\Prom{\left.\Vect\zeta\right.}}{\Prom{\Psem{M_i}}}$.
  There is $\vec\zeta=(\List\zeta0{\Lent a-1})\in\Pcoh{\Tsem\Tnat}$
  with $\zeta_i\in\mathbb Q\cap[0,1]$ such that
  $\Psem{\Testt a}\Prom{\left.\vec\zeta\right.}\Prom{\Psem{M_1}}\neq
  \Psem{\Testt a}\Prom{\left.\vec\zeta\right.}\Prom{\Psem{M_2}}$.
  Yet,
  $\Psem{\LAPP{\LAPP{\Testt a}{\STOP{\Ran{\Vect\zeta}}}}{\STOP{M_i}}}=
  \Psem{\Testt a}\Prom{\left.\vec\zeta\right.}({\Psem{M_i}})$.
  By Theorem~\ref{th:rel-ad-lemma}, we get that
  $\LAPP{\LAPP{\Testt a}{\STOP{\Ran{\Vect\zeta}}}}{\STOP{M_1}}$ and
  $\LAPP{\LAPP{\Testt a}{\STOP{\Ran{\Vect\zeta}}}}{\STOP{M_2}}$
  converge to $\ONELEM$ with different probabilities. It follows that
  $M_1\Rel{\not\Obseq}M_2$.
\end{proof}

\subsection*{Acknowledgements}

We would like to thank the referees for their many
useful and constructive comments and suggestions. 
We are also grateful to Michele Pagani and Rapha\"elle Crubill\'e for numerous and deep discussions.

This work has been partly funded by the  ANR Project RAPIDO ANR-14-CE25-0007, by the French-Chinese project
ANR-11-IS02-0002 and NSFC 61161130530 \emph{Locali}.

\bibliographystyle{plain}


\bibliography{biblio}

\end{document}